\newcommand{\mylabel}[2]{#2\def\@currentlabel{#2}\label{#1}}
\setlist{nosep}
\newcommand{\nc}{\newcommand}
\nc{\DMO}{\DeclareMathOperator}
\nc\m[2]{m_{#1}(#2)}
\nc{\BR}{\mathbb{R}}
\nc{\BN}{\mathbb{N}}
\nc{\BZ}{\mathbb{Z}}
\nc{\ep}{\varepsilon}
\renewcommand{\epsilon}{\varepsilon}
\nc{\ra}{\rightarrow}
\nc{\Rprot}{R}
\nc{\Sprot}{S}
\nc{\Aprot}{A}
\nc{\Pdist}{D}
\nc{\Qdist}{F}
\nc{\pdist}{d}
\nc{\qdist}{f}
\nc{\DP}{differentially private\xspace}
\nc{\SD}{\mathscr{D}}
\nc{\la}{\lambda}
\DMO{\KL}{KL}
\DMO{\Unif}{Unif}
\DMO{\SQ}{SQ}
\nc{\nn}{\varnothing}
\renewcommand{\P}{\mathbb{P}}
\nc{\pp}{p}
\nc{\PP}{P}
\nc{\QQ}{Q}
\nc{\DD}{D}
\DMO{\RAPPOR}{{RAPPOR}}
\nc{\RAP}{\RAPPOR}
\DMO{\RR}{RR}
\nc{\MD}{\mathcal{D}}
\nc{\MO}{\mathcal{O}}
\nc{\MM}{\mathcal{M}}
\nc{\MZ}{\mathcal{Z}}
\nc{\MU}{\mathcal{U}}
\nc{\MP}{\mathcal{P}}
\nc{\poly}{\mathrm{poly}}
\DMO{\treesum}{TreeSum}
\DMO{\lapsum}{LapSum}
\DMO{\checksum}{CheckSum}
\nc{\MDts}{\MD_{\treesum}}
\nc{\MDls}{\MD_{\lapsum}}
\nc{\MDcs}{\MD_{\checksum}}
\nc{\MC}{\mathcal{C}}
\nc{\MT}{\mathcal{T}}
\nc{\MS}{\mathcal{S}}
\nc{\MX}{\mathcal{X}}
\nc{\MY}{\mathcal{Y}}
\nc{\MA}{\mathcal{A}}
\nc{\MB}{\mathcal{B}}
\nc{\MJ}{\mathcal{J}}
\nc{\MF}{\mathcal{F}}
\nc{\MQ}{\mathcal{Q}}
\nc{\p}{\mathbb{P}}
\nc{\E}{\mathbb{E}}
\nc{\tablesize}{s}
\DMO{\Hist}{hist}
\nc{\hist}{\mathrm{hist}}
\nc{\ba}{\mathbf{a}}
\nc{\bx}{\mathbf{x}}
\nc{\by}{\mathbf{y}}
\nc{\bz}{\mathbf{z}}
\DMO{\sr}{sr}
\DMO{\Med}{Med}
\DMO{\Ber}{Ber}
\DMO{\Bin}{Bin}
\DMO{\Had}{Had}
\renewcommand{\matrix}{\mathrm{matrix}}
\nc{\ME}{\mathcal{E}}
\DMO{\View}{View}
\nc{\B}{B}
\nc{\M}{M}
\nc{\ha}{\kappa}
\nc{\hk}{k}
\DMO{\pre}{pre}
\nc{\MH}{\mathcal{H}}
\DMO{\FO}{FO}
\DMO{\CM}{CM}
\DMO{\hb}{\beta}
\nc{\MW}{\mathcal{W}}
\nc{\BB}{\{0,1\}}
\nc{\bW}{\mathbf{W}}
\nc{\eell}{\ell}
\nc{\EELL}{L}
\nc{\q}{q}
\newtheorem*{rep@theorem}{\rep@title}
\newcommand{\newreptheorem}[2]{%
\newenvironment{rep#1}[1]{%
 \def\rep@title{#2 \ref{##1}}%
 \begin{rep@theorem}}%
 {\end{rep@theorem}}}
\newtheorem{theorem}{Theorem}[section]
\newtheorem{corollary}[theorem]{Corollary}
\newtheorem{lemma}[theorem]{Lemma}
\newtheorem{informal theorem}[theorem]{Informal Theorem}
\newtheorem{fact}[theorem]{Fact}
\newtheorem{claim}[theorem]{Claim}
\theoremstyle{definition}
\newtheorem{defn}{Definition}[section]
\newtheorem{remark}{Remark}[section]
\nc{\One}{\mathbbm{1}}
\title{On the Power of Multiple Anonymous Messages\footnote{A $1$-page abstract based on this work will be presented at the Symposium on Foundations of Responsible Computing (FORC) 2020.}}
\author{
  Badih Ghazi \hspace*{0.5cm}
  Noah Golowich\thanks{MIT EECS. Supported at MIT by a Fannie \& John Hertz Foundation Fellowship, an MIT Akamai Fellowship, and an NSF Graduate Fellowship.  This work was done while at Google Research.}
  \hspace*{0.5cm}
   Ravi Kumar \hspace*{0.5cm}
   Rasmus Pagh\thanks{Visiting from BARC and IT University of Copenhagen.} \hspace*{0.5cm}
   Ameya Velingker \\
   Google Research \\
   Mountain View, CA \\
   \texttt{badihghazi@gmail.com, nzg@mit.edu, ravi.k53@gmail.com,} \\
   \texttt{pagh@itu.dk, ameyav@google.com}
}
\date{}
\begin{document}
\maketitle

\if 0

\fi

\begin{abstract}
    
    An exciting new development in differential privacy is the \emph{shuffled} model, in which an anonymous channel enables non-interactive, differentially private protocols with error much smaller than what is possible in the local model, while relying on weaker trust assumptions than in the central model. 
    In this paper, we study basic counting problems in the shuffled model and establish separations between the error that can be achieved in the single-message shuffled model and in the shuffled model with multiple messages per user.  
    
    \medskip
    \noindent
    For the 
    problem of \emph{frequency estimation} for $n$ users and a domain of size $B$, we obtain:
        
    \begin{itemize}
    \item A nearly tight lower bound of $\tilde{\Omega}( \min(\sqrt[4]{n}, \sqrt{\B}))$ on the error in the single-message shuffled model. This implies that the protocols obtained from the amplification via shuffling work of Erlingsson et al.~(SODA 2019) and Balle et al.~(Crypto 2019) are essentially optimal for single-message protocols. A key ingredient in the proof is a lower bound on the error of locally-private frequency estimation in the low-privacy (aka high $\epsilon$) regime. For this we develop new techniques to extend the results of Duchi et al.~(FOCS 2013; JASA 2018) and Bassily \& Smith~(STOC 2015), whose techniques were restricted to the high-privacy case.
    
    \item Protocols in the \emph{multi-message} shuffled model with $\poly(\log{B}, \log{n})$ bits of communication per user and $\poly\log{B}$ error, which provide an exponential improvement on the error compared to what is possible with single-message algorithms. This implies protocols with similar error and communication guarantees for several well-studied problems such as heavy hitters, $d$-dimensional range counting, M-estimation of the median and quantiles, and more generally sparse non-adaptive statistical query algorithms.
    \end{itemize}

    \medskip
    \noindent
    For the related 
    \emph{selection} problem on a domain of size $\B$, we prove:
  \begin{itemize}
    \item A nearly tight lower bound of $\Omega(B)$ on the number of users in the single-message shuffled model. This significantly improves on the $\Omega(B^{1/17})$ lower bound obtained by Cheu et al.~(Eurocrypt 2019), and when combined with their $\tilde{O}(\sqrt{B})$-error multi-message protocol, implies the first separation between single-message and multi-message protocols for this problem.
      \end{itemize}

\end{abstract}

\newpage

\newpage
\tableofcontents

\thispagestyle{empty}
\setcounter{page}{0}

\newpage

\section{Introduction}
\label{sec:intro}

With increased public awareness and the introduction of stricter regulation of how personally identifiable data may be stored and used, user privacy has become an issue of paramount importance in a wide range of practical applications.
While many formal notions of privacy have been proposed (see, e.g.,~\cite{li2007t}), \emph{differential privacy (DP)}~\cite{dwork2006calibrating,dwork2006our} has emerged as the gold standard due to its broad applicability and nice features such as composition and post-processing (see, e.g.,~\cite{dwork2014algorithmic, vadhan2017complexity} for a comprehensive overview).
A primary goal of DP is to enable processing of users' data in a way that (i) does not reveal substantial information about the data of any single user, and (ii) allows the accurate computation of functions of the users' inputs.
The theory of DP studies what trade-offs between privacy and accuracy are feasible for desired families of functions.

Most work on DP has been in the \emph{central} (a.k.a. \emph{curator}) setup, where numerous private algorithms with small error have been devised (see, e.g.,~\cite{blum2008learning,dwork2009complexity,DworkRothBook}).
The premise of the central model is that a curator can access the raw user data before releasing a differentially private output.
In distributed applications, this requires users to transfer their raw data to the curator --- a strong limitation in cases where users would expect the entity running the curator (e.g., a government agency or a technology company) to gain little information about their data.

To overcome this limitation, recent work has studied the \emph{local} model of DP~\cite{kasiviswanathan2008what} (also~\cite{warner1965randomized}), where each individual message sent by a user is required to be private. Indeed, several large-scale deployments of DP in practice, at companies such as Apple~\cite{greenberg2016apple,dp2017learning}, Google~\cite{erlingsson2014rappor,CNET2014Google}, and Microsoft~\cite{ding2017collecting}, have used local DP. 
While estimates in the local model require weaker trust assumptions than in the central model, they inevitably suffer from significant error. For many types of queries, the estimation error is provably larger than the error incurred in the central model by a factor growing with the square root of the number of users.

\paragraph{Shuffled Privacy Model.}

The aforementioned trade-offs have motivated the study of the \emph{shuffled} model of privacy as a middle ground between the central and local models. 
While a similar setup was first studied in cryptography in the work of Ishai et al.~\cite{ishai2006cryptography} on cryptography from anonymity, the shuffled model was first proposed for privacy-preserving protocols by Bittau et al.~\cite{bittau17} in their Encode-Shuffle-Analyze architecture.
In the shuffled setting, each user sends one or more messages to the analyzer using an \emph{anonymous} channel that does not reveal where each message comes from.
This kind of anonymization is a common procedure in data collection and is easy to explain to regulatory agencies and users.
The anonymous channel is equivalent to all user messages being randomly shuffled (i.e., permuted) before being operated on by the analyzer, leading to the model illustrated in Figure~\ref{fig:esa}; see Section~\ref{sec:model} for a formal description of the shuffled model.
In this work, we treat the shuffler as a black box, but note that various efficient cryptographic implementations of the shuffler have been considered, including onion routing, mixnets, third-party servers, and secure hardware (see, e.g.,~\cite{ishai2006cryptography,bittau17}). A comprehensive overview of recent work on anonymous communication can be found on Free Haven's Selected Papers in Anonymity website%
\footnote{\url{https://www.freehaven.net/anonbib/}}.

\begin{figure}[b]
\centering
\includegraphics[width=0.7\textwidth]{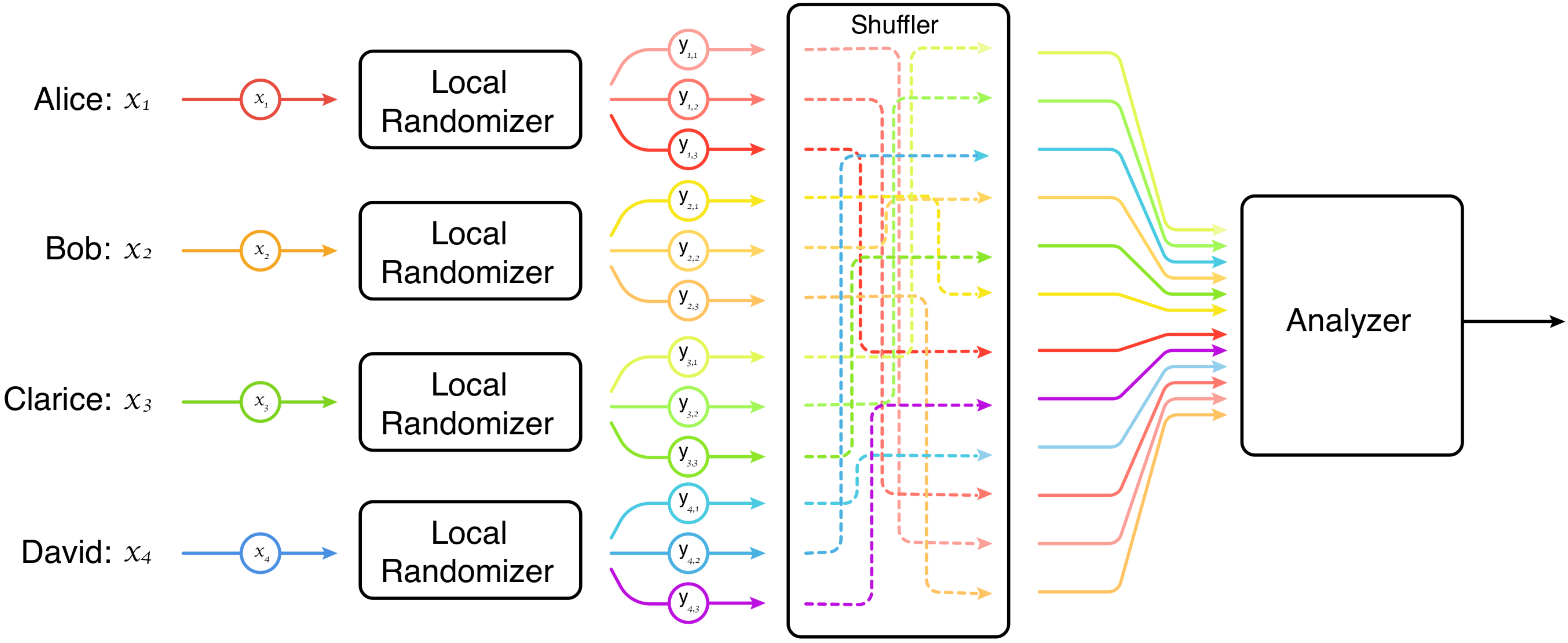}
\caption{Computation in the shuffled model consists of local randomization of inputs in the first stage, followed by a shuffle of all outputs of the local randomizers, after which the shuffled output is passed on to an analyzer.}
\label{fig:esa}
\end{figure}

The DP properties of the shuffled model were first analytically studied, independently, in the works of Erlingsson et al.~\cite{erlingsson2019amplification} and Cheu et al.~\cite{cheu_distributed_2018}. 
Protocols within the shuffled model are non-interactive and fall into two categories: 
\emph{single-message} protocols, in which each user sends one message (as in the local model), and \emph{multi-message} protocols, in which a user can send more than one message.
In both variants, the messages sent by all users are shuffled before being passed to the analyzer. The goal is to design private protocols in the shuffled model with as small error and total communication as possible.  An example of the power of the shuffled model was established by Erlingsson et al.~\cite{erlingsson2019amplification} and extended by Balle et al.~\cite{balle_privacy_2019}, who showed that every local DP algorithm directly yields a single-message protocol in the shuffled model with significantly better privacy.  

\subsection{Results}
In this work, we study several basic problems related to \emph{counting} in the shuffled model of DP.
In these problems, each of $n$ users holds an element from a domain of size $B$.
We consider the problems of frequency estimation, variable selection, heavy hitters, median, and range counting and study whether it is possible to obtain $(\varepsilon,\delta)$-DP%
\footnote{Formally stated in Definition~\ref{def:dp}.}
in the shuffled model with accuracy close to what is possible in the central model, while keeping communication low.

The \emph{frequency estimation} problem (a.k.a. \emph{histograms} or \emph{frequency oracles}) is at the core of all the problems we study.
In the simplest version, each of $n$ users gets an element of a domain $[B] := \{1,\dots,B\}$ and the goal is to estimate the number of users holding element $j$, for any query element $j \in [B]$. 
Frequency estimation has been extensively studied in DP where in the central model, the smallest possible error is $\Theta(\min(\log(1/\delta)/\epsilon, \log(B)/\epsilon, n))$ (see, e.g.,~\cite[Section 7.1]{vadhan2017complexity}). By contrast, in the local model of DP, the smallest possible error is known to be $\Theta(\min(\sqrt{n \log(\B)}/\epsilon, n)$ under the assumption that $\delta < 1/n$~\cite{bassily2015local}. 

In the high-level exposition of our results given below, we let $n$ and $B$ be any positive integers, $\epsilon > 0$ be any constant, and $\delta > 0$ be inverse polynomial in $n$. This assumption on $\epsilon$ and $\delta$ covers a regime of parameters that is relevant in practice. We will also make use of tilde notation (e.g., $\tilde O$, $\tilde{\Theta}$) to indicate the possible suppression of multiplicative factors that are polynomial in $\log B$ and $\log n$.

\paragraph{Single-Message Bounds for Frequency Estimation.}

For the frequency estimation problem, we show the following results in the shuffled model where each user sends a single message.
\begin{theorem}[Informal version of Theorems~\ref{thm:freq_lb} \& \ref{thm:freq_ub}]\label{th:histograms_single_message}
The optimal error of private frequency estimation in the single-message shuffled model is $\tilde{\Theta}( \min(\sqrt[4]{n}, \sqrt{\B}))$. 
\end{theorem}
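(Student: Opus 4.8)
This is a two-sided statement — an error upper bound of $\tilde O(\min(\sqrt[4]{n},\sqrt B))$ (Theorem~\ref{thm:freq_ub}) and a matching lower bound (Theorem~\ref{thm:freq_lb}) — and I would handle the two directions separately; essentially all the work is in the lower bound.

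\smallskip\noindent\emph{Upper bound.} The plan is to combine amplification‑by‑shuffling with the optimal locally‑private frequency oracles, tuning the local privacy level. One feeds an $\epsilon_0$‑locally‑private Hadamard/RAPPOR‑style randomizer — whose per‑coordinate standard deviation is $\tilde O\!\big(\sqrt{n}\cdot e^{\epsilon_0/2}/(e^{\epsilon_0}-1)\big)$ — through the Erlingsson et al.\ and Balle et al.\ amplification results, which turn $\epsilon_0$‑local‑DP into $(\epsilon,\delta)$‑shuffle‑DP roughly when $e^{\epsilon_0}=\tilde O(\sqrt{n})$ (keeping $\epsilon=O(1)$ and $\delta$ inverse‑polynomial). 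Taking $\epsilon_0\approx\tfrac12\log n$ then yields error $\tilde O\!\big(\sqrt{n}\cdot n^{-1/4}\big)=\tilde O(\sqrt[4]{n})$. For $B=O(\sqrt n)$ a cruder protocol wins: a generalized‑randomized‑response randomizer (report your own element, or a uniformly random one) with the ``report'' probability tuned so each empty bucket stays $(\epsilon,\delta)$‑indistinguishable after shuffling gives error $\tilde O(\sqrt B)$. Outputting whichever protocol is better gives $\tilde O(\min(\sqrt[4]{n},\sqrt B))$.

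\smallskip\noindent\emph{Lower bound.} Here the plan is to reduce to locally‑private frequency estimation and then prove a new local lower bound valid in the low‑privacy (large‑$\epsilon_0$) regime. For the reduction, one argues that a single‑message $(\epsilon,\delta)$‑shuffled protocol with $n$ users and inverse‑polynomial $\delta$ cannot ``afford'' a highly informative local randomizer — a converse to amplification‑by‑shuffling — quantified by noting that the $n-1$ other messages form an ambient histogram whose per‑coordinate counts fluctuate only like the square root of the count, so $(\epsilon,\delta)$‑DP forces those counts to be $\Omega(\log(1/\delta))$ and caps the amplitude with which a single message can move a count relative to its fluctuation; evaluated on the hard instances below, this translates into an effective local privacy parameter $e^{\epsilon_0}=\tilde O(\sqrt n)$, with accuracy preserved up to polylog factors. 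The second and main ingredient is a lower bound showing that $\epsilon_0$‑locally‑private frequency estimation on $[B]$ with $n$ users has $\ell_\infty$ error $\tilde\Omega\!\big(\min\{\sqrt B,\ \sqrt{n}\cdot e^{\epsilon_0/2}/(e^{\epsilon_0}-1)\}\big)$ — i.e.\ $\tilde\Omega(\sqrt n\, e^{-\epsilon_0/2})$ once $\epsilon_0\gg1$, capped by $\tilde\Omega(\sqrt B)$ — extending the mutual‑information/Assouad arguments of Duchi et al.\ and Bassily--Smith, which are tight only for $\epsilon_0=O(1)$. Those arguments bound the KL‑divergence between channel outputs on nearby input distributions by $(e^{\epsilon_0}-1)^2\cdot\|\cdot\|_{\mathrm{TV}}^2$; this factor is the correct $\Theta(\epsilon_0^2)$ for small $\epsilon_0$ but overshoots the truth by $e^{\Theta(\epsilon_0)}$ when $\epsilon_0$ is large. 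The fix is a sharp strong‑data‑processing inequality for $\epsilon_0$‑DP channels with the right factor $\min\{(e^{\epsilon_0}-1)^2,\ e^{\epsilon_0}\}$ — the $e^{\epsilon_0}$ branch arising because an $\epsilon_0$‑DP channel can concentrate its output mass on a region of probability $\approx e^{-\epsilon_0}$, where a unit of signal inflates into KL only $O(e^{\epsilon_0})$ times its $\ell_2$ mass rather than $e^{2\epsilon_0}$ times. Tensorizing this over the $n$ users and running the size‑$B$ Assouad construction then yields the bound; plugging in $\epsilon_0\approx\tfrac12\log n$ from the reduction gives $\tilde\Omega(\sqrt[4]{n})$ when $B\gtrsim\sqrt n$, and the $\sqrt B$ branch (or a direct packing over the $B$ point distributions) handles $B\lesssim\sqrt n$.

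\smallskip\noindent The hardest step is the sharp contraction inequality together with its use inside the $B$‑coordinate Assouad argument: one must pick the per‑coordinate perturbation so that, under the improved $e^{\epsilon_0}$ (rather than $e^{2\epsilon_0}$) contraction, the aggregate KL over all $n$ users stays below a constant while the induced gap in the histogram is maximized — this is exactly what converts an $e^{-\epsilon_0}$ rate into the correct $e^{-\epsilon_0/2}$ rate and is where the argument departs from prior work. A secondary difficulty is making the shuffle‑to‑local reduction quantitatively tight: a black‑box version only bounds the randomizer's privacy by $\epsilon_0=\log n+O(1)$, which is too weak (it yields only $\tilde\Omega(1)$), so one has to track the per‑coordinate amplitude/fluctuation trade‑off directly on the chosen hard instances rather than merely bounding $\epsilon_0$.
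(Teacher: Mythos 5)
Your upper bound sketch is essentially the paper's proof (Appendix~\ref{apx:freq_ub}): amplification-by-shuffling applied to RAPPOR for the $\tilde O(\sqrt[4]{n})$ branch and to $\B$-ary randomized response for the $\tilde O(\sqrt{\B})$ branch. The problem is in your lower bound, and it is a genuine gap, not a detail: the step you call the ``shuffle-to-local reduction made quantitatively tight,'' i.e.\ the claim that shuffle-privacy plus a per-coordinate amplitude/fluctuation argument forces an effective local parameter $e^{\epsilon_0}=\tilde O(\sqrt{n})$ on the hard instances, is false. Consider $\B$-ary randomized response in which each user reports the truth except with probability $\gamma_{\mathrm{blanket}}=\Theta(\B\log(1/\delta)/(n\epsilon^2))$, in which case she reports a uniform element. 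After shuffling this is $(\epsilon,\delta)$-DP (each bucket receives $\Omega(\log(1/\delta)/\epsilon^2)$ blanket mass -- this is exactly the blanket argument of Balle et al.), yet the local randomizer has likelihood ratios of order $n$, and on the mixture instances $\DD_{v,\gamma}$ with $\gamma=\tilde\Theta(n^{-3/4})$ its per-user information is $I(V;R(X))=\tilde\Theta(\gamma)$, which is far larger than the $\tilde O(\gamma^2\sqrt{n})=\tilde O(1/n)$ that an effective $e^{\epsilon_0}=\tilde O(\sqrt{n})$ would entail. So no condition that is a consequence of shuffle-privacy alone (your fluctuation heuristic included) can yield the information bound you need; the only thing that disqualifies this randomizer is that it is \emph{inaccurate} (error $\tilde\Omega(\sqrt{\B})\gg n^{1/4}$ in the regime $n\le \B^2$). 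This is precisely the phenomenon recorded in Remark~\ref{rmk:inf-counterexample}: the required mutual-information bound at $\epsilon_0\approx\ln n$ is provably false for arbitrary $(\epsilon_0,\delta_0)$-DP channels, and the paper's resolution is to make accuracy a hypothesis of the information bound itself -- an anti-concentration argument (Lemma~\ref{lem:lb_tvd}) shows any $(\alpha,1/4)$-accurate randomizer has $\Delta(R(v),R(\mathrm{Unif}))\ge 1-O(\alpha^2 n)$, and this structural fact is combined with $(\epsilon_L,\delta_L)$-DP via a double-counting argument (Lemma~\ref{lem:intermed_mutinf}). That ingredient is entirely absent from your plan.

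A secondary issue: even granting your strong-data-processing inequality with factor $\min\{(e^{\epsilon_0}-1)^2,e^{\epsilon_0}\}$ (plausible for pure DP, and not refuted by the example above), the local lower bound it yields, $\tilde\Omega(\sqrt{n}\,e^{-\epsilon_0/2})$, is too weak to use with the only reduction that is actually valid, namely the black-box Lemma~\ref{lem:pratio} giving $\epsilon_0=\epsilon+\ln n$ -- as you yourself note, that combination gives only $\tilde\Omega(1)$. The paper instead proves the stronger, accuracy-conditional local bound $\tilde\Omega(\sqrt{n}\,e^{-\epsilon_0/4})$ (matching RAPPOR) and plugs in $\epsilon_0=\ln n+O(1)$ from the black-box reduction, which is what produces $\tilde\Omega(\sqrt[4]{n})$; the $\sqrt{\B}$ branch is then handled by a padding reduction (Lemma~\ref{lem:invn}) rather than a separate packing argument. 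To repair your proposal you would have to abandon the sharpened-reduction route and prove an information bound of the form $I(V;R(X))\le\tilde O(\gamma^2\alpha^2 n e^{\epsilon_0})$ valid only for $(\alpha,1/4)$-accurate randomizers, handling $\delta_0>0$ directly (generic approximate-to-pure conversions lose constants in the exponent that are fatal here, cf.\ Remark~\ref{rmk:black-box}).
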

The main contribution of Theorem~\ref{th:histograms_single_message} is the lower bound.  To prove this result, we obtain improved bounds on the error needed for frequency estimation in local DP in the weak privacy regime where $\epsilon$ is around $\ln{n}$.  The upper bound in Theorem~\ref{th:histograms_single_message}  follows by combining the recent
result of Balle et al.~\cite{balle_privacy_2019} (building on the earlier result of Erlingsson et al.~\cite{erlingsson2019amplification}) with RAPPOR \cite{erlingsson2014rappor} and $B$-ary randomized response \cite{warner1965randomized} (see Section~\ref{sec:overview_lb} and Appendix~\ref{apx:freq_ub} for more details).

Theorem~\ref{th:histograms_single_message} implies that in order for a single-message \DP protocol to get error $o(n)$ one needs to have $n = \omega\left( \frac{\log \B}{\log \log \B} \right)$ users; see Corollary~\ref{cor:freq_lb_const}.
This improves on a result of Cheu et al.~\cite[Corollary 32]{cheu_distributed_2018}, which gives a lower bound of $n = \omega(\log^{1/17} \B)$ for this task.

\paragraph{Single-Message Bounds for Selection.}

It turns out that the techniques that we develop to prove the lower bound in Theorem~\ref{th:histograms_single_message} can be used to get a nearly tight $\Omega(B)$ lower bound on the number of users necessary to solve the \emph{selection} problem. 
In the selection problem\footnote{Sometimes also referred to as \emph{variable selection}.}, each user $i \in [n]$ is given an arbitrary subset of $[B]$, represented by the indicator vector $x_i \in \{0,1\}^B$, and the goal is for the analyzer to output an index $j^* \in [B]$ such that
\begin{equation}\label{eq:variable_selection_def}
\sum_{i \in [n]} x_{i, j^*} \geq \max_{j \in [B]} \sum_{i \in [n]} x_{i,j} - \frac{n}{10}.
\end{equation}
In other words, the analyzer's output should be the index of a domain element that is held by an approximately maximal number of users. The choice of the constant $10$ in \eqref{eq:variable_selection_def} is arbitrary; any constant larger than $1$ may be used.  

The selection problem has been studied in several previous works on differential privacy, and it has many applications to machine learning, hypothesis testing and approximation algorithms (see~\cite{duchi2013local, steinke2017tight,ullman2018tight} and the references therein).
Our work improves an $\Omega(B^{1/17})$ lower bound in the single-message shuffled model due to Cheu et al.~\cite{cheu_distributed_2018}.
For $\epsilon=1$, the exponential mechanism~\cite{mcsherry2007mechanism} implies an $(\epsilon, 0)$-DP algorithm for selection with $n = O(\log{B})$ users in the central model, whereas in the local model, it is known that any $(\epsilon, 0)$-DP algorithm for selection requires $n = \Omega(B \log{B})$ users~\cite{ullman2018tight}. 
Variants of the selection problem appear in several natural statistical tasks such as feature selection and hypothesis testing (see, e.g.,~\cite{steinke2017tight} and the references therein).

\begin{theorem} [Informal version of Theorem~\ref{thm:selection_formal}]
  \label{th:selection_single_message}
For any single-message differentially private protocol in the shuffled model that solves the selection problem given in Equation~(\ref{eq:variable_selection_def}), the number $n$ of users should be~$\Omega(\B)$. 
\end{theorem}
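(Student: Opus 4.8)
\medskip
\noindent\textbf{Proof proposal.}

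The plan is to reduce the single-message shuffled lower bound to an information-theoretic statement about locally private channels in the weak-privacy regime $\epsilon_0\approx\tfrac12\ln n$, and to prove that statement with the machinery developed for Theorem~\ref{thm:freq_lb}. Fix $\ep=O(1)$ and $\delta$ inverse-polynomial in $n$, suppose a single-message $(\ep,\delta)$-DP shuffled protocol solves \eqref{eq:variable_selection_def}, and let $R$ (with output distributions $\{p_x\}$ over a message alphabet) be its local randomizer. The first observation is that when the users' inputs are i.i.d.\ from a distribution $\mathcal{D}$ on $\{0,1\}^\B$, the analyzer's view is the histogram of $n$ messages that are i.i.d.\ from $P_{\mathcal{D}}:=\E_{x\sim\mathcal{D}}[p_x]$, and that shuffled privacy forces $R$ to be ``mostly data-independent'': exactly as in the proof of Theorem~\ref{thm:freq_lb}, the $n$-fold shuffle being $(\ep,\delta)$-DP means a single input change must be swamped by the $n-1$ clone messages, which yields a blanket decomposition $p_x=(1-\gamma)\,\omega+\gamma\,\tilde p_x$ with a fixed $\omega$ and $\gamma=\tilde O(1/\sqrt n)$, the residual randomizer $x\mapsto\tilde p_x$ being $\epsilon_0$-DP with $e^{\epsilon_0}=\tilde O(\sqrt n)$ (up to a $\delta$-mass of bad events, controlled by binomial tails on the clone counts). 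I would import this lemma essentially verbatim.

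Next I set up the hard instances. For $j\in[\B]$ let $\mathcal{D}_j$ be the product distribution on $\{0,1\}^\B$ in which coordinate $j$ is $\Ber(\tfrac12+\tau)$ for a suitable constant $\tau\in(\tfrac1{10},\tfrac12)$ (large enough that, with inputs i.i.d.\ from $\mathcal{D}_j$, coordinate $j$ uniquely achieves the maximum up to the additive $n/10$ slack of \eqref{eq:variable_selection_def} with high probability) and every other coordinate is an independent fair coin; let $\mathcal{D}_0$ be the all-fair product distribution. Writing $\Pi_j$ for the transcript distribution when inputs are i.i.d.\ from $\mathcal{D}_j$ (so $\Pi_j$ is a histogram of $n$ i.i.d.\ draws from $P_j:=P_{\mathcal{D}_j}$), a protocol solving selection must output $j$ on $\mathcal{D}_j^{\otimes n}$ with probability at least, say, $0.6$, so
\[
\sum_{j\in[\B]}\mathrm{TV}(\Pi_j,\Pi_0)\ \ge\ \sum_{j\in[\B]}\Bigl(\P_{\mathcal{D}_j}[\mathrm{out}=j]-\P_{\mathcal{D}_0}[\mathrm{out}=j]\Bigr)\ \ge\ 0.6\,\B-1\ =\ \Omega(\B).
\]
Since the messages are conditionally i.i.d.\ given the instance and the shuffle is post-processing, $\mathrm{TV}(\Pi_j,\Pi_0)^2\le\tfrac12\KL(\Pi_j\|\Pi_0)\le\tfrac n2\,\KL(P_j\|P_0)$, so by Cauchy--Schwarz the lower bound reduces to proving
\[
\E_{j\sim\Unif[\B]}\ \KL\bigl(P_j\,\big\|\,P_0\bigr)\ =\ O\!\bigl(1/\B\bigr),
\]
as this yields $\sum_j\mathrm{TV}(\Pi_j,\Pi_0)\le\B\sqrt{\tfrac n2\cdot O(1/\B)}=O(\sqrt{n\B})=o(\B)$ once $n=o(\B)$, a contradiction; a sharper choice of instance, as in Theorem~\ref{thm:freq_lb}, is what removes the slack and delivers the clean $\Omega(\B)$.

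The remaining step --- bounding $\E_j\KL(P_j\|P_0)$ --- is the technical heart. Plain data processing is hopeless, since $\KL(\mathcal{D}_j\|\mathcal{D}_0)=\Theta(\tau^2)$: it is precisely the averaging over the $\B$ candidate planted coordinates (one message cannot be sensitive to all of them) that must be exploited. The classical strong-data-processing bounds of Duchi et al.~\cite{duchi2013local} and Bassily \& Smith~\cite{bassily2015local} do exactly this but pay a factor $(e^{\epsilon_0}-1)^2$, which at $e^{\epsilon_0}=\tilde\Theta(\sqrt n)$ (and after using the blanket factor $\gamma$) still overshoots $1/\B$ by a factor $\tilde\Theta(\sqrt n)$ in the regime $n\le\B$. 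The new ingredient --- the one behind Theorem~\ref{thm:freq_lb} --- is a refined data-processing inequality for the blanket-decomposed channel with a much milder, essentially linear, dependence on $e^{\epsilon_0}$: on the $(1-\gamma)$ blanket part $\omega$ the message is independent of $j$; on the $\gamma$-fraction one shows that an $\epsilon_0$-DP symbol, averaged over the $\B$ candidates, satisfies $\E_j\KL(\tilde P_j\|\tilde P_0)=\tilde O(e^{\epsilon_0}/\B)$, with rare messages of large likelihood ratio charged against $\delta$; multiplying by $\gamma=\tilde O(1/\sqrt n)$ and doing the $\sqrt n$-versus-$\B$ bookkeeping as in the proof of Theorem~\ref{thm:freq_lb} gives $\E_j\KL(P_j\|P_0)=O(1/\B)$.

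\noindent\textbf{Main obstacle.} The crux is establishing and quantitatively exploiting this refined data-processing inequality in the regime $\epsilon_0\approx\tfrac12\ln n$, where every off-the-shelf local-DP lower-bound technique degrades; everything else --- the blanket decomposition from Theorem~\ref{thm:freq_lb}, the reduction of selection to identifying the planted coordinate, and the Pinsker/Cauchy--Schwarz packing step --- is routine.
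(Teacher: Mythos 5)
There is a genuine gap, and it sits exactly where you flag your ``main obstacle.'' First, the reduction you start from is not available: what shuffled privacy buys you (Lemma~\ref{lem:pratio}, due to Cheu et al.) is only that the local randomizer is $(\ep+\ln n,\delta)$-locally DP, i.e., likelihood ratios bounded by $O(n)$ --- not the mixture decomposition $p_x=(1-\gamma)\omega+\gamma\tilde p_x$ with $\gamma=\tilde O(1/\sqrt n)$ and residual ratio $\tilde O(\sqrt n)$ that you attribute to the proof of Theorem~\ref{thm:freq_lb}. No such decomposition is proved there (the fix in the frequency-estimation argument is to invoke \emph{accuracy} via the anti-concentration Lemma~\ref{lem:lb_tvd}, not a structural decomposition from privacy), and the decomposition is in fact false: the shuffled $\B$-ary randomized response used for the paper's own $\tilde O(\sqrt\B)$ single-message upper bound (Appendix~\ref{apx:freq_ub}, and Remark~\ref{rmk:inf-counterexample}) is $(\ep,\delta)$-DP after shuffling yet its local randomizer places constant mass on the identity message, so it admits no split with a small data-dependent weight or a $\sqrt n$-bounded residual. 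With the correct constraint ($e^{\ep_L}=O(n)$, $\delta_L>0$), your proposed ``refined DPI,'' which is linear in the ratio bound, would give $\E_j\KL(P_j\|P_0)=O(n/\B)$ and hence nothing.

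Second, the inequality you assert but do not prove --- an averaged-KL bound of order $\tilde O(1/\B)$ in the regime where ratios are polynomial in $n$ --- is precisely the new content of the paper's proof, and it requires a genuinely different mechanism than anything you sketch: Lemma~\ref{lem:izbj} bounds $\E_{\ell,j}[\KL(\PP_{\ell,j}\|\QQ)]\le O\bigl(\tfrac{1+\ep_L}{\B}+\delta_L(\B+e^{\ep_L})\bigr)$, i.e., \emph{logarithmically} in the ratio bound, by writing the KL in terms of the ratio functions $f_z(x)=\p[R(x)=z]/q_z$, truncating them at $2e^{\ep_L}$ (the $(\ep_L,\delta_L)$-DP condition shows the truncated-away mass is at most $2\delta_L$ per input), and then controlling the level-1 Fourier weight of the truncation via the Level-1 inequality (Theorem~\ref{thm:level1}), exploiting the subcube structure of the hard distributions; Fano then yields $n\ge\Omega(\B\log\B/(1+\ep_L))=\Omega(\B)$ (Lemma~\ref{lem:selection_amax}). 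Your proposal contains neither this Fourier-analytic step nor any substitute for it, and explicitly defers it, so the argument does not go through as written. (A smaller point: even granting an $O(\log n/\B)$ KL bound, your Pinsker/Cauchy--Schwarz packing only yields $n=\Omega(\B/\log\B)$; the Fano route recovers the clean $\Omega(\B)$ because the $\log\B$ from the size of the hypothesis class cancels the $\ep_L\approx\ln n$ factor.)
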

The lower bound in Theorem~\ref{th:selection_single_message} nearly matches the $O(B\log B)$ upper bound on the required number of users that holds even in the local model (and hence in the single-message shuffled model) and that uses the $B$-randomized response~\cite{warner1965randomized, ullman2018tight}.
Cheu et al. \cite{cheu_distributed_2018} have previously obtained a multi-message protocol for selection with $O(\sqrt{B})$ users, and combined with this result Theorem~\ref{th:selection_single_message} yields the first separation between single-message and multi-message protocols for selection.

\paragraph{Multi-Message Protocols for Frequency Estimation.}

We next present (non-interactive) multi-message protocols in the shuffled model of DP for frequency estimation with only \emph{polylogarithmic} error and communication. 
This is in strong contrast with what is possible for any protocol in the single-message shuffled setup where Theorem~\ref{th:histograms_single_message} implies that the error has to grow polynomially with $\min(n,B)$, even with unbounded communication.  In addition to error and communication, a parameter of interest is the query time, which is the time to estimate the frequency of any element $j \in [B]$ from the data structure constructed by the analyzer.
\begin{theorem}[Informal version of Theorems~\ref{thm:hist_full} \&~\ref{thm:hist_full_pub}]\label{thm:hist_intro}
There is a private-coin (resp., public-coin) multi-message protocol in the shuffled model for frequency estimation with error $\tilde{O}(1)$, total communication of $\tilde{O}(1)$ bits per user, and query time $\tilde{O}(n)$ (resp., $\tilde{O}(1)$).
\end{theorem}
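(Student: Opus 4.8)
The plan is to reduce private frequency estimation to a collection of independent instances of private \emph{bit-summation} in the shuffled model, for which we will separately build a protocol $\MDls$ that sums $n$ bits with additive error $\tilde{O}(1/\epsilon)$ using $\tilde{O}(1)$ messages of $\tilde{O}(1)$ bits per user (with the per-bucket ``blanket'' noise generated collectively, so that a user whose bit is $0$ still sends only $\tilde{O}(1)$ noise messages). Concretely, fix a Count--Min-style sketch: $t = \Theta(\log(B/\beta))$ hash functions $h_1,\dots,h_t\colon [B]\to[w]$ with $w = \Theta(n)$ columns, hence $wt = \tilde{O}(n)$ buckets in total. A user holding $v\in[B]$ contributes the bit $1$ to bucket $(r,h_r(v))$ and $0$ to the other buckets of row $r$; we run, in parallel and with each message carrying its $\lceil\log(wt)\rceil=\tilde{O}(1)$-bit bucket label, one copy of $\MDls$ per bucket, and answer a query $j$ by $\widehat{f}_j := \min_{r\in[t]} \widehat{c}_{r,h_r(j)}$, where $\widehat{c}_{r,b}$ is the noisy count returned for bucket $(r,b)$.

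For communication, each user touches only the $t = \tilde{O}(1)$ buckets on its hash-path and contributes $\tilde{O}(1/\epsilon)$ noise messages to each row, for $\tilde{O}(1)$ messages and bits total. For privacy, note that changing one user's datum changes that user's behavior in all $t$ rows; running each row's $w$-bucket instance with privacy parameter $\epsilon_0 = \Theta(\epsilon/\sqrt{t\log(1/\delta)})$ and applying advanced composition over the $t$ rows gives $(\epsilon,\delta)$-DP overall --- using that a global shuffle of the tagged messages restricts to a uniform shuffle within each row, so the adversary's view decomposes across rows --- at the cost of only a $\sqrt{t\log(1/\delta)} = \tilde{O}(1)$ blow-up in the per-bucket noise, which we absorb into $\tilde{O}(\cdot)$ (here $\delta$ inverse-polynomial in $n$ makes $\log(1/\delta) = \tilde{O}(1)$). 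For accuracy, the true count of bucket $h_r(j)$ is $f_j$ plus a \emph{nonnegative} collision term of expectation at most $\|f\|_1/w = n/w = O(1)$; a Markov bound per row, amplified over the $t$ rows and union-bounded over all $B$ queries, shows $\min_r(\text{collision in row }r) = \tilde{O}(1)$ for every $j$ at once with probability $\ge 1-\beta$, while a union bound over the $wt = \tilde{O}(n)$ buckets shows every $\MDls$ error is $\tilde{O}(1/\epsilon_0) = \tilde{O}(1)$ in magnitude; hence $\widehat{f}_j$ neither over- nor under-estimates $f_j$ by more than $\tilde{O}(1)$, simultaneously for all $j$.

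For the query time and the two coin models: in the public-coin setting $h_1,\dots,h_t$ are common knowledge, so users can send already-hashed $(\text{bucket},\text{value})$ pairs and the analyzer only maintains running bucket-sums, so a query is $t = \tilde{O}(1)$ lookups; in the private-coin setting there is no shared randomness to fix the $h_r$, so we instead use a randomness-free sketch --- e.g.\ one built from an explicit code, or have users send raw values and noise and let the analyzer hash --- which costs $\tilde{O}(n)$ decoding work per query, and checking that this variant still attains $\tilde{O}(1)$ error and communication is a separate (routine but non-trivial) component.

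The main obstacle is that the two targets pull against each other. Putting independent noise in all $B$ domain buckets gives error $\tilde{O}(1/\epsilon)$ but $\tilde{\Omega}(B/n)$ communication per user, which is not $\tilde{O}(1)$ when $B \gg n$; hashing down to $\tilde{O}(n)$ buckets repairs the communication but (i) introduces collision error and, more delicately, (ii) forces control of the \emph{bias} of the minimum of noisy, positively biased estimates --- a single anomalously negative noise value in any one of the $\tilde{O}(n)$ buckets would produce a systematic underestimate --- so one needs high-probability, not merely in-expectation, control of both the collision terms and the $\MDls$ noise across all buckets at once. Designing $\MDls$ itself to have $\tilde{O}(1)$ error with only $\tilde{O}(1)$ messages per user in the shuffled model is the other substantive ingredient, and is where most of the genuine work lies.
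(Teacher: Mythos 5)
Your public-coin construction is essentially the paper's own protocol $P^{\CM}$ (Section~\ref{subsec:count-min}): Count Min with $s=\Theta(n)$ columns and $\tau=\Theta(\log(B/\beta))$ rows, low-rate per-bucket blanket noise contributed by every user, labels on messages so the global shuffle decomposes per bucket, and high-probability control of both collision terms and noise simultaneously over all $\tilde O(n)$ buckets before taking the row-wise minimum. Your per-row advanced-composition accounting in place of the paper's direct smooth-binomial / incremental-sensitivity analysis (Lemmas~\ref{lem:lap_gen}--\ref{lem:bin_smooth_small_gamma}) is a legitimate variant and costs only polylogarithmic factors, and your Markov-per-row collision bound is a slightly weaker but adequate substitute for the paper's ``some row is collision-free with probability $1-(n/s)^\tau$'' argument. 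Treating the per-bucket bit-summation primitive $\MDls$ as a black box is defensible given prior work on shuffled aggregation, though you correctly flag that this is where real work hides.

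The genuine gap is the private-coin half of the theorem, which you defer almost entirely. Of the two fallbacks you name, ``have users send raw values and noise and let the analyzer hash'' does not work when $B\gg n$: without shared randomness the hashing is post-processing by the (untrusted) analyzer, so privacy must hold for the multiset of raw values plus noise, and with only $\tilde O(1)$ noise messages per user spread over a domain of size $B$ the expected noise mass on any fixed value is $\tilde O(n/B)=o(1)$; a change of one user's value from $v$ to $v'$ is then detectable except with $o(1)$ probability, which is incompatible with $\delta$ inverse-polynomial in $n$. Your other suggestion, ``a sketch built from an explicit code,'' points in the right direction but omits the idea that actually makes the paper's private-coin protocol $P^{\Had}$ work (Section~\ref{subsec:Had}): each user sends $\tau=\log n$-tuples of indices from the support of the Hadamard codeword of its input together with $\rho=\tilde O(1/\ep^2)$ uniformly random noise tuples from $[2B]^\tau$, and the crucial point is that each Hadamard test set $\MH_{2B,j}^\tau$ captures each noise tuple with probability exactly $2^{-\tau}=1/n$, so the $\rho n$ noise tuples contributed collectively induce $\Bin(\rho n,1/n)$ noise (mean $\rho$) in \emph{every} test statistic simultaneously, even though the domain is huge; privacy then follows from concentration of this binomial count, and accuracy from debiasing it. Without this mechanism (or an equivalent one) the private-coin claim --- $\tilde O(1)$ error and $\tilde O(1)$ communication with no public randomness --- remains unproven, so as written the proposal establishes only the public-coin part of Theorem~\ref{thm:hist_intro}.
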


Combining Theorems~\ref{th:histograms_single_message} and~\ref{thm:hist_intro} yields the first separation between single-message and multi-message protocols for frequency estimation. 
Moreover, Theorem~\ref{thm:hist_intro} can be used to obtain multi-message protocols with small error and small communication for several other widely studied problems (e.g., heavy hitters, range counting, and median and quantiles estimation), discussed in Section~\ref{sec:applications}.
 Finally, Theorem~\ref{thm:hist_intro} implies the following consequence for statistical query (SQ) algorithms with respect to a distribution $\MD$ on $\MX$ (see Appendix~\ref{app:counting-queries} for the basic definitions). We say that a non-adaptive SQ algorithm $\MA$ making at most $\B$ queries $q : \MX \ra \{ 0,1\}$ is {\it $k$-sparse} if for each $x \in \MX$, the Hamming weight of the output of the queries is at most $k$.
 Then, under the assumption that users' data is drawn i.i.d. from $\MD$, the algorithm $\MA$ can be efficiently simulated in the shuffled model as follows: 
  \begin{corollary}[Informal version of Corollary \ref{cor:cq_sparse}]
    \label{cor:sq}
  For any non-adaptive $k$-sparse SQ algorithm $\MA$ with $\B$ queries and $\beta > 0$, there is a (private-coin) shuffled model protocol satisfying $(\ep, \delta)$-DP whose output has total variation distance at most $\beta$ from that of $\MA$, such that the number of users is $n \leq \tilde O \left( \frac{k}{\ep \tau} + \frac{1}{\tau^2} \right)$, and the per-user communication is $\tilde O\left(\frac{k^2}{\ep^2}\right)$, where $\tilde O(\cdot)$ hides logarithmic factors in $\B, n, 1/\delta, 1/\ep$, and $1/\beta$. 
\end{corollary}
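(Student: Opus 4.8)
The plan is to reduce the simulation of a non-adaptive $k$-sparse SQ (counting-query) algorithm $\MA$ to a private \emph{sparse histogram} estimation problem over the domain $[\B]$, and then invoke the multi-message frequency-estimation protocol of Theorem~\ref{thm:hist_intro} with a modification that accounts for the $k$-sparsity.

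\textbf{Reduction to histograms.}
Since $\MA$ is non-adaptive, fix its $\B$ queries $q_1,\dots,q_\B : \MX\to\{0,1\}$ once and for all (they do not depend on the data). For a user holding $x\in\MX$, the answer vector $(q_1(x),\dots,q_\B(x))\in\{0,1\}^\B$ has Hamming weight at most $k$ by $k$-sparsity, so the user effectively holds the set $S(x) := \{j\in[\B] : q_j(x)=1\}$ with $|S(x)|\le k$. By the guarantee of a tolerance-$\tau$ algorithm, together with a union bound over the fixed query set, it suffices for the analyzer to output values $\hat\mu_1,\dots,\hat\mu_\B$ that are \emph{simultaneously} within $\tau$ of the population means $\mu_j := \E_{x\sim\MD}[q_j(x)]$ except with probability $\beta$; feeding $(\hat\mu_j)_j$ to $\MA$ then yields an output within total variation $\beta$ of that of $\MA$. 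We split the error budget $\tau$ into a \emph{sampling} half and a \emph{privacy} half.

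\textbf{Sampling, then private recovery of the empirical counts.}
As the $n$ users' inputs are i.i.d.\ from $\MD$, a Hoeffding bound and a union bound over the $\B$ queries show that with $n = \tilde O(1/\tau^2)$ users the empirical means $\bar\mu_j := \tfrac1n\sum_{i\in[n]}q_j(x_i)$ are all within $\tau/2$ of $\mu_j$, except with probability $\le \beta/2$. It remains to privately recover each $\bar\mu_j$ to additive error $\tfrac{\tau}{2}$, i.e.\ to estimate, for every $j\in[\B]$, the count $\sum_i q_j(x_i)=|\{i : j\in S(x_i)\}|$ up to additive error $\tfrac{\tau}{2}n$ — a histogram of the multiset $\bigcup_i S(x_i)$ over $[\B]$. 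For this we run the private-coin protocol of Theorem~\ref{thm:hist_intro}, except that each user now contributes its (at most $k$) active coordinates rather than a single element of $[\B]$. The only change to the analysis is that one user's data now affects up to $k$ of the aggregated items, so the per-user sensitivity is $k$ rather than $1$; we accommodate this by calibrating the protocol's internal noise to privacy level $\Theta(\ep/k)$, so that the per-user view (including the shuffled messages) remains $(\ep,\delta)$-DP. This inflates the per-coordinate additive error to $\tilde O(k/\ep)$ and the per-user communication to $\tilde O(k^2/\ep^2)$ bits, with query time inherited from Theorem~\ref{thm:hist_intro}. Requiring $\tilde O(k/\ep)\le\tfrac\tau2 n$ gives $n=\tilde O(k/(\ep\tau))$, and combining with the sampling bound yields $n\le \tilde O\!\big(k/(\ep\tau)+1/\tau^2\big)$, as claimed.

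\textbf{Main obstacle.}
The crux is the interaction between sparsity $k$ and the privacy parameter in the histogram step: because a user touches up to $k$ items one pays a factor $k$ in $\ep$, and one must verify (i) that the frequency-estimation protocol of Theorem~\ref{thm:hist_intro}, which is stated for constant $\ep$, degrades only polynomially in $1/\ep$ in both error and communication when run at privacy $\Theta(\ep/k)$, and (ii) that its privacy proof — in particular the amplification-by-shuffling argument underlying it — still goes through under the sensitivity-$k$ modification (equivalently, under group privacy over the $\le k$ messages a user can influence). A secondary, more routine point is to make the coupling in the reduction fully rigorous against the precise formalization of sparse, non-adaptive counting-query algorithms adopted in Appendix~\ref{app:counting-queries}.
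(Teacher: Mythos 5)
Your proposal is correct and follows essentially the same route as the paper: reduce to sparse-histogram estimation over the fixed query set, split the tolerance into a Hoeffding sampling term (giving $n = \tilde O(1/\tau^2)$) and the protocol's $\tilde O(k/\ep)$ additive error (giving $n = \tilde O(k/(\ep\tau))$), and run the private-coin Hadamard-response protocol calibrated for users holding up to $k$ items --- which the paper already supplies as Theorem~\ref{thm:Had_k_main}, so no new modification is needed at the corollary level. One small correction to your ``obstacle'' discussion: the privacy of that multi-message protocol is not established via amplification by shuffling but by a direct blanket-noise (binomial) argument together with a hybrid over the $\le k$ coordinates a user can change (Lemma~\ref{thm:h_had_priv}), which is precisely why the $\Theta(\ep/k)$ calibration you describe yields the claimed $\tilde O(k^2/\ep^2)$ communication and $\tilde O(k/\ep)$ error.
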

Corollary \ref{cor:sq} improves upon the simulation of non-adaptive SQ algorithms in the {\it local model} \cite{kasiviswanathan2008what}, for which the number of users must grow as $\frac{k}{\ep^2 \tau^2}$ as opposed to $\frac{1}{\tau^2} + \frac{k}{\ep \tau}$ in the shuffled model. 
We emphasize that the main novelty of Corollary \ref{cor:sq} is in the regime that $k^2/\ep^2 \ll \B$; in particular, though prior work on low-communication private summation in the shuffled model \cite{cheu_distributed_2018,ghazi2019private,balle_private_2020} implies an algorithm for simulating $\MA$ with roughly the same bound on the number of users $n$ as in Corollary \ref{cor:sq} and communication $\Omega(\B)$, it was unknown whether the communication could be reduced to have logarithmic dependence on $\B$, as in Corollary \ref{cor:sq}. 

\begin{table}[t]
    \centering
    \footnotesize
\begin{tabular}{|l|c|c|c|c|c|c|}
       \hline
          & \multicolumn{2}{|l|}{\bf \thead{Local}} & {\bf \thead{Local + shuffle}} & {\bf \thead{Shuffled,\\ single-message}} & {\bf \thead{Shuffled,\\ multi-message}} & {\bf \thead{Central}}\\
       \hline
         \hline
         Expected max. error & $\tilde{O}(\sqrt{n})$ & $\tilde{\Omega}(\sqrt{n})$ & $\tilde{O}(\min(\sqrt[4]{n}, \sqrt \B))$ & $\tilde{\Omega}( \min(\sqrt[4]{n}, \sqrt{\B}))$ & $\tilde{\Theta}(1)$ & $\tilde{\Theta}(1)$\\
         \hline
         \makecell[l]{Communication\\ per user} & $\Theta(1)$ & any & \makecell{$O(\B)$ (err $\sqrt[4]{n}$)\\ $\tilde{O}(1)$ (err $\sqrt{\B}$)} & any & $\tilde{\Theta}(1)$ & n.a. \\ 
       \hline
      References & \cite{bassily2017practical} & \cite{bassily2015local} &~\cite{warner1965randomized,erlingsson2014rappor,balle_privacy_2019} & Theorems~\ref{thm:freq_ub} \& \ref{thm:freq_lb} & Theorem~\ref{thm:hist_full} &~\cite{mcsherry2007mechanism,steinke2017tight}\\
    \hline
    \end{tabular}
     \caption{Upper and lower bounds on expected maximum error (over all $B$ queries, where the sum of all frequencies is $n$) for frequency estimation in different models of DP. The bounds are stated for fixed, positive privacy parameters $\varepsilon$ and $\delta$, and $\tilde{\Theta}/\tilde{O}/\tilde{\Omega}$ asymptotic notation suppresses factors that are polylogarithmic in $B$ and $n$. The communication per user is in terms of the total number of bits sent. In all upper bounds, the protocol is symmetric with respect to the users, and no public randomness is needed. References are to the first results we are aware of that imply the stated bounds. 
     }
    \label{table:frequency_estimation_results}
  \end{table}

\subsection{Overview of Single-Message Lower Bounds}\label{sec:overview_lb}

We start by giving an overview of the   lower bound of $\tilde \Omega( \min \{ n^{1/4}, \sqrt{\B} \})$ in Theorem~\ref{th:histograms_single_message} on the error of any single-message frequency estimation protocol. We first focus on the case where $n \leq \B^2$ and thus $\min \{n^{1/4}, \sqrt{\B}\} = n^{1/4}$. 
The main component of the proof in this case is a lower bound of $\tilde \Omega(n^{1/4})$ for frequency estimation for $(\ep_L, \delta_L)$-\emph{local} DP protocols\footnote{Note that we use the subscripts in $\ep_L$ and $\delta_L$ to distinguish the privacy parameters of the \emph{local} model from the $\epsilon$ and $\delta$ parameters (without a subscript) of the shuffled model.} when $\ep_L  =\ln(n) + O(1)$.  While lower bounds for local DP frequency estimation were previously obtained in the seminal works of Bassily and Smith~\cite{bassily2015local} and Duchi, Jordan and Wainwright~\cite{duchi_minimax}, two critical reasons make them less useful for our purposes: (i) their dependence on $\ep_L$ is sub-optimal when $\ep_L = \omega(1)$ (i.e., low error regime) and (ii) they only apply to the case where $\delta_L = 0$ (i.e., pure privacy).\footnote{As we discuss in Remark \ref{rmk:black-box}, generic reductions \cite{cheu_distributed_2018,bun2018heavy} showing that one can efficiently simulate an approximately differentially private protocol (i.e., with $\delta_L > 0$) with a pure differentially private protocol (i.e., with $\delta_L = 0$) are insufficient to obtain tight lower bounds.}  We prove new error bounds in the low error and approximate privacy regime in order to obtain our essentially tight lower bound in Theorem \ref{th:histograms_single_message} for single-message shuffled protocols.  We discuss these and outline the proof next.

Let $R$ be an $(\ep_L, \delta_L)$-locally \DP randomizer. 
The general approach~\cite{bassily2015local, duchi_minimax} is to show that if $V$ is a random variable drawn uniformly at random from $[\B]$ and if $X$ is a random variable that is equal to $V$ with probability  parameter $\alpha \in (0,1)$, and is drawn uniformly at random from $[\B]$ otherwise, then the mutual information between $V$ and the local randomizer output $R(X)$ satisfies
\begin{equation}
  \label{eq:inf_ub_intro}
  I(V; R(X)) \leq  \frac{\log \B}{4n}.
\end{equation}
\noindent
Once~(\ref{eq:inf_ub_intro}) is established, the chain rule of mutual information implies that $I(V; R(X_1), \ldots, R(X_n)) \leq \frac{\log{\B}}{4}$, where $X_1, \ldots, X_n$ are independent and identically distributed given $V$. Fano's inequality~\cite{coverthomas} then implies that the probability that any analyzer receiving  $R(X_1), \ldots, R(X_n)$ correctly guesses $V$ is at most $1/4$; on the other hand, an $\Omega(\alpha n)$-accurate analyzer must be able to determine $V$ with high probability since its frequency in the dataset $X_1, \ldots, X_n$ is roughly $\alpha n$, greater than the frequency of all other $v \in [\B]$. This approach thus yields a lower bound of $\Omega(\alpha n)$ on frequency estimation. 

To prove the desired $\tilde \Omega(n^{1/4})$ lower bound using this approach, it turns out we need a bound of the form
\begin{equation}
  \label{eq:ep-mutinf}
I(V; R(X)) \leq \tilde O(\alpha^4 n e^{\ep_L}),
\end{equation}
where both $\delta_L > 0$ and $\ep_L = \omega(1)$.  (We will in fact choose 
$\alpha = \tilde \Theta( n^{-3/4})$ and $ \ep_L = \ln(n) + O(1)$; as we will discuss later,~(\ref{eq:ep-mutinf}) is essentially tight in this regime.)


\paragraph{Limitations of Previous Approaches}

We first state the existing upper bounds on $I(V; R(X))$, which only use the privacy of the local randomizer.  Bassily and Smith \cite[Claim 5.4]{bassily2015local} showed an upper bound of $I(V; R(X)) \leq O(\ep_L^2 \alpha^2)$ with $\ep_L = O(1)$ and $\delta_L = o(1/(n \log n))$, which thus satisfies~(\ref{eq:inf_ub_intro}) with $\alpha = \Theta \left( \sqrt{\frac{\log \B}{\ep_L^2 n}} \right)$. 
For $\delta_L = 0$, Duchi et al. \cite{duchi_minimax} generalized this result to the case $\ep_L \geq 1$, proving that\footnote{This bound is not stated explicitly in \cite{duchi_minimax}, though \cite[Lemma 7]{duchi_minimax} proves a similar result whose proof can readily be modified appropriately.}
$I(V; R(X)) \leq O(\alpha^2 e^{2\ep_L})$. Both of these bounds are weaker than 
(\ref{eq:ep-mutinf}) for the above setting of $\alpha$
and $\ep_L$.  

However, proving the mutual information bound in (\ref{eq:ep-mutinf}) turns out to be impossible if we only use the privacy of the local randomizers!  In fact, the bound can be shown to be {\it false} if all we assume about $R$ is that it is $(\ep_L, \delta_L)$-locally \DP for some $\ep_L \approx \ln n$ and $\delta_L = n^{-O(1)}$. For instance, it is violated if one takes $R$ to be $R_{\RR}$, the local randomizer of the $\B$-randomized response \cite{warner1965randomized}. Consider for example the regime where $B \leq n \leq \B^2$, and the setting where $R_{\RR}(v)$ is equal to $v$ with probability $1-\B/n$, and is uniformly random over $[\B]$ with the remaining probability of $\B/n$. In this case,  the local randomizer $R_{\RR}(\cdot)$ is $(\ln(n) + O(1), 0)$-differentially private.  A simple calculation shows that $I(V; R_{\RR}(X)) = \tilde\Theta(\alpha)$. Whenever $\alpha \ll 1/\sqrt{n}$, which is the regime we have to consider in order to obtain any non-trivial lower bound\footnote{i.e., any stronger lower bound than what holds even in the \emph{local} model} in the single-message shuffled model, it holds that $\alpha \gg \alpha^4 n \exp(\ln(n))$, thus contradicting~(\ref{eq:ep-mutinf}) (see Remark~\ref{rmk:inf-counterexample}).  The insight derived from this counterexample is actually crucial, as we describe in our new technique next.


\paragraph{Mutual Information Bound from Privacy and Accuracy}

Departing from previous work, we manage to prove the stronger bound (\ref{eq:ep-mutinf}) as follows. Inspecting the counterexample based on the $B$-randomized response outlined above, we first observe that any analyzer in this case must have error at least $\Omega(\sqrt{\B})$, which is larger than $\alpha n$, the error that would be ruled out by the subsequent application of Fano's inequality! This led us to appeal to accuracy, in addition to privacy, when proving the mutual information upper bound. We thus leverage the additional available property that the local randomizer $R$ can be combined with an analyzer $A$ in such a way that the mapping $(x_1, \ldots, x_n) \mapsto A(R(x_1), \ldots, R(x_n))$ computes the frequencies of elements of every dataset $(x_1, \ldots, x_n)$ accurately, i.e., to within an error of $O(\alpha n)$.
At a high level, our approach for proving the bound in~(\ref{eq:ep-mutinf}) then proceeds by:
\begin{enumerate}
    \item[(i)] Proving a structural property satisfied by the randomizer corresponding to any accurate frequency estimation protocol. Namely, we show in Lemma \ref{lem:lb_tvd} that if there is an accurate analyzer, the total variation distance between the output of the local randomizer on any given input, and its output on a uniform input, is close to 1. 
    \item[(ii)] Using the $(\ep_L, \delta_L)$-DP property of the randomizer along with the structural property in (i) in order to upper-bound the mutual information $I(V;R(X))$. 
\end{enumerate}
We believe that the application of the structural property in (i) to proving bounds of the form (\ref{eq:ep-mutinf}) is of independent interest. As we further discuss below, this property is, in particular, used (together with privacy of $R$) to argue that for most inputs $v \in [\B]$, the local randomizer output $R(v)$ is unlikely to equal a message that is much less likely occur when the input is uniformly random than when it is $v$. Note that it is somewhat counter-intuitive that accuracy is used in the proof of this fact, as one way to achieve very accurate protocols is to ensure that $R(v)$ is equal to a message which is unlikely when the input is any $u \neq v$. We now outline the proofs of (i) and (ii) in more detail.
  
The gist of the proof of (i) is an anti-concentration statement. Let $v$ be a fixed element of $[B]$ and let $X$ be a random variable uniformly distributed on $[B]$. Assume that the total variation distance  $\Delta(R(v), R(X))$ is not close to $1$, and that a small fraction of the users have input $v$ while the rest have uniformly random inputs. Let $\MZ$ denote the range of the local randomizer $R$. First, we consider the special case where $\MZ$ is $\{ 0,1\}$. Then the distribution of the shuffled outputs of the users with $v$ as their input is in bijection with a binomial random variable with parameter $p := \p[R(v) = 1]$, and the same is true for the distribution of the shuffled outputs of the users with uniform random inputs $X$ (with parameter $q := \p[R(X) = 1]$). Then, we use the anti-concentration properties of binomial random variables in order to argue that if $|p-q| = \Delta(R(v), R(X))$ is too small, then with nontrivial probability the shuffled outputs of the users with input $v$ will be indistinguishable from the shuffled outputs of the users with uniform random inputs. This is then used to contradict the supposed accuracy of the analyzer. To deal with the general case where the range $\MZ$ is any finite set, we repeatedly apply the data processing inequality for total variation distance in order to reduce to the binary case (Lemma \ref{lem:tvd_hist}). The full proof appears in Lemma \ref{lem:lb_tvd}.

Equipped with the property in (i), we now outline the proof of the mutual information bound in (ii).
Denote by
\begin{itemize}
    \item $\MT_v$ 
the set of messages {\it much more likely} to occur when the input is $v$ than when it is uniform,
\item $\MY_v$ 
the set of messages {\it less likely} to occur when the input is $v$ than when it is uniform.
\end{itemize}
Note that the union $\MT_v \cup \MY_v$ is \emph{not} the entire range $\MZ$ of messages; in particular, it does not include messages that are {\it a bit more likely} to occur when the input is $v$ than when it is uniform.\footnote{For clarity of exposition in this overview, we refrain from quantifying the likelihoods in each of these cases; for more details on this, we refer the reader to Section~\ref{sec:interm_lb}.}
On a high level, it turns out that the mutual information $I(V; R(X))$ will be large, i.e., $R(X)$ will reveal a significant amount of information about $V$, if either of the following events occurs:
\begin{enumerate}
\item[(a)] There are too many inputs $v \in [\B]$ such that the mass $\p[R(X) \in \MY_v]$ is small. Intuitively, for such $v$, the local randomizer $R$ fails to ``hide'' the fact that a uniform input $X$ is $v$ given that $X$ indeed equals $v$ and $R(X) \in \MY_v$.
\item[(b)] There are too many inputs $v \in [\B]$ such that the mass $\p[R(v) \in \MT_v]$ is large. Such inputs make it too likely that $X = v$ given that $R(X) \in \MT_v$, which makes it more likely in turn that $V=v$. 
\end{enumerate}
We first note that the total variation distance $\Delta(R(v), R(X))$ is upper-bounded by $\p[R(X) \in \MY_v]$. On the other hand, the accuracy of the protocol along with property (i) imply that $\Delta(R(v), R(X))$ is close to $1$. By putting these together, we can conclude that event (a) does not occur (see Lemma \ref{lem:lb_tvd} for more details). 

To prove that event (b) does not occur, we use the $(\ep_L, \delta_L)$-DP guarantee of the local randomizer $R$. Namely, we will use the inequality $\p[R(v) \in \MS] \leq e^{\ep_L} \cdot \p[R(X) \in \MS] + \delta$ for various subsets $\MS$ of $\MZ$. Unfortunately, setting $\MS = \MT_v$ does not lead to a good enough upper bound on $\p[R(v) \in \MT_v]$; indeed, for the local randomizer $R = R_{\RR}$ corresponding to the $B$-ary randomized response, we will have $\MT_v = \{ v \}$ for $n \gg \B$, and so $\p[R(v) \in \MT_v] = 1-\B/n \approx 1$ for any $v$. Thus, to establish (b), we need to additionally use the accuracy of the analyzer $A$ (i.e., property (i) above), together with a careful double-counting argument to enumerate the probabilities that $R(v)$ belongs to subsets of $\MT_v$ of different granularity (with respect to the likelihood of occurrence under input $v$ versus a uniform input). For the details, we refer the reader to Section \ref{sec:interm_lb} and Lemma \ref{lem:intermed_mutinf}.

Having established the above lower bound for locally differentially private estimation in the low-privacy regime, the final step is to apply a lemma of Cheu et al.~\cite{cheu_distributed_2018} (restated as Lemma~\ref{lem:pratio} below), stating that any lower bound for $(\ep + \ln(n), \delta)$-locally differentially private protocols implies a lower bound for $(\ep, \delta)$-differentially private protocols in the single-message shuffled model (i.e., we take $\ep_L = \ep + \ln(n)$). Moreover, for $\ep_L = \ln(n) + O(1)$ and $\alpha = \tilde \Theta(n^{-3/4})$, we observe that~(\ref{eq:ep-mutinf}) implies~(\ref{eq:inf_ub_intro}), and thus a lower bound of $\tilde \Omega(\alpha n) = \tilde \Omega(n^{1/4})$ for frequency estimation in the single-message shuffled model follows. Finally, we point out that while the above outline focused on the case where $n \le \B^2$, it turns out that this is essentially without loss of generality as the other case where $n > \B^2$ 
can be reduced to the former (see Lemma~\ref{lem:invn}).

\paragraph{Tightness of Lower Bounds} The lower bounds sketched above are nearly tight. The upper bound of Theorem~\ref{th:histograms_single_message} follows from combining existing results showing that the single-message shuffled model provides privacy amplification of locally differentially private protocols~\cite{erlingsson2019amplification,balle_privacy_2019}, with known locally differentially private protocols for frequency estimation~\cite{warner1965randomized,erlingsson2014rappor,duchi_minimax, balle_privacy_2019}. In particular, as recently shown by Balle et al.~\cite{balle_privacy_2019}, a pure $(\ep_L, 0)$-differentially private local randomizer yields a protocol in the shuffled model that is $\left(O\left( e^{\ep_L} \sqrt{\frac{\log(1/\delta)}{n}}\right), \delta\right)$-differentially private and that has the same level of accuracy.\footnote{Note that we cannot use the earlier amplification by shuffling result of~\cite{erlingsson2019amplification}, since it is only stated for $\ep_L = O(1)$ whereas we need to amplify a much less private local protocol, having an $\ep_L$ close to $\ln{n}$.} Then:
\begin{itemize}
    \item When combined with RAPPOR~\cite{erlingsson2014rappor,duchi_minimax}, we get an upper bound of $\tilde O(n^{1/4})$ on the error.
    \item When combined with the $\B$-randomized response~\cite{warner1965randomized,acharya2019hadamard}, we get an error upper bound of $\tilde O(\sqrt{\B})$.
\end{itemize}
The full details appear in Appendix~\ref{apx:freq_ub}. Put together, these imply that the minimum in our lower bound in Theorem~\ref{th:histograms_single_message} is tight (up to logarithmic factors). It also follows that the mutual information bound in Equation~(\ref{eq:ep-mutinf}) is tight (up to logarithmic factors) for $\ep_L = \ln(n) + O(1)$ and $\alpha = n^{-3/4}$ (which is the parameter settings corresponding to the single-message shuffled model); indeed, a stronger bound in Equation~(\ref{eq:ep-mutinf}) would lead to larger lower bounds in the single-message shuffled model thereby contradicting the upper bounds discussed in this paragraph.

\paragraph{Lower Bound for Selection: Sharp Bound on Level-1 Weight of Probability Ratio Functions} We now outline the proof of the nearly tight lower bound on the number of users required to solve the \emph{selection} problem in the single-message shuffled model (Theorem~\ref{th:selection_single_message}).
The main component of the proof in this case is a lower bound of $\Omega(B)$ users for selection for $(\ep_L, \delta_L)$-\emph{local} DP protocols when $\ep_L  =\ln(n) + O(1)$.

In the case of local $(\ep_L, 0)$-DP (i.e., pure) protocols, Ullman~\cite{ullman2018tight} proved a lower bound $n = \Omega\left(\frac{\B \log \B}{(\exp(\ep_L) - 1)^2}\right)$. There are two different reasons why this lower bound is not sufficient for our purposes:
\begin{enumerate}
    \item It does not rule out DP protocols with $\delta_L > 0$ (i.e., approximate protocols), which are necessary to consider for our application to the shuffled model.
    \item For the low privacy setting of $\ep_L  =\ln(n) + O(1)$, the bound simplifies to $ n = \tilde \Omega(\B / n^2)$, i.e.,  $n = \tilde \Omega(\B^{1/3})$, weaker than what we desire.
\end{enumerate}

To prove our near-optimal lower bound, we remedy both of the aforementioned limitations by allowing positive values of $\delta_L$ and achieving a better dependence on $\epsilon_L$.  As in the proof of frequency estimation, we reduce proving Theorem~\ref{th:selection_single_message} to the task of showing the following mutual information upper bound:
\begin{equation}
  \label{eq:sel-intro-mi-ub}
I((L,J); R(X_{L,J})) \leq \tilde O \left( \frac{1}{\B} \right) +  O(\delta_L (\B + n)),
\end{equation}
where $L$ is a uniform random bit, $J$ is a uniform random coordinate in $[B]$, and $X_{L,J}$ is uniform over the subcube $\{ x \in \BB^\B : x_J = L \}$. Indeed, once (\ref{eq:sel-intro-mi-ub}) holds and $\delta_L < o(1/(\B n))$, the chain rule implies that the mutual information between all users' messages and the pair $(L,J)$ is at most $O\left(\frac{n\ln(\B)}{\B}\right)$. It follows by Fano's inequality that if $n = o(\B)$, no analyzer can determine the pair $(L,J)$ with high probability (which any protocol for selection must be able to do).

For any message $z$ in the range of $R$, define the Boolean function $f_z(x) := \frac{\p[R(x) = z]}{\p[R(X_{L,J}) = z]}$ where $x \in \BB^\B$. Let $\bW^1[f]$ denote the level-$1$ Fourier weight of a Boolean function $f$.  To prove inequalities of the form (\ref{eq:sel-intro-mi-ub}), the prior work of Ullman~\cite{ullman2018tight} shows that $I((L,J); R(X_{L,J}))$ is determined by  $\bW^1[f_z]$, up to normalization constants.  In the case where $\delta_L = 0$ and $\ep_L = \ln(n) + O(1)$, $f_z \in [0,e^{\ep_L}]$, and by Parseval's identity $\bW^1[f_z] \leq O(e^{2\ep_L})$ for any message $z$, leading to
\begin{equation}
\label{eq:ullman-mutinf}
I((L,J); R(X_{L,J})) \leq O\left(\frac{e^{2\ep_L}}{\B}\right).
\end{equation}
Unfortunately, for our choice of $\ep_L = \ln(n) + O(1)$, (\ref{eq:ullman-mutinf}) is weaker than (\ref{eq:sel-intro-mi-ub}).

To show (\ref{eq:sel-intro-mi-ub}), we depart from the previous approach in the following ways:
\begin{enumerate}
    \item[(a)] We show that the functions $f_z$ take values in $[0, O(e^{\ep_L})]$ for {\it most} inputs $x$; this uses the $(\ep_L, \delta_L)$-local DP of the local randomizer $R$.
    \item[(b)] Using the {\it Level-$1$ inequality} from the analysis of Boolean functions \cite{odonnell2014analysis} (see Theorem \ref{thm:level1} below), we upper bound $\bW^1[g_z]$ by $O(\ep_L)$, where $g_z$ is the truncation of $f_z$ defined by $g_z(x) = f_z(x)$ if $f_z(x) \leq O(n)$, and $g_z(x) = 0$ otherwise.
    \item[(c)] We bound $I((L,J); R(X_{L,J}))$ by $\bW^1[g_z]$, using the fact $f_z$ is sufficiently close to its truncation $g_z$. 
\end{enumerate}

The above line of reasoning, formalized in Section  \ref{sec:selection_lb}, allows us to show
$$
I((L,J); R(X_{L,J})) \leq O \left( \frac{\ep_L}{\B} + \delta \cdot (\B + e^{\ep_L}) \right),
$$
which is sufficient to establish that (\ref{eq:sel-intro-mi-ub}) holds.

Having proved a lower bound on the error of any $(\ep + \ln n, \delta)$-local DP protocol for selection with $\ep = O(1)$, the final step in the proof is to apply a lemma of \cite{cheu_distributed_2018} to deduce the desired lower bound in the single-message shuffled model. 

\if 0
\paragraph{Remark.}
Some recent work~\cite{erlingsson2019amplification,balle_privacy_2019} has established generic results showing that shuffling provides privacy amplification of locally differentially private protocols. 
While one can combine this amplification result with local differentially private protocols to obtain protocols in the shuffled model, there are two drawbacks.  First, the resulting guarantees can be weaker since the amplification result holds only for relatively small values of $\ep$ (namely, $\ep = O(\log(n/\log(1/\delta)))$)~\cite{balle_privacy_2019}.  Indeed, using the locally differentially private protocol RAPPOR~\cite{erlingsson2014rappor,acharya2019hadamard} and the privacy amplification result, one can obtain a frequency estimation protocol in the shuffled model with error 
$\tilde O(n^{1/4})$; in contrast, our protocol has 
error 
$\tilde O(1)$.
Second, the shuffled model protocols derived using privacy amplification are necessarily single-message protocols; we suspect that frequency estimation with error 
$\tilde O(1)$ may not be possible
in the single-message shuffled model. 

\fi

\subsection{Overview of Multi-Message Protocols}\label{sec:overview_ub}

An important consequence of our lower bound in Theorem~\ref{th:histograms_single_message} is that one cannot achieve an error of $\tilde O(1)$ using \emph{single-message} protocols. This in particular rules out any approach that uses the following natural two-step recipe for getting a private protocol in the shuffled model with accuracy better than in the local model:
\begin{enumerate}
    \item Run any known locally differentially private protocol with a setting of parameters that enables high-accuracy estimation at the analyzer, but exhibits low privacy locally.
    \item Randomly shuffle the messages obtained when each user runs step $1$ on their input, and use the privacy amplification by shuffling bounds~\cite{erlingsson2019amplification, balle_privacy_2019} to improve the privacy guarantees.
\end{enumerate}
Thus, shuffled versions of  the $B$-randomized response~\cite{warner1965randomized,acharya2019hadamard}, RAPPOR~\cite{erlingsson2014rappor,duchi_minimax,acharya2019hadamard}, the Bassily--Smith protocol~\cite{bassily2015local}, TreeHist and Bitstogram~\cite{bassily2017practical}, and the Hadamard response protocol~\cite{acharya2019hadamard,acharya2019communication}, will still incur an  error of $\Omega(\min(\sqrt[4]{n}, \sqrt{B}))$.

Moreover, although the single-message protocol of Cheu et al.~\cite{cheu_distributed_2018} for binary aggregation (as well as the multi-message protocols given in~\cite{ghazi2019scalable, DBLP:journals/corr/abs-1906-09116, ghazi2019private, balle_privacy_2019constantIKOS} for the more general task of real-valued aggregation) can be applied to the one-hot encodings of each user's input to obtain a multi-message protocol for frequency estimation with error $\tilde{O}(1)$, the communication per user would be $\Omega(B)$ bits, which is clearly undesirable.

Recall that the main idea behind (shuffled) randomized response is for each user to send their input with some probability, and random noise with the remaining probability. Similarly, the main idea behind (shuffled) Hadamard response is for each user to send a uniformly random index from the support of the Hadamard codeword corresponding to their input with some probability, and a random index from the entire universe with the remaining probability. In both protocols, the user is sending a message that either depends on their input or is noise; this restriction turns out to be a significant limitation.  Our main insight is that multiple messages allows users to simultaneously send both types of messages, leading to a sweet spot with exponentially smaller error or communication.

\paragraph{Our protocols.}
We design a multi-message version of the private-coin Hadamard response of Acharya et al.~\cite{acharya2019hadamard,acharya2019communication} where each user sends a small \emph{subset} of indices sampled uniformly at random from the support of the Hadamard codeword corresponding to their input, and in addition sends a small subset of indices sampled uniformly at random from the entire universe $[B]$. 
To get accurate results it is crucial that a subset of indices is sampled, as opposed to just a single index (as in the local model protocol of~\cite{acharya2019hadamard,acharya2019communication}). We show that in the regime where the number of indices sampled from inside the support of the Hadamard codeword and the number of noise indices sent by each user are both logarithmic, the resulting multi-message algorithm is private in the shuffled model, and it has polylogarithmic error and communication per user (see Theorem~\ref{thm:hist_full},  Lemmas~\ref{thm:h_had_priv},~\ref{thm:h_had_acc}, and~\ref{thm:Hadamard_efficiency} for more details).

A limitation of our private-coin algorithm outlined above is that the time for the analyzer to answer a single query is $\tilde{O}(n)$. 
This might be a drawback in applications where the analyzer is CPU-limited or where it is supposed to produce real-time answers.  In the presence of public randomness, we design an algorithm that remedies this limitation, having error, communication per user, and query time all equal to 
$\tilde{O}(1)$. 
Furthermore, the frequency estimates of this algorithm have one-sided error, and never underestimate the frequency of an element.
This algorithm is based on a multi-message version of randomized response combined in a delicate manner with the Count Min data structure~\cite{cormode2005improved} (for more details, see Section~\ref{subsec:count-min}).  Previous work~\cite{bassily2015local,bassily2017practical} on DP have used Count Sketch \cite{charikar2002finding}, which is a close variant of Count Min, to go from frequency estimation to heavy hitters.  In contrast, our use of Count Min 
has the purpose of reducing the amount of communication per user.

\subsection{Applications}\label{sec:applications}

\paragraph{Heavy Hitters.}
Another algorithmic task that is closely related to frequency estimation is computing the \emph{heavy hitters} in a dataset distributed across $n$ users, where the goal of the analyzer is to (approximately) retrieve the identities and counts of all elements that appear at least $\tau$ times, for a given threshold $\tau$. It is well-known that in the central DP model, it is possible to compute $\tau$-heavy hitters for any $\tau = \tilde{\Theta}(1)$ whereas in the local DP model, it is possible to compute $\tau$-heavy hitters if and only if $\tau = \tilde{\Theta}(\sqrt{n})$. 
By combining with known reductions (e.g., from Bassily et al.~\cite{bassily2017practical}), our multi-message protocols for frequency estimation yield multi-message protocols for computing the $\tau$-heavy hitters with $\tau = \tilde{\Theta}(1)$ and total communication of $\tilde{\Theta}(1)$ bits per user (for more details, see Appendix~\ref{app:hh_reduction}).

\paragraph{Range Counting.}
In range counting, each of the $n$ users is associated with a point in $[B]^d$ and the goal of the analyzer is to answer arbitrary queries of the form: given a rectangular box in $[B]^d$, how many of the points lie in it?\footnote{We formally define range queries as a special case of counting queries in Section \ref{sec:rq}.} This is a  basic algorithmic primitive that captures an important family of database queries and is useful in geographic applications. This problem has been well-studied in the central model of DP, where Chan et al.~\cite{ChanSS11} obtained an upper bound of $(\log B)^{O(d)}$ on the error (see Section~\ref{sec:related} for more related work). It has also been studied in the local DP model~\cite{cormode2018answering}; in this case, the error has to be at least $\Omega(\sqrt{n})$ even for $d=1$.

We obtain private protocols for range counting in the multi-message shuffled model with exponentially smaller error than what is possible in the local model (for a wide range of parameters). Specifically, we give a private-coin multi-message protocol with $(\log{B})^{O(d)}$ messages per user each of length $O(\log{n})$ bits, error $(\log{B})^{O(d)}$, and query time $\tilde{O}(n \log^d B)$. Moreover, we obtain a public-coin protocol with similar communication and error but with a much smaller query time of $\tilde{O}(\log^d B)$ (see Section~\ref{sec:rq} for more details).

We now briefly outline the main ideas behind our multi-message protocols for range counting. We first argue that even for $d=2$, the total number of queries is $\Theta(B^2)$ and the number of possible queries to which a user positively contributes is also $\Theta(B^2)$. Thus, direct applications of DP algorithms for aggregation or for frequency estimation would result in polynomial error and polynomial communication per user. Instead, we combine our multi-message protocol for frequency estimation (Theorem~\ref{thm:hist_intro}) with a communication-efficient implementation, in the multi-message shuffled model, of the space-partitioning data structure used in the central model protocol of Chan et al.~\cite{ChanSS11}.
The idea is to use a collection $\MB$ of $O(B\log^d B)$ $d$-dimensional rectangles in $[B]^d$ (so-called \emph{dyadic intervals}) with the property that an arbitrary rectangle can be formed as the disjoint union of $O(\log^d B)$ rectangles from $\MB$.
Furthermore, each point in $[B]^d$ is contained in $O(\log^d B)$ rectangles from $\MB$.
This means that it suffices to release a private count of the number of points inside each rectangle in $\MB$ --- a frequency estimation task where each user input contributes to $O(\log^d B)$ buckets.
To turn this into a protocol with small maximum communication in the shuffled model, we develop an approach analogous to the matrix mechanism~\cite{li2010optimizing,li2012adaptive}.
We argue that the transformation of the aforementioned central model algorithm for range counting into a private protocol in the multi-message shuffled model with small communication and error is non-trivial and relies on the specific protocol structure. In fact, the state-of-the-art range counting algorithm of Dwork et al.~\cite{dwork2015pure} in the central model does not seem to transfer to the shuffled model.

\paragraph{M-Estimation of Median.}

A very basic statistic of any dataset of real numbers is its \emph{median}. For simplicity, suppose our dataset consists of real numbers lying in $[0,1]$. It is well-known that there is no DP algorithm for estimating the \emph{value} of the median of such a dataset with error $o(1)$ (i.e., outputting a real number whose absolute distance to the true median is $o(1)$) \cite[Section $3$]{vadhan2017complexity}. This is because the median of a dataset can be highly sensitive to a single data point when there are not many individual data points near the median. Thus in the context of DP, one has to settle for weaker notions of median estimation. One such notion is \emph{M-estimation}, which amounts to finding a value $\tilde{x}$ that approximately minimizes $\sum_i |x_i - \tilde{x}|$ (recall that the median is the minimizer of this objective). This notion has been studied in previous work on DP including by~\cite{lei2011differentially,duchi_minimax} (for more on related work, see Section~\ref{sec:related} below).  Our private range counting protocol described above yields a multi-message protocol with communication $\tilde{O}(1)$ per user and that $M$-estimates the median up to error $\tilde{O}(1)$, i.e., outputs a value $y \in [0,1]$ such that $\sum_i |x_i - y| \le \min_{\tilde{x}} \sum_i |x_i - \tilde{x}| + \tilde{O}(1)$ (see Theorem~\ref{thm:M_est_median} in Appendix~\ref{sec:basic_stats}). Beyond $M$-estimation of the median, our work implies private multi-message protocols for estimating \emph{quantiles} with $\tilde{O}(1)$ error and $\tilde{O}(1)$ bits of communication per user (see Appendix~\ref{sec:basic_stats} for more details).

\subsection{Related Work}\label{sec:related}

\paragraph{Shuffled Privacy Model.}
Following the proposal of the Encode-Shuffle-Analyze architecture by Bittau et al. \cite{bittau17}, several recent works have sought to formalize the trade-offs in the shuffled model with respect to standard local and central DP~\cite{erlingsson2019amplification,balle_privacy_2019} as well as devise private schemes in this model for tasks such as secure aggregation~\cite{cheu_distributed_2018, balle_privacy_2019, ghazi2019scalable, DBLP:journals/corr/abs-1906-09116, ghazi2019private, balle_privacy_2019constantIKOS}. In particular, for the task of \emph{real} aggregation, Balle et al. \cite{balle_privacy_2019} showed that in the single-message shuffled model, the optimal error is $\Theta(n^{1/6})$ (which is better than the error in the local model which is known to be $\Theta(n^{1/2})$).\footnote{Although the single-message real summation protocol of Balle et al. \cite{balle_privacy_2019} uses the $B$-ary randomized response, when combined with their lower bound on single-message protocols, it does not imply any lower bound on single-message frequency estimation protocols. The reason is that their upper bound doe not use the $\ell_{\infty}$ error bound for the $B$-ary randomized response as a black box.} By contrast, recent follow-up work gave multi-message protocols for the same task with error and communication of $\tilde{O}(1)$ \cite{ghazi2019scalable, DBLP:journals/corr/abs-1906-09116, ghazi2019private, balle_privacy_2019constantIKOS}\footnote{A basic primitive in these protocols is a ``split-and-mix'' procedure that goes back to the work of Ishai et al.~\cite{ishai2006cryptography}.}.
Our work is largely motivated by the aforementioned body of works demonstrating the power of the shuffled model, namely, its ability to enable private protocols with lower error than in the local model while placing less trust in a central server or curator.

Wang et al.~\cite{wang2019practical} recently designed an extension of the shuffled model and analyzed its trust properties and privacy-utility tradeoffs. 
They studied the basic task of frequency estimation, and benchmarked several algorithms, including one based on single-message shuffling.
However, they did not consider improvements through multi-message protocols, such as the ones we propose in this work.
Very recently, Erlingsson et al.~\cite{erlingsson2020encode} studied multi-message (``report fragmenting'') protocols for frequency estimation in a practical shuffled model setup. Though they make use of a sketching technique, like we do, their methods cannot be parameterized to have communication and error polylogarithmic in $n$ and $B$ (which our Theorem~\ref{thm:hist_intro} achieves). This is a result of using an estimator (based on computing a mean) that does not yield high-probability guarantees.

\paragraph{Private Frequency Estimation, Heavy Hitters, and Median.}
Frequency estimation and its extensions (considered below) has been extensively studied in concrete computational models including data structures, sketching, streaming, and communication complexity, (e.g.,~\cite{misra1982finding,charikar2002finding,estan2003new,cormode2005improved,cormode2005s,cormode2008finding,munro1980selection,manku1998approximate,greenwald2001space,gilbert2002fast,yi2013optimal,karnin2016optimal}).
Heavy hitters and frequency estimation have also been studied extensively in the standard models of DP, e.g.,~\cite{warner1965randomized,hsu_hh,bassily2015local,bassily2017practical,wang_freqest,bun2018heavy,acharya2019communication}.
The other problems we consider in the shuffled model, namely, range counting, M-estimation of the median, and quantiles, have been well-studied in the literature on data structures and sketching~\cite{CormodeYi20} as well as in the context of DP in the central and local models. Dwork and Lei~\cite{dwork2009differential} initiated work on establishing a connection between DP and robust statistics, and gave private estimators for several problems including the median, using the paradigm of propose-test-release.
Subsequently, Lei~\cite{lei2011differentially} provided an approach in the central DP model for privately releasing a wide class of M-estimators (including the median) that are statistically consistent. While such M-estimators can also be obtained indirectly from non-interactive release of the density function~\cite{wasserman}, the aforementioned approach exhibits an improved rate of convergence. Furthermore, motivated by risk bounds under privacy constraints, Duchi et al.~\cite{duchi_minimax} provided private versions of information-theoretic bounds for minimax risk of M-estimation of the median.

Frequency estimation can be viewed as the problem of distribution estimation in the $\ell_\infty$ norm where the distribution to be estimated is the empirical distribution of a dataset $(x_1, \ldots, x_n)$. Some works~\cite{ye2017optimal,kairouz2016discrete} have established tight lower bounds for locally \DP distribution estimation in the weak privacy setting with loss instead given by either $\ell_1$ or $\ell_2^2$. However, their techniques proceed by using Assouad's method~\cite{duchi_minimax} and are quite different from the approach we use for the $\ell_\infty$ norm in the proof of Theorem~\ref{th:histograms_single_message} (specifically, in the proof of Theorem~\ref{thm:local_lb}).

We also note that an anti-concentration lemma qualitatively similar to our Lemma~\ref{lem:lb_tvd} was used by Chan et al.~\cite[Lemma 3]{chan_optimal_2012} to prove lower bounds on private aggregation, but they operated in a multi-party setting with communication limited by a sparse communication graph. After the initial release of this paper, Ghazi et al.~\cite{ghazi_pure_2020} proved a similar anti-concentration lemma to establish a lower bound on private summation for protocols with short messages. The lemmas in both of these papers do not  apply to the more general case of frequency estimation with an arbitrary number $\B$ of buckets, as is the case throughout this paper.

\paragraph{Range Counting.} Range counting queries have also been an important subject of study in several areas including database systems and algorithms (see \cite{cormode2011sketch} and the references therein). Early works on differentially private frequency estimation~, e.g., \cite{Dwork2006DP, hardt_release}, apply naturally to range counting, though the approach of summing up frequencies yields large errors for queries with large ranges. 

For $d = 1$, Dwork et al.~\cite{DworkNPR10} obtained an upper bound of $O\left(\frac{\log^2 B}{\epsilon}\right)$ and a lower bound of $\Omega(\log B)$ for obtaining $(\epsilon, 0)$-DP.  Chan et al.~\cite{ChanSS11} extended the analysis to $d$-dimensional range counting queries in the central model, for which they obtained an upper bound of roughly $(\log B)^{O(d)}$. Meanwhile, a lower bound of Muthukrishnan and Nikolov~\cite{MuthukrishnanN12} showed that for $n \approx B$, the error is lower bounded by $\Omega\left((\log n)^{d - O(1)}\right)$. Since then, the best-known upper bound on the error for general $d$-dimensional range counting has been $(\log{B}+\log(n)^{O(d)})/\varepsilon$~\cite{dwork2015pure}, obtained using ideas from~\cite{DworkNPR10, ChanSS11} along with a k-d tree-like data structure. We note that for the special case of $d=1$, it is known how to get a much better dependence on $B$ in the central model, namely, exponential in $\log^* B$~\cite{beimelNissimStemmer2013, bun2015differentially}.

Xiao et al.~\cite{xiao2010differential} showed how to obtain private range count queries by using Haar wavelets, while Hay et al.~\cite{hay_hist} formalized the method of maintaining a hierarchical representation of data; the aforementioned two works were compared and refined by Qardaji et al.~\cite{qardaji}. Cormode et al.~\cite{cormode2018answering} showed how to translate many of the previous ideas to the local model of DP. We also note that the matrix mechanism of Li et al.~\cite{li2010optimizing,li2012adaptive} also applies to the problem of range counting queries. An alternate line of work for tackling multi-dimensional range counting that relied on developing private versions of k-d trees and quadtrees was presented by Cormode et al.~\cite{cormode_spatialdecomp}.

\paragraph{Secure Multi-Party Computation.}
If we allow user interaction in the computation of the queries, then there is a rich theory, within cryptography, of \emph{secure multi-party computation} (SMPC) that allows $f(x_1,\dots,x_n)$ to be computed without revealing anything about $x_i$ except what can be inferred from $f(x_1,\dots,x_n)$ itself (see, e.g., the book of Cramer et al.~\cite{cramer2015secure}).
Kilian et al.~\cite{kilian2008fast} studied SMPC protocols for heavy hitters, obtaining near-linear communication complexity with a multi-round protocol.
In contrast, all results in this paper are about \emph{non-interactive} (single-round) protocols in the shuffled-model (in the multi-message setting, all messages are generated at once).
Though generic SMPC protocols can be turned into differentially private protocols (see, e.g., Section 10.2 in~\cite{vadhan2017complexity} and the references therein), they almost always use multiple rounds, and often have large overheads compared to the cost of computing $f(x_1,\dots,x_n)$ in a non-private setting.

\subsection{Organization}

We start with some notation and background in Section~\ref{sec:prelim}. In Section~\ref{sec:single_msg_bds}, we prove our lower bounds for single-message protocols in the shuffled model; corresponding upper bounds can be found in Appendix~\ref{apx:freq_ub}. In Section~\ref{sec:freq_oracle_heavy_hitters}, we present and analyze our multi-message protocols for frequency estimation (with missing proofs in Appendix~\ref{sec:CM_app}). In Section~\ref{sec:rq}, we give our multi-message protocols for range counting. We conclude with some interesting open questions in Section~\ref{sec:conclusion}. The proof of Corollary~\ref{cor:sq} is given in Appendix~\ref{app:counting-queries}. The reduction from frequency estimation to heavy hitters appears in Appendix~\ref{app:hh_reduction}. The reduction from range counting to M-estimation of the median and quantiles is given in Appendix~\ref{sec:basic_stats}.

\section{Preliminaries}\label{sec:prelim}

\paragraph{Notation.} For any positive integer $\B$, let $[\B] = \{ 1, 2, \ldots, \B \}$. For any set $\MY$, we denote by $\MY^*$ the set consisting of sequences of elements of $\MY$, i.e., $\MY^* = \bigcup_{n \geq 0} \MY^n$. 
%
Suppose $\MS$ is a multiset whose elements are drawn from a set $\MX$.  With a slight abuse of notation, we will write $\MS \subset \MX$ and for $x \in \MX$, we write $\m{\MS}{x}$ to denote the \emph{multiplicity} of $x$ in $\MS$.
For an element $x \in \MX$ and a non-negative integer $k$, let $k \times \{ x \}$ denote the multiset with $k$ copies of $x$ (e.g., $3 \times \{ x \} = \{ x,x,x\}$). For a positive real number $a$, we use $\log(a)$ to denote the logarithm base 2 of $a$, and $\ln(a)$ to denote the natural logarithm of $a$.  Let $\Bin(n, p)$ denote the binomial distribution with parameters $n > 0$ and $p \in (0, 1)$.  

\subsection{Differential Privacy}
We now introduce the basics of differential privacy that we will need.  Fix a finite set $\MX$, the space of reports of users. A {\it dataset} is an element of $\MX^*$, namely a tuple consisting of elements of $\MX$. Let $\hist(X) \in \BN^{|\MX|}$ be the histogram of $X$: for any $x \in \MX$, the $x$th component of $\hist(X)$ is the number of occurrences of $x$ in the dataset $X$. We will consider datasets $X, X'$ to be \emph{equivalent} if they have the same histogram (i.e., the ordering of the elements $x_1, \ldots, x_n$ does not matter). For a multiset $\MS$ whose elements are in $\MX$, we will also write $\hist(\MS)$ to denote the histogram of $\MS$ (so that the $x$th component is the number of copies of $x$ in $\MS$).

Let $n \in \BN$, and consider a dataset $X = (x_1, \ldots, x_n) \in \MX^n$.  For an element $x \in \MX$, let $f_X(x) = \frac{\hist(X)_x}{n}$ be the {\it frequency} of $x$ in $X$, namely the fraction of elements of $X$ which are equal to $x$.
Two datasets $X, X'$ are said to be {\it neighboring} if they differ in a single element, meaning that we can write (up to equivalence) $X = (x_1, \ldots, x_{n-1}, x_n)$ and $X' = (x_1, \ldots, x_{n-1}, x_n')$. In this case, we write $X \sim X'$. Let $\MZ$ be a set; we now define the differential privacy of a randomized function $P : \MX^n \ra \MZ$:
\begin{defn}[Differential privacy~\cite{dwork2006calibrating,dwork2006our}]
  \label{def:dp}
A randomized algorithm $P : \MX^n \ra \MZ$ is {\it $(\ep, \delta)$-differentially private} if for every pair of neighboring datasets $X \sim X'$ and for every set $\MS \subset \MZ$, we have
$$
\p[P(X) \in \MS] \leq e^\ep \cdot \p[P(X') \in \MS] + \delta,
$$
where the probabilities are taken over the randomness in $P$.  Here, $\ep \geq 0, \delta \in [0,1]$.
\end{defn}
We will use the following compositional property of differential privacy.
\begin{lemma}[Post-processing, e.g.,~\cite{DworkRothBook}]
    \label{lem:post_process}
    If $P$ is $(\ep, \delta)$-differentially private, then for every randomized function $A$, the composed function $A \circ P$ is $(\ep, \delta)$-differentially private.
\end{lemma}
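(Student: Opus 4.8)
The plan is to reduce the statement to a more convenient equivalent form of the differential privacy guarantee and then apply it to a test function built from $A$. Fix neighboring datasets $X \sim X'$ and a (measurable) set $\MS$ in the range of $A \circ P$. Since the internal randomness of $A$ is independent of that of $P$, I would write $\p[(A\circ P)(X) \in \MS] = \E_{z \sim P(X)}[\,g(z)\,]$, where $g(z) := \p[A(z) \in \MS] \in [0,1]$ depends only on the coins of $A$, and similarly for $X'$. So it suffices to show that $(\ep,\delta)$-differential privacy of $P$ implies $\E[g(P(X))] \le e^\ep\,\E[g(P(X'))] + \delta$ for every function $g : \MZ \to [0,1]$.

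First I would establish this ``functional'' form of DP using the layer-cake identity $g(z) = \int_0^1 \One[g(z) \ge t]\, dt$. Taking expectations and exchanging order of integration,
\[
\E[g(P(X))] = \int_0^1 \p[g(P(X)) \ge t]\, dt .
\]
For each $t \in [0,1]$, the set $\MS_t := \{ z \in \MZ : g(z) \ge t\}$ is a subset of $\MZ$, so Definition~\ref{def:dp} gives $\p[P(X) \in \MS_t] \le e^\ep\,\p[P(X') \in \MS_t] + \delta$. Integrating this inequality over $t \in [0,1]$ and again exchanging order of integration on the right-hand side yields $\E[g(P(X))] \le e^\ep\,\E[g(P(X'))] + \delta$. (When $\MZ$ is countable — which covers all the settings in this paper — one can skip the layer-cake step and instead sum the DP inequality applied to the singleton events $\{P(\cdot) = z\}$ with weights $g(z)$, arriving at the same conclusion.)

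Finally I would instantiate this with $g(z) = \p[A(z) \in \MS]$: because $A \circ P$ first samples $z \sim P(X)$ and then runs $A(z)$ with fresh, independent coins, $\p[(A\circ P)(X) \in \MS] = \E[g(P(X))] \le e^\ep\,\E[g(P(X'))] + \delta = e^\ep\,\p[(A\circ P)(X') \in \MS] + \delta$. Since $X \sim X'$ and $\MS$ were arbitrary, $A \circ P$ is $(\ep,\delta)$-differentially private.

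There is no substantive obstacle here; the proof is routine. The only points requiring a little care are the independence of $A$'s randomness from $P$'s — so that conditioning on the realized value $z = P(X)$ does not change the law of $A(z)$, which is what lets us write the expectation over $g$ — and, in a fully general measure-theoretic formulation, the measurability of $g$ and the use of Tonelli/Fubini; both are automatic in the discrete setting that suffices for all applications in this paper.
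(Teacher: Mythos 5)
Your proof is correct, and there is nothing in the paper to compare it against: the paper states this post-processing lemma without proof, simply citing the Dwork--Roth book. For what it is worth, your route differs mildly from the standard textbook one: the usual argument first handles deterministic $A$ by pulling back the event $\MS$ to its preimage $\{z \in \MZ : A(z) \in \MS\}$ and applying the DP inequality of $P$ to that set, and then extends to randomized $A$ by viewing it as a convex combination (mixture) of deterministic functions. Your version instead works with the bounded test function $g(z) = \p[A(z) \in \MS] \in [0,1]$ and proves the functional form of DP via the layer-cake identity, which handles deterministic and randomized $A$ in one stroke; the two arguments are equivalent in substance, and the points of care you flag (independence of $A$'s coins from $P$'s, and measurability/Tonelli in a general measure-theoretic setting, both trivial in the discrete setting used throughout the paper) are exactly the right ones.
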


\subsection{Shuffled Model}\label{sec:model}
We briefly review the \emph{shuffled model} of differential privacy~\cite{bittau17, erlingsson2019amplification, cheu_distributed_2018}. 
The input to the model is a dataset $(x_1, \ldots, x_n) \in \MX^n$, where item $x_i \in \MX$ is held by user $i$. A protocol in the shuffled model is the composition of three algorithms:
\begin{itemize}[nosep]
\item The {\it local randomizer} $R: \MX \ra \MY^*$ takes as input the data of one user, $x_i \in \MX$, and outputs a sequence $(y_{i,1}, \ldots, y_{i,m_i})$ of {\it messages}; here $m_i$ is a positive integer.
\item The {\it shuffler} $S: \MY^* \ra \MY^*$ takes as input a sequence of elements of $\MY$, say $(y_1, \ldots, y_m)$, and outputs a random permutation, i.e., the sequence $(y_{\pi(1)}, \ldots, y_{\pi(m)})$, where $\pi \in S_m$ is a uniformly random permutation on $[m]$. The input to the shuffler will be the concatenation of the outputs of the local randomizers.
\item The {\it analyzer} $A: \MY^* \ra \MZ$ takes as input a sequence of elements of $\MY$ (which will be taken to be the output of the shuffler) and outputs an answer in $\MZ$ which is taken to be the output of the protocol~$P$.
\end{itemize}
We will write $P = (R, S, A)$ to denote the protocol whose components are given by $R$, $S$, and $A$.  The main distinction between the shuffled and local model is the introduction of the shuffler $S$ between the local randomizer and the analyzer. Similar to the local model, in the shuffled model the analyzer is untrusted; hence privacy must be guaranteed with respect to the input to the analyzer, i.e., the output of the shuffler. Formally, we have: 
\begin{defn}[Differential privacy in the shuffled model,~\cite{erlingsson2019amplification, cheu_distributed_2018}]
\label{def:dp_shuffled}
 A protocol $P = (R, S, A)$ is {\it $(\ep, \delta)$-differentially private} if, for any dataset $X = (x_1, \ldots, x_n)$, the algorithm 
$$
(x_1, \ldots, x_n) \mapsto S(R(x_1), \ldots, R(x_n))
$$
is $(\ep, \delta)$-differentially private. 
\end{defn}
Notice that the output of $S(R(x_1), \ldots, R(x_n))$ can be simulated by an algorithm that takes as input the {\it multiset} consisting of the union of the elements of $R(x_1), \ldots, R(x_n)$ (which we denote as $\bigcup_i R(x_i)$, with a slight abuse of notation) and outputs a uniformly random permutation of them. Thus, by Lemma~\ref{lem:post_process}, it can be assumed without loss of generality for privacy analyses that the shuffler simply outputs the multiset $\bigcup_i R(x_i)$. 
For the purpose of analyzing accuracy of the protocol $P = (R, S, A)$, we define its \emph{output} on the dataset $X = (x_1, \ldots, x_n)$ to be $P(X) := A(S(R(x_1), \ldots, R(x_n)))$. We also remark that the case of {\it local differential privacy}, formalized in Definition~\ref{def:dp_local}, is a special case of the shuffled model where the shuffler $S$ is replaced by the identity function.
\begin{defn}[Local differential privacy~\cite{kasiviswanathan2008what}]
  \label{def:dp_local}
A protocol $P = (R,A)$ is {\it $(\ep, \delta)$-differentially private in the local model} (or {\it $(\ep, \delta)$-locally differentially private}) if the function $x \mapsto R(x)$ is $(\ep, \delta)$-\DP in the sense of Definition~\ref{def:dp}. We say that the {\it output} of the protocol $P$ on an input dataset $X = (x_1, \ldots, x_n)$ is $P(X) := A(R(x_1), \ldots, R(x_n))$. 
\end{defn}

\section{Single-Message Lower and Upper Bounds}\label{sec:single_msg_bds}
In this section, we prove Theorem~\ref{th:histograms_single_message}, which determines (up to polylogarithmic factors) the accuracy of frequency estimation in the single-message shuffled model. Using similar techniques, we also prove Theorem~\ref{th:selection_single_message}, which establishes a tight (up to polylogarithmic factors) lower bound on the number of users required to solve the selection problem in the single-message shuffled model. Our theorems give tight versions (see Corollary~\ref{cor:freq_lb_const}) of Corollaries~30 and~32 
of~\cite{cheu_distributed_2018}, which were each off from the respective optimal bounds by a polynomial of degree~17. We will use the following definition throughout this section:
\begin{defn}[$(\alpha, \beta)$-accuracy]
  \label{def:accuracy}
  Let $\MZ$ be a finite set, let $\B \in \BN$, and let $e_v \in \{0,1\}^\B$ be the binary indicator vector with $(e_v)_j = 1$ if and only if $j = v$.
  We say that a (randomized) protocol $P : [\B]^n \ra [0,1]^\B$ for frequency estimation is {\it $(\alpha, \beta)$-accurate} if for each dataset $X = (x_1, \ldots, x_n) \in [\B]^n$, we have that $$\p_P\left[\max_{j \in [\B]} \left| P(X)_j - \frac 1n \sum_{i=1}^n (e_{x_i})_j \right| \leq \alpha\right] \geq 1-\beta.$$

  Often we will either have $P = (R,A)$ for a local randomizer $R$ and an analyzer $A$ (corresponding to the local model) or $P = (R,S,A)$ (corresponding to the shuffled model). In such a case, we will slightly abuse notation and refer to the local randomizer $R : [\B] \ra \MZ$ as {\it $(\alpha, \beta)$-accurate} if there exists an analyzer $A : \MZ^n \ra [0,1]^\B$ such that the corresponding local or shuffled-model protocol is $(\alpha, \beta)$-accurate.
\end{defn}

Theorem~\ref{thm:freq_lb} establishes lower bounds on the (additive) error of frequency estimation in the single-message differentially-private shuffled model. 
\begin{theorem}[Lower bound for single-message \DP\ frequency estimation]
  \label{thm:freq_lb}
  There is a sufficiently small constant $c > 0$ such that the following holds: Suppose $n, \B \in \BN$ with $n \geq 1/c$, and $0 < \delta < c/n$. Any $(\ep, \delta)$-differentially private $n$-user single-message shuffled model protocol that is $(\alpha, 1/4)$-accurate satisfies: 
   \begin{numcases}{
      \alpha 
       \geq}
    \Omega \left(\frac{\log \B}{ n \log\log\B} \right) ~~ & for ~~ $\frac{\log \B}{c\log \log \B}  \leq n \leq (\log^2 \B)(\log \log \B)$,  
     \quad \quad\ \ \text{(``Small-sample'')}
    \label{eq:const_lb} \\
    \Omega \left(\frac{1}{n^{3/4} \sqrt[4]{\log n}}\right) & for ~~ $ (\log^2 \B)(\log \log \B) \leq n \leq \frac{\B^2}{\log \B} $,   \quad\quad\quad\quad  \text{(``Intermediate-sample'')}
    \label{eq:interm_lb} \\
    \Omega \left(\frac{\sqrt{\B}}{n \sqrt{\log \B}}\right) & for ~~ $ n > \frac{\B^2}{\log \B}$.   \quad\quad\quad\quad\quad\quad\quad\quad\quad\quad\quad\quad\quad\  \text{(``Large-sample'')}
    \label{eq:small_lb}
    \end{numcases}
  \end{theorem}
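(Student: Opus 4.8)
The plan is to prove the three regimes via a common template: reduce each case to a mutual-information upper bound for an $(\ep_L,\delta_L)$-locally differentially private randomizer in the low-privacy regime $\ep_L = \ln n + O(1)$, then apply Fano's inequality and the amplification-by-shuffling lemma of Cheu et al.\ (Lemma~\ref{lem:pratio}) to translate the local bound into a single-message shuffled bound. Concretely, set up the standard Bayesian experiment: let $V\sim\Unif([\B])$, let $X=V$ with probability $\alpha$ and $X\sim\Unif([\B])$ otherwise, and construct a dataset $X_1,\dots,X_n$ i.i.d.\ conditioned on $V$. If the shuffled protocol is $(\alpha',1/4)$-accurate with $\alpha'$ below the claimed bound, then since the plurality element in this dataset is $V$ with a margin of roughly $\alpha n$, an accurate analyzer must recover $V$ with probability $>3/4$, so by Fano we would need $I\bigl(V;R(X_1),\dots,R(X_n)\bigr) > \tfrac14\log\B$. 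By the chain rule this forces $I(V;R(X)) > \tfrac{\log\B}{4n}$, so it suffices to rule this out, i.e.\ to prove $I(V;R(X)) \le \tfrac{\log\B}{4n}$ for the relevant parameter settings.

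For the \textbf{Intermediate-sample} regime ($(\log^2\B)(\log\log\B)\le n\le \B^2/\log\B$), take $\ep_L = \ln n + O(1)$ and $\alpha = \tilde\Theta(n^{-3/4})$, and establish the key mutual-information bound $I(V;R(X)) \le \tilde O(\alpha^4 n e^{\ep_L})$, i.e.\ Equation~(\ref{eq:ep-mutinf}). This is the heart of the argument and the main obstacle: as the overview explains, it is \emph{false} using privacy alone (the $\B$-randomized response is a counterexample), so one must also exploit accuracy. I would follow the two-step route sketched in Section~\ref{sec:overview_lb}: (i) prove the structural/anti-concentration lemma (Lemma~\ref{lem:lb_tvd}) showing that $(\alpha,\beta)$-accuracy forces $\Delta(R(v),R(X))$ close to $1$ for most $v$ — via reduction to the binary-range case through the data-processing inequality (Lemma~\ref{lem:tvd_hist}) and anti-concentration of binomials; and (ii) combine this with the $(\ep_L,\delta_L)$-DP inequality $\p[R(v)\in\MS]\le e^{\ep_L}\p[R(X)\in\MS]+\delta_L$, applied to the sets $\MT_v$ (messages much more likely under $v$) and $\MY_v$ (messages less likely under $v$), using a granular double-counting argument over likelihood thresholds (Lemma~\ref{lem:intermed_mutinf}) to bound $\p[R(v)\in\MT_v]$. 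Plugging $\ep_L=\ln n+O(1)$, $\alpha=\tilde\Theta(n^{-3/4})$ into~(\ref{eq:ep-mutinf}) gives $I(V;R(X))\le\tilde O(n^{-3}\cdot n\cdot n) = \tilde O(1)$, and a careful tracking of the $\log\B$ and $\log\log\B$ factors (which is where the precise range of $n$ enters) yields the required $\le\tfrac{\log\B}{4n}$; then $\alpha n = \tilde\Theta(n^{1/4})$ is the resulting error lower bound, matching~(\ref{eq:interm_lb}). Finally invoke Lemma~\ref{lem:pratio} with $\ep_L = \ep+\ln n$, and absorb the $O(1)$ additive slack into constants, which is legitimate since $\ep$ is a fixed constant; the condition $\delta < c/n$ guarantees the $\delta_L(\cdot)$ error term from the DP step is negligible.

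For the \textbf{Large-sample} regime ($n > \B^2/\log\B$), I would not redo the analysis but instead reduce to the intermediate case via a padding/subsampling argument (Lemma~\ref{lem:invn}): run the protocol on a subpopulation of size $n' = \Theta(\B^2/\log\B)$ and treat the remaining users as holding a fixed dummy symbol (or appropriately rescale), so that the $\tilde\Omega(n'^{1/4}) = \tilde\Omega(\sqrt\B/\sqrt{\log\B})$ additive error on $n'$ users translates, after renormalizing frequencies by $n/n'$, into the claimed $\Omega\bigl(\sqrt\B/(n\sqrt{\log\B})\bigr)$ bound. For the \textbf{Small-sample} regime ($\tfrac{\log\B}{c\log\log\B}\le n\le(\log^2\B)(\log\log\B)$), the target error $\Omega\bigl(\tfrac{\log\B}{n\log\log\B}\bigr)$ is $\Omega(1/n)$ up to the $\log\B/\log\log\B$ factor, so I would argue more directly: with so few users relative to $\B$, even identifying which element appears requires, by the same Fano setup with $\alpha = \Theta(1/n)$ (so that $V$ appears exactly once or twice), a mutual-information bound $I(V;R(X))\le O(1/n^2)$ per user — here the cruder bound $I(V;R(X))\le O(\alpha^2 e^{2\ep_L})$ from Duchi et al., with $\ep_L=\ln n+O(1)$ and $\alpha=\Theta(1/n)$, gives $O(1/n^2\cdot n^2) = O(1)$, and the refined accounting against $\log\B$ closes the gap, using that $n\le(\log^2\B)(\log\log\B)$ keeps $\log\B/(n\log\log\B)$ from collapsing. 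Throughout, the only genuinely delicate step is the accuracy-plus-privacy mutual-information bound~(\ref{eq:ep-mutinf}); the rest is bookkeeping of logarithmic factors and standard reductions.
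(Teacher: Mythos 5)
Your treatment of the intermediate- and large-sample regimes follows the paper's own route: the mixture experiment with $V\sim\Unif([\B])$, the accuracy-plus-privacy mutual-information bound (\ref{eq:ep-mutinf}) proved via the total-variation/anti-concentration lemma (Lemma~\ref{lem:lb_tvd}) and the threshold double-counting argument (Lemma~\ref{lem:intermed_mutinf}), then Fano and Lemma~\ref{lem:pratio} with $\ep_L=\ep+\ln n$, and for (\ref{eq:small_lb}) the padding reduction (Lemma~\ref{lem:invn}) applied directly to shuffled-model protocols to drop to $n_0=\Theta(\B^2/\log\B)$ users. (One small slip: with $\alpha=\tilde\Theta(n^{-3/4})$ and $e^{\ep_L}=\Theta(n)$, the quantity $\alpha^4 n e^{\ep_L}$ is $\tilde O(1/n)$, not $\tilde O(1)$; this is in fact what you need against the Fano budget $\log\B/(4n)$.)

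The genuine gap is in the small-sample regime (\ref{eq:const_lb}). Your plan sets $\alpha=\Theta(1/n)$ and invokes the bound $I(V;R(X))\le O(\alpha^2 e^{2\ep_L})$ of Duchi et al. Two problems. First, that bound holds only for pure local DP ($\delta_L=0$), whereas Lemma~\ref{lem:pratio} hands you $\delta_L=\delta>0$, and as Remark~\ref{rmk:black-box} explains, black-box conversion to pure DP is too lossy here. Second, and more fundamentally, the quantitative shape of your argument cannot yield (\ref{eq:const_lb}): with $\alpha=\Theta(1/n)$ the planted element exceeds the others by a frequency margin of only $O(1/n)$, so the Fano contradiction can only rule out error $O(1/n)$, which is weaker than the claimed $\Omega\bigl(\tfrac{\log\B}{n\log\log\B}\bigr)$ by exactly the factor $\log\B/\log\log\B$ that is the content of the bound; and if you instead take $\alpha=\Theta\bigl(\tfrac{\log\B}{n\log\log\B}\bigr)$, the quadratic bound gives per-user information $\Theta(\alpha^2 e^{2\ep_L})=\Theta\bigl((\log\B/\log\log\B)^2\bigr)$, far above the Fano budget $\log\B/(4n)\le O(\log\log\B)$ in this regime, so ``refined accounting'' cannot close the gap. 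The paper's small-sample proof is structurally different: the hard instance puts the \emph{same} element $e_V$ at every user, so only constant accuracy is needed to identify $V$; the per-user information is bounded by Lemma~\ref{lem:mutinf}, which is \emph{linear} in $\ep_L$ and tolerates $\delta_L>0$, giving $O(\ep_L)=O(\log\log\B)$ rather than $O(\alpha^2 e^{2\ep_L})$; Fano then yields the sample-complexity bound $n\geq\Omega(\log\B/\log\log\B)$ (Lemma~\ref{lem:real_const}); and the error bound for all $n$ up to $(\log^2\B)\log\log\B$ follows by the padding reduction Lemma~\ref{lem:invn} (see Lemma~\ref{lem:const_lb}), which converts a protocol with small error on $n$ users into one with constant error on roughly $\alpha n$ users. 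Your proposal supplies neither the linear-in-$\ep_L$ approximate-DP mutual-information bound nor this error-to-sample-size reduction, and without them the stated small-sample bound does not follow.
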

Note that the lower bound on the additive error $\alpha$ is divided into 3 cases, which we call the {\it small-sample regime} (\ref{eq:const_lb}), the {\it intermediate-sample regime} (\ref{eq:interm_lb}), and the {\it large-sample regime} (\ref{eq:small_lb}). While the division into separate regimes makes our bounds more technical to state, we point out that this seems necessary in light of the very different protocols that achieve near-optimality in the various regimes (as discussed in Section~\ref{sec:overview_lb} and Appendix~\ref{app:counting-queries}). Moreover, the bound for the low-sample regime of Theorem \ref{thm:freq_lb} is established in Lemma \ref{lem:const_lb}, while the bounds for the intermediate-sample and large-sample regimes of Theorem \ref{thm:freq_lb} are established in Corollary \ref{cor:intermediate_lb} and Lemma \ref{lem:freq_lb_small_cor}, respectively. We note that the proof of the intermediate-sample regime (\ref{eq:interm_lb}) is the most technically involved and constitutes the bulk of the proof of Theorem \ref{thm:freq_lb}.

Furthermore, we observe that the lower bounds \eqref{eq:const_lb}, \eqref{eq:interm_lb}, and \eqref{eq:small_lb} also hold, up to constant factors, for the \emph{expected error} $\E_{R} \left[\max_{j \in [\B]} \left| P(X)_j - \frac 1n \sum_{i=1}^n (e_{x_i})_j \right|\right]$ of $P$ on a dataset $X$. This follows as an immediate consequence of Theorem~\ref{thm:freq_lb} and Markov's inequality.


In the course of proving Theorem \ref{thm:freq_lb} in the small-sample regime (i.e., (\ref{eq:const_lb})), we shall see that the constants can be chosen in such a way so as to establish Corollary \ref{cor:freq_lb_const} below (in particular, Corollary \ref{cor:freq_lb_const} follows from Lemma \ref{lem:real_const}):
  \begin{corollary}[Lower bound for constant-error frequency estimation]
    \label{cor:freq_lb_const}
Let $c$ be the constant of Theorem~\ref{thm:freq_lb}. If $P$ is a $(1,\delta)$-\DP protocol for frequency estimation in the shuffled model with $\delta < c/n$ which is $(1/10, 1/10)$-accurate, then $n \geq \Omega\left( \frac{\log \B}{\log \log \B}\right)$.
\end{corollary}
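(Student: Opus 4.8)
The plan is to derive Corollary~\ref{cor:freq_lb_const} as a direct specialization of the small-sample regime of Theorem~\ref{thm:freq_lb} (equation~\eqref{eq:const_lb}), being careful about the constants. First I would observe that a $(1,\delta)$-DP protocol which is $(1/10,1/10)$-accurate is in particular $(\ep,\delta)$-DP with $\ep=1$ and $(\alpha,1/4)$-accurate with $\alpha=1/10$ (since $1/10<1/4$, being $(1/10,1/10)$-accurate certainly implies being $(1/10,1/4)$-accurate). So the hypotheses of Theorem~\ref{thm:freq_lb} are met with $\ep=O(1)$. Suppose for contradiction that $n$ is below the claimed threshold, i.e. $n \le c' \cdot \frac{\log\B}{\log\log\B}$ for a small constant $c'$ to be chosen. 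Then $n$ lands in the small-sample regime $\frac{\log\B}{c\log\log\B}\le n\le (\log^2\B)(\log\log\B)$ provided $\B$ is large enough (the lower end of the interval is satisfied by taking $c'$ small relative to $c$; the upper end is automatic). Applying~\eqref{eq:const_lb} gives $\alpha \ge \Omega\!\left(\frac{\log\B}{n\log\log\B}\right)$. If $n$ is small enough — specifically $n \le c'\cdot \frac{\log\B}{\log\log\B}$ with $c'$ chosen so that the hidden constant in the $\Omega(\cdot)$ forces $\alpha > 1/10$ — this contradicts $(1/10,1/10)$-accuracy, hence $n \ge \Omega\!\left(\frac{\log\B}{\log\log\B}\right)$.

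The one subtlety is chasing the constants: the $\Omega$ in~\eqref{eq:const_lb} hides a fixed absolute constant, say $\kappa$, so $\alpha \ge \kappa\cdot\frac{\log\B}{n\log\log\B}$, and the contradiction requires $\kappa\cdot\frac{\log\B}{n\log\log\B} > \frac{1}{10}$, i.e. $n < 10\kappa\cdot\frac{\log\B}{\log\log\B}$. So the statement holds with the implicit constant $\min(10\kappa,\,\text{something ensuring we stay in regime})$; the excerpt already flags that this constant-tracking is done inside Lemma~\ref{lem:real_const}, so I would simply invoke that lemma (or re-run the small-sample argument with explicit constants) rather than re-prove it. I also need to handle small $\B$: for $\B$ bounded by an absolute constant the claim $n\ge\Omega\!\left(\frac{\log\B}{\log\log\B}\right)$ is vacuous (the right-hand side is $O(1)$, and any protocol has $n\ge 1$, or we note the statement is trivially true by adjusting the hidden constant), so we may assume $\B$ is large enough that $\log\log\B \ge 1$ and the regime interval in~\eqref{eq:const_lb} is nonempty.

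The main obstacle — such as it is — is purely bookkeeping: making sure the chosen $n$ threshold simultaneously (a) keeps us inside the validity window of~\eqref{eq:const_lb}, which requires $n \ge \frac{\log\B}{c\log\log\B}$, and (b) is small enough to force $\alpha>1/10$. These are compatible because both bounds are of the form $(\text{const})\cdot\frac{\log\B}{\log\log\B}$; one just needs the "force a contradiction" constant to exceed the "stay in regime" constant, which is arranged by taking the failure assumption to be $n < c_0\cdot\frac{\log\B}{\log\log\B}$ for $c_0$ at most $\min(10\kappa, 1/c)$ and then noting that if $n < \frac{\log\B}{c\log\log\B}$ the protocol is even more constrained (fewer users only makes accuracy harder), so WLOG $n$ sits in the regime and~\eqref{eq:const_lb} applies directly. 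Since Theorem~\ref{thm:freq_lb} and Lemma~\ref{lem:real_const} are assumed available, the corollary follows immediately.
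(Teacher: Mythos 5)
There is a genuine gap: the corollary is not a formal consequence of the statement of Theorem~\ref{thm:freq_lb}, and your device for bridging the regime restriction does not work. Equation~\eqref{eq:const_lb} is only asserted for $n \geq \frac{\log \B}{c\log\log\B}$, i.e.\ for $n$ at least the \emph{large} constant $1/c$ times $\frac{\log\B}{\log\log\B}$ (recall $c$ is a sufficiently small constant). Inside that window the bound is $\alpha \geq \kappa\cdot\frac{\log\B}{n\log\log\B}$, which at the bottom of the window is only $\kappa c$ (in fact the constant produced by Lemma~\ref{lem:const_lb} is $\kappa=c$, so the bound there is $c^2$), and this is far below $1/10$; your compatibility condition ``the force-a-contradiction constant exceeds the stay-in-regime constant'' amounts to $10\kappa > 1/c$, which fails for small $c$ and is not something you can arrange, since you do not control the paper's constants. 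So for every admissible choice of your constant $c_0$, either the assumed range $n < c_0\frac{\log\B}{\log\log\B}$ lies entirely below the validity window of \eqref{eq:const_lb}, or the portion inside the window yields no contradiction with $(1/10,1/10)$-accuracy. Your remaining escape, ``fewer users only makes accuracy harder, so WLOG $n$ sits in the regime,'' is a monotonicity claim in the wrong direction: the padding argument in the paper (Lemma~\ref{lem:invn} and the remark after Corollary~\ref{cor:intermediate_lb}) converts a protocol for \emph{more} users into one for \emph{fewer} users, not the reverse, and no construction turning an $n$-user protocol into an accurate private protocol for a larger number of users is available (or proved by you).

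This is exactly why the paper does not derive the corollary from \eqref{eq:const_lb} but from the route you mention only parenthetically: apply Lemma~\ref{lem:pratio} to turn the $(1,\delta)$-DP single-message shuffled protocol into a $(1+\ln n,\delta)$-locally DP protocol with the same accuracy, then apply Lemma~\ref{lem:real_const} (a sample-complexity bound, $n > \frac{\log\B}{20\ep_L}$ for $(1/3,1/2)$-accurate local protocols, with \emph{no} lower restriction on $n$), giving $n \geq \Omega(\log\B/\log n)$, and finally convert $\log n$ to $\log\log\B$ (if $n \geq \log\B$ the claim is immediate; otherwise $\log n \leq \log\log\B$). If you intend to ``simply invoke Lemma~\ref{lem:real_const},'' you must include the Lemma~\ref{lem:pratio} reduction (that lemma is about local, not shuffled, protocols) and this last conversion step; neither appears in your writeup, so as stated the proposal does not close.
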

Corollary~\ref{cor:freq_lb_const} improves upon Corollary 32 of~\cite{cheu_distributed_2018}, both in the lower bound on the error (which was $\Omega(\log^{1/17}\B)$ in~\cite{cheu_distributed_2018}) and on the dependence on $\delta$ (which was $\delta < O(n^{-8})$ in~\cite{cheu_distributed_2018}).

The primary component of the proof of Theorem~\ref{thm:freq_lb} is a lower bound on the additive error of $(\ep_L, \delta_L)$-{\it locally} \DP\ protocols $P = (R,A)$, when both $\ep_L \gg 1$ (the {\it low-privacy} setting) and $\delta_L > 0$ simultaneously hold (see Lemma \ref{lem:pratio}). In particular, we prove the following:
\begin{theorem}[Lower bound for locally differentially private frequency estimation]
  \label{thm:local_lb}
  There is a sufficiently small constant $c > 0$ such that the following holds.
  Suppose $n, \B \in \BN$ with $n \geq 1/c$, and that $\ep_L, \delta_L > 0$ with $\delta_L < c\min \{1/(n \log n), \exp(-\ep_L) \}$. Any $(\ep_L, \delta_L)$-locally differentially private protocol that is $(\alpha, 1/4)$-accurate satisfies:
 {\small \begin{numcases}{\alpha \geq}
    \Omega \left( \frac{\ln \B}{n \ep_L} \right) & for ~~ $n \geq \frac{\ln \B}{c\ep_L}$, \qquad\qquad\qquad\qquad\qquad\qquad\qquad \text{(``Small-sample'')} \label{eq:local_const}\\
    \tilde \Omega\left( \frac{1}{\sqrt n \cdot \exp(\ep_L/4)} \right) & \parbox{2.5in}{for ~~ $n \geq (\ln \B) \exp(\ep_L/2)$ \\ and $\frac 23 \cdot \ln(n) \leq \ep_L + \ln(1+\ep_L) + \frac 1c \leq 2 \ln(\B)$,} \label{eq:local_interm1} \qquad  \text{(``Intermediate-sample'')} \\
    \tilde \Omega \left( \frac{1}{n^{2/3}} \right) & for ~~ $ \ln^{3/2}(\B) \leq n \leq \B^3$ and $\ep_L \leq \frac 23 \cdot \ln(n) $,\label{eq:local_interm2} \ \ \quad \text{(``Intermediate-sample'')} \\
    \tilde \Omega \left( \frac{\sqrt{\B}}{n} \right) & for ~~ $ n \geq \B^2$ and $\ep_L \leq 2 \ln (\B)$, \qquad \qquad \qquad\quad \text{(``Large-sample'')} \label{eq:local_small1}\\
    \tilde \Omega \left( \frac{\B}{n} \right) & for ~~ $n \geq \B^3$ and $\ep_L \leq 2 \ln (\B)$. \qquad \qquad \qquad\quad \text{(``Large-sample'')}\label{eq:local_small2}
    \end{numcases}}
  \end{theorem}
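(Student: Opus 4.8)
I would follow the Bayesian/Fano route of Section~\ref{sec:overview_lb}: reduce $(\alpha,1/4)$-accuracy to the impossibility of identifying a planted coordinate, with the single-user mutual-information bound~\eqref{eq:ep-mutinf} as the crux. Fix an $(\ep_L,\delta_L)$-locally DP randomizer $R:[\B]\to\MZ$ and an analyzer $A$ witnessing $(\alpha,1/4)$-accuracy. Let $V\sim\Unif([\B])$ and, conditioned on $V$, let $X_1,\dots,X_n$ be i.i.d.\ with $X_i=V$ with probability $\alpha'$ and $X_i\sim\Unif([\B])$ otherwise, where the planting probability $\alpha'=\Theta(\alpha)$ plus an additive term of order $\tilde O(\max\{1/\sqrt{n\B},\,1/n\})$ --- which the regime hypotheses keep at or below the target error --- is chosen, via a balls-in-bins bound on the maximum load among the uniform coordinates, so that with probability $\ge 7/8$ the frequency of $V$ exceeds that of every other element by more than $2\alpha$. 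On such a dataset the analyzer recovers $V$ with probability $\ge 5/8$ by reporting the argmax of its frequency estimates, so Fano's inequality gives $I(V;R(X_1),\dots,R(X_n))\ge\Omega(\ln\B)$; since the $X_i$ are conditionally i.i.d.\ given $V$, the chain rule upgrades this to $n\cdot I(V;R(X))\ge\Omega(\ln\B)$ for the marginal one-user distribution $X$. The whole theorem then reduces to upper bounds on $I(V;R(X))$ contradicting this in each parameter window, and the precise constraints on $n$ and $\ep_L$ in~\eqref{eq:local_const}--\eqref{eq:local_small2} are exactly what makes each contradiction go through.

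\textbf{Small-sample regime~\eqref{eq:local_const}.} Here no anti-concentration is needed: I would replace the uniform ``noise'' coordinates by copies of a dummy element $\B$, using a dataset with $\lceil 3\alpha n\rceil$ copies of $V\sim\Unif([\B-1])$ and the rest equal to $\B$, so the frequency of $V$ is noise-free and the analyzer recovers $V$ (the argmax over $[\B-1]$) with probability $\ge 3/4$. Only $\lceil 3\alpha n\rceil$ of the $X_i$ depend on $V$, so the chain rule gives $I(V;R(X_1),\dots,R(X_n))\le 3\alpha n\cdot I(V;R(V))$, and $(\ep_L,\delta_L)$-local DP with $\delta_L<ce^{-\ep_L}$ gives $I(V;R(V))\le\ep_L+O(1)$ by the standard argument (the pointwise bound $\p[R(v)=z]\le e^{\ep_L}\bar R(z)+\delta_L$, separating off the messages with $\bar R(z)<\delta_L$, where $\bar R$ is $R$ on a uniform input and the small-$\delta_L$ hypothesis makes the separated part negligible). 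Combining gives $\alpha n(\ep_L+O(1))\ge\Omega(\ln\B)$, i.e.~\eqref{eq:local_const}.

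\textbf{The core bound~\eqref{eq:ep-mutinf}, and the remaining regimes.} This is where accuracy, not merely privacy, is used. For each $v$ partition $\MZ$ into $\MT_v$ (messages far more likely under input $v$ than under a uniform input), $\MY_v$ (messages less likely under $v$ than under uniform), and the remaining mildly-biased messages. \emph{Step (i)}: I would prove the structural Lemma~\ref{lem:lb_tvd}, that $(\alpha,1/4)$-accuracy forces $\Delta(R(v),R(X))$ close to $1$ for all but an $o(1)$-fraction of $v$ --- equivalently, both $\p[R(X)\in\MY_v]$ close to $1$ and $\p[R(X)\in\MT_v]$ close to $0$ for most $v$ --- by an anti-concentration argument: when $\MZ=\{0,1\}$ the shuffled output is in bijection with a sum of independent Bernoullis, and shifting such a sum by fewer than its standard deviation is undetectable, so if $\Delta(R(v),R(X))$ were too small the planted-$v$ and all-uniform instances would be statistically indistinguishable, contradicting accuracy; the general finite-range case reduces to the binary one by repeated use of the data-processing inequality for TV distance (Lemma~\ref{lem:tvd_hist}). \emph{Step (ii)}: bound $I(V;R(X))$ from privacy together with Step (i). Since $\Delta(R(v),R(X))\le\p[R(X)\in\MY_v]$, Step (i) already rules out ``event (a)''. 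For ``event (b)'', partition $\MT_v$ into geometric layers $\MT_v^{(k)}=\{z:e^k<\p[R(v)=z]/\p[R(X)=z]\le e^{k+1}\}$; on layer $k$ one has $\p[R(v)\in\MT_v^{(k)}]\asymp e^k\,\p[R(X)\in\MT_v^{(k)}]$, so $(\ep_L,\delta_L)$-DP forces $\p[R(v)\in\MT_v^{(k)}]=O(\delta_L)$ once $k>\ep_L$, while the $O(\ep_L)$ layers with $k\le\ep_L$ contribute at most $e^{O(\ep_L)}\cdot\p[R(X)\in\MT_v]$, which is small by Step (i). Double-counting these layers (Lemma~\ref{lem:intermed_mutinf}) gives $\p[R(v)\in\MT_v]\le e^{O(\ep_L)}\,(1-\Delta(R(v),R(X)))+O(\delta_L\ln\B)$ for most $v$, and plugging the quantitative form of Step (i) into the information-theoretic decomposition of $I(V;R(X))$ over the messages $z$ --- expressed through $\{\p[R(X)\in\MY_v]\}_v$ and $\{\p[R(v)\in\MT_v]\}_v$ --- yields $I(V;R(X))\le\tilde O((\alpha')^4 n\,e^{\ep_L})+O(\delta_L(\B+e^{\ep_L}))$, i.e.~\eqref{eq:ep-mutinf}. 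Combined with the first paragraph: for~\eqref{eq:local_interm1}, take $\alpha'=\tilde\Theta(\alpha)$ and solve $n\cdot\tilde O((\alpha')^4 n e^{\ep_L})<\Omega(\ln\B)$ for $\alpha$, obtaining $\alpha\ge\tilde\Omega\!\big(1/(\sqrt n\,e^{\ep_L/4})\big)$; for~\eqref{eq:local_interm2}, first weaken the privacy of $R$ to $(\tfrac23\ln n,\delta_L)$ --- legitimate since $\ep_L\le\tfrac23\ln n$ --- and rerun the computation to get $\tilde\Omega(1/n^{2/3})$. Finally the large-sample bounds~\eqref{eq:local_small1},~\eqref{eq:local_small2} follow from the reduction of Lemma~\ref{lem:invn}: a protocol on $n$ users becomes one on $n_0=\B^2$ (resp.\ $n_0=\B^3$) users by feeding the extra $n-n_0$ users a fixed element, which rescales the additive frequency error to $\alpha\cdot n/n_0$; applying whichever of~\eqref{eq:local_interm1}/\eqref{eq:local_interm2} is in force at $n_0$ for the given $\ep_L\le 2\ln\B$ and rescaling back yields $\alpha\ge\tilde\Omega(\sqrt\B/n)$ and $\alpha\ge\tilde\Omega(\B/n)$ respectively.

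\textbf{Main obstacle.} The hardest part, I expect, is the structural Lemma~\ref{lem:lb_tvd} together with its use in Step (ii): reducing the arbitrary-finite-range case to the binary case cleanly enough that binomial anti-concentration is applied without losing polynomial factors, and then combining the necessarily only ``for-most-$v$'' structural guarantee with the layered double-counting so as to actually beat the ``privacy-only'' bound $O((\alpha')^2 e^{\ep_L})$ in the relevant regime $\alpha'\ll 1/\sqrt n$. The conceptually surprising point --- which is what makes the bookkeeping delicate --- is that \emph{accuracy} is what controls $\p[R(v)\in\MT_v]$, even though the natural way to design an accurate randomizer is precisely to put mass on messages rare under the wrong inputs; this is exactly why the $\B$-randomized response is a genuine counterexample to~\eqref{eq:ep-mutinf} when only privacy is assumed (cf.\ Remark~\ref{rmk:inf-counterexample}). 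Secondarily, threading the $\delta_L$ terms through the layer-counting and the low-probability-message estimates so that the stated hypothesis $\delta_L<c\min\{1/(n\log n),\,e^{-\ep_L}\}$ exactly suffices will take some care.
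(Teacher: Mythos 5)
Your proposal is correct and takes essentially the same route as the paper: the planted-mixture/Fano/chain-rule reduction, the accuracy-driven total-variation lemma proved by data-processing reduction to the binary case plus binomial anti-concentration (Lemmas~\ref{lem:lb_tvd} and~\ref{lem:tvd_hist}), the layered double-counting bound $I(V;R(X))\le\tilde O(\alpha^4 n e^{\ep_L}(1+\ep_L))$ of Lemma~\ref{lem:intermed_mutinf}, and the user-count reduction of Lemma~\ref{lem:invn} for the large-sample cases (with privacy weakened to $\ep_L'\approx\tfrac23\ln n$, resp.\ $\approx 2\ln\B$, exactly as the paper does). Your small-sample argument---planting $O(\alpha n)$ copies of $V$ among fixed dummy inputs and applying the per-user mutual-information bound of Lemma~\ref{lem:mutinf}---is only a repackaging of the paper's combination of Lemmas~\ref{lem:real_const} and~\ref{lem:invn}, so nothing is genuinely different.
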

  Again, the lower bound is divided into cases---the bound for low-sample regime of Theorem~\ref{thm:local_lb} (namely, \eqref{eq:local_const}) is established in Lemma~\ref{lem:const_lb}, while the bounds for the intermediate-sample (namely, \eqref{eq:local_interm1} and \eqref{eq:local_interm2}) and large-sample (namely, \eqref{eq:local_small1} and \eqref{eq:local_small2}) regimes are established in Lemma~\ref{lem:intermediate_lb} and Lemma~\ref{lem:local_small}, respectively.

It turns out that Theorem~\ref{thm:freq_lb} is tight in each of the three regimes (small-sample, intermediate-sample, and large-sample), up to polylogarithmic factors in $\B$ and $n$, as shown by Theorem~\ref{thm:freq_ub}:
\begin{theorem}[Upper bound for single-message shuffled DP frequency estimation]
\label{thm:freq_ub}
Fix $\B,n \in \BN$, $\delta = n^{-O(1)}$, and $\ep \leq 1$ that satisfies $\ep = \omega(\ln^{2}(n)/\min\{\sqrt{\B},\sqrt{n}\})$. 
For $n \in \BN$, there is a shuffled model protocol $P = (\Rprot, \Sprot, \Aprot)$ so that for any $X = (x_1, \ldots, x_n) \in [\B]^n$, the frequency estimates $P(X) \in [0,1]^\B$ produced by $P$ satisfy 
\begin{numcases}{
\E \left[ \max_{j \in [\B]} \left| P(X)_j - \frac 1n\sum_{i=1}^n (e_{x_i})_j \right| \right] \leq}
O \left( \frac{\log \B}{n} \right) \quad & for ~~ $ n \leq \frac{\ep^2\log^2 \B}{\log^{3} \log \B}$, \label{eq:const_ub}\\
O \left( \frac{\ln^{3/4}(n) \sqrt{\log \B}}{n^{3/4}\sqrt \ep}\right)\quad &for ~~ $ \frac{\ep^2\log^2 \B}{\log^3 \log \B} \leq n \leq \B^2$,\label{eq:interm_ub} \\
O \left( \frac{\sqrt{\B \ln(n) \ln (\B)}}{n\ep}\right) \quad &for ~~ $ n > \B^2 $.\label{eq:small_ub}
\end{numcases}
\end{theorem}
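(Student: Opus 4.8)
The plan is to obtain $P$ by \emph{lifting} a known locally differentially private frequency estimator to the shuffled model via privacy amplification by shuffling, rather than designing a protocol from scratch. Concretely, take $P=(R,S,A)$ where $R$ is a pure $(\ep_L,0)$-locally differentially private randomizer, $S$ is the shuffler, and $A$ is the (symmetric, permutation-invariant) analyzer of the underlying local protocol. Since the shuffler only permutes the messages, the output distribution of $P$ on every dataset $X\in[\B]^n$ is identical to that of the local protocol $(R,A)$; hence the entire accuracy analysis transfers \emph{verbatim} from the local model, and only the privacy guarantee needs a genuinely new argument.

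\textbf{Privacy.} Here I would invoke the amplification-by-shuffling bound of Balle et al.\ (building on Erlingsson et al.): a pure $(\ep_L,0)$-LDP randomizer, composed with the shuffler over $n$ users, is $(\ep,\delta)$-differentially private in the shuffled model with $\ep=O\!\big(e^{\ep_L}\sqrt{\log(1/\delta)/n}\big)$. Consequently, given the target $\ep\le 1$ and $\delta=n^{-O(1)}$, it suffices to use an $\ep_L$-LDP local randomizer with $\ep_L:=\ln\!\big(c\,\ep\sqrt{n/\log(1/\delta)}\big)$ for a small enough absolute constant $c$. The hypothesis $\ep=\omega(\ln^2(n)/\min\{\sqrt{\B},\sqrt n\})$ is exactly what guarantees that this $\ep_L$ is positive (in fact $\ep_L=\omega(1)$), so that we operate in the \emph{low-privacy / large-$\ep_L$} regime --- precisely the regime in which RAPPOR and $\B$-ary randomized response are dramatically more accurate than in the standard $\ep_L\le 1$ analysis.

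\textbf{Accuracy, regime by regime.} It then remains to plug in the right local estimator and substitute the chosen value of $\ep_L$. For the small- and intermediate-sample bounds \eqref{eq:const_ub} and \eqref{eq:interm_ub}, I would use RAPPOR on the unary (one-hot) encoding; its debiased estimator for the count of any query $j\in[\B]$ has variance $\Theta(n)\cdot e^{-\Theta(\ep_L)}$, so a Chernoff bound followed by a union bound over the $\B$ queries yields $\ell_\infty$ count error $\tilde O\!\big(\sqrt n\cdot e^{-\Theta(\ep_L)}\big)$; substituting $\ep_L$ and renormalizing counts to frequencies produces the claimed bounds, with the crossover between \eqref{eq:const_ub} and \eqref{eq:interm_ub} occurring where the central-model-like $\log\B/n$ term overtakes the $n^{-3/4}$ term. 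For the large-sample bound \eqref{eq:small_ub}, I would instead use $\B$-ary randomized response --- or the Hadamard response of Acharya et al., which additionally makes the per-user communication $\tilde O(1)$ as in Table~\ref{table:frequency_estimation_results} --- whose debiased count estimator has per-query variance $\Theta\!\big((n+e^{\ep_L})(\B+e^{\ep_L})\,e^{-2\ep_L}\big)$; in this regime ($n\ge\B^2$) the chosen $\ep_L$ satisfies $e^{\ep_L}\gtrsim \B$, so Chernoff plus the union bound over $\B$ queries gives $\ell_\infty$ frequency error $\tilde O(\sqrt{\B}/(n\ep))$, which is \eqref{eq:small_ub}. Finally, each of these high-probability bounds has failure probability $1/\poly$, and since every estimate lies in $[0,1]$ the failure event contributes negligibly, so the same bounds hold in expectation.

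\textbf{Main obstacle.} The conceptual skeleton above is short and the privacy step is essentially black-box given the amplification lemma; the real work is the accuracy analysis. One must (a)~analyze the debiased estimators of RAPPOR and $\B$-ary/Hadamard randomized response in the regime $\ep_L=\omega(1)$, which is outside the range where the usual variance estimates are tight, and push a Chernoff-plus-union-bound tail argument through so that the polylogarithmic factors land exactly as written in \eqref{eq:const_ub}--\eqref{eq:small_ub}; and (b)~check that the three parameter windows, together with the hypothesis on $\ep$, tile the entire claimed range, and that in each window the chosen $\ep_L$ is simultaneously large enough for amplification to produce the target $(\ep,\delta)$ yet small enough that the corresponding error term is the dominant one. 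Matching up the boundary cases $n\approx \ep^2\log^2\B/\log^3\log\B$ and $n\approx \B^2$ cleanly, and confirming that the $\ep$-hypothesis is exactly the constraint these computations require, is where the care is concentrated.
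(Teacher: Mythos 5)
Your architecture is the same as the paper's — amplification by shuffling applied to RAPPOR for (\ref{eq:const_ub})--(\ref{eq:interm_ub}) and to $\B$-ary randomized response for (\ref{eq:small_ub}), followed by binomial concentration and a union/max bound over the $\B$ queries — but there is a genuine quantitative gap in the step you describe as ``essentially black-box.'' If you apply the amplification statement of Theorem~\ref{thm:pamp} as a black box to an $(\ep_L,0)$-LDP randomizer, the largest $\ep_L$ you can afford satisfies $e^{\ep_L}\approx \ep\sqrt{n/\ln(1/\delta)}$, which is exactly your choice. But RAPPOR with overall LDP parameter $\ep_L$ flips each bit with probability $\approx e^{-\ep_L/2}$, so the resulting frequency error is $\sqrt{e^{-\ep_L/2}\log(\B)/n}\approx \tilde O\bigl(\ep^{-1/4} n^{-5/8}\bigr)$, which falls short of the claimed $\tilde O\bigl(\ep^{-1/2}n^{-3/4}\bigr)$ in (\ref{eq:interm_ub}). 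Likewise for $\B$-ary randomized response: with $e^{\ep_L}\approx\ep\sqrt{n}$ one does not have $e^{\ep_L}\gtrsim\B$ throughout the regime $n>\B^2$ (e.g.\ $\B^2<n<\B^3$ with the permitted $\ep$), and even when one does, the error works out to $\tilde O\bigl(\sqrt{\B}\,\ep^{-1/2}n^{-3/4}\bigr)$ rather than the claimed $\tilde O\bigl(\sqrt{\B}/(n\ep)\bigr)$ of (\ref{eq:small_ub}). So the bounds in the theorem are not reachable from the corollary-level amplification bound alone; the exponent in the affordable $e^{\ep_L}$ is exactly where the claimed rates are decided.

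What the paper does to close this is to use the finer structure of the two randomizers rather than their worst-case pairwise LDP parameter. For RAPPOR, every output distribution $R_{\RAP}(v)$ lies within a multiplicative $e^{\pm\ep_L/2}$ of a single fixed reference distribution $R_{\RAP}(\varnothing)$, and by inspecting the proof of Theorem~\ref{thm:pamp} (not just its statement) one gets shuffled privacy with $\ep=O\bigl(\ep_L e^{\ep_L/2}\sqrt{\ln(1/\delta)/n}\bigr)$; this allows $e^{\ep_L/2}\approx \ep\sqrt{n}/\ln n$, i.e.\ flip probability $\approx \ln(n)/(\ep\sqrt n)$, which is what produces (\ref{eq:interm_ub}) and, in the small-$n$ window (analyzed there via a bound on the expected maximum of $\B$ binomials), (\ref{eq:const_ub}). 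For $\B$-ary randomized response the paper invokes the refined Theorem~3.1 of Balle et al.\ directly, with blanket weight $\gamma=\B/(e^{\ep_L}+\B-1)$; this permits $e^{\ep_L}\approx\ep^2 n/\ln(1/\delta)$ — quadratically larger than the black-box value — which both guarantees $e^{\ep_L}\gg\B$ for $n>\B^2$ and yields (\ref{eq:small_ub}). The hypothesis $\ep=\omega(\ln^2(n)/\min\{\sqrt\B,\sqrt n\})$ is calibrated to these refined invocations, not merely to making $\ep_L$ positive. Your outline needs this strengthening of the privacy step; the rest (debiased estimators, Chernoff plus union bound, passing from high probability to expectation using boundedness) is sound and matches the paper's calculations.
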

The proof of Theorem~\ref{thm:freq_ub} follows by combining existing protocols for locally differentially private frequency estimation with the privacy amplification result of \cite{balle_privacy_2019}. For completeness, we provide the proof in Appendix~\ref{apx:freq_ub}.

The remainder of this section is organized as follows. In Section~\ref{sec:fe_lb} we collect some tools that will be used in the proofs of our error lower bounds. In Section~\ref{sec:const_lb} we establish Theorem~\ref{thm:freq_lb} in the small-sample regime (i.e., (\ref{eq:const_lb})). In Sections~\ref{sec:interm_lb} and \ref{sec:lb_tvd} we establish Theorem~\ref{thm:freq_lb} in the intermediate and large-sample regimes (i.e., (\ref{eq:interm_lb}) and (\ref{eq:small_lb}). Finally, in Section~\ref{sec:selection_lb} we show how similar techniques used to prove Theorem~\ref{thm:freq_lb} lead to a tight lower bound on the selection problem (Theorem~\ref{thm:selection_formal}).

\begin{remark}
\label{rmk:black-box}
Before proceeding with the proof of Theorem \ref{thm:local_lb} (and thus Theorem \ref{thm:freq_lb}), we briefly explain why the approach of~\cite{cheu_distributed_2018}, which establishes a weak variant of Theorem \ref{thm:local_lb}, cannot obtain the tight bounds that we are able to achieve here. Recall that this approach used:
\begin{enumerate}
    \item[(i)] in a black-box manner, known lower bounds of Bassily and Smith \cite{bassily2015local} and Duchi et al.~\cite{duchi_minimax} on the error of ``pure'' $(\ep_L, 0)$-locally \DP frequency estimation protocols, together with
    \item[(ii)] a result of Bun et al.~\cite{bun2018heavy} stating that by modifying an $(\ep_L, \delta_L)$-locally \DP protocol, one can produce an $(8\ep_L, 0)$-locally \DP protocol without significant loss in accuracy.
\end{enumerate}

It seems to be quite challenging to get tight bounds in the single-message shuffled model using this two-step technique. This is because when $\ep_L \approx \ln n$, the error lower bounds for $(\ep_L, 0)$-differentially private frequency estimation in the local model decay as $\exp(-a\ep_L)$ for some constant $a$. Suppose that for some constant $C \geq 1$, one could show that by modifying any $(\ep_L, \delta_L)$-locally \DP\ protocol one could obtain a $(C \ep_L, 0)$-locally \DP\ protocol without a large loss in accuracy (for instance, Bun et al. \cite{bun2018heavy} achieves $C = 8$.) Then the resulting error lower bound for shuffled-model protocols would decay as $\exp(-aC \ln n) = n^{-aC}$. This bound will necessarily be off by a polynomial in $n$ unless we can determine the optimal constant $C$. The proof for $C = 8 $ \cite{bun2018heavy,cheu_distributed_2018} is already quite involved, and in order for this approach to guarantee tight bounds in the single-message setup, we would need to achieve $C = 1$, i.e., turn any $(\ep_L, \delta_L)$-locally \DP\ protocol into one with $\delta_L = 0$ and essentially no increase in $\ep_L$ whatsoever.
\end{remark}

\subsection{Preliminaries for Lower Bounds}
\label{sec:fe_lb}

In this section we collect some useful definitions and lemmas. Throughout this section, we will use the following notational convention:
\begin{defn}[Notation $\pp_{x,\MS}$]
  \label{def:pp}
  For a fixed local randomizer $\Rprot : \MX \ra \MZ$ (which will be clear from the context), and for $x \in \MX, \MS \subset \MZ, z \in \MZ$, we will write $\pp_{x,\MS} := \P_R[\Rprot(x) \in \MS]$ and $\pp_{x,z} := \P_R[\Rprot(x) = z]$, where the probability is over the randomness of $\Rprot$.

  Moreover, we will additionally write $\PP_x$ to denote the distribution on $\MZ$ given by $R(x)$. In particular, the density of $\PP_x$ at $z\in \MZ$ is $\pp_{x,z}$.
  \end{defn}

We say that a local randomizer $R : \MX \ra \MZ$ is $(\ep,\delta)$-\DP in the $n$-user shuffled model if the composed protocol $(x_1, \ldots, x_n) \mapsto S(R(x_1), \ldots, R(x_n))$ is $(\ep, \delta)$-\DP. Lemma~\ref{lem:pratio} establishes that a protocol $R$ that is $(\ep, \delta)$-\DP in the shuffled model is in fact $(\ep + \ln n, \delta)$-\DP in the {\it local} model of differential privacy, which means that the function $x \mapsto R(x)$ is itself $(\ep + \ln n, \delta)$-\DP.
 \begin{lemma}[Theorem 6.2,~\cite{cheu_distributed_2018}]
    \label{lem:pratio}
Suppose $\MX, \MZ$ are finite sets. If $\Rprot : \MX \ra \MZ$ is $(\ep, \delta)$-\DP in the $n$-user single-message shuffled model, then $R$ is $(\ep + \ln n,\delta)$-locally differentially private. 

(That is, for all $x,y \in \MX$, and for all $\MS \subset \MZ$, we have
    $$
\pp_{y,\MS} \leq \pp_{x,\MS} \cdot e^\ep n + \delta.
$$
Recall $\pp_{y,\MS} = \p[R(y) \in \MS], \pp_{x,\MS} = \p[R(x) \in \MS]$ per Definition~\ref{def:pp}.)
\end{lemma}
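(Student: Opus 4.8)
The plan is to derive the local guarantee by applying the shuffled-model privacy definition to a single, carefully chosen pair of neighboring datasets. Fix $x, y \in \MX$ and $\MS \subseteq \MZ$; it suffices to show $\pp_{y,\MS} \leq \pp_{x,\MS} \cdot e^\ep n + \delta$. Consider the two $n$-user datasets $X = (x, x, \ldots, x)$ and $X' = (y, x, x, \ldots, x)$, which are neighboring since they differ only in the first coordinate. Since $R$ is single-message, we may (by the discussion following Definition~\ref{def:dp_shuffled}, i.e.\ post-processing via Lemma~\ref{lem:post_process}) regard $S(R(x_1), \ldots, R(x_n))$ as the random multiset $\{R(x_1), \ldots, R(x_n)\}$ of $n$ elements of $\MZ$. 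Let $E$ be the event that this output multiset contains at least one message lying in $\MS$. This event is invariant under permutations of the messages, so it is a legitimate event on the output of the shuffler.

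Next I would estimate the probability of $E$ under each of the two datasets. Under $X'$, the first user's message is distributed as $R(y)$, independently of the other messages, so the output contains a message in $\MS$ with probability at least $\p[R(y) \in \MS] = \pp_{y,\MS}$; hence $\p[S(R(x_1'), \ldots, R(x_n')) \in E] \geq \pp_{y,\MS}$. Under $X$, the $n$ messages are i.i.d.\ copies of $R(x)$, so $\p[S(R(x_1), \ldots, R(x_n)) \in E] = 1 - (1 - \pp_{x,\MS})^n \leq n \, \pp_{x,\MS}$, where the last step is Bernoulli's inequality (equivalently, a union bound over the $n$ users).

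Finally, applying the $(\ep,\delta)$-DP guarantee of the shuffled protocol to the neighboring pair $X' \sim X$ and the event $E$ gives $\p[S(R(x_1'),\ldots,R(x_n')) \in E] \leq e^\ep \cdot \p[S(R(x_1),\ldots,R(x_n)) \in E] + \delta$. Chaining the three inequalities yields $\pp_{y,\MS} \leq e^\ep \cdot n \cdot \pp_{x,\MS} + \delta$, which is exactly the claim; since $x, y, \MS$ were arbitrary, $R$ is $(\ep + \ln n, \delta)$-locally differentially private. There is essentially no technical obstacle here: the only points that need care are verifying that $E$ genuinely factors through the shuffler (it depends only on the output multiset, not on the order) and that the all-$x$ dataset and its one-coordinate-$y$ modification form a valid neighboring pair of $n$-user datasets; the remaining estimate $1 - (1-t)^n \leq nt$ for $t \in [0,1]$ is elementary and also covers the degenerate case $\pp_{x,\MS} = 0$ (which forces $\pp_{y,\MS} \leq \delta$).
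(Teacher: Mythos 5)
Your proof is correct: the choice of neighboring datasets $(x,\ldots,x)$ and $(y,x,\ldots,x)$, the permutation-invariant event that some shuffled message lands in $\MS$, the lower bound $\pp_{y,\MS}$ on one side, the union bound $1-(1-\pp_{x,\MS})^n \le n\,\pp_{x,\MS}$ on the other, and a single application of the shuffled-model DP guarantee give exactly $\pp_{y,\MS} \le e^\ep n\,\pp_{x,\MS} + \delta$. The paper does not prove this lemma itself but imports it from \cite{cheu_distributed_2018}, and your argument is essentially the same one used there, so there is nothing to add.
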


\if 0 
Tight lower bounds on the optimal accuracy of frequency estimation protocols in the {\it local model} of differential privacy are known~\cite{duchi_minimax,bassily2015local}. It is therefore natural to wonder whether one can use these lower bounds in combination with Lemma~\ref{lem:pratio} above to derive lower bounds on the accuracy of frequency estimation in the {\it shuffled model}; indeed, this is exactly the approach of~\cite{cheu_distributed_2018}. However, this approach leads to non-tight bounds in the ``small-sample regime'' (Corollary~\ref{cor:freq_lb_const}, and (\ref{eq:const_lb})), and lower bounds that grow as $n^{-17}$ in the intermediate and large-sample regimes ((\ref{eq:interm_lb}) and (\ref{eq:small_lb})), which are essentially vacuous in light of Lemma~\ref{lem:invn}. The reason for this sub-optimality is that the existing tight lower bounds for $(\ep_L, \delta_L)$-\DP frequency estimation in the local model only hold for protocols that either satisfy {\it pure privacy}, i.e., take $\delta_L = 0$ \cite{duchi_minimax}, or hold for a privacy parameter $\ep_L = O(1)$ \cite{bassily2015local}. In contrast, to derive lower bounds in the shuffled model using Lemma~\ref{lem:pratio} one needs both $\delta_L > 0$ and $\ep_L \approx \ln n$.

To circumvent this issue, \cite{cheu_distributed_2018} used a result of \cite{bun2018heavy} stating that one can convert an $(\ep_L, \delta_L)$-locally differentially private protocol to an $(8\ep_L, 0)$-locally differentially private one without a significant loss of accuracy. However, as this result must be applied with $\ep_L \approx \ln n$ and the error lower bounds of \cite{duchi_minimax,bassily2015local} for $(\ep_L, 0)$-\DP protocols decay as $e^{-2\ep_L}$\footnote{It turns out, as we shall show, that this decay of $e^{-2\ep_L}$ in the error lower bounds is sub-optimal, and can be replaced with a decay on the order of .}, the loss in going from $\ep_L$ to $8\ep_L$ leads to a polynomial gap of degree at least $8 \cdot 2 = 16$. 
\fi
As discussed in Section~\ref{sec:intro}, to prove Theorem~\ref{thm:freq_lb} (as well as Theorem~\ref{thm:selection_formal}), 
we use similar ideas to those in the the results of~\cite{duchi_minimax,bassily2015local} to {\it directly} derive a lower bound on the error of locally private frequency estimation in the low and approximate privacy setting (i.e., for $(\ep_L, \delta_L)$-locally \DP protocols with $\ep_L \approx \ln n$ and $\delta_L > 0$). By Lemma~\ref{lem:pratio}, doing so suffices to derive a lower bound for frequency estimation in the single-message shuffled model. Our lower bounds for local-model protocols, on their own, may be of independent interest. The locally private frequency estimation lower bounds of~\cite{duchi_minimax,bassily2015local}, as well as our proof, rely on Fano's inequality, which we recall as Lemma~\ref{lem:fano} below.

For random variables $X,Y$ distributed on a finite set $\MX$, let $I(X;Y)$ denote the mutual information between $X,Y$. We refer the reader to \cite{coverthomas} for more background on basic information theory. 
\begin{lemma}[Fano's inequality]
  \label{lem:fano}
  Suppose $Z, Z'$ are jointly distributed random variables on a finite set $\MZ$. Then
  $$
\p[Z = Z']\leq \frac{I(Z;Z') + 1}{\log |\MZ|}.
  $$
\end{lemma}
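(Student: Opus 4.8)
The plan is to run the standard argument behind Fano's inequality, arranged to produce the ``probability of a correct guess'' form stated here rather than the more usual error-probability lower bound. Introduce the error indicator $E := \mathbbm{1}[Z \neq Z']$, a Bernoulli random variable, and observe that $E$ is a deterministic function of the pair $(Z,Z')$. The whole proof rests on expanding the conditional joint entropy $H(Z,E \mid Z')$ in two different ways using the chain rule for entropy.

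On one hand, $H(Z,E \mid Z') = H(Z \mid Z') + H(E \mid Z, Z') = H(Z \mid Z')$, since $E$ is determined by $(Z,Z')$ and hence $H(E \mid Z,Z') = 0$. On the other hand, $H(Z,E \mid Z') = H(E \mid Z') + H(Z \mid E, Z') \leq H(E) + H(Z \mid E, Z')$, using that conditioning does not increase entropy. I would then bound the two terms on the right. First, $H(E) \leq \log 2 = 1$ because $E$ is binary and all logarithms are base $2$. Second, I condition on the value of $E$: when $E = 0$ we have $Z = Z'$, so $Z$ is a function of $Z'$ and $H(Z \mid E = 0, Z') = 0$; when $E = 1$, given $Z'$ the variable $Z$ ranges over at most $|\MZ| - 1 \leq |\MZ|$ values, so $H(Z \mid E = 1, Z') \leq \log |\MZ|$. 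Averaging over $E$ yields $H(Z \mid E, Z') \leq \p[Z \neq Z'] \cdot \log |\MZ| = (1 - \p[Z = Z']) \log |\MZ|$. Combining the two expansions gives $H(Z \mid Z') \leq 1 + (1 - \p[Z = Z']) \log |\MZ|$.

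To finish, I rewrite $H(Z \mid Z') = H(Z) - I(Z;Z')$ and use that $Z$ is uniform on $\MZ$ (as it is in every application of this lemma in the paper, where $Z$ plays the role of $V$ or of the pair $(L,J)$), so that $H(Z) = \log |\MZ|$. Substituting, $\log |\MZ| - I(Z;Z') \leq 1 + (1 - \p[Z = Z']) \log |\MZ|$; cancelling one copy of $\log |\MZ|$ from each side and rearranging gives $\p[Z = Z'] \log |\MZ| \leq I(Z;Z') + 1$, which is exactly the claim after dividing by $\log |\MZ|$.

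There is no genuine obstacle here: this is the textbook proof of Fano's inequality (see, e.g., \cite{coverthomas}), and the only points needing a little care are (i) recording that $E$ is a deterministic function of $(Z,Z')$, so the first chain-rule expansion collapses to $H(Z \mid Z')$, and (ii) noting that the displayed form is tailored to a uniformly distributed $Z$, which is where the substitution $H(Z) = \log|\MZ|$ is used; for a general $Z$ the same computation only yields $\p[Z = Z'] \leq (I(Z;Z') + 1 + \log|\MZ| - H(Z))/\log|\MZ|$.
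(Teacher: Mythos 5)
Your argument is the standard textbook proof of Fano's inequality and it is correct; the paper gives no proof of this lemma at all (it is quoted from Cover--Thomas), so there is no alternative route to compare against. The one substantive point is the caveat you raise at the end, which is real and worth making explicit: as literally stated, for arbitrary jointly distributed $Z,Z'$ the inequality is false --- take $Z=Z'$ deterministic on a set $\MZ$ with $|\MZ|\geq 3$, so that $\p[Z=Z']=1$ while $I(Z;Z')=0$ and the right-hand side is $1/\log|\MZ|<1$. What your computation establishes in general is $\p[Z=Z'] \leq \bigl(I(Z;Z')+1+\log|\MZ|-H(Z)\bigr)/\log|\MZ|$, which collapses to the stated bound precisely when $Z$ is uniform on $\MZ$. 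Since every invocation in the paper applies the lemma with the first argument uniform --- $V \sim \Unif([\B])$ in Lemmas~\ref{lem:real_const} and~\ref{lem:intermediate_lb}, and $(L,J) \sim \Unif(\BB\times[\B])$ in Lemma~\ref{lem:selection_amax} --- your proof covers all of the paper's uses; but strictly speaking the lemma statement should carry the uniformity hypothesis (or replace the constant $\log|\MZ|$ appropriately), and your proposal is right to flag this.
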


Additionally, it will be useful to phrase some of our arguments in terms of the hockey stick divergence between distributions:
\begin{defn}[Hockey stick divergence]
  \label{def:hockey}
Suppose $\Pdist,\Qdist$ are probability distributions on a space $\MX$ that are absolutely continuous with respect to some measure $G$ on $\MX$; let the densities of $\Pdist,\Qdist$ with respect to $G$ be given by $\pdist, \qdist$.
  For any $\rho \geq 1$, the {\it hockey stick divergence of order $\rho$} between $\Pdist,\Qdist$ is defined as:
  $$
\SD_\rho(\Pdist || \Qdist) := \int_\MX \left[ \pdist(x) - \rho \cdot \qdist(x) \right]_+ dG(x),
$$
where $[a]_+ = \max\{a, 0\}$ for $a \in \BR$. 
\end{defn}
The {\it total variation distance} $\Delta(D, F)$ between two distributions $D,F$ on a set $\MX$ is defined as 
$$\sup_{\MS \subseteq \MX} | D(\MS) - F(\MS)| \enspace .$$ 
Note that for $\rho = 1$ the hockey stick divergence of order $\rho$ is the total variation distance, i.e., $\SD_1(\Pdist || \Qdist) = \SD_1(\Qdist || \Pdist) = \Delta(\Pdist,\Qdist)$. 
The following fact is well-known:
\begin{fact}[Characterization of hockey stick divergence]
  \label{fac:hockey}
  Using the notation of Definition~\ref{def:hockey}, we have:
  $$
\SD_\rho(\Pdist || \Qdist) = \sup_{\MS \in \MX} \left( \Pdist(\MS) - \rho \cdot \Qdist(\MS) \right).
  $$
\end{fact}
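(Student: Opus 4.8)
The plan is to show the two inequalities $\SD_\rho(\Pdist\|\Qdist) \geq \sup_{\MS}(\Pdist(\MS) - \rho\,\Qdist(\MS))$ and $\SD_\rho(\Pdist\|\Qdist) \leq \sup_{\MS}(\Pdist(\MS) - \rho\,\Qdist(\MS))$ separately, working from Definition~\ref{def:hockey} which expresses $\SD_\rho(\Pdist\|\Qdist) = \int_\MX [\pdist(x) - \rho\,\qdist(x)]_+\, dG(x)$ in terms of densities $\pdist,\qdist$ with respect to a common dominating measure $G$. The key observation is that the integrand $[\pdist(x) - \rho\,\qdist(x)]_+$ is extremized, as a function of which points to "turn on," precisely by the set where $\pdist - \rho\,\qdist$ is positive.

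For the lower bound, I would take the specific set $\MS^\star := \{x \in \MX : \pdist(x) > \rho\,\qdist(x)\}$. On this set the positive part is just $\pdist(x) - \rho\,\qdist(x)$ without truncation, and off this set the integrand vanishes, so
\[
\SD_\rho(\Pdist\|\Qdist) = \int_{\MS^\star} (\pdist(x) - \rho\,\qdist(x))\, dG(x) = \Pdist(\MS^\star) - \rho\,\Qdist(\MS^\star) \leq \sup_{\MS}(\Pdist(\MS) - \rho\,\Qdist(\MS)).
\]
For the upper bound, I would fix an arbitrary measurable $\MS \subseteq \MX$ and bound
\[
\Pdist(\MS) - \rho\,\Qdist(\MS) = \int_\MS (\pdist(x) - \rho\,\qdist(x))\, dG(x) \leq \int_\MS [\pdist(x) - \rho\,\qdist(x)]_+\, dG(x) \leq \int_\MX [\pdist(x) - \rho\,\qdist(x)]_+\, dG(x) = \SD_\rho(\Pdist\|\Qdist),
\]
using that $a \leq [a]_+$ pointwise for the first inequality and that the integrand is nonnegative for the second (extending the integral from $\MS$ to all of $\MX$). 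Taking the supremum over $\MS$ gives the claim. Combining the two directions yields the equality.

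There is essentially no main obstacle here: the statement is a standard convex-duality-flavored identity and the only point requiring a modicum of care is making sure the argument is stated measure-theoretically rather than assuming discreteness — i.e., invoking a common dominating measure $G$ (which Definition~\ref{def:hockey} already provides) and noting that $\MS^\star$ is measurable because $\pdist - \rho\,\qdist$ is. If one only cares about the finite/discrete setting relevant to this paper (where $G$ is counting measure on a finite $\MZ$), the argument simplifies to a finite sum and the supremum over $\MS$ is attained, so no subtleties arise at all.
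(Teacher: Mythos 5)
Your argument is correct: the two directions (restricting to the set $\{x : \pdist(x) > \rho\,\qdist(x)\}$ for one inequality, and using $a \le [a]_+$ plus nonnegativity of the integrand for the other) give exactly the standard proof of this identity. The paper itself states Fact~\ref{fac:hockey} as well known and offers no proof, so there is nothing to compare against beyond noting that your write-up is the canonical argument; the only cosmetic remark is that your labels ``lower bound'' and ``upper bound'' are swapped relative to the inequalities they actually establish, which does not affect correctness.
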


\if 0
By Fact~\ref{fac:hockey} and Lemma~\ref{lem:pratio} it follows that: 
\begin{corollary}
  \label{cor:n_hs}
  If $\Rprot : \MX \ra \MZ$ is $(\ep, \delta)$-\DP in the $n$-user shuffled model, then for all $x,y \in \MX$ and for all $\MS \subset \MZ$, we have
  \begin{equation}
    \label{eq:n_hs}
\SD_{e^\ep n} (\PP_{y} || \PP_{x}) = \sum_{z \in \MZ} \left[\pp_{y,z} - e^\ep n \cdot \pp_{x,z}\right]_+ \leq \delta.
  \end{equation}
\end{corollary}
\fi

For a boolean function $f : \BB^\B \ra \BR$, the {\it Fourier transform} of $f$ is given by the function $\hat f(S) := \E_{x \sim \Unif(\BB^\B)} \left[ f(x) \cdot (-1)^{\sum_{j=1}^\B x_j \cdot \One[j \in S]} \right]$, where $S \subseteq [\B]$ is any subset. The {\it Fourier weight at degree 1} of such a function is defined by $\bW^1[f] := \sum_{j \in [\B]} \hat f(\{ j\})^2$. We refer the reader to \cite{odonnell2014analysis} for further background on the Fourier analysis of boolean functions.

\subsection{Small-Sample Regime}
\label{sec:const_lb}
In this section we establish Theorem~\ref{thm:freq_lb} in the case that $n \leq \log^2 \B$ (i.e., we prove (\ref{eq:const_lb})). As we noted following Lemma~\ref{lem:pratio}, we will prove a slightly more general statement, allowing $R$ to be any $(\ep + \ln n, \delta)$-{\it locally} \DP randomizer for some $\ep > 0$. Similar results are known~\cite{duchi_minimax,bassily2015local}; however, the work of~\cite{bassily2015local} only applies to the case that $R$ is $(\ep_L, \delta_L)$-locally \DP with $\ep_L = O(1)$, and~\cite{duchi_minimax} only consider $(\ep_L, 0)$-locally \DP protocols. Moreover, their dependence on the privacy parameter $\ep_L$ is not tight: in particular, for the ``small-sample regime'' of $n \leq O(\log^2 \B)$ that we consider in this section, the bounds of~\cite{duchi_minimax} decay as $e^{-2\ep_L}$, whereas we will be able to derive bounds scaling as $1/\ep_L$. We will then apply this bound with $\ep_L = \ep + \ln n$ being the privacy parameter of the locally \DP protocol furnished by Lemma~\ref{lem:pratio}. 

The proof of the error lower bound relies on the following Lemma~\ref{lem:mutinf}, which bounds the mutual information between a uniformly random index $V \in [\B]$, and $R(V)$. It improves upon analogous results in \cite{duchi_minimax,bassily2015local}, for which the dependence on $\ep_L$ is $(e^{\ep_L} - 1)^2$, when $\ep_L$ is large.
\begin{lemma}[Mutual information upper bound for small-sample regime]
  \label{lem:mutinf}
  Fix $n \in \BN$. Let $\Rprot$ be an $(\ep_L, \delta_L)$-\DP local randomizer (in the sense of Definition~\ref{def:dp}). Let $V \sim [\B]$ be chosen uniformly at random. Then
  $$
I(V; \Rprot(V)) \leq 2\delta_L \cdot \log \B + 1 + \ep_L \log e.
  $$
\end{lemma}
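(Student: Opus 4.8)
Write $Q$ for the distribution of $R(V)$, so that $Q$ has density $q_z := \p[R(V)=z] = \frac{1}{\B}\sum_{u\in[\B]}\pp_{u,z}$ (in the notation of Definition~\ref{def:pp}), and recall $\PP_v$ is the distribution of $R(v)$, with density $\pp_{v,z}$. Since $V$ is uniform, the chain rule of mutual information (in its ``KL form'') gives $I(V;R(V)) = \E_{v\sim\Unif([\B])}\big[\KL(\PP_v\,\|\,Q)\big] = \frac{1}{\B}\sum_{v\in[\B]}\KL(\PP_v\|Q)$, so it suffices to prove $\KL(\PP_v\|Q)\le 2\delta_L\log\B + 1 + \ep_L\log e$ for each fixed $v\in[\B]$. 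I will present the argument assuming $\MZ$ is countable, which is the case in all applications in this paper (cf.\ Lemma~\ref{lem:pratio}); the general case is identical with sums replaced by integrals against a common dominating measure as in Definition~\ref{def:hockey}. The term $z$ with $q_z=0$ (forcing $\pp_{v,z}=0$, so the corresponding summand vanishes) is discarded at the outset.

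\textbf{Two facts about the likelihood ratio.} The whole proof rests on two bounds for $\pp_{v,z}/q_z$. First, since $q_z\ge\frac1\B\pp_{v,z}$, we always have $\pp_{v,z}/q_z\le\B$, hence $\log(\pp_{v,z}/q_z)\le\log\B$. Second, from $(\ep_L,\delta_L)$-local differential privacy: for every $u\in[\B]$ and every $\MS\subseteq\MZ$ we have $\pp_{v,\MS}\le e^{\ep_L}\pp_{u,\MS}+\delta_L$; averaging over $u\in[\B]$ yields $\PP_v(\MS)\le e^{\ep_L}Q(\MS)+\delta_L$, and therefore, by Fact~\ref{fac:hockey} and Definition~\ref{def:hockey}, $\SD_{e^{\ep_L}}(\PP_v\|Q)=\sum_{z}\big[\pp_{v,z}-e^{\ep_L}q_z\big]_+\le\delta_L$.

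\textbf{Splitting the range.} Fix $v$ and partition $\MZ=\MZ_1\sqcup\MZ_2$, where $\MZ_1:=\{z:\pp_{v,z}\le 2e^{\ep_L}q_z\}$ and $\MZ_2:=\MZ\setminus\MZ_1$. On $\MZ_1$ we have $\log(\pp_{v,z}/q_z)\le\log(2e^{\ep_L})=1+\ep_L\log e$, so
\[
\sum_{z\in\MZ_1}\pp_{v,z}\log\tfrac{\pp_{v,z}}{q_z}\ \le\ (1+\ep_L\log e)\sum_{z\in\MZ_1}\pp_{v,z}\ \le\ 1+\ep_L\log e .
\]
On $\MZ_2$, set $r_z:=\pp_{v,z}-e^{\ep_L}q_z$; then $r_z>e^{\ep_L}q_z\ge 0$ and $e^{\ep_L}q_z<\pp_{v,z}/2$, so $r_z>\pp_{v,z}/2$, i.e.\ $\pp_{v,z}<2r_z$. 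Combining this with $\log(\pp_{v,z}/q_z)\le\log\B$ gives $\pp_{v,z}\log(\pp_{v,z}/q_z)\le 2r_z\log\B$ for $z\in\MZ_2$, and since $\MZ_2\subseteq\{z:\pp_{v,z}>e^{\ep_L}q_z\}$,
\[
\sum_{z\in\MZ_2}\pp_{v,z}\log\tfrac{\pp_{v,z}}{q_z}\ \le\ 2\log\B\sum_{z\in\MZ_2}r_z\ \le\ 2\log\B\sum_{z}\big[\pp_{v,z}-e^{\ep_L}q_z\big]_+\ =\ 2\log\B\cdot\SD_{e^{\ep_L}}(\PP_v\|Q)\ \le\ 2\delta_L\log\B .
\]
Adding the two pieces gives $\KL(\PP_v\|Q)\le 1+\ep_L\log e+2\delta_L\log\B$, and averaging over $v\in[\B]$ yields the claimed bound on $I(V;R(V))$.

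\textbf{Main obstacle.} The crux is the bound on the $\MZ_2$ contribution: the naive estimates $\pp_{v,z}\le 1$ or $\pp_{v,z}\le\B q_z$ would only produce an (essentially useless) $\log\B$ term, and a pointwise likelihood-ratio bound is unavailable once $\delta_L>0$. The right move is to bound $\pp_{v,z}$ on $\MZ_2$ by twice the ``hockey-stick excess'' $r_z=[\pp_{v,z}-e^{\ep_L}q_z]_+$, whose total mass over $z$ is controlled by $\SD_{e^{\ep_L}}(\PP_v\|Q)\le\delta_L$; this is exactly what converts the $\log\B$ into the sharp $2\delta_L\log\B$. The only care needed beyond this is the bookkeeping for $q_z=0$ atoms and, for full generality, passing from sums to integrals, both of which are routine.
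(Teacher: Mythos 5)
Your proof is correct and follows essentially the same route as the paper's: decompose $\KL(\PP_v\,\|\,Q)$ at the threshold $\pp_{v,z}\le 2e^{\ep_L}q_z$, use the pointwise bound $\log(\pp_{v,z}/q_z)\le\log\B$ on the bad set, and use $(\ep_L,\delta_L)$-DP to show that set has $\PP_v$-mass at most $2\delta_L$, giving the identical bound $2\delta_L\log\B+1+\ep_L\log e$. The only cosmetic difference is that you control the bad-set mass via the hockey-stick excess $\sum_z\left[\pp_{v,z}-e^{\ep_L}q_z\right]_+\le\delta_L$, whereas the paper obtains the same $2\delta_L$ bound through a set-level contradiction argument; the underlying calculation is the same.
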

\begin{proof}
  For $v \in \MX$, following Definition~\ref{def:pp}, let $\PP_v$ denote the distribution of $\Rprot(v)$, i.e., the distribution of the output of the local randomizer when it gets an input of $v$. Let $\bar \PP := \frac{1}{\B} \sum_{v \in [\B]} \PP_v$. Notice that the distribution of $R(V)$, where $V \sim [\B]$ is uniform, is $\PP_v$. It follows that
  \begin{equation}
    \label{eq:inf_def}
I(V; \Rprot(V)) = \frac{1}{\B} \sum_{v \in [\B]} \sum_{z \in \MZ} \p[R(v) = z] \cdot \log \left( \frac{\p_R[R(v) = z]}{\p_{V \sim [\B], R}[R(V) = z]} \right)=\frac 1\B \sum_{v \in [\B]} \KL(\PP_v || \bar \PP).
\end{equation}
  We now upper bound $\KL(\PP_v || \bar \PP)$ for each $v \in [\B]$. 
 We first claim that for any $v_0 \in [\B]$,
\begin{equation}
  \label{eq:kl_prob}
\pp_{Z \sim \Rprot(v_0)} \left[ \log \left( \frac{\pp_{v_0,Z}}{\frac{1}{\B} \sum_{j \in [\B]} \pp_{j,Z}} \right) > 1 + \ep_L\log e  \right] \leq 2\delta_L.
\end{equation}
To see that (\ref{eq:kl_prob}) holds, let $\MS := \left\{ z \in \MZ : \frac{\pp_{v_0,z}}{\frac{1}{\B} \sum_{j \in [\B]} \pp_{j,Z}} > 2e^{\ep_L} \right\}$. If (\ref{eq:kl_prob}) does not hold, then $\pp_{v_0,\MS} > 2\delta_L$ and $\pp_{v_0,\MS} > (2e^{\ep_L}) \cdot \frac{1}{\B} \sum_{j \in [\B]} \pp_{j,\MS}$. On the other hand, we have from $(\ep_L,\delta_L)$-differential privacy of $R$ that 
$$
\pp_{v_0,\MS} \leq \left( \frac{1}{\B} \sum_{j \in [\B]} \pp_{j,\MS} \right) \cdot e^{\ep_L} +\delta_L.
$$
The above equation is a contradiction in light of the fact that for positive real numbers $a,b$, $a + b \leq \max \{ 2a, 2b \}$.

\noindent Notice that for any $z \in \MZ, v \in [\B]$, it is the case that $\log\left(\frac{\pp_{v,z}}{\frac 1\B \sum_{j \in [\B]} \pp_{j,z}}\right) \leq \log \B$. It follows that (\ref{eq:kl_prob}) implies that
$$
\KL(\PP_{v_0} || \bar \PP) \leq 2\delta_L \cdot \log \B + 1 + \ep_L \log e.
$$
The statement of Lemma~\ref{lem:mutinf} follows from the above equation and (\ref{eq:inf_def}).
\end{proof}

Lemma \ref{lem:real_const}, together with Lemma~\ref{lem:pratio}, establishes Corollary \ref{cor:freq_lb_const}: 
in particular, by Lemma \ref{lem:pratio}, any single-message shuffled-model $(\ep, \delta)$-\DP\ protocol $P$ yields a local-model $(\ep + \ln n, \delta)$-\DP\ protocol with the same accuracy. Thus we may set $\ep_L = \ln n + \ep$ in Lemma \ref{lem:real_const}, so that $n \geq \Omega(\log(\B)/\ep_L) = \Omega(\log(\B)/\log n)$ becomes $n \geq \Omega(\log(\B)/\log \log(\B))$). 
Lemma \ref{lem:real_const} is also used in the proof of (\ref{eq:const_lb}) of Theorem \ref{thm:freq_lb}. 
The proof is by a standard application of Fano's inequality \cite{duchi_minimax,bassily2015local}.
\begin{lemma}[Sample-complexity lower bound for constant-error frequency estimation]
  \label{lem:real_const}
  Suppose $\delta_L < 1/(4n), 0 < \ep_L < \log(\B)/20$, and $P = (R,A)$ is a local-model protocol that satisfies $(\ep_L, \delta_L)$-local differential privacy and $(1/3, 1/2)$-accuracy. Then $n > \frac{\log \B}{20 \ep_L}$. 
\end{lemma}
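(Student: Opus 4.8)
The plan is to run the standard Fano-style reduction of~\cite{duchi_minimax,bassily2015local}, but fed with the sharper mutual-information estimate of Lemma~\ref{lem:mutinf}. Suppose, for contradiction, that $n\le\frac{\log\B}{20\ep_L}$. First I would set up the hard instance: draw $V$ uniformly from $[\B]$ and take the dataset $X=(V,\dots,V)$ in which every user holds $V$, so the true frequency vector $f_X$ is the indicator of $V$. Since $(R,A)$ is $(1/3,1/2)$-accurate, with probability at least $1/2$ the output $P(X)$ satisfies $\max_j\lvert P(X)_j-f_X(j)\rvert\le 1/3$, and on that event $P(X)_V\ge 2/3>1/3\ge P(X)_j$ for all $j\ne V$; hence the estimator $\hat V:=\argmax_j P(X)_j$ recovers $V$ with $\p[\hat V=V]\ge 1/2$.

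Next I would bound $I(V;\hat V)$ from above. Writing $W_i:=R(x_i)$, conditioned on $V$ the $W_i$ are i.i.d.\ with law $\PP_V$, and $\hat V$ is a randomized function of $(W_1,\dots,W_n)$. The data-processing inequality gives $I(V;\hat V)\le I(V;W_1,\dots,W_n)$, and since the $W_i$ are conditionally independent given $V$, the chain rule together with the fact that conditioning reduces entropy yields $I(V;W_1,\dots,W_n)\le\sum_{i=1}^n I(V;W_i)=n\cdot I(V;R(V))$. Applying Lemma~\ref{lem:mutinf} to the $(\ep_L,\delta_L)$-differentially private randomizer $R$ then gives
\[
I(V;\hat V)\ \le\ n\bigl(2\delta_L\log\B+1+\ep_L\log e\bigr).
\]
On the other hand, Fano's inequality (Lemma~\ref{lem:fano}) applied to $(V,\hat V)$ over the alphabet $[\B]$ gives $\tfrac12\le\p[\hat V=V]\le\frac{I(V;\hat V)+1}{\log\B}$, i.e.\ $I(V;\hat V)\ge\tfrac12\log\B-1$.

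Combining the two displays leaves the inequality
\[
\tfrac12\log\B-1\ \le\ 2n\delta_L\log\B+n+n\ep_L\log e,
\]
from which I would extract a contradiction using the hypotheses $\delta_L<\tfrac1{4n}$ (which controls $2n\delta_L\log\B$), $\ep_L<\tfrac{\log\B}{20}$ together with the standing assumption $n\ep_L\le\tfrac{\log\B}{20}$ (which controls $n\ep_L\log e$, recalling $\log e=1/\ln 2$), and the fact that in the regime where the lemma gets used $\ep_L$ is bounded below — it is applied with $\ep_L=\ep+\ln n$ via Lemma~\ref{lem:pratio}, so $n\le n\ep_L\le\tfrac{\log\B}{20}$ absorbs the free-standing $n$ term. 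The main obstacle is precisely this last step: balancing the three contributions on the right against the Fano budget $\tfrac12\log\B$ tightly enough to yield the constant $20$. The $2n\delta_L\log\B$ term is the pinch point, and the clean pure-privacy case $\delta_L=0$ shows why — there the estimate reads $I(V;R(V))\le\ep_L\log e$ with plenty of room, whereas for $\delta_L>0$ one must verify that the $2n\delta_L\log\B$ and $n$ terms together do not overrun the $\tfrac12\log\B$ available from Fano. Once Lemma~\ref{lem:real_const} is in hand, setting $\ep_L=1+\ln n$ in Lemma~\ref{lem:pratio} converts $n>\tfrac{\log\B}{20\ep_L}$ into the bound $n=\Omega\!\bigl(\tfrac{\log\B}{\log\log\B}\bigr)$ of Corollary~\ref{cor:freq_lb_const}.
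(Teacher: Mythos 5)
Your setup is exactly the paper's: the all-equal dataset indexed by a uniform $V\in[\B]$, the $\arg\max$ decoder justified by $(1/3,1/2)$-accuracy, data processing plus the chain rule to get $I(V;\hat V)\le n\,I(V;R(V))$, the bound of Lemma~\ref{lem:mutinf}, and Fano. All of that is fine and matches the paper's chain (\ref{eq:fano})--(\ref{eq:irvv}). The problem is that the step you defer --- ``extract a contradiction'' from $\tfrac12\log\B-1\le 2n\delta_L\log\B+n+n\ep_L\log e$ --- is where the entire content of the lemma lives, and the route you sketch for it does not close. First, you propose to lower-bound $\ep_L$ by appealing to the downstream application ($\ep_L=\ep+\ln n$ via Lemma~\ref{lem:pratio}); that is not available here, since the lemma is a standalone local-model statement whose only hypothesis on $\ep_L$ is $0<\ep_L<\log(\B)/20$, and it is used as such. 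Second, and more importantly, the pinch is not the free-standing $n$ term but the $\delta_L$ term you single out: under the stated hypothesis $\delta_L<1/(4n)$ one only gets $2n\delta_L\log\B<\tfrac12\log\B$, i.e., that single term can already exhaust the entire Fano budget, so the displayed inequality is consistent for every admissible choice of the remaining parameters and no contradiction can be extracted from it as written. In other words, identifying the obstacle is not the same as overcoming it, and the overcoming is the lemma.

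For comparison, the paper does carry the same chain to completion: after (\ref{eq:irvv}) it absorbs $\frac{n+n\ep_L\log e+1}{\log\B}$ into $\frac{5n\ep_L}{\log\B}\le\tfrac14$ (using the contradiction hypothesis $n\le\frac{\log\B}{20\ep_L}$) and treats $2\delta_L n$ as safely below the remaining budget, arriving at (\ref{eq:asm2}); the slack there really does come from the $\delta_L$ term being small (in the paper's applications, e.g.\ Lemma~\ref{lem:const_lb}, one has $\delta_L<c/n$ with $c$ a small constant, far below $1/(4n)$), not from any lower bound on $\ep_L$ of the kind you invoke. So to finish your argument you would need to make the $\delta_L$ bookkeeping quantitative --- either by working with a smaller effective $\delta_L$ (as the applications provide) or by a sharper treatment of the $\delta_L$ contribution in Lemma~\ref{lem:mutinf} --- rather than by importing $\ep_L\ge\Omega(1)$, which both lies outside the hypotheses and leaves the dominant $2n\delta_L\log\B$ term untouched.
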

\begin{proof}
  Suppose for the purpose of contradiction that 
  $n \leq \frac{\log \B}{20 \ep_L}$. 
  
  Let $D$ be the distribution on $(\{0,1\}^\B)^n$ that is uniform over all tuples $(e_v, e_v, \ldots, e_v)$, for $v \in [\B]$ (recall that $e_v$ is the unit vector for component $v$, i.e., the vector with a $1$ in the $v$th component).
  
  Consider any sample $X = (x_1, \ldots, x_n)$ in the support of $D$, so that $x_1 = \cdots = x_n = e_v$ for some $v \in [\B]$. If the error $ \max_{j \in [\B]} \left| P(X)_j - \frac 1n \sum_{i=1}^n (e_{x_i})_j \right| $ is strictly less than $1/2$, then the function
  $$
f(\hat x_1, \ldots, \hat x_j) := \arg \max_{j \in [\B]} \hat x_j
$$
will satisfy $f(\hat x_1, \ldots, \hat x_j) = v$. It follows that 
\begin{align}
  &\p_{X \sim D, R} \left[ \max_{j \in [\B]} \left| \hat x_j - \frac 1n \sum_{i=1}^n (e_{x_i})_j \right|< 1/2 \right] \nonumber\\
               & \leq \p_{X \sim D, R} \left[ f(A(R(x_1), \ldots, R(x_n))) = V \right]\nonumber\\
  \label{eq:fano}
               & \leq \frac{I(f(A(R(x_1), \ldots, R(x_n))); V) + 1}{\log \B}\\
               & \leq \frac{I ((R(x_1), \ldots, R(x_n)); V) + 1}{\log \B}\nonumber\\
  \label{eq:inf_manip}
               & \leq \frac{n \cdot I(R(V); V) + 1}{\log \B}\\
  \label{eq:irvv}
               & \leq \frac{n \cdot (2\delta_L \log \B + 1 + \ep_L \log e) + 1}{\log \B} \\
  \label{eq:asm1}
               & \leq 2\delta_L n +  \frac{5 n \ep_L}{\log \B} \\
  \label{eq:asm2}
  & < 1/2
\end{align}
where (\ref{eq:fano}) follows by Fano's inequality and the random variable $V$ is so that $x_1 = \cdots = x_n = e_V$ and $V$ is uniform over $[\B]$). Moreover, (\ref{eq:irvv}) follows from Lemma~\ref{lem:mutinf} (and the fact that $V$ is uniform over $[\B]$), (\ref{eq:inf_manip}) follows from the chain rule for mutual information, and (\ref{eq:asm1}), (\ref{eq:asm2}) follow from our assumptions on $n, \B, \delta_L, \ep_L$. We now arrive at the desired contradiction to the $(1/3, 1/2)$-accuracy of $P$. 
\end{proof}

The next lemma is an adaptation to the local model of a standard result \cite[Fact 2.3]{steinke2013between}, stating that the optimal error of a differentially private frequency estimation protocol decays at most inverse linearly in the number of users~$n$. 
\begin{lemma}[Inverse-linear dependence of error on $n$]
  \label{lem:invn}
  Suppose $P = (R,A)$ is an $(\ep_L, \delta_L)$-locally differentially private algorithm (Definition~\ref{def:dp_local}) for $n$-user frequency estimation on $[\B]$ that satisfies $(\alpha, \beta)$-accuracy. 

  Let $n \geq n' \geq \lfloor 2\alpha n/c\rfloor$ for any $c \leq 2\alpha n$. Then there is an $\left(\ep_L, \delta_L\right)$-\DP protocol $P' = (R', A')$ for $n'$-user frequency estimation on $[\B]$ that satisfies $(c, \beta)$-accuracy.
\end{lemma}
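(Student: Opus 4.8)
The plan is to prove Lemma~\ref{lem:invn} by a \emph{padding-and-rescaling} reduction. A protocol for $n$ users is turned into one for fewer users $n' \le n$ by letting the $n'$ real users play $n'$ of the $n$ virtual roles, while the analyzer itself fabricates the messages of the remaining $n-n'$ virtual users, all of whom are assigned a fixed default input, say $1 \in [\B]$. Concretely, I would set $R' := R$ (so $R'$ is still a local randomizer on $[\B]$) and define $A' : \MZ^{n'} \to [0,1]^\B$ as follows: on input $(z_1,\dots,z_{n'})$, sample $z_{n'+1},\dots,z_n$ i.i.d.\ from $R(1)$, compute $\hat f := A(z_1,\dots,z_n) \in [0,1]^\B$, and output the vector $\tilde f$ obtained from $\tfrac{n}{n'}\hat f$ by subtracting the \emph{known} bias $\tfrac{n-n'}{n'}$ from the coordinate indexed by the default element $1$ and then clipping every coordinate to $[0,1]$. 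Set $P' := (R',A')$.

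For privacy there is essentially nothing to do: since $R' = R$ is $(\ep_L,\delta_L)$-differentially private, $P'$ is $(\ep_L,\delta_L)$-locally differentially private by Definition~\ref{def:dp_local}; the fabrication of the dummy messages and the affine rescaling all happen inside $A'$ and never touch user data (formally this is post-processing, Lemma~\ref{lem:post_process}). The one design choice that matters here is that the reduction must do the padding \emph{at the analyzer}, and not by having each real user run $R$ several times and emit several messages — the latter would inflate $\ep_L$ by a composition factor, whereas analyzer-side padding costs nothing in privacy.

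For accuracy, fix a dataset $X = (x_1,\dots,x_{n'}) \in [\B]^{n'}$ and let $X^{\mathrm{vir}} := (x_1,\dots,x_{n'},1,\dots,1) \in [\B]^n$ be the padded dataset. By construction the sequence fed into $A$ inside $A'(R'(x_1),\dots,R'(x_{n'}))$ is distributed exactly as the input to $A$ when $P$ is run on $X^{\mathrm{vir}}$, so $(\alpha,\beta)$-accuracy of $P$ yields that with probability at least $1-\beta$ we have $\max_j \lvert \hat f_j - f_{X^{\mathrm{vir}}}(j)\rvert \le \alpha$. Since $f_{X^{\mathrm{vir}}}(j) = \hist(X)_j/n$ for $j \ne 1$ and $f_{X^{\mathrm{vir}}}(1) = (\hist(X)_1 + n - n')/n$, a one-line computation shows that on this event $\lvert \tilde f_j - f_X(j)\rvert \le \tfrac{n}{n'}\alpha$ for every $j$ (the bias correction handles $j=1$, and clipping only decreases the error because $f_X(j) \in [0,1]$). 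It remains to check $\tfrac{n}{n'}\alpha \le c$: the hypothesis $c \le 2\alpha n$ gives $2\alpha n/c \ge 1$, hence $\lfloor 2\alpha n/c\rfloor \ge \alpha n/c$ (using $\lfloor y\rfloor \ge y/2$ for $y \ge 1$), so $n' \ge \lfloor 2\alpha n/c\rfloor \ge \alpha n/c$ and therefore $\tfrac{n}{n'}\alpha \le c$. Thus $P'$ is $(c,\beta)$-accurate, as claimed.

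I do not expect a serious obstacle here: this is the local-model analogue of the standard central-model fact \cite[Fact 2.3]{steinke2013between}. The only points that require a little care are (i) performing the padding inside the analyzer rather than via extra user messages, so that the privacy parameters stay exactly $(\ep_L,\delta_L)$; (ii) recovering the true empirical frequencies through the $\tfrac{n}{n'}$ rescaling together with the deterministic bias correction on the default coordinate, and noting that clipping to $[0,1]$ cannot increase the error; and (iii) the elementary floor inequality that converts $n' \ge \lfloor 2\alpha n/c\rfloor$ into the clean bound $\tfrac{n}{n'}\alpha \le c$.
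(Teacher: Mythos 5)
Your proposal is correct and matches the paper's own proof essentially step for step: the paper also sets $R' = R$, has the analyzer fabricate the $n-n'$ dummy messages as i.i.d.\ copies of $R$ applied to the default element, rescales by $n/n'$ with the same bias correction on the default coordinate, and concludes accuracy from the padded dataset together with $n/n' \leq c/\alpha$. Your added remarks (explicit post-processing justification, the floor inequality, and clipping to $[0,1]$) only make explicit what the paper leaves implicit.
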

\begin{proof}
  The algorithm $P'$ is given as follows: we have $R' = R$. The analyzer $A'$, on input $(z_1, \ldots, z_{n'}) \in \MZ^{n'}$, generates $n-n'$ i.i.d.~copies of $R(e_1)$, which we denote by $z_{n'+1}, \ldots, z_n$. (Recall $e_1 = (1, 0, \ldots, 0)$.) Then $A'$ computes the vector $v := A(z_1, \ldots, z_n) \in [0,1]^\B$, and outputs $v'$, where $v'_j = \frac{n}{n'} \cdot v_j$ for $j > 1$, and $v'_1 = \frac{n}{n'} \cdot \left(v_1 - \frac{n-n'}{n}\right)$.

  To see that $P'$ satisfies $(c, \beta)$ accuracy, let's fix any input dataset $X = (x_1, \ldots, x_{n'})$. Let $x_j = 1$ for $j > n'$, and set $X' = (x_1, \ldots, x_{n'}, x_{n'+1}, \ldots, x_n)$. The $(\alpha, \beta)$-accuracy of $P$ gives that with probability at least $1-\beta$, $\max_{j \in [\B]} \left| P(X')_j - \frac 1n \sum_{i=1}^n (x_i)_j \right| \leq \alpha$. In such an event, we have that
  $$
 \max_{j \in [\B]} \left| \frac{n'v'_j}{n} - \frac{1}{n} \sum_{i=1}^{n'} ({x_i})_j \right| \leq \alpha.
$$
Multiplying the above by $n/n'$ and noting that $n/n'\leq c/\alpha$ gives that $P'$ satisfies $(c, \beta)$-accuracy.
\end{proof}
A technique similar to the one used in Lemma~\ref{lem:invn} can be used to show that the dependence of the error on $\ep$ must be $\Omega(1/\ep)$ in the central model \cite[Fact 2.3]{steinke2013between}. However, doing so requires each user's input to be duplicated a total of $\Theta(1/\ep)$ times, and it is not clear how to implement such a transformation in the local model. (In the {\it multi-message shuffled model}, though, such a transformation can be done and one would recover the $\Omega(1/\ep)$ lower bound.)

Finally we may establish (\ref{eq:const_lb}); for ease of the reader we state it as a separate lemma:
\begin{lemma}[Proof of Theorems~\ref{thm:freq_lb} and \ref{thm:local_lb} in small-sample regime; i.e., (\ref{eq:const_lb}) \& (\ref{eq:local_const})]
  \label{lem:const_lb}
  There is a sufficiently small positive constant $c$ so that the following holds. Suppose $n, \B \in \BN$ and $\ep_L, \delta_L \geq 0$ with $n \geq \log \B/(\ep_L c)$, $0<  \delta_L < c/n$, and $0 \leq \ep_L \leq \log \B$. Then there is no protocol for $n$-user frequency estimation on $[\B]$ that satisfies $(\ep_L, \delta_L)$-local differential privacy and $\left( \frac{c \log \B}{n \ep_L}, 1/2\right)$-accuracy. 
  
  For $n \geq \frac{\log \B}{c \log \log \B}$, $\delta < c/n, \ep \leq \log n$, there is no protocol for $n$-user frequency estimation on $[\B]$ that satisfies $(\ep, \delta)$-differential privacy in the single-message shuffled model and $\left( \frac{c \log \B}{n \log \log \B}, 1/2\right)$-accuracy.
    \end{lemma}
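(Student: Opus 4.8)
The plan follows the reduction of \cite{bassily2015local,duchi_minimax}, instantiated with the tools collected in this section. By Lemma~\ref{lem:pratio} an $(\ep,\delta)$-DP single-message shuffled protocol yields an $(\ep+\ln n,\delta)$-\emph{locally} DP protocol of the same accuracy (the analyzer simply re-shuffles its input), so it suffices to prove the local-model statement and then apply it with $\ep_L:=\ep+\ln n$: in the small-sample regime $n$ is at most polylogarithmic in $\B$, hence $\ep_L=O(\log\log\B)$, the hypothesis $n\ge\tfrac{\log\B}{c\log\log\B}$ upgrades to $n\ge\tfrac{\log\B}{c_L\ep_L}$ once the shuffled constant is small relative to the local one (using $\ep_L\ge\ln n=\Omega(\log\log\B)$), and the local error bound $\tfrac{c_L\log\B}{n\ep_L}$ is $\Omega\!\left(\tfrac{\log\B}{n\log\log\B}\right)$, giving the claimed shuffled bound.

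For the local-model statement, assume toward a contradiction that $P=(R,A)$ is $(\ep_L,\delta_L)$-locally DP on $n$ users and $(\alpha,1/2)$-accurate with $\alpha=\tfrac{c\log\B}{n\ep_L}$, where $n\ge\tfrac{\log\B}{c\ep_L}$, $\delta_L<c/n$, and $\ep_L<\tfrac{\log\B}{20}$ (bounding $\ep_L$ away from $\log\B$, as Lemma~\ref{lem:real_const} requires). First I would use Lemma~\ref{lem:invn} to trade accuracy for users. When $\ep_L\le 6c\log\B$ we have $2\alpha n=\tfrac{2c\log\B}{\ep_L}\ge\tfrac13$, so Lemma~\ref{lem:invn} with target accuracy $\tfrac13$ produces an $(\ep_L,\delta_L)$-locally DP protocol $P'$ on $n'=\lfloor 6\alpha n\rfloor=\lfloor\tfrac{6c\log\B}{\ep_L}\rfloor$ users that is $(\tfrac13,\tfrac12)$-accurate; here $n'\le\tfrac{6c\log\B}{\ep_L}\le n$ because $6c^2\le1$, and $n'\le6c^2n$ combined with $\delta_L<c/n$ gives $\delta_L<\tfrac{1}{4n'}$ for $c$ small. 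Then Lemma~\ref{lem:real_const} (which is itself Fano's inequality applied to the mutual-information bound of Lemma~\ref{lem:mutinf} on the dataset $(v,\dots,v)$) forces $n'>\tfrac{\log\B}{20\ep_L}$, so $\tfrac{6c\log\B}{\ep_L}>\tfrac{\log\B}{20\ep_L}$, i.e.\ $c>\tfrac1{120}$, contradicting the choice $c\le\tfrac1{120}$. In the remaining range $6c\log\B<\ep_L<\tfrac{\log\B}{20}$ one has $2\alpha n<\tfrac13$, so instead I would use Lemma~\ref{lem:invn} to shrink to a single user with accuracy $2\alpha n<\tfrac12$; since $2\alpha n<\tfrac12$, on the one-point dataset $(v)$ the map $v\mapsto\argmax_j P'((v))_j$ recovers a uniform $v\in[\B]$ with probability $\ge\tfrac12$, so Fano and Lemma~\ref{lem:mutinf} give $\tfrac12\le 2\delta_L+\tfrac{\ep_L\log e+2}{\log\B}$, which fails because $n>\tfrac{20}{c}$ (from $\ep_L<\tfrac{\log\B}{20}$) forces $\delta_L<\tfrac{c^2}{20}$ and $\ep_L<\tfrac{\log\B}{20}$ bounds the second term by $\tfrac{\log e}{20}+o(1)<\tfrac12$.

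I expect the only real difficulty to be the simultaneous choice of the absolute constant $c$: it must be small enough that (i) Lemma~\ref{lem:invn} can inflate the error $\tfrac{c\log\B}{n\ep_L}$ back to a constant at the price of only a $\Theta(c\log\B/\ep_L)$ factor in the user count, (ii) that count stays strictly below the threshold $\tfrac{\log\B}{20\ep_L}$ of Lemma~\ref{lem:real_const}, and (iii) the bound $\delta_L<c/n$ survives the rescaling to $\tfrac{1}{4n'}$ (and to $\tfrac14$ in the one-user case). A secondary point to watch is that the whole argument needs $\ep_L$ bounded away from $\log\B$ --- automatic in the regime of interest, where $\ep_L=\ep+\ln n=O(\log\log\B)$ --- and that the $\log_2$-versus-$\ln$ discrepancy between $\ep_L=\ep+\ln n$ and the hypotheses phrased with $\log\B$ must be absorbed into constants.
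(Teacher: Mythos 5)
Your proposal is correct and follows essentially the same route as the paper: reduce the shuffled claim to the local one via Lemma~\ref{lem:pratio} with $\ep_L=\ep+\ln n$, and prove the local claim by using Lemma~\ref{lem:invn} to trade the small error $\frac{c\log\B}{n\ep_L}$ for constant error on $n'=\Theta(c\log\B/\ep_L)$ users and then invoking Lemma~\ref{lem:real_const} (i.e., Fano together with Lemma~\ref{lem:mutinf}); the paper does exactly this with $n'=\lfloor 8c\log\B/\ep_L\rfloor$, target accuracy $1/4$, and $c<1/160$. Your additional case split for $6c\log\B<\ep_L<\log\B/20$ (shrinking all the way to one user) is a refinement of a step the paper's one-line argument glosses over, and the caveats you flag ($\ep_L$ bounded away from $\log\B$, and $\ln n=O(\log\log\B)$ for the shuffled statement) are equally implicit in the paper's own proof, since Lemma~\ref{lem:real_const} itself assumes $\ep_L<\log(\B)/20$.
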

    \begin{proof}
      Suppose that the statement of the lemma did not hold for some protocol $P$. By Lemma~\ref{lem:invn} with $\alpha = \frac{c\log \B}{ n \ep_L}$ and $n' = \left\lfloor\frac{8c \log \B}{ \ep_L}\right\rfloor$ there is an $(\ep_L, \delta_L)$-locally \DP protocol $P'=(R', A')$ for $n'$-user frequency estimation that is $(1/4, 1/2)$-accurate. As long as $c < 1/160$ we have a contradiction by Lemma~\ref{lem:real_const}.
%

The second statement of the lemma follows by applying Lemma~\ref{lem:pratio} and taking $\ep_L = \ep + \ln n$.
      \end{proof}

\subsection{Intermediate-Sample and Large-Sample Regimes}
\label{sec:interm_lb}
In this section we prove Theorem~\ref{thm:freq_lb} in the intermediate and large-sample regimes (i.e., (\ref{eq:interm_lb}) and (\ref{eq:small_lb})), which is the most technical part of the proof of Theorem~\ref{thm:freq_lb}. As we did in the small-sample regime, we in fact prove a more general statement giving a lower bound on the accuracy of all {\it locally differentially private} protocols in the low and approximate-privacy setting: 
  \begin{lemma}[Proof of Theorem \ref{thm:local_lb} in the intermediate-sample regime; i.e., (\ref{eq:local_interm1}) \& (\ref{eq:local_interm2})]
    \label{lem:intermediate_lb}
    There is a sufficiently small positive constant $c$ such that the following holds. Suppose $\ln\B > 1/c^{56}$,
    \begin{equation}
      \label{eq:logb_sandwich}
\frac{1}{c} + \max \left \{ \frac{\ln n }{3},\frac{\ep_L + \ln(1+\ep_L)}{2}\right\} \leq \ln \B \leq \min \left\{ n^{2/3}, \frac{n}{\sqrt{\exp(\ep_L)(1+\ep_L)}}\right\},
\end{equation}
and
\begin{equation}
  \label{eq:deltal_ub}
  \delta_L \leq \min \left\{e^{-\ep_L}, \frac{1}{2\sqrt[4]{n^2 (\ln^2 \B) \exp(\ep_L) (1+\ep_L)}}, \frac{1}{2(\ln^{1/3} \B) n^{2/3}} \right\}.
  \end{equation}
  Then there is no protocol for $n$-user frequency estimation that is $(\ep_L, \delta_L)$-locally \DP and\\ $\left(\alpha, 1/4\right)$-accurate for
  $$
\alpha = c \cdot \min \left\{ \frac{1}{\sqrt[4]{n^2\exp(\ep_L)(1+\ep_L)}}, \frac{ \ln^{1/7} \B}{n^{2/3}} \right\}.
  $$
\end{lemma}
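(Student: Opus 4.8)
The plan is to follow the high-level strategy sketched in Section~\ref{sec:overview_lb}: reduce the accuracy lower bound to a mutual information upper bound of the form $I(V; R(X)) \leq \tilde O(\alpha^4 n e^{\ep_L})$ (plus lower-order $\delta_L$ terms), and then apply the chain rule and Fano's inequality (Lemma~\ref{lem:fano}). Concretely, fix an $(\ep_L, \delta_L)$-locally \DP randomizer $R$ that is $(\alpha, 1/4)$-accurate via some analyzer $A$. Let $V \sim \Unif([\B])$, let $X$ equal $V$ with probability $\alpha' := \Theta(\alpha)$ and be uniform on $[\B]$ otherwise, and let $X_1, \ldots, X_n$ be i.i.d.\ conditioned on $V$. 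If $I(V; R(X)) \leq \frac{\log \B}{4n}$, then by the chain rule $I(V; R(X_1), \ldots, R(X_n)) \leq \frac{\log \B}{4}$, so by Lemma~\ref{lem:fano} no analyzer recovers $V$ with probability exceeding $1/4 + o(1)$; but an $(\alpha, 1/4)$-accurate protocol with $\alpha \ll \alpha'$ would recover $V$ (its frequency $\approx \alpha'$ exceeds that of every other element), contradiction. So the whole game is proving the mutual information bound $I(V; R(X)) \leq \frac{\log \B}{4n}$ for $\alpha$ as in the statement, i.e.\ roughly $I(V;R(X)) \leq \tilde O(\alpha^4 n e^{\ep_L} \cdot \frac{\log\B}{\log\B})$; the two terms in the min for $\alpha$ correspond to the two branches $\ep_L \leq \frac23 \ln n$ versus $\ep_L \geq \frac23\ln n$ (equivalently to which of $\sqrt[4]{n^2 e^{\ep_L}(1+\ep_L)}$ or $n^{2/3}/\ln^{1/7}\B$ dominates).

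The mutual information bound is where the real work is, and I would carry it out in the two stages flagged in the overview. \textbf{Stage (i): a structural/anti-concentration lemma.} I would prove (this is Lemma~\ref{lem:lb_tvd}, assumed available) that accuracy forces $\Delta(\PP_v, \bar\PP)$ to be close to $1$ for most $v\in[\B]$, where $\bar\PP := \frac1\B\sum_u \PP_u = $ distribution of $R(X')$ with $X' \sim \Unif([\B])$. The idea: if $\Delta(\PP_v, \bar\PP)$ were bounded away from $1$, then in a dataset where a $\Theta(\alpha)$-fraction of users hold $v$ and the rest are uniform, the shuffled multiset of the $v$-users' messages is, by anti-concentration of the relevant (binomial, after reducing to a binary-range randomizer via the data-processing inequality for TVD — Lemma~\ref{lem:tvd_hist}) statistic, not reliably distinguishable from the shuffled multiset of an equal number of uniform-input users; this would let an adversary swap those users' inputs for uniform ones without the analyzer noticing, contradicting $\alpha$-accuracy since the true frequency of $v$ drops by $\Theta(\alpha)$. \textbf{Stage (ii): from the structural property plus $(\ep_L,\delta_L)$-DP to the MI bound.} Partition, for each $v$, the message space into $\MT_v$ (messages much more likely under $\PP_v$ than under $\bar\PP$), $\MY_v$ (messages less likely under $\PP_v$ than under $\bar\PP$), and a middle region. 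Write $I(V;R(X)) = \frac1\B\sum_v \KL(\PP_v \| \bar\PP)$ and bound the KL by splitting the integral over these three regions. The $\MY_v$ contribution is controlled because $\Delta(\PP_v,\bar\PP)\le \PP_v(\text{range}) - \bar\PP(\MT_v\cup\text{middle}) \le \bar\PP(\MY_v) + (\text{something})$, and the structural property forces $\bar\PP(\MY_v)$ to be small on average; the middle region contributes little by definition; the $\MT_v$ contribution is the delicate part — naively $\SQ$-style estimates ($\MS = \MT_v$ in the DP inequality $\pp_{v,\MS}\le e^{\ep_L}\pp_{X',\MS}+\delta_L$) are too weak (cf.\ the randomized-response counterexample where $\MT_v=\{v\}$ and $\pp_{v,\MT_v}\approx 1$), so I would instead do a dyadic decomposition of $\MT_v$ by likelihood ratio level $[2^k e^{\ep_L}, 2^{k+1}e^{\ep_L})$, apply $(\ep_L,\delta_L)$-DP at each level, and sum using a double-counting argument over $v$ that again invokes the accuracy/structural bound to cap how many $(v, \text{level})$ pairs can have large mass. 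This is Lemma~\ref{lem:intermed_mutinf}.

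\textbf{Main obstacle.} The hardest step is getting the $\MT_v$ contribution right with the correct power of $\alpha$ and of $e^{\ep_L}$ — this is precisely where prior work (Bassily–Smith, Duchi et al.) only achieved $e^{2\ep_L}\alpha^2$ rather than the $\tilde O(\alpha^4 n e^{\ep_L})$ we need, and the improvement hinges entirely on feeding accuracy (via Stage (i)) into the double-counting, which as the overview notes is counterintuitive. Getting the dyadic sum and the double-counting to close with the right constants, and threading the somewhat baroque constraints \eqref{eq:logb_sandwich}–\eqref{eq:deltal_ub} (which are exactly what is needed to make all the error terms — the $\delta_L\log\B$ term from Lemma~\ref{lem:mutinf}-type estimates, the anti-concentration slack, and the dyadic truncation error — subdominant to $\frac{\log\B}{4n}$), is the technical heart; once $I(V;R(X))\le \frac{\log\B}{4n}$ is in hand, the Fano argument and the case split between \eqref{eq:local_interm1} and \eqref{eq:local_interm2} are routine, as is applying Lemma~\ref{lem:invn} if needed to reduce general $n$ in the stated range to a convenient value.
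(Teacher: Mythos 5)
Your proposal follows essentially the same route as the paper: the planted-element mixture distribution, the chain rule plus Fano reduction, and a per-user mutual-information bound proved by feeding the accuracy-driven total-variation lemma (Lemma~\ref{lem:lb_tvd}, via the histogram/data-processing reduction of Lemma~\ref{lem:tvd_hist}) together with $(\ep_L,\delta_L)$-privacy into a dyadic decomposition of likelihood-ratio levels with a double-counting argument over inputs — this is exactly the paper's Lemma~\ref{lem:intermed_mutinf} and its use in the proof of Lemma~\ref{lem:intermediate_lb}. The one substantive detail to adjust is the mixture weight: it must be taken $\gamma = \Theta(\alpha\sqrt{\ln \B})$ rather than $\Theta(\alpha)$ so that the planted element's count dominates the maximum over all $\B$ near-uniform elements (a union bound costing $\sqrt{\ln\B}$) in every regime permitted by (\ref{eq:logb_sandwich}); with that choice the bound $I(V;R(X)) \lesssim \gamma^2\alpha^2 n e^{\ep_L}(1+\ep_L) + \gamma\alpha^2 n + \gamma^2$ delivers precisely your $\tilde O(\alpha^4 n e^{\ep_L})$ target and the rest of your outline matches the paper's argument.
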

\begin{remark} The term $\ln^{1/7}\B$ in the definition of $\alpha$ above can be replaced by $\ln^{\zeta}\B$ for any constant $\zeta < 1/6$. Moreover, the requirement that $\ln \B$ is greater than the (very large) constant $1/c^{56}$ can easily be reduced to $1/c^6$ (with no change in $c$) by replacing this term $\ln^{1/7}\B$ with $1$.
\end{remark}

  By Lemma~\ref{lem:pratio} the following is a corollary of Lemma~\ref{lem:intermediate_lb}, establishing Theorem \ref{thm:freq_lb} in the intermediate-sample regime:
  \begin{corollary}[Proof of Theorem \ref{thm:freq_lb} in intermediate-sample regime; i.e., (\ref{eq:interm_lb})]
    \label{cor:intermediate_lb}
    For a sufficiently small positive constant $c < 1$, if $\log \B \geq 1/c$,
    \begin{equation}
      \label{eq:withc}\frac 1c \cdot \log^2 \B \cdot \log \log \B \leq n \leq \frac{c\B^2}{\log \B},
      \end{equation}
    $\delta \leq c/n$ and $\ep \leq 1$, there is no protocol for $n$-user frequency estimation in the single-message shuffled model that is $(\ep, \delta)$-\DP and $\left( \frac{c}{n^{3/4} \sqrt[4]{\log n}}, 1/4 \right)$-accurate.
  \end{corollary}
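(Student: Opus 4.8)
The plan is to derive Corollary~\ref{cor:intermediate_lb} from Lemma~\ref{lem:intermediate_lb} via the shuffled-to-local reduction of Lemma~\ref{lem:pratio}. Suppose, for contradiction, that $P = (\Rprot,\Sprot,\Aprot)$ is an $(\ep,\delta)$-\DP single-message shuffled protocol for $n$-user frequency estimation on $[\B]$ that is $(\alpha,1/4)$-accurate, where $\ep \le 1$, $\delta \le c/n$, $\log\B \ge 1/c$ and $\frac1c\log^2\B\log\log\B \le n \le \frac{c\B^2}{\log\B}$. By Lemma~\ref{lem:pratio} the local randomizer $\Rprot$ is $(\ep+\ln n,\delta)$-locally \DP, and $(\Rprot,\Aprot)$ is an $(\alpha,1/4)$-accurate local-model protocol. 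So I would set $\ep_L := \ep + \ln n$ (hence $\ln n \le \ep_L \le 1 + \ln n$) and $\delta_L := \delta$, verify that these meet the hypotheses of Lemma~\ref{lem:intermediate_lb}, and then check that the error level ruled out there is $\Omega\!\left(\tfrac{1}{n^{3/4}\sqrt[4]{\log n}}\right)$; absorbing absolute constants into the corollary's $c$ then yields the contradiction.

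The bulk of the work is the parameter verification. The condition $\ln\B > 1/c^{56}$ is dispatched by the Remark after Lemma~\ref{lem:intermediate_lb}, which relaxes it to $\ln\B > 1/c^6$, so $\log\B \ge 1/c$ suffices after shrinking $c$. For the sandwich \eqref{eq:logb_sandwich}: $\ln\B \le n^{2/3}$ follows from $n \ge \frac1c\log^2\B\log\log\B \ge (\ln\B)^{3/2}$; and $\ln\B \le n/\sqrt{\exp(\ep_L)(1+\ep_L)}$ follows since $\exp(\ep_L)(1+\ep_L) \le en(2+\ln n)$ (using $\ep \le 1$) and $n \mapsto n/(2+\ln n)$ is increasing, so at $n = \frac1c\log^2\B\log\log\B$ — where $\ln n = O(\log\log\B)$ — the ratio $n/(e(2+\ln n))$ is already $\Omega(\tfrac1c\log^2\B)$. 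The delicate part is the left half $\frac1c + \max\{\frac{\ln n}{3},\frac{\ep_L+\ln(1+\ep_L)}{2}\} \le \ln\B$: here one uses $n \le \frac{c\B^2}{\log\B}$, which forces $\ln n \le 2\ln\B - \ln\log\B + \ln c$ and hence $\frac{\ep_L}{2} \le \ln\B - \frac12\ln\log\B + O(1)$, so that after carefully tracking the lower-order $\log\log\B$ terms (and taking $\ln\B$ large relative to $c$) the inequality holds — this is the step that forces the $c\B^2/\log\B$ ceiling in the corollary, and the alternative term $\frac{\ln n}{3}\le\frac23\ln\B$ is comparatively easy. Finally, each of the three bounds in \eqref{eq:deltal_ub} is at least a constant times $1/n$ once the sandwich holds — immediately for $e^{-\ep_L}=e^{-\ep}/n \ge 1/(en)$, and from the two upper-sandwich inequalities $\ln\B \le n/\sqrt{\exp(\ep_L)(1+\ep_L)}$ and $\ln\B \le n^{2/3}$ for the other two — so $\delta \le c/n$ suffices for $c$ small enough.

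It then remains to simplify the error level $\alpha = c\cdot\min\{(n^2\exp(\ep_L)(1+\ep_L))^{-1/4},\ \ln^{1/7}(\B)/n^{2/3}\}$ ruled out by Lemma~\ref{lem:intermediate_lb}. Since $\exp(\ep_L) = e^{\ep}n \in [n,en]$ and $1+\ep_L \in [\ln n, 2+\ln n]$, the first term equals $\Theta\!\left(\tfrac{1}{n^{3/4}\sqrt[4]{\log n}}\right)$ (as $2+\ln n = \Theta(\log n)$), and it is the smaller of the two since it is at most $n^{-3/4}\le n^{-2/3}\le \ln^{1/7}(\B)/n^{2/3}$. Hence Lemma~\ref{lem:intermediate_lb} rules out $\left(\Omega\!\left(\tfrac{1}{n^{3/4}\sqrt[4]{\log n}}\right),1/4\right)$-accuracy, and replacing $c$ by a small enough absolute constant gives the corollary. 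The main obstacle is the second paragraph — fitting all of \eqref{eq:logb_sandwich} and \eqref{eq:deltal_ub} under one choice of $c$ — and within it the left half of the sandwich, which must be squeezed against the $n \le c\B^2/\log\B$ boundary; everything else is routine bookkeeping.
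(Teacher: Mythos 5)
Your proposal is correct and follows essentially the same route as the paper: reduce to the local model via Lemma~\ref{lem:pratio}, apply Lemma~\ref{lem:intermediate_lb} with $\ep_L = \ep+\ln n \le 1+\ln n$ and $\delta_L=\delta$, verify the sandwich \eqref{eq:logb_sandwich} and the condition \eqref{eq:deltal_ub}, and observe that the first term of the min is $\Theta\bigl(n^{-3/4}\log^{-1/4}n\bigr)$ and is the smaller one. Your parameter checks are in fact more detailed than the paper's terse verification; the only minor imprecision is attributing the left half of the sandwich partly to ``$\ln\B$ large'' when, for the $\frac{\ep_L+\ln(1+\ep_L)}{2}$ branch, the $\ln\B$ terms cancel and the slack comes entirely from shrinking the corollary's constant $c$ in the ceiling $n\le c\B^2/\log\B$ (so that $\tfrac12\ln(1/c)$ absorbs the lemma's $1/c$ additive term), which you do correctly invoke.
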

  \begin{remark}
    Notice that the bounds on $n$ in the inequality (\ref{eq:interm_lb}) do not involve $c$, unlike those in (\ref{eq:withc}). To ensure that (\ref{eq:interm_lb}) holds for $\frac{\log^2 \B}{c \log \log \B} \geq n \geq (\log^2 \B)(\log \log \B)$, note that (\ref{eq:const_lb}) holds for {\it all} $n \geq \frac{\log \B}{c \log \log \B}$ (Lemma \ref{lem:const_lb}) and increase the constant in the $\Omega(\cdot)$ in (\ref{eq:interm_lb}) by a factor of at most $1/c$. To ensure that (\ref{eq:interm_lb}) holds for $\frac{c\B^2}{\log \B} \leq n \leq \frac{\B^2}{\log \B}$, we use the reduction to locally \DP protocols given by Lemma~\ref{lem:pratio}, and then note that a locally \DP protocol which is $(\alpha, \beta)$-accurate for $n'$ users implies a locally \DP protocol with the same privacy parameters and which is $(\alpha n'/n, \beta)$-accurate for $n \leq n'$ users by simulating the presence of $n'-n$ fake users who hold a fixed and known item. (This latter reduction also requires increasing the constant in the $\Omega(\cdot)$ in (\ref{eq:interm_lb}) by a factor of at most $1/c$.)

    Finally we note that similar reductions hold for the proof of Theorem \ref{thm:local_lb} using Lemma \ref{lem:intermediate_lb} as well.
  \end{remark}
\begin{proof}[Proof of Corollary~\ref{cor:intermediate_lb}]
  By Lemma~\ref{lem:pratio}, it suffices to show that there is no protocol for $n$-user frequency estimation in the local model that is $(1 + \ln n, \delta)$-\DP and $\left( \frac{c}{n^{3/4} \sqrt[4]{\log n}}, 1/4 \right)$-accurate. We now apply Lemma~\ref{lem:intermediate_lb} with $\ep_L = 1+\ln n$ and $\delta_L = \delta$. The left-hand side of (\ref{eq:logb_sandwich}) holds (though perhaps with a different constant $c$ than the one used here) since $n \leq \frac{c\B^2}{\log \B}$ and $c < 1$, and the right-hand side of (\ref{eq:logb_sandwich}) holds since $cn \geq \log^2 \B \cdot \log \log \B$ (as long as $c$ is sufficiently small). Moreover, (\ref{eq:deltal_ub}) holds as long as 
  $$
\delta \leq \frac{1}{2\cdot (en)^{3/4} \cdot (1 + \ln(en))^{1/4} \cdot (\log^{1/2} \B) },
$$
which is guaranteed by $\delta \leq c/n$ and $cn \geq \log^2 \B \log \log \B$ for sufficiently small $c$. As $\frac{1}{\sqrt[4]{n^2 \exp(\ep_L) (1+\ep_L)}}< n^{-2/3}$ for our choice of $\ep_L$, Lemma~\ref{lem:intermediate_lb} now yields the desired result.
\end{proof}

The hard distribution used to prove Theorem~\ref{thm:freq_lb} in the small-sample regime set each user's data $X_i \in [\B]$ to be equal to some fixed $V \in [\B]$ (Lemma~\ref{lem:real_const}). At a high level, to prove Lemma~\ref{lem:intermediate_lb}, we must adapt this argument to allow us to gain a more fine-grained control over the accuracy of protocols. We do so using the same distribution as in previous works \cite{duchi_minimax,bassily2015local}: in particular, each user's $X_i$ is now only equal to $V$ with some small probability (which is roughly the target accuracy $\alpha$) and otherwise is uniformly random. Formally, we make the following definition: 
For each $v \in [\B]$ and $\gamma \in (0,1)$, define a distribution of $X \in [\B]$, denoted by $X \sim \DD_{v,\gamma}$, as
  $$
  X = \begin{cases}
    v \quad & \mbox{w.p.~$\gamma$} \\
    \Unif([\B]) \quad & \mbox{w.p.~$1-\gamma$},
  \end{cases}
  $$
  where $\Unif([\B])$ denotes the uniform distribution on $[\B]$. Let $\bar \DD_\gamma$ denote the joint distribution of $(V,X)$, where $V \sim \Unif([\B])$ and $X \sim \DD_{V,\gamma}$. (Note that the marginal distribution of $X$ under $\bar \DD_\gamma$ is the mixture distribution $\DD_{V,\gamma} = \frac{1}{\B} \sum_{v \in [\B]} \DD_{v,\gamma}$, which is just the uniform distribution on $[\B]$.) Analogously to Lemma~\ref{lem:mutinf}, we wish to derive an upper bound on $I(V; R(X))$ when $(V,X) \sim \bar \DD_\gamma$. It is known  that if $R$ is $(\ep_L, \delta_L)$-\DP and $\ep_L = O(1)$, then $I(V; R(X)) \leq O(\gamma^2 \ep_L^2 + \tilde O(\delta_L / (\ep_L \gamma)))$ \cite{bassily2015local}, and that for {\it any} $\ep_L \geq 0$, if $R$ is $(\ep_L, 0)$-\DP, then $I(V; R(X)) \leq O(\gamma^2 (e^{\ep_L} - 1)^2)$ \cite{duchi_minimax}.

  \begin{remark}
    \label{rmk:inf-counterexample}
    Suppose we attempt to prove Lemma~\ref{lem:intermediate_lb} following this strategy, at least when $\frac{1}{\sqrt[4]{n^2 \exp(\ep_L)(1+ \ep_L)}} < n^{-2/3}$, which is the regime we encounter for single-message shuffled-model protocols. To do so, it is natural to try to improve the upper bound of Duchi et al.~\cite{duchi_minimax} of $I(V; R(X)) \leq O(\gamma^2 \exp(2\ep_L))$ to $I(V; R(X)) \leq \tilde O(\gamma^2 \exp(\ep_L/2))$, which turns out to be sufficient to establish Lemma \ref{lem:intermediate_lb}. However, this is actually {\it false}, as can be seen by the local randomizer $R_{\RR} : [\B] \ra [\B]$ of $\B$-randomized response \cite{warner1965randomized} (see also Appendix \ref{apx:freq_ub}). In particular, suppose we take $\ep_L = (\ln n)  + O(1)$, $n > 10\B$, and $\gamma \ll \exp(-\ep_L/2) = \Theta(\sqrt{1/n})$; it is in fact necessary to treat these settings of the parameters to prove (\ref{eq:interm_lb}). For these parameters it is easy to check that $I(V; R_{\RR}(X)) = \Theta(\gamma \log \B) \gg \gamma^2 \cdot \exp(\ep_L/2)$. Thus it may seem that one cannot derive tight bounds by upper bounding $I(V;R(X))$ when $(V,X) \sim \bar \DD_\gamma$.
    \end{remark}

  It is, however, possible to salvage the technique outlined in Remark \ref{rmk:inf-counterexample}: the crucial observation is that the best-possible additive error of any single-message shuffled-model protocol where each user uses $R_{\RR}$ is $\tilde \Theta(\sqrt{\B}/n)$. When $\ep = O(1) + \ln n$ (as we will have when applying Lemma \ref{lem:pratio}), it is the case that $\sqrt{\B}/n > \frac{1}{\sqrt[4]{n^2 \exp(\ep_L) (1+\ep_L)}}$ when $n < \tilde O(\B^2)$. Therefore, there is still hope to prove a lower bound of $\tilde \Omega \left(\frac{1}{\sqrt[4]{n^2 \exp(\ep_L) (1+\ep_L)}}\right)$ on the additive error when $\ep = O(1) + \ln n$ and $n \leq \tilde O(\B^2)$ if we {\it additionally assume} that the additive error of any local-model protocol using $R$ is bounded above by $\frac{1}{\sqrt[4]{n^2 \exp(\ep_L) (1+\ep_L)}}$. This is indeed what we manage to do in Lemma \ref{lem:intermed_mutinf} below:
\begin{lemma}[Mutual information upper bound for intermediate-sample regime]
  \label{lem:intermed_mutinf}
  There is a sufficiently large positive constant $C$ such that the following holds. Suppose $n,\alpha, \beta, \gamma, \delta, \ep \geq 0$, $(V,X) \sim \bar \DD_\gamma$ and $\Rprot : [\B] \ra \MZ$ is an $(\alpha, 1/4)$-accurate local randomizer with $C \max\{1/n, 1/\sqrt{n\B} \} \leq \alpha \leq \gamma$, and $C\alpha^2 n \leq 1$ which is $(\ep_L, \delta_L)$-differentially private for $n$-user frequency estimation in the local model with $\delta_L \leq \min \left\{ \frac{\gamma}{\log \B}, e^{-\ep_L}\right\}$. Then
  \begin{equation}
    \label{eq:i3terms}
I(V; \Rprot(X)) \leq C \cdot \left( \gamma^2\alpha^2 n e^{\ep_L} \cdot (1 + \ep_L)+ \gamma\alpha^2 n + \gamma^2\right).
\end{equation}
\end{lemma}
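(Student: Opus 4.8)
The plan is to implement the two-ingredient strategy outlined in Section~\ref{sec:interm_lb}: a \emph{privacy} bound on the likelihood ratios of $\Rprot$, combined with an \emph{accuracy} (near-singularity) bound from Lemma~\ref{lem:lb_tvd}. The starting point is the identity $I(V;\Rprot(X)) = \frac{1}{\B}\sum_{v\in[\B]}\KL(\PP_v'\,\|\,\bar\PP)$, which holds because the marginal of $X$ under $\bar\DD_\gamma$ is uniform; here $\PP_v' = \gamma\PP_v + (1-\gamma)\bar\PP$ is the law of $\Rprot(X)$ given $V=v$, $\bar\PP = \frac1\B\sum_{u\in[\B]}\PP_u$, and I write $\bar\pp_z := \frac1\B\sum_{u\in[\B]}\pp_{u,z}$, $r_v(z) := \pp_{v,z}/\bar\pp_z$, and $\phi_v(z) := r_v(z)-1$, so that $\pp'_{v,z}/\bar\pp_z = 1+\gamma\phi_v(z)$. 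Since $\sum_z\bar\pp_z\phi_v(z) = \sum_z(\pp_{v,z}-\bar\pp_z) = 0$, the KL term telescopes to $\KL(\PP_v'\,\|\,\bar\PP) = \sum_z\bar\pp_z\,g(\gamma\phi_v(z))$, where $g(t):=(1+t)\ln(1+t)-t\ge 0$ (finite since $\gamma\phi_v(z)>-1$). So it suffices to bound $\sum_v\sum_z\bar\pp_z\,g(\gamma\phi_v(z))$ by $O(\B)$ times the right-hand side of~(\ref{eq:i3terms}).

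Next I would partition $\MZ$, for each fixed $v$, according to the size of $r_v(z)$: a \emph{light} part where $r_v(z)$ is below a constant, and, among the remaining ``favoring'' messages, a dyadic stratification $\{z:r_v(z)\in[2^k,2^{k+1})\}$, divided at a threshold $\tau\asymp 1/\gamma$ into a \emph{moderate} part ($r_v(z)\le\tau$, where $\gamma\phi_v(z)\le 1$) and a \emph{heavy} part $\MT_v$ ($r_v(z)>\tau$). On the light part $|\phi_v(z)|\le O(1)$, so the elementary bound $g(t)\le t^2$ (valid for $-1<t\le 1$) gives $\bar\pp_z\,g(\gamma\phi_v(z))\le\gamma^2\bar\pp_z\phi_v(z)^2\le\gamma^2|\pp_{v,z}-\bar\pp_z|$, and summing over $z$ and $v$ yields at most $\gamma^2\sum_v 2\Delta(\PP_v,\bar\PP)\le 2\gamma^2\B$ --- this produces the $\gamma^2$ term of~(\ref{eq:i3terms}). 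On a dyadic shell in the moderate or heavy part I would use $g(\gamma\phi_v(z))\le(\gamma\phi_v(z))^2$ when $\gamma\phi_v(z)\le 1$ and $g(\gamma\phi_v(z))=O\!\big(\gamma\phi_v(z)(1+\ln(\gamma\phi_v(z)))\big)$ otherwise, together with the identity $\bar\pp_z\,\gamma\phi_v(z)=\gamma(\pp_{v,z}-\bar\pp_z)$, to reduce the shell's contribution to a constant or logarithmic factor times $\gamma\sum_{z\in\mathrm{shell}}(\pp_{v,z}-\bar\pp_z)$ (or a suitable $r_v$-weighted variant), which in turn is controlled by the masses $\p[\Rprot(v)\in\mathrm{shell}]$ and $\p[\Rprot(X)\in\mathrm{shell}]$.

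The crux is bounding these per-shell masses, summed over $v$, and here \emph{both} accuracy and privacy are essential. Privacy alone is provably inadequate: for $\Rprot = R_{\RR}$, the $\B$-ary randomized response, one has $\p[R_{\RR}(v)\in\MT_v]\approx 1$, so the DP inequality gives nothing --- but $R_{\RR}$ is excluded here precisely because, once $\alpha\lesssim\sqrt{\B}/n$, it fails the accuracy hypothesis $\alpha\ge C/\sqrt{n\B}$. The accuracy input is the structural Lemma~\ref{lem:lb_tvd}: an $(\alpha,1/4)$-accurate randomizer has $\Delta(\PP_v,\bar\PP)$ close to $1$ for every $v$, i.e.\ $\PP_v$ is nearly mutually singular with $\bar\PP$, which quantitatively bounds (for essentially every $v$) the $\bar\PP$-mass $\p[\Rprot(X)\in\{z:r_v(z)>1\}]$ and hence each $\p[\Rprot(X)\in\mathrm{shell}\ k]$, $k\ge 1$. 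For the heavy shells I would additionally invoke that $(\ep_L,\delta_L)$-DP forces $r_v(z)\lesssim e^{\ep_L}$ on every message with $\bar\pp_z\gtrsim\delta_L$ (so the dyadic index $k$ runs only up to $\asymp\ep_L$ --- the source of the $(1+\ep_L)$ factor --- and the messages with $\bar\pp_z\lesssim\delta_L$ get swept into a $\delta_L$-error using $\delta_L\le e^{-\ep_L}$), the DP inequality $\p[\Rprot(v)\in\mathrm{shell}\ k]\le e^{\ep_L}\p[\Rprot(X)\in\mathrm{shell}\ k]+\delta_L$, and the double-counting fact that each $z$ lies in $\{z':r_{v'}(z)\ge 2^k\}$ for at most $\B/2^k$ indices $v'$. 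Combining these three bounds at the right granularity on each shell should make the moderate shells contribute the $\gamma\alpha^2 n$ term and the heavy shells the $\gamma^2\alpha^2 n\,e^{\ep_L}(1+\ep_L)$ term, with all $\delta_L$-error terms rendered negligible by $\delta_L\le\min\{\gamma/\log\B,e^{-\ep_L}\}$. Summing the three parts and dividing by $\B$ then gives~(\ref{eq:i3terms}).

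The main obstacle is exactly this last combination. A $\chi^2$/Parseval argument using privacy only yields $\gamma^2 e^{2\ep_L}$ (the Duchi et al.\ bound), and privacy together with double-counting alone yields roughly $\gamma^2 e^{\ep_L}$; extracting the extra factor $\alpha^2 n(1+\ep_L)$ forces one to interleave privacy, the near-singularity property, and the double-counting bound shell by shell, tracking the $\bar\pp_z$-weighted mass (rather than the $\pp_{v,z}$-mass) across all $v$ simultaneously. This is precisely the step where accuracy --- counterintuitively --- is used to \emph{upper} bound a mutual information, and where the quantitative bookkeeping becomes delicate.
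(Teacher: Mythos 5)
Your overall architecture matches the paper's: write $I(V;\Rprot(X))$ as an average of KL divergences to the mixture, stratify messages by the likelihood ratio $r_v(z)=\pp_{v,z}/q_z$, use Lemma~\ref{lem:lb_tvd} (accuracy) to control overlap with the uniform-input distribution, use DP to cap ratios at $O(e^{\ep_L})$ with the excess swept into $\delta_L$, and sum over $O(\ep_L)$ dyadic shells. Your centered formulation via $g(t)=(1+t)\ln(1+t)-t$ is a clean way to dispose of the light/$h<1$ part (the paper instead handles it through Claim~\ref{clm:y_separate} and an explicit $\gamma$ cancellation), and your treatment of the $r_v>2e^{\ep_L}$ region and of the conditions on $\delta_L$ is consistent with the paper.

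However, there is a genuine gap at exactly the step you flag as the crux, and the ingredients you list cannot close it as stated. What the argument needs is the tail bound, uniform in $\la\ge 1$,
\begin{equation*}
\frac{1}{\B}\sum_{u\in[\B]} \pp_{u,\MT_{u,\la}} \;\leq\; O\!\left(\frac{\alpha^2 n\, e^{\ep_L}}{\la}\right) + 2\delta_L ,
\end{equation*}
i.e., the \emph{product} of the accuracy budget $\alpha^2 n$ and the double-counting gain $1/\la$. Your proposal combines three per-object facts: (i) accuracy gives $\QQ(\MT_{v,\la})\le \sum_z\min\{\pp_{v,z},q_z\}\le C\alpha^2 n$ for each $v$; (ii) DP of $v$ against $\QQ$ gives $\pp_{v,\MT_{v,\la}}\le e^{\ep_L}\QQ(\MT_{v,\la})+\delta_L$; (iii) each $z$ is $\la$-heavy for at most $\B/\la$ inputs. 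But the most these yield is $\frac1\B\sum_u \pp_{u,\MT_{u,\la}}\lesssim e^{\ep_L}\min\{\alpha^2 n,\,1/\la\}+\delta_L$ (the min, not the product): (i)+(ii) lose the $1/\la$, while (ii)+(iii) lose the $\alpha^2 n$, and the constraints are simultaneously saturated by, e.g., $\B/\la$ inputs each carrying $\QQ$-mass $C\alpha^2 n$ on their heavy sets. Summing $\la\cdot e^{\ep_L}\min\{\alpha^2 n,1/\la\}$ over dyadic $\la\le 2e^{\ep_L}$ gives roughly $e^{\ep_L}\min\{e^{\ep_L}\alpha^2 n,\,1+\ep_L\}$, which in the target regime ($e^{\ep_L}=\Theta(n)$, $\alpha,\gamma=\tilde\Theta(n^{-3/4})$, so $\alpha^2 n e^{\ep_L}=\tilde\Theta(\sqrt n)$) is larger than the required $\alpha^2 n e^{\ep_L}(1+\ep_L)$ by a polynomial factor; the downstream Fano argument in Lemma~\ref{lem:intermediate_lb} then fails. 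The missing idea in the paper's proof is a \emph{cross-user} argument: fix a single reference input $v$, use the pairwise DP guarantee $\pp_{u,\MS}\le e^{\ep_L}\pp_{v,\MS}+\delta_L$ for every other $u$ (together with DP against $\QQ$) to lower-bound $\sum_z\min\{\pp_{v,z},q_z\}$ on pieces of $\MT_{u,\la}$ by $e^{-\ep_L}(\pp_{u,\MT_{u,\la}}-2\delta_L)$, and use the double-counting bound to split $\bigcup_u\MT_{u,\la}$ into at most $\B/\la$ families of sets that are pairwise disjoint across $u$ (the sets $\tilde\MT^{(k)}_{u,\la}$ in the paper). Averaging over the $\B/\la$ families lets the single budget $\sum_z\min\{\pp_{v,z},q_z\}\le C\alpha^2 n$ from Lemma~\ref{lem:lb_tvd} control the average over all $u$ of $\pp_{u,\MT_{u,\la}}$ with the extra factor $\la/\B$, which is what produces $\alpha^2 n e^{\ep_L}/\la$. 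Your sketch never compares $\Rprot(u)$ against $\Rprot(v)$ for $u\neq v$, and without that (or an equivalent device) the claimed per-shell bounds do not follow.
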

Typically the term $\gamma^2 \alpha^2 n e^{\ep_L} (1+\ep_L)$ is the dominating one on the right-hand side of (\ref{eq:i3terms}). In particular, in the application of Lemma \ref{lem:intermed_mutinf} to establish (\ref{eq:interm_lb}), we will have $\gamma = \tilde \Theta(n^{-3/4}), \alpha = \tilde \Theta(n^{-3/4})$ and $\ep_L = \ln(n) + O(1)$, so that $\gamma^2 \alpha^2 n e^{\ep_L} (1+\ep_L) = \tilde \Theta(1/n)$, whereas $\gamma \alpha^2 n = \tilde \Theta(n^{-5/4})$ and $\gamma^2 = \tilde \Theta(n^{-3/2})$.
    \begin{remark}
The statement of Lemma~\ref{lem:intermed_mutinf} still holds if $R$ is only assumed to be $(\alpha, \beta)$-accurate for any constant $\beta < 1/2$. 
    \end{remark}
    We postpone the proof of Lemma~\ref{lem:intermed_mutinf} for now and assuming it, prove Lemma~\ref{lem:intermediate_lb}.
\begin{proof}[Proof of Lemma~\ref{lem:intermediate_lb}]
  Let $a =200$ and $c < 1$ be a sufficiently small positive constant, to be specified later. Let $\alpha = \min \left\{\frac{c}{\sqrt[4]{n^2\exp(\ep_L)(1+\ep_L)}}, \frac{c \ln^{1/7}\B}{n^{2/3}} \right\}$ be the desired error lower bound. Set $\gamma := \min \left\{\alpha \cdot a \sqrt{\ln \B}/c, 1/3 \right\}$. We make the following observations about $\gamma$:
  \begin{enumerate}
  \item $\gamma < 1/2$ is clear from definition of $\gamma$. 
  \item $\gamma^2 \cdot n\B \geq a^2\ln \B$. This is clear if $\gamma \geq 1/3$ by choosing $c$ small enough (recall $\ln \B > 1/c^3$). Otherwise, note that $\gamma^2 \cdot n\B \geq \min \left\{ \frac{a^2 \B\ln \B}{\sqrt{\exp(\ep_L) (1+\ep_L)}}, \frac{a^2 \B \ln \B}{n^{1/3}}\right\} \geq a^2 \ln \B$ since $\max \left\{e^{\ep_L} (1+\ep_L), n^{2/3} \right\} \leq \B^2$.
  \item $\gamma n \geq a \ln \B$. Again this is clear if $\gamma \geq 1/3$. Otherwise, $\gamma n \geq \min \left\{\frac{a\sqrt{n\ln \B}}{\sqrt[4]{\exp(\ep_L)(1+\ep_L)}}, a \sqrt{\ln \B} n^{1/3} \right\} \geq a \ln \B$ since $\min \left\{ n^{2/3}, \frac{n}{\sqrt{\exp(\ep_L)(1+\ep_L)}}\right\} \geq \ln \B $.
  \end{enumerate}
  
  Suppose that $P = (\Rprot, \Aprot)$ is a single-message shuffled model protocol which is $(\alpha, 1/4)$-accurate and $(\ep_L, \delta_L)$-\DP where $\ep_L, \delta_L$ satisfy (\ref{eq:logb_sandwich}) and (\ref{eq:deltal_ub}). 
  Now suppose $V \sim [\B]$ uniformly and $X_1, \ldots, X_n \sim \DD_{V,\gamma}$ are independent (conditioned on $V$). 
  
  Fix an arbitrary $v \in [\B]$, and let us momentarily condition on the event that $V = v$. Consider the conditional distribution of $X_1, \ldots, X_n \sim \DD_{v,\gamma}$. For any $u \neq v$, we have, by the Chernoff bound, in the case that $\gamma/3 \leq 1/\B$,
  \begin{equation}
    \label{eq:union1}
\P\left[ \frac 1n \sum_{i=1}^n (e_{X_i})_u \geq 1/\B + \gamma/3 \right] \leq \exp\left( - \frac{(\gamma \B)^2 \cdot n/\B}{27} \right) = \exp(-\gamma^2 \B n/27) \leq \exp(-a^2 \ln \B/27).
\end{equation}
In the case that $\gamma/3 > 1/\B$, again by the Chernoff bound, we have
\begin{equation}
  \label{eq:union2}
\P\left[ \frac 1n \sum_{i=1}^n (e_{X_i})_u \geq 1/\B + \gamma/3 \right] \leq \exp \left( -\frac{(\gamma \B) \cdot n/\B}{9} \right) = \exp(-\gamma n /9) = \exp(-a \sqrt{\ln \B} n^{1/4}/9) \leq \exp(-a \ln \B/9).
\end{equation}
Next, note that by definition of the distribution $\DD_{V,\gamma}$, we have that for each $1 \leq i \leq n$, $\E[(e_{X_i})_v] = \frac{1-\gamma}{\B} + \gamma$. It then follows by the Chernoff bound that 
\begin{align}
  \P\left[ \frac 1n \sum_{i=1}^n (e_{X_i})_v \leq (1-\gamma)/\B + 2\gamma/3 \right]
  & \leq \exp(-a^2 (1-\gamma) \ln \B/72) + \exp(-\gamma n / 72) \nonumber\\
  \label{eq:union3}
  & \leq \exp(-a^2 \ln \B / 144) + \exp(-a\ln \B / 72).
\end{align}
Since $1/\B + \gamma/3 < (1-\gamma)/\B + 2\gamma/3$ and $P = (R,A)$ is $(\gamma/3,1/3)$-accurate (as $\gamma/3 \geq \alpha$), it follows by a union bound over all $u \in [\B]$ in (\ref{eq:union1}), (\ref{eq:union2}) and (\ref{eq:union3}) that with probability at least
\begin{equation}
  \label{eq:bmexpa}
1 - 1/3 - \exp(-a/9) - 2 \exp(-a\ln \B/72),
\end{equation}
we have that
$$
\arg \max_{u \in [\B]} P((X_1, \ldots, X_n)) = \arg \max_{u \in [\B]} \Aprot(\Rprot(X_1), \ldots, \Rprot(X_n))_u = v.
$$
Moreover, by our choice of $a = 200$, we ensure that the probability in (\ref{eq:bmexpa}) is strictly greater than $1/4$. For such $a$, using the fact that $v \in [\B]$ is arbitrary, we have shown that
\begin{equation}
  \label{eq:plt14}
\P\left[ \arg \max_{u \in [\B]} \{ \Aprot(\Rprot(X_1), \ldots, \Rprot(X_n))_u\} = V \right] > 1/4.
\end{equation}

Now we will apply Lemma~\ref{lem:intermed_mutinf} to derive an upper bound on the probability in the above equation. First we check that the conditions of Lemma~\ref{lem:intermed_mutinf} are met. By (\ref{eq:logb_sandwich}) and $\ln \B \geq 1/c^3$ we have that
\begin{equation}
  \label{eq:alphan}
(\alpha n)^2 \geq \min\left\{ c^2 n^{2/3}, \frac{c^2 n}{\sqrt{\exp(\ep_L) (1+\ep_L)}}\right\} \geq c^2 \ln \B \geq 1/c,
  \end{equation}
  so by choosing $c$ small enough, we can guarantee that $\alpha \geq C/n$, where $C$ is the constant of Lemma~\ref{lem:intermed_mutinf}. Similarly, by (\ref{eq:logb_sandwich}), we have that
  \begin{equation}
    \label{eq:alphabn}
    \alpha^2 n\B \geq \min \left\{ c^2 \B/n^{1/3}, \frac{c^2\B}{\sqrt{\exp(\ep_L)(1+\ep_L)}}\right\} \geq c^2 \exp(1/c),
  \end{equation}
  and again by choosing $c$ small enough, we can guarantee that $\alpha \geq C/{\sqrt{n\B}}$, where $C$ is the constant of Lemma~\ref{lem:intermed_mutinf}.

  The choice of $\gamma$ ensures that $\gamma \geq \alpha$, and $C\alpha^2 n =\min \left\{\frac{Cc^2}{\sqrt{\exp(\ep_L)(1+\ep_L)}}, \frac{Cc^2\ln^{2/7} \B}{n^{1/3}}\right\} \leq Cc^2$, which can be made less than 1 by choosing $c$ sufficiently small. Finally, $\delta_L \leq \min \left\{ e^{-\ep_L}, \frac{\gamma}{2\ln \B} \right\} \leq \min \left\{ e^{-\ep_L}, \frac{\gamma}{\log \B} \right\}$ by (\ref{eq:deltal_ub}). Therefore, by Fano's inequality and Lemma ~\ref{lem:intermed_mutinf}, for any function $f : [0,1]^\B \ra [\B]$,
\begin{align}
  \P[f(\Aprot(\Rprot(X_1), \ldots, \Rprot(X_n))) = V] & \leq \frac{I(V; (\Rprot(X_1), \ldots, \Rprot(X_n))) + 1}{\ln \B} \nonumber\\
                                                      & \leq \frac{ \sum_{i=1}^n I(V; \Rprot(X_i)) + 1}{\ln \B}\nonumber\\
                                                      & \leq \frac{1 + n \cdot \left(C \cdot \left(\gamma^2 \alpha^2 ne^{\ep_L} \cdot (1 + \ep_L) + \gamma \alpha^2 n + \gamma^2\right)\right)}{\ln \B} \nonumber\\
  \label{eq:1/c}
   & \leq \frac{1 + n C \cdot \left( \gamma^2 \alpha^2 ne^{\ep_L} \cdot (1 + \ep_L) + \frac{2a}{c^2} \cdot \gamma \alpha^2 n \right)}{\ln \B} \\
    \label{eq:use_mutinf}
  & \leq \frac{1 + Cc \ln \B}{\ln \B}. 
\end{align}
Inequality~(\ref{eq:1/c}) follows since
$$
\frac{\gamma^2}{\gamma \alpha^2 n} \leq \frac{\alpha a \sqrt{\ln \B}/c}{\alpha^2 n} \leq \frac{a}{c} \cdot \frac{\sqrt{\ln \B}}{\alpha n} \leq \frac{a}{c^2},
$$
where we have used inequality~(\ref{eq:alphan}). Inequality~(\ref{eq:use_mutinf}) follows since, by choice of $\alpha$,
\begin{align}
  \max \left\{\gamma^2 \alpha^2 n \exp(\ep_L) (1+\ep_L), \frac{a}{c^2} \cdot \gamma \alpha^2n \right\} & \leq \max \left\{ \frac{c^2a^2 (\ln \B) n \exp(\ep_L)(1+\ep_L) }{n^2 \exp(\ep_L)(1+\ep_L)}, \frac{a^2 (\ln^{13/14} \B) \cdot n}{cn^2} \right\} \nonumber\\
  \label{eq:lnBlarge}
  & \leq  \frac{c \ln \B}{n}, 
\end{align}
where (\ref{eq:lnBlarge}) follows from $\ln \B > 1/c^{56} > (a^2 / c^2)^{14}$ and a choice of $c < 1/a^2$. 

As long as $c < 1/(10C)$, the expression in (\ref{eq:use_mutinf}) is bounded above by 1/4, which contradicts (\ref{eq:plt14}).
\end{proof}

We now prove Lemma~\ref{lem:intermed_mutinf}. The proof uses the assumption that the local randomizer $R$ is $(\alpha, 1/4)$-accurate (Definition~\ref{def:accuracy}) to derive, for each $v \in [\B]$, a lower bound on the total variation distance between the distributions of $R(v)$ and $R(V)$, where $V \sim \Unif([\B])$. Intuitively, it makes sense that if, for some $v \in [\B]$, the distribution of $R(v)$ is close to $R(V)$, then no analyzer can reliably compute how many users hold the item $v$. However, showing rigorously that this holds for {\it any} analyzer $A$ is nontrivial, and we state this result as a separate lemma:
  \begin{lemma}[Lower bound on total variation distance between $R(v)$ \& $R(V)$]
    \label{lem:lb_tvd}
    Suppose $\Rprot : [\B] \ra \MZ$ is an $(\alpha/6, \beta)$-accurate local randomizer such that 
    \begin{equation}
      \label{eq:constrain_alpha}
    \max \left\{ \frac{3\B \log (4/(1-2\beta))}{n}, \sqrt{\frac{3\B \log(4/(1-2\beta))}{n}} \right\} \leq \frac{\alpha\B}{4}.
    \end{equation}
    Let the distribution of $R(v)$ be denoted $\PP_v$ (Definition~\ref{def:pp}) and the distribution of $R(V)$, where $V \sim \Unif([\B])$ be denoted $\QQ$. Then there is some $C  = \Theta\left( \frac{1}{(1-2\beta)^2} \right)$, such that for each $v \in [\B]$, $\Delta(\PP_{v}, \QQ) \geq 1 - C\alpha^2 n$. 
\end{lemma}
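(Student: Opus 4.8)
The plan is a contradiction argument driven by accuracy. Fix $v$ and suppose the overlap $\eta := \sum_{z}\min(\PP_v(z),\QQ(z)) = 1-\Delta(\PP_v,\QQ)$ satisfies $\eta > C\alpha^2 n$ for $C$ a sufficiently large multiple of $1/(1-2\beta)^2$. I will exhibit two datasets on which the true frequency of $v$ differs by more than $2(\alpha/6)$ while the shuffled outputs are too close in total variation distance for an $(\alpha/6,\beta)$-accurate analyzer to distinguish them. Take $j_1 := \lfloor n/2\rfloor$ and $j_2 := j_1 + s$ with $s := \Theta(\alpha n)$, and let $\mathcal D^{(j)}$ be the random dataset in which $j$ users hold $v$ and the remaining $n-j$ users hold i.i.d.\ uniform elements of $[\B]$, with the uniform users coupled identically across $j=j_1,j_2$ (so that $\mathcal D^{(j_2)}$ differs from $\mathcal D^{(j_1)}$ only in that $s$ specified users switched from a uniform input to $v$).

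First I would use accuracy to get a lower bound. The contribution of the uniform users to the frequency of $v$ is a scaled $\Bin(\cdot,1/\B)$, and the hypothesis $\max\{3\B\log(4/(1-2\beta))/n,\ \sqrt{3\B\log(4/(1-2\beta))/n}\}\le\alpha\B/4$ is exactly what lets a Chernoff bound keep this contribution within $O(\alpha)$ of its mean with probability at least $1-(1-2\beta)/4$; choosing the constant in $s$ appropriately, with probability at least $1-(1-2\beta)/2$ over the dataset the true frequencies of $v$ under $\mathcal D^{(j_1)}$ and $\mathcal D^{(j_2)}$ differ by more than $\alpha/3=2(\alpha/6)$. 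Combining with $(\alpha/6,\beta)$-accuracy and a union bound, the analyzer's output distributions, and hence by post-processing (Lemma~\ref{lem:post_process}) the shuffler's output distributions, satisfy
\[
\Delta\big(\mathrm{shuffled}(\mathcal D^{(j_1)}),\ \mathrm{shuffled}(\mathcal D^{(j_2)})\big)\ \ge\ (1-2\beta)/2 .
\]

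The crux is to upper bound this distance by $O\!\big(s/\sqrt{n\eta}\big)$, and this is where the reduction to the binary case via repeated data processing (Lemma~\ref{lem:tvd_hist}) enters. Write $\PP_v = \eta G + (1-\eta)P'$ and $\QQ=\eta G+(1-\eta)Q'$, where $\eta G=\min(\PP_v,\QQ)$ and $P',Q'$ are supported on $\{\PP_v>\QQ\}$ and $\{\QQ>\PP_v\}$. Then the shuffled histogram of $\mathcal D^{(j)}$ is distributed as $\hist_{A_j+B_j}(G)+\hist_{j-A_j}(P')+\hist_{n-j-B_j}(Q')$ with $A_j\sim\Bin(j,\eta)$, $B_j\sim\Bin(n-j,\eta)$ independent (here $\hist_t(\mu)$ is the histogram of $t$ i.i.d.\ draws from $\mu$). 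The three block sizes are all functions of the pair $(A_j+B_j,\ j-A_j)$ and $n$, and conditioned on those sizes the within-block samples are drawn from $G,P',Q'$ independently of $j$; hence the shuffled histogram is a $j$-independent randomized function of $(A_j+B_j,\ j-A_j)$, and by the data-processing inequality it suffices to bound $\Delta$ of the laws of this pair for $j=j_1$ versus $j=j_2$. Since $A_j+B_j\sim\Bin(n,\eta)$ does not depend on $j$, that distance equals $\E_{u\sim\Bin(n,\eta)}\big[\Delta\big(\mathrm{Law}(j_1-A_{j_1}\mid A_{j_1}+B_{j_1}=u),\ \mathrm{Law}(j_2-A_{j_2}\mid A_{j_2}+B_{j_2}=u)\big)\big]$, where conditionally on $A_j+B_j=u$ the variable $A_j$ is hypergeometric with mean $uj/n$ and variance $\Theta(u)$ (using $j\asymp n/2$), so $j-A_j$ has mean $j(1-u/n)$ and, by the classical anti-concentration estimate $\max_w\p[\,\cdot=w\,]=O(1/\sqrt u)$ for such laws, telescoping over the $s$ unit shifts of $j$ bounds the inner distance by $O(s/\sqrt u)$. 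Finally $\Bin(n,\eta)$ concentrates around $n\eta>C\alpha^2 n^2\gg 1$ (the lower bound $\eta\ge 1/\B$ coming from $\QQ\ge\PP_v/\B$ also makes $n\eta$ comfortably large), so the expectation is $O(s/\sqrt{n\eta})+o(1)$. Putting the two bounds together gives $n\eta=O\big(s^2/(1-2\beta)^2\big)=O\big(\alpha^2 n^2/(1-2\beta)^2\big)$, i.e.\ $\eta=O\big(\alpha^2 n/(1-2\beta)^2\big)$, contradicting $\eta>C\alpha^2 n$ once $C$ is large enough.

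I expect the main obstacle to be this last anti-concentration step: one must make precise the reduction ``shuffled histogram is a $j$-independent function of $(A_j+B_j,\ j-A_j)$'' (including the harmless fact that $\mathrm{supp}(G),\mathrm{supp}(P'),\mathrm{supp}(Q')$ need not be pairwise disjoint), prove the hypergeometric anti-concentration and the telescoping uniformly over the relevant range of $u$, and control the atypical event that $A_j+B_j$ is very small. Quantitatively the delicate point is getting exactly the exponent in $1-C\alpha^2 n$: this is why $v$ is planted at frequency $\Theta(1)$, so that the informative binomial $j-A_j$ has standard deviation $\Theta(\sqrt{n\eta})$, rather than at frequency $o(1)$, which would only yield the weaker bound $1-O(\alpha n)$.
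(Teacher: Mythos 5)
Your proposal is correct in outline, and its first half---planting two datasets that differ in $\Theta(\alpha n)$ users holding $v$, then using a Chernoff bound calibrated exactly by (\ref{eq:constrain_alpha}) together with $(\alpha/6,\beta)$-accuracy and post-processing to force the two shuffled-output distributions to be at total variation distance at least $(1-2\beta)/2$---is essentially identical to the paper's proof. Where you genuinely diverge is in the second half, which the paper isolates as Lemma~\ref{lem:tvd_hist}: there, the data-processing inequality is applied twice to collapse $\MZ$ to a three-letter alphabet in which $\PP_v$ and $\QQ$ share their common mass on one symbol and place their disjoint masses on two distinct symbols, so the shuffled histogram reduces to a pair of independent binomial counts, and the upper bound comes from an external estimate (Roos) on $\Delta\bigl(\Bin((1-\alpha)n/2,\rho),\Bin((1+\alpha)n/2,\rho)\bigr)$ with $\rho=1-\eta$ the non-overlap mass. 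You instead keep the overlap decomposition, condition on the total number $u$ of overlap draws, note that the histogram is a $j$-independent randomized function of $(u,\,j-A_j)$, and compare hypergeometric laws whose sample sizes differ by $s$ via telescoping unit shifts; this is quantitatively equivalent (both routes give an upper bound of order $\alpha\sqrt{n(1-\eta)/\eta}$) and trades the black-box binomial estimate for a hypergeometric argument you must prove yourself. One caveat you already flag deserves emphasis: ``variance $\Theta(u)$, hence max point mass $O(1/\sqrt u)$'' is false when $u/n$ is near $1$ (i.e.\ $\eta$ near $1$), where the standard deviation is $\Theta(\sqrt{u(n-u)/n})$; the bound $O(s/\sqrt{n\eta})$ nevertheless survives because a unit increase of the sample size moves the count only when the extra user is a non-overlap user, so each telescoping step costs $O\bigl(\tfrac{n-u}{n}\cdot\sqrt{\tfrac{n}{u(n-u)}}\bigr)=O\bigl(\sqrt{\tfrac{n-u}{nu}}\bigr)\le O(1/\sqrt u)$, and the small-$u$ tail is negligible since (\ref{eq:constrain_alpha}) plus the contradiction hypothesis give $n\eta\ge C\alpha^2n^2\gg 1$. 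With that refinement your argument closes and yields the same $C=\Theta\bigl((1-2\beta)^{-2}\bigr)$ as the paper.
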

A result similar to Lemma \ref{lem:lb_tvd} was established in \cite{chan_optimal_2012}; however, their result only establishes a lower bound on $\Delta(\PP_v, \PP_u)$ for $u \neq v$, which does not lead to tight bounds on $\Delta(\PP_v, \QQ)$, as we need. 
The proof of Lemma~\ref{lem:lb_tvd} is provided in Section~\ref{sec:lb_tvd}.

    \begin{proof}[Proof of Lemma~\ref{lem:intermed_mutinf}]
      Fix a local randomizer $R : [\B] \ra \MZ$ satisfying the requirements of the lemma statement. Recall (per Definition~\ref{def:pp}) that for $v \in [\B]$, we use $\PP_v$ to denote the distribution of $\Rprot(v)$, and that $\pp_{v,\cdot}$ denotes the density of $\PP_v$, so that for each $z \in \MZ$, $\pp_{v,z} = \p_{R}[\Rprot(v) = z]$. 
      For $v \in [\B]$, additionally let $\PP_{v,\gamma}$ denote the distribution of $\Rprot(X)$ when $X \sim \DD_{v,\gamma}$, and let $\QQ$ denote the distribution of $\Rprot(X)$ when $(X,V) \sim \bar \DD_\gamma$. Note that $\QQ$ is the distribution of $R(X)$ when $X \sim \Unif([\B])$, so indeed does not depend on $\gamma$. 
      First note (see (\ref{eq:inf_def})) that
      $$
I(V; \Rprot(X)) = \frac{1}{\B} \sum_{v \in [\B]} \KL(\PP_{v,\gamma} || \QQ).
$$
For each $\MS \subset \MZ$ and $z \in \MZ$, write $q_\MS := \P[Z \in \MS]$ and $q_z := \p[Z=z]$, where $Z \sim \QQ$. Notice that for each $z \in \MZ$, we have $\pp_{v,z} \leq \B \cdot q_z$ since $q_{v,z} = \frac{1}{\B} \sum_{v \in [\B]} \pp_{v,z}$. 

Next, for each $h \geq 0$ and $v \in [\B]$, set 
$$
\MU_{v,h} := \left\{ z \in \MZ : \frac{\pp_{v,z}}{q_z} = h \right\}.
$$
Let $\rho_v$ be the (Borel) probability measure on $\BR$ given by, for $\MS \subset \BR$,
$$
\rho_v(\MS) := \sum_{h \geq 0} \pp_{v,\MU_{v,h}} \delta_h(\MS),
$$
where the sum is well-defined since only finitely many $h$ are such that $\MU_{v,h}$ is nonempty. (Here $\delta_h$ is the measure such that $\delta_h(\MS) = 1$ if $h \in \MS$, and otherwise $\delta_h(\MS) = 0$.) Since $\sum_{z \in \MZ} \pp_{v,z} = \sum_{h \geq 0} \pp_{v,\MU_{v,h}} = 1$, $\rho_v$ is indeed a probability measure.

Next notice that
\begin{align}
  \KL(\PP_{v,\gamma} || \QQ) &= \sum_{z \in \MZ} \left( \gamma \pp_{v,z} + (1-\gamma) q_z\right) \cdot \ \log\left( 1 + \gamma \left( \frac{\pp_{v,z}}{q_z} - 1 \right) \right) \nonumber\\
                              &  \leq \sum_{z : \pp_{v,z} \geq q_z} \pp_{v,z}\left( \frac{1}{\pp_{v,z}/q_z} + \gamma \left( 1 - \frac{1}{\pp_{v,z}/q_z} \right) \right) \cdot \log\left(1 + \gamma  \cdot \frac{\pp_{v,z}}{q_z}\right) \nonumber\\
  &+\sum_{z : \pp_{v,z} < q_z} \left( \gamma \pp_{v,z} + (1-\gamma) q_z\right) \cdot \ \log \left( 1 + \gamma \left( \frac{\pp_{v,z}}{q_z} - 1 \right) \right)\nonumber\\ 
                              & = \int_{h =1}^{\B} \left( \frac{1}{h} + \gamma \left(1 - \frac{1}{h} \right)\right)  \log(1 + \gamma h) d\rho_v(h) \nonumber \\
  \label{eq:kl_ub_all}
  &+\sum_{z \in \cup_{h < 1} \MU_{v,h}} \left( \gamma \pp_{v,z} + (1-\gamma) q_z\right) \cdot \log\left( 1 + \gamma \left( \frac{\pp_{v,z}}{q_z} - 1 \right) \right).
\end{align}

We begin by working towards an upper bound on the first term in the above expression (\ref{eq:kl_ub_all}), corresponding to values $h \geq 1$. Our first goal is to show that for $h \gg 1$, $\pp_{v,\MU_{v,h}}$ is small for most $v$. To do so, define, for any $\lambda \geq 1$,
$$
\MT_{v,\lambda} := \left\{ z \in \MZ : \frac{\pp_{v,z}}{q_z} \geq  \lambda \right\} = \bigcup_{h \geq \lambda} \MU_{v,h}.
  $$
  We next make the following claim:
  \begin{claim}
    \label{clm:bcn}
    For any $v \in [\B]$, for each $z \in \MT_{v,\lambda}$, there are at most $\B/\la$ values of $v' \in [\B]$ such that $z \in \MT_{v',\la}$.
  \end{claim}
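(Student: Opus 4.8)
\textbf{Proof proposal for Claim~\ref{clm:bcn}.} The plan is a one-line averaging (double-counting) argument using the defining relation $q_z = \frac{1}{\B}\sum_{v' \in [\B]} \pp_{v',z}$, which holds because $\QQ$ is the distribution of $\Rprot(X)$ for $X \sim \Unif([\B])$.

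First I would dispose of the degenerate case: if $q_z = 0$, then since $q_z$ is the average of the nonnegative quantities $\pp_{v',z}$, we have $\pp_{v',z} = 0$ for every $v' \in [\B]$, so $z$ lies in no set $\MT_{v',\la}$ and there is nothing to prove. (In particular the quotient $\pp_{v,z}/q_z$ defining membership in $\MT_{v,\la}$ is only meaningful when $q_z > 0$.) So assume $q_z > 0$. Fix $z \in \MT_{v,\la}$ and let $m$ denote the number of indices $v' \in [\B]$ with $z \in \MT_{v',\la}$, i.e.\ with $\pp_{v',z} \geq \la \cdot q_z$. Then
$$
\B \cdot q_z \;=\; \sum_{v' \in [\B]} \pp_{v',z} \;\geq\; \sum_{v' \,:\, z \in \MT_{v',\la}} \pp_{v',z} \;\geq\; m \cdot \la \cdot q_z,
$$
and dividing by $\la q_z > 0$ yields $m \leq \B/\la$, as claimed.

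There is essentially no obstacle here; the only point requiring care is the bookkeeping around $z$ with $q_z = 0$, which is handled above, and the observation that $\pp_{v',z} \le \B q_z$ for all $v'$ (already noted in the surrounding text) is exactly the $\la = 1$ instance of this inequality, so the claim is a clean quantitative strengthening of that fact. In the subsequent use of the claim (to bound, for most $v$, the mass $\pp_{v,\MT_{v,\la}}$ for large $\la$ via a summation/double-counting over $v$), this bound of $\B/\la$ on the "in-degree" is precisely what drives the argument.
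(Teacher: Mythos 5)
Your proposal is correct and is essentially the paper's argument: the paper proves the claim by Markov's inequality applied to $\pp_{v',z}$ with $v'$ uniform over $[\B]$, which is exactly your averaging inequality $\B q_z = \sum_{v'} \pp_{v',z} \geq m\,\la\, q_z$. The extra handling of the degenerate case $q_z = 0$ is fine but not a substantive difference.
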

  Claim~\ref{clm:bcn} is a simple consequence of Markov's inequality on the distribution of the random variable $\pp_{v,z}$, where $v \sim [\B]$ uniformly. 
  \if 0
  \begin{proof} To see this claim, suppose for the purpose of contradiction that $u_1, \ldots, u_k \in [\B]$ are distinct such that there is some $z \in \MT_{u_s,c}$ for each $s \in [k]$, and that $k \geq \B/(c\mu)$. For each $s \in [k]$ we have by construction that
  $$
\pp_{u_s,z} > \frac{c\mu}{\B} \sum_{j=1}^\B \pp_{j,z}.
$$
Thus
$$
\frac{1}{k} \sum_{s=1}^k \pp_{u_s,z} > \frac{c\sqrt{n}}{\B} \sum_{j=1}^{\B} \pp_{j,z},
$$
which is a contradiction in light of the fact that $k \geq \B/(c\sqrt{n})$.
\end{proof}
\fi


Next, consider any $z \in \MZ$ such that there is some $u \in [\B]$ with $z \in \MT_{u,\la}$; let the set of such $z$ be denoted by $\MW_\la$, i.e., $\MW_\la = \bigcup_u \MT_{u,\la}$. By Claim~\ref{clm:bcn}, for each $z \in \MW_\la$, there are at most $\B/\la$ values $u \in [\B]$ such that $z \in \MT_{u,\la}$. Let the set of such values be denoted by $\MS_z \subset [\B]$, and construct an ordering of those $u \in \MS_z$ so that $\pp_{u,z}$ are in decreasing order with respect to this ordering. Now, for a fixed $\la$ and for each $u \in [\B]$, and $1 \leq k \leq \B/\lambda$, construct a subset $\tilde \MT_{u,\la}^{(k)} \subset \MT_{u,\la}$ such that each $z \in \MW_\la$ appears in at most one set $\tilde \MT_{u,\la}^{(k)}$ (over all $u \in [\B]$), and such a $u$ is the $k$th element of $\MS_z$ with respect to the ordering above (if it exists). 
It is an immediate consequence of this construction that for each fixed $k$, the sets $\tilde \MT_{u,\la}^{(k)}$, $u \in [\B]$, are pairwise disjoint. Moreover, for each fixed $u$, the sets $\tilde \MT_{u,\la}^{(k)}$, $1 \le k \leq \B/\la$ are pairwise disjoint, and their union is $\MT_{u,\la}$. It follows that 
$$
\sum_{k=1}^{\B/\la} \sum_{u \in [\B]} \pp_{u, \tilde \MT_{u,\la}^{(k)}} =  \sum_{z \in \MZ} \sum_{k=1}^{\B/\la}\sum_{u : z \in \tilde \MT_{u,\la}^{(k)}} \pp_{u,z} = \sum_{z \in \MZ} \sum_{u : z \in \MT_{u,\la}} \pp_{u,z} = \sum_{u \in [\B]} \pp_{u,\MT_{u,\la}}.
$$
From $(\ep_L, \delta_L)$-differential privacy of $R$ we have that \begin{equation}
  \label{eq:epldp}
  \max_{\MS \subset \MZ} \{\pp_{u,\MS} - e^{\ep_L} \cdot \pp_{v,\MS}\} \leq \delta_L.
\end{equation}
By Fact~\ref{fac:hockey} and the fact that the sets $\tilde \MT_{u,\la}^{(1)}, \ldots, \tilde \MT_{u,\la}^{(\B/\la)}$ are pairwise disjoint for any $u \in [\B]$, we have that for all $u,v \in [\B]$,
  \begin{equation}
    \label{eq:puv_hs}
\sum_{k=1}^{\B/\la} \sum_{z \in \tilde \MT_{u,\la}^{(k)}} \left[ \pp_{u,z} - e^{\ep_L} \cdot \pp_{v,z} \right]_+ \leq \delta_L.
  \end{equation}
Averaging (\ref{eq:epldp}) over $v \in [\B]$ gives that $\max_{\MS \subset \MZ} \{\pp_{u,\MS} - e^{\ep_L} q_\MS\} \leq \delta_L$. Fact~\ref{fac:hockey} then gives
  \begin{equation}
    \label{eq:pq_hs}
\sum_{k=1}^{\B/\la}  \sum_{z \in \tilde \MT_{u,\la}^{(k)}} \left[ \pp_{u,z} - e^{\ep_L} q_z \right]_+ \leq \delta_L.
  \end{equation}
  By (\ref{eq:puv_hs}) and (\ref{eq:pq_hs}), we have that, for all $u,v \in [\B]$,
  \begin{equation}
    \label{eq:fixuv}
e^{\ep_L} \cdot \sum_{k=1}^{\B/\la}\sum_{z \in \tilde \MT_{u,\la}^{(k)}} \min \{ \pp_{v,z}, q_z \} \geq \pp_{u, \MT_{u,\la}} - 2\delta_L.
\end{equation}

\noindent
For each $v \in [\B]$ and $1 \leq k \leq \B/\la$,
  $$
1- \Delta(\PP_v, \QQ) \geq \sum_{z \in \MZ} \min \{ \pp_{v,z}, q_z \} \geq \sum_{u \in [\B]} \sum_{z \in \tilde \MT_{u,\la}^{(k)}}\min\{ \pp_{v,z} ,q_z \}.
  $$

\noindent
Averaging over $k$ and using (\ref{eq:fixuv}), it follows that for any $v \in [\B]$,
\begin{align}
1 - \Delta(\PP_v, \QQ)  & \geq
                                       \frac{\la}{\B} \sum_{k=1}^{\B/\la} \sum_{u \in [\B]} \sum_{z \in \tilde \MT_{u,\la}^{(k)}} \min\{\pp_{v,z}, q_z \} \nonumber\\
                                     & \geq \frac{\la}{\B} \sum_{u \in [\B]} \frac{\pp_{u,\MT_{u,\la}} - 2\delta_L}{e^{\ep_L}}\nonumber\\
  & \geq \left( \frac{\la}{e^{\ep_L} \B} \sum_{u \in [\B]} \pp_{u,\MT_{u,\la}} \right) - \frac{2\delta_L\la}{e^{\ep_L}}.\nonumber
\end{align}
By Lemma~\ref{lem:lb_tvd} and the $(\alpha, 1/4)$-accuracy of $\Rprot$, together with the fact that $$\alpha/4 \geq \max \left\{\frac{3 \log (4/(1-2 \cdot 1/4))}{n}, \sqrt{\frac{3 \log(4/(1-2\cdot 1/4))}{n\B}} \right\}$$ as long as the constant $C$ is chosen large enough (recall the assumption $\alpha \geq C\max\{1/n, 1/\sqrt{n\B}\}$), we have that, perhaps by making $C$ even larger, $1 - \Delta(\PP_v, \QQ) \leq C \alpha^2 n$. In particular, it follows that
\begin{equation}
  \label{eq:psum_ub}
\frac{1}{\B} \sum_{u \in [\B]}\pp_{u,\MT_{u,\la}} \leq \frac{e^{\ep_L}}{\la} \cdot \left( C\alpha^2 n + \frac{2\delta_L \la}{e^{\ep_L}} \right) = \frac{C \alpha^2 n e^{\ep_L}}{\la} + 2\delta_L. 
\end{equation}
Using the inequality $\log(1+\gamma h) \leq \gamma h$, we can now upper bound the first term in (\ref{eq:kl_ub_all}), when averaged over $v \in [\B]$, as follows: 
\begin{align}
  & \frac{1}{\B} \sum_{v \in [\B]}\int_{h = 1}^{\B} \left( \frac{1}{h} + \gamma \left(1 - \frac{1}{h} \right)\right)  \log(1 + \gamma h) d\rho_v(h) \nonumber\\
  & \leq \gamma + \frac{1}{\B} \sum_{v \in [\B]} \int_{h=1}^{2\exp(\ep_L)} \gamma  \log(1 + \gamma h) d\rho_v(h)  + \frac{1}{\B} \sum_{v \in [\B]} \int_{h=2\exp(\ep_L)}^{\B} \gamma \log(1+\gamma h) d\rho_v(h)\nonumber\\
  \label{eq:exp_h}
  & \leq \gamma + \gamma^2 \int_{h=1}^{2\exp(\ep_L)} h \cdot \left(\frac{1}{\B} \sum_{v \in [\B]} d\rho_v(h)\right) + \gamma \log \B\int_{h=2\exp(\ep_L)}^{\B} \left( \frac{1}{\B} \sum_{v \in [\B]} d\rho_v(h) \right).
\end{align}
(In the integrals above, for an integral of the form $\int_{h=x}^y$ we integrate over the {\it closed} interval $[x,y]$, so that point masses at $x$ and $y$ are included.) Let $\rho$ be the Borel measure on $\BR$ defined by $\rho = \frac 1\B \sum_{v \in [\B]} \rho_v$.

Recall that $(\ep_L, \delta_L)$-differential privacy of $R$ gives that for any $\MS \subset \MZ$, $\pp_{v,\MS} \leq e^{\ep_L} q_\MS + \delta_L$. Thus, setting $\MS = \{ z \in \MZ : \pp_{v,z} \geq 2e^{\ep_L} q_z \}$ gives $\pp_{v,\MS} \leq e^{\ep_L} q_\MS + \delta_L \leq e^{\ep_L} \cdot \pp_{v,\MS}/(2e^{\ep_L}) + \delta_L$, so that $\pp_{v,\MS} \leq 2\delta_L$. It follows by averaging this inequality over all $v \in [\B]$ that 
\begin{equation}
  \label{eq:forgotten_delta}
\gamma \log \B \int_{h=2\exp(\ep_L)}^\B d\rho(h) \leq \gamma \log \B \cdot 2\delta_L.
\end{equation}

\noindent
Next, note that (\ref{eq:psum_ub}) gives us that for any $\la \geq 1$,
$$
\int_{h=\la}^{2\exp(\ep_L)} d\rho(h) \leq \int_{h \geq \la} d\rho(h) =  \frac{1}{\B} \sum_{u \in [\B]} \pp_{u,\MT_{u,\la}} \leq \frac{C\alpha^2 n \cdot e^{\ep_L}}{\la} + 2\delta_L.
$$
It follows that
\begin{align}
  & \int_{h = 1}^{2\exp(\ep_L)} h d\rho(h) \nonumber\\
  & \leq \sum_{h' \in \{ 1, 2  , \ldots, 2^{\lceil \log 2\exp(\ep_L) \rceil}\}} \int_{h'/2}^{h'} hd\rho(h) \nonumber\\
  & \leq \sum_{h' \in \{ 1, 2, \ldots, 2^{\lceil \log 2\exp(\ep_L) \rceil}\}} 2h' \cdot \left(\frac{2C\alpha^2 n e^{\ep_L}}{h'}+2\delta_L\right) \nonumber\\
  \label{eq:exp_h_ub}
  & \leq (4 + \ep_L \cdot \log(e)) \cdot \left(\alpha^2 n \cdot 4Ce^{\ep_L} \right) + 32 \delta_L e^{\ep_L}.
\end{align}

Next we upper bound the terms $h < 1$ in the sum of (\ref{eq:kl_ub_all}). Again using Lemma~\ref{lem:lb_tvd} and the $(\alpha, 1/4)$-accuracy of $\Rprot$, we see that there is a constant $C$ such that for each $v \in [\B]$, it holds that $1-\Delta(\PP_{v}, \QQ) \leq C\alpha^2n$. 
Hence we have
$$
\sum_{z : \pp_{v,z} < q_z} \pp_{v,z} \leq C\alpha^2 n
$$
and
$$
\sum_{z : \pp_{v,z} < q_z} q_z \geq 1 - C\alpha^2 n.
$$

We next need the following claim:
\begin{claim}
  \label{clm:y_separate}
  Let $\tau \in (0,1)$. Suppose $\MY$ is a finite set and for each $z \in \MY$, $p_z, q_z \in [0,1]$ are defined such that $\sum_{z \in \MY} p_z \leq \tau$ and $1-\tau \leq \sum_{z \in \MY} q_z \leq 1$. Suppose also that $p_z \leq q_z$ for all $z \in \MY$. 
  Then for any $\gamma \in (0,1/2)$,
  $$
\sum_{z \in \MY} (\gamma p_z + (1-\gamma) q_z) \cdot \log \left( 1 + \gamma \left( \frac{p_z}{q_z} - 1\right)\right) \leq -\gamma +  2\gamma \tau + \gamma^2 (1 + \tau).
  $$
\end{claim}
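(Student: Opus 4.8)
The plan is to reduce the claimed inequality to a one-variable estimate via a change of variables, and then to control the two resulting sums using the hypotheses $\sum_{z} p_z \le \tau$ and $\sum_z q_z \ge 1-\tau$.

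First I would discard every $z\in\MY$ with $q_z=0$: for such $z$ the hypothesis $p_z\le q_z$ forces $p_z=0$, so the corresponding summand vanishes under the convention $0\cdot\log(\cdot)=0$. For the remaining $z$, set $r_z:=p_z/q_z\in[0,1]$ and $u_z:=\gamma(1-r_z)$, so that $u_z\in[0,\gamma]\subseteq[0,1/2)$ since $\gamma\in(0,1/2)$. A direct computation gives $\gamma p_z+(1-\gamma)q_z=q_z(1-u_z)$ and $1+\gamma\big(\tfrac{p_z}{q_z}-1\big)=1-u_z$, so the $z$-th term of the sum equals $q_z(1-u_z)\log(1-u_z)$, with $1-u_z\in(1/2,1]$, making everything well defined and the term non-positive.

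Next I would establish the pointwise bound $(1-u)\log(1-u)\le -u+u^2$ for all $u\in[0,1)$: from $\log(1-u)=-\sum_{k\ge1}u^k/k\le -u-u^2/2$ (truncating a series of non-negative terms), multiply by the positive quantity $1-u$ and then use $u^3\le u^2$ to absorb the cubic term. Applying this with $u=u_z$ and summing with the non-negative weights $q_z$ yields
\begin{equation*}
\sum_z q_z(1-u_z)\log(1-u_z)\;\le\;-\sum_z q_z u_z\;+\;\sum_z q_z u_z^2.
\end{equation*}
For the first sum, $q_zu_z=\gamma(q_z-p_z)$, so $\sum_z q_zu_z=\gamma\big(\sum_z q_z-\sum_z p_z\big)\ge\gamma\big((1-\tau)-\tau\big)=\gamma(1-2\tau)$, whence $-\sum_z q_zu_z\le-\gamma+2\gamma\tau$. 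For the second sum, $0\le u_z\le\gamma$ gives $\sum_z q_zu_z^2\le\gamma\sum_z q_zu_z=\gamma^2\big(\sum_z q_z-\sum_z p_z\big)\le\gamma^2\sum_z q_z\le\gamma^2\le\gamma^2(1+\tau)$. Adding the two estimates proves the claim (in fact with the slightly stronger bound $-\gamma+2\gamma\tau+\gamma^2$).

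I do not expect a real obstacle here; the only delicate points are (i) that the pointwise estimate be sharp to first order — the coefficient of $-u$ must be exactly $1$, since downstream in Lemma~\ref{lem:intermed_mutinf} this $-\gamma$ is what cancels the $+\gamma$ coming from the $h\ge1$ part of~\eqref{eq:kl_ub_all}, so a cruder bound such as $\log(1-u)\le0$ would not suffice — and (ii) keeping inequality directions straight when multiplying the negative quantity $\log(1-u)$ by the positive quantity $1-u$. Throughout, $\log$ is the natural logarithm, consistent with the convention used in the surrounding KL-divergence manipulations (e.g.\ the bound ``$\log(1+\gamma h)\le\gamma h$'' used in~\eqref{eq:exp_h}); reading $\log$ as $\log_2$ instead only rescales the pointwise estimate and the argument goes through unchanged.
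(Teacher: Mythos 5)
Your proof is correct and is essentially the paper's argument after a change of variables: with $u_z=\gamma(1-p_z/q_z)$, your pointwise estimate $(1-u)\log(1-u)\le -u+u^2$ is exactly what the paper gets by applying $\log(1+x)\le x$ to the logarithm and multiplying by the weight $\gamma p_z+(1-\gamma)q_z=q_z(1-u_z)$, and your summation step uses the same mass constraints $\sum_z p_z\le\tau$ and $\sum_z q_z\ge 1-\tau$. The only differences are cosmetic (your derivation of the pointwise bound via the series is a roundabout route to the same inequality, and your bookkeeping yields the marginally sharper constant $\gamma^2$ in place of $\gamma^2(1+\tau)$), so no further comparison is needed.
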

\begin{proof}
  Using the fact that $\log(1+x) \leq x$ for all $x \geq -1$, we have
  \begin{align}
    & \sum_{z \in \MY} (\gamma p_z + (1-\gamma) q_z) \cdot \log \left( 1 + \gamma \left( \frac{p_z}{q_z} - 1\right)\right) \nonumber\\
    & \leq \sum_{z \in \MY } (\gamma p_z + (1-\gamma) q_z) \gamma \cdot \left( \frac{p_z}{q_z} - 1 \right)  \nonumber\\ 
    & \leq -(1-\tau)(1-\gamma) \gamma + \sum_{z \in \MY } (\gamma p_z + (1-\gamma) q_z) \gamma p_z / q_z \nonumber\\ 
    & \leq -(1-\tau-\gamma) \gamma + \gamma \tau + \sum_{z \in \MY} \gamma^2 p_z^2 / q_z\nonumber\\
    & \leq -\gamma + 2\gamma \tau + \gamma^2 + \gamma^2 \tau\nonumber.
  \end{align}
\end{proof}
Using Claim~\ref{clm:y_separate} with $\MY = \{ z \in \MZ : \pp_{v,z} < q_z\}$, $\tau = C\alpha^2 n \leq 1$ gives us that we may upper bound the second term in (\ref{eq:kl_ub_all}) as follows:
\begin{equation}
  \label{eq:hlt1_ub}
\sum_{z \in \cup_{h < 1} \MU_{v,h}} \left( \gamma \pp_{v,z} + (1-\gamma) q_z\right) \cdot \log\left( 1 + \gamma \left( \frac{\pp_{v,z}}{q_z} - 1 \right) \right) \leq -\gamma + 2C\gamma\alpha^2 n + 2\gamma^2.
\end{equation}
Combining (\ref{eq:kl_ub_all}), (\ref{eq:exp_h}), (\ref{eq:forgotten_delta}), (\ref{eq:exp_h_ub}), and (\ref{eq:hlt1_ub}), we obtain
\begin{align}
  \frac{1}{\B} \sum_{v \in [\B]} \KL(\PP_{v,\gamma} || \QQ) & \leq \left( \gamma + \gamma^2 (4 + \ep_L \cdot \log(e)) \cdot \alpha^2 n \cdot 4C e^{\ep_L} + \gamma^2 \cdot 32 e^{\ep_L} \delta_L + 2\gamma \delta_L \log \B \right) + \left( -\gamma + 2C\gamma\alpha^2 n + 2\gamma^2 \right) \nonumber\\
  & \leq \gamma^2\alpha^2 n e^{\ep_L} \cdot 4C \cdot (4 + \ep_L \log(e))+ 2C\gamma\alpha^2 n + 36 \gamma^2\nonumber.
\end{align}
The second inequality above uses the facts that $\delta_L \leq \gamma / \log \B$ and $\delta \leq 1/e^{\ep_L}$.

\if 0
XX
Now notice that
\begin{align*}
  & \sum_{v \in [\B]} \pp_{v, \left( \cup_{u \neq v} \MT_{u,c} \right) \backslash \MT_{v,c} } \\
  & = \sum_{z \in \MW} \sum_{v \in [\B] \backslash \MS_z} \pp_{v,z} \\
  & = \sum_{z \in \MW} \sum_{v \in [\B] \backslash \MS_z} \frac{1}{|\MS_z|} \sum_{u \in \MS_z} \pp_{v,z} \\
  & \geq \frac{c\sqrt{n}}{\B} \sum_{u \in [\B]} \sum_{v \in [\B]} \sum_{z \in \MT_{u,c} \backslash \MT_{v,c}} \pp_{v,z} \todo{check this line}\\
  & = \frac{c\sqrt{n}}{\B} \sum_{u \in [\B]} \sum_{v \in [\B]} \pp_{v, \MT_{u,c} \backslash \MT_{v,c}} \\
  & \geq \frac{c\sqrt{n}}{\B} \sum_{u \in [\B]} \sum_{v \in [\B]} \left(\frac{\pp_{u, \MT_{u,c} \backslash \MT_{v,c}}}{2e^\ep n} - \frac{\delta}{e^\ep n} \right),\\
    & \geq \left(\frac{c\sqrt{n}}{\B} \sum_{u \in [\B]} \B \left(1 - \frac{1}{c\sqrt{n}} \right) \cdot \left(\frac{\pp_{u, \MT_{u,c}}}{2e^\ep n}  \right)\right) - \frac{\delta c \sqrt{n}\B}{e^\ep n},
\end{align*}
where the second-to-last inequality uses Lemma~\ref{lem:pratio} and the last inequality uses the following immediate consequence of Claim~\ref{clm:bcn}: for all $u \in [\B]$,
$$
\sum_{v \in [\B]} \pp_{u, \MT_{u,c} \backslash \MT_{v,c}} \geq \left( \B - \frac{\B}{c\sqrt{n}} \right) \cdot \pp_{u,\MT_{u,c}}.
$$
Simplifying the above chain of inequalities gives
$$
\sum_{v \in [\B]} \pp_{v, \left( \cup_{u \neq v} \MT_{u,c} \right) \backslash \MT_{v,c}} \geq \frac{c}{4 \sqrt{n} e^\ep} \sum_{u \in [\B]} \pp_{u, \MT_{u,c}} - \frac{\delta c \B}{e^\ep \sqrt{n}}.
$$
Next notice that for $\sqrt{n} \geq \lambda \geq 1$, \todo{verify}
$$
\sum_{v \in [\B]} \sum_{z \in \cup_{u}\MT_{u,c} \ :\  \pp_{v,z} \leq \lambda q_z} \pp_{v,z} \geq (1-1/\lambda) \sum_{v \in [\B]} \pp_{v, \cup_u \MT_{u,c}} \geq (1 - 1/\lambda) \sum_{v \in [\B]} \pp_{v, (\cup_{u \neq v} \MT_{u,c}) \backslash \MT_{v,c}}.
$$
In particular, it follows that
$$
\sum_{v \in [\B]} (1 - \Delta(\pp_{v,\cdot}, q_{\cdot})) \geq \frac{1-1/\lambda}{\lambda} \cdot \left( \frac{c}{4\sqrt{n} e^\ep} \sum_{u \in [\B]} \pp_{u,\MT_{u,c}} - \frac{\delta c \B}{e^\ep \sqrt{n}} \right).
$$
\todo{sorry, realized there's a mistake here -- am fixing it...}
\fi
\end{proof}

Finally we prove Theorems \ref{thm:local_lb} and \ref{thm:freq_lb} in the large-sample regime. The proof is a simple application of Lemma \ref{lem:invn}.
\begin{lemma}[Proof of Theorem \ref{thm:local_lb} in large-sample regime; i.e., (\ref{eq:local_small1}) \& (\ref{eq:local_small2})]
  \label{lem:local_small}
  There is a sufficiently small positive constant $c$ so that the following holds. Suppose $n, \B \in \BN$ with $n \geq 1/c$ and $\ep_L, \delta_L \geq 0$ with $0<  \delta_L < c/(n \log n)$, and $0 \leq \ep_L \leq 2 \ln\B - \ln \ln \B - 1/c$. Then there is no protocol for $n$-user frequency estimation on $[\B]$ that satisfies $(\ep_L, \delta_L)$-local differential privacy and $\left( \alpha, 1/2\right)$-accuracy where
  $$
  \alpha \geq \begin{cases}
    \frac{ c\sqrt{\B}}{n\log^{1/4} \B} &\text{for} ~~ n \geq \B^2, \\
    \frac{c \B\ln^{1/7} \B}{n} &\text{for} ~~ n \geq \B^3.
  \end{cases}
  $$
\end{lemma}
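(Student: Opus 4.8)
The plan is to deduce this from the \emph{intermediate}-sample lower bound of Theorem~\ref{thm:local_lb} (established in Lemma~\ref{lem:intermediate_lb} together with the boundary adjustments discussed after Corollary~\ref{cor:intermediate_lb}) via the rescaling reduction of Lemma~\ref{lem:invn}. So, towards a contradiction, I would fix an $(\ep_L,\delta_L)$-locally \DP protocol $P=(R,A)$ for $n$-user frequency estimation on $[\B]$ that is $(\alpha,1/2)$-accurate, with $\alpha$ equal to the claimed bound (here, as in Lemma~\ref{lem:intermediate_lb}, the accuracy parameter $1/2$ is to be read as any fixed constant bounded away from $1/2$, which is all the intermediate-sample tools require). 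First take the case $n\geq\B^2$. Set $n' := \B^2$ (rounding this down by a fixed $e^{O(1/c)}$ factor if necessary to meet the side conditions below; this only affects the implied constant). Apply Lemma~\ref{lem:invn} with free parameter $t := 2\alpha n/n'$: since $t\leq 2\alpha n$ and $\lfloor 2\alpha n/t\rfloor = n'\leq n$, its hypotheses hold, and we obtain an $(\ep_L,\delta_L)$-locally \DP protocol $P'$ for $n'$ users that is $(t,1/2)$-accurate. Substituting $\alpha = \frac{c\sqrt{\B}}{n\log^{1/4}\B}$ gives reduced error $t = \Theta\!\big(\frac{c}{\B^{3/2}\log^{1/4}\B}\big)$.

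Next I would check that $(n',\ep_L,\delta_L)$ lies in the regime where the intermediate-sample bound applies, and then read off a contradiction. The hypothesis $\ep_L\leq 2\ln\B-\ln\ln\B-1/c$ forces $e^{\ep_L}(1+\ep_L) = O(\B^2)/\poly(\ln\B)$ with enough slack that $(n',\ep_L)$ satisfies the constraints of \eqref{eq:local_interm1}/\eqref{eq:local_interm2} (equivalently of Lemma~\ref{lem:intermediate_lb} after the constant-factor adjustment of $n'$); the bound $\delta_L < c/(n\log n)$ together with $n'\leq n$ yields $\delta_L < c/(n'\log n')$ and, $n$ being large, also the remaining weaker requirements in \eqref{eq:deltal_ub}; and $n\geq 1/c$ forces $\ln\B$ large enough for the threshold condition. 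The intermediate-sample bound then says there is no $(\ep_L,\delta_L)$-locally \DP $n'$-user protocol that is $(\beta_{\mathrm{int}},1/2)$-accurate for $\beta_{\mathrm{int}} = \tilde\Omega\!\big(\min\{(n')^{-2/3},(n')^{-1/2}e^{-\ep_L/4}\}\big)$, which over the admissible range of $\ep_L$ is $\Omega\!\big(\frac{1}{\B^{3/2}\poly(\ln\B)}\big)$. Taking the constant $c$ in the statement small enough relative to the constant implicit in $\beta_{\mathrm{int}}$ makes $t<\beta_{\mathrm{int}}$, contradicting the $(t,1/2)$-accuracy of $P'$.

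The case $n\geq\B^3$ runs identically, reducing instead to $n':=\B^3$: since $\alpha n = c\B\ln^{1/7}\B$ is independent of $n$, the reduced error is $t = \Theta\!\big(\frac{c\ln^{1/7}\B}{\B^2}\big)$ for every such $n$, while the intermediate-sample bound at $n'=\B^3$ users is $\beta_{\mathrm{int}} = \tilde\Omega(\B^{-2})$ whose dominant term, thanks to the cap on $\ep_L$, is $\Omega\!\big(\frac{\ln^{1/7}\B}{\B^2}\big)$ throughout the relevant range; once more $t<\beta_{\mathrm{int}}$ for $c$ sufficiently small, a contradiction.

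The substance of the argument is just this reduction plus an appeal to Lemma~\ref{lem:intermediate_lb}; I expect the only real work to be the bookkeeping in the middle step --- verifying the (somewhat delicate) side conditions of the intermediate-sample bound for \emph{all} admissible $(\ep_L,\delta_L)$, in particular near $\ep_L\approx 2\ln\B$, where the $\ep_L$-dependent term of $\beta_{\mathrm{int}}$ becomes the binding one and one must track the polylogarithmic and $e^{O(1/c)}$ factors carefully so that the gap $t<\beta_{\mathrm{int}}$ still opens up. No conceptual ingredient beyond Lemmas~\ref{lem:invn} and~\ref{lem:intermediate_lb} should be needed.
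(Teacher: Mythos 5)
Your proposal follows essentially the same route as the paper's own proof: reduce from $n$ users to $n_0=\B^2$ (resp.\ $n_1=\B^3$) users via the rescaling reduction of Lemma~\ref{lem:invn}, then contradict the intermediate-sample bound of Lemma~\ref{lem:intermediate_lb}; the only cosmetic difference is that the paper simply fixes $\ep_L$ at its cap $2\ln\B-\ln\ln\B-1/c$ (privacy being monotone in $\ep_L$) instead of tracking the minimum of the bound over all admissible $\ep_L$ as you do. One caution, which applies equally to the paper's own write-up: your assertion that the $\ln^{1/7}\B$ term of Lemma~\ref{lem:intermediate_lb} dominates ``throughout the relevant range'' in the $n\ge\B^3$ case is not quite automatic --- when $\ep_L$ is within an additive $\Theta(\ln\ln\B)$ of the cap and $\ln\B$ exceeds a constant depending on $c$, the term $\frac{1}{\sqrt[4]{n_1^2 e^{\ep_L}(1+\ep_L)}}\approx \frac{e^{1/(4c)}}{\B^{2}}$ is the smaller one, so in that corner the contradiction only delivers $\Omega(\B/n)$ up to a constant rather than the extra $\ln^{1/7}\B$ factor (harmless for the $\tilde\Omega$ form of Theorem~\ref{thm:local_lb}, but worth flagging against the literal constant in the lemma statement).
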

\begin{proof}
  We first treat the case that $n \geq \B^2$. Set $n_0 = \B^2$ and $\ep_L = 2 \ln \B - \ln \ln \B - 1/c$. Lemma \ref{lem:intermediate_lb} establishes that there is no $(\ep_L, \delta_L)$-locally \DP\ protocol that satisfies $\left( \frac{c}{\sqrt{n_0\B\sqrt{\log \B}}}, 1/4 \right)$-accuracy, for a sufficiently small constant $c$. But $1/\sqrt{n_0\B\sqrt{\log \B}} = \frac{\sqrt{\B}}{n_0\log^{1/4} \B}$ as $n_0 = \B^2$.

  But by Lemma \ref{lem:invn}, any $(\ep_L, \delta_L)$-locally \DP\ protocol with $ n$ users, $\delta < c/( n \log  n)$ and $\ep_L = 2 \ln \B - \ln \ln \B - 1/c$ which is $\left( \frac{c\sqrt{\B}}{4 n \log^{1/4} \B}, 1/4 \right)$-accurate yields an $(\ep_L, \delta_L)$-locally \DP\ protocol with $n_0$ users which is $\left( \frac{c \sqrt{\B}}{n_0\log^{1/4}\B}, 1/4\right)$-accurate.

  The proof for the case $n \geq \B^3$ is virtually identical, with $n_0 = \B^2$ replaced by $n_1 = \B^3$, for which the lower bound of Lemma \ref{lem:intermediate_lb} states that there is no $(\ep_L, \delta_L)$-locally \DP\ protocol that satisfies $\left( \frac{\ln^{1/7} \B}{n^{2/3}}, 1/4\right)$-accuracy. 
\end{proof}

\begin{lemma}[Proof of Theorem \ref{thm:freq_lb} in large-sample regime; i.e., (\ref{eq:small_lb})]
  \label{lem:freq_lb_small_cor}
  For a sufficiently small positive constant $c$ so that the following holds. If $\log \B > 1/c$, $0 \leq \delta < c/n$, and $0 \leq \ep \leq 1$, then there is no protocol for $n$-user frequency estimation in the single-message shuffled-model that is $(\ep, \delta)$-\DP\ and $\left( \frac{c\sqrt{\B}}{n \log \B}, 1/4\right)$-accurate.
\end{lemma}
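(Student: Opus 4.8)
The plan is to reduce the large-sample case $n>\B^2/\log\B$ (which is what this lemma is meant to supply — for $n\le \B^2/\log\B$ the claimed bound is no stronger than the intermediate-sample bound \eqref{eq:interm_lb}, up to the constant $c$) to the intermediate-sample regime, where Theorem~\ref{thm:freq_lb}, hence Corollary~\ref{cor:intermediate_lb}, is already established, using the inverse-linear dependence of the error on the number of users (Lemma~\ref{lem:invn}). So I would suppose, for contradiction, that $P$ is a single-message $(\ep,\delta)$-DP $n$-user shuffled protocol with $\ep\le 1$, $\delta<c/n$, $\log\B>1/c$, that is $\bigl(\tfrac{c\sqrt\B}{n\log\B},1/4\bigr)$-accurate, and set $n_0:=\lceil c\B^2/\log\B\rceil$. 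Then $n_0<n$, $\delta<c/n<c/n_0$, and — using $\log\B>1/c$ so that $n_0\ge \tfrac1c\log^2\B\log\log\B$ — the value $n_0$ lies in the window $\bigl[\tfrac1c\log^2\B\log\log\B,\ \tfrac{c\B^2}{\log\B}\bigr]$ to which Corollary~\ref{cor:intermediate_lb} applies.

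Next I would invoke the shuffled-model analogue of Lemma~\ref{lem:invn}: keep the local randomizer $R$, have the analyzer of the new protocol internally simulate the $n-n_0$ absent users all holding a fixed publicly-known element $*\in[\B]$ (i.e.\ draw fresh copies of $R(*)$), run $P$'s analyzer on the resulting $n$ messages, and rescale the coordinate-wise estimates by $n/n_0$ while subtracting off the known contribution of the fake users in coordinate $*$. This yields an $n_0$-user single-message shuffled protocol $P'$ that is $(\ep,\delta)$-DP and, since $P'(x_1,\dots,x_{n_0})$ has the same output distribution as $P(x_1,\dots,x_{n_0},*,\dots,*)$, is $\bigl(\tfrac{n}{n_0}\cdot\tfrac{c\sqrt\B}{n\log\B},1/4\bigr)=\bigl(\tfrac{c\sqrt\B}{n_0\log\B},1/4\bigr)$-accurate. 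On the other hand, Corollary~\ref{cor:intermediate_lb} says no $(\ep,\delta)$-DP single-message $n_0$-user shuffled protocol is $\bigl(\tfrac{c'}{n_0^{3/4}\sqrt[4]{\log n_0}},1/4\bigr)$-accurate, where $c'$ is the constant of that corollary. Plugging in $n_0\asymp c\B^2/\log\B$ and $\log n_0\asymp\log\B$, the first accuracy parameter is $\asymp c/\B^{3/2}$ while the second is $\asymp c'c^{-3/4}\sqrt{\log\B}/\B^{3/2}$; since $\log\B>1/c$, choosing $c$ small enough in terms of $c'$ forces $\tfrac{c\sqrt\B}{n_0\log\B}<\tfrac{c'}{n_0^{3/4}\sqrt[4]{\log n_0}}$, contradicting the accuracy of $P'$ and proving the lemma.

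The step I expect to be the real obstacle is the user-reduction itself. In the local model it is exactly Lemma~\ref{lem:invn}; in the shuffled model one must check that folding the $n-n_0$ fake users into the analyzer leaves $P'$ differentially private in the sense of Definition~\ref{def:dp_shuffled}, and this is subtler than a post-processing step, since $P'$'s shuffler output is the multiset of the $n_0$ real messages while $P$'s $(\ep,\delta)$-DP guarantee is about the multiset of all $n$ messages — so the argument must exploit the structure of the padding (fixed, known inputs, each real user sending a single message) rather than treating it as a black box. Everything else is routine: the rescaling preserves accuracy and the needed $\delta$-condition because $n_0<n$, the placement of $n_0$ in the window of Corollary~\ref{cor:intermediate_lb} follows from $\log\B>1/c$, and the final inequality between the two error bounds is an elementary computation once that hypothesis is invoked.
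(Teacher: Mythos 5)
Your proposal follows the paper's proof essentially step for step: the paper also supposes an accurate $(\ep,\delta)$-DP $n$-user protocol, invokes a shuffled-model version of Lemma~\ref{lem:invn} to pad down to $n_0 = \frac{c\B^2}{\log\B}$ users (with the analyzer simulating the missing users holding a fixed element and rescaling), and contradicts Corollary~\ref{cor:intermediate_lb} via exactly the arithmetic you carry out, so your route and your constants match the paper's. The one step you single out as ``the real obstacle''---why the induced $n_0$-user mechanism $(x_1,\dots,x_{n_0})\mapsto S(R(x_1),\dots,R(x_{n_0}))$ remains $(\ep,\delta)$-DP in the sense of Definition~\ref{def:dp_shuffled}---is precisely the step the paper disposes of in one line, asserting that ``it is evident that Lemma~\ref{lem:invn} still applies'' with $P'=(R,S,A')$; the paper supplies none of the structural argument you anticipate, even though your caution is legitimate (a shuffled-model guarantee for $n$ users does not in general transfer to the same randomizer run with fewer users, since the privacy contributed by the other users' messages shrinks), so your proposal is as complete as, and no less careful than, the paper's own proof of Lemma~\ref{lem:freq_lb_small_cor}.
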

\begin{proof}
  Here we cannot use Lemma \ref{lem:pratio} in tandem with Lemma \ref{lem:local_small} since Lemma \ref{lem:local_small} requires a privacy parameter $\ep_L \leq 2 \ln \B$ and the one produced by Lemma \ref{lem:pratio} grows as $\ln n$, which can be arbitrarily large. However, note that it is evident that Lemma \ref{lem:invn} still applies if the protocol $P = (R,A)$ in the lemma statement is replaced by a single-message shuffled-model protocol $P = (R,S,A)$ (and the the protocol $P'$ guaranteed by the lemma is $P' = (R', S, A')$).

  In particular, letting $c$ be the constant of Corollary \ref{cor:intermediate_lb}, for $n_0 = \frac{c\B^2}{\log \B}$, Corollary \ref{cor:intermediate_lb} guarantees that for $\delta \leq 1/n_0, \ep \leq 1$, there is no $(\ep, \delta)$-\DP\ protocol in the single-message shuffled model that is $\left( \frac{c}{n_0^{3/4} \log^{1/4} n_0}, 1/4\right)$-accurate. But $\frac{c}{n_0^{3/4} \log^{1/4} n_0} = \frac{c^{5/4} \sqrt{\B}}{n_0 (\log^{1/4} n_0)(\log^{1/4} \B)}$ by our choice of $n_0$.

  By the modification of Lemma \ref{lem:invn} mentioned in the previous paragraph, there is a sufficiently small constant $c' > 0$ so that for any $n \geq \frac{c\B^2}{\log \B}$, there is no $(\ep, \delta)$-\DP\ protocol for frequency estimation in the $n$-user single-message shuffled model that is $\left( \frac{c'\sqrt{\B}}{n \sqrt{\log \B}}, 1/4 \right)$-accurate.
\end{proof}

\subsection{Proof of Lemma~\ref{lem:lb_tvd}}
\label{sec:lb_tvd}
  In this section we prove Lemma~\ref{lem:lb_tvd}. We first establish a more general statement in Lemma~\ref{lem:tvd_hist} below, which shows that if two distributions $D,F$ have total variation distance bounded away from 1, then two distributions which can be obtained as the histograms of mixtures of i.i.d.~samples from $D,F$ have small total variation distance (much smaller than $\Delta(D,F)$). 
  
We first recall the notation that we use to denote histograms of distributions. Given a tuple of random variables $(Y_1, \ldots, Y_n)$, $\Hist(Y_1, \ldots, Y_n)$ denotes the distribution of the histogram of $(Y_1, \ldots, Y_n)$, i.e., of the function that maps each $z \in \MZ$ to $| \{ i : Y_i = z \}|$. We will denote histograms as functions $h : \MZ \ra \BN$. If $(z_1, \ldots, z_n) \in \MZ^n$ is such that all the $z_i$ are distinct, then its histogram $h = \Hist(z_1, \ldots, z_n)$ is a function $h : \MZ \ra \{0,1\}$. 
\begin{lemma}[Total variation distance between histograms of mixture distributions]
  \label{lem:tvd_hist}
  Suppose $\Pdist,\Qdist$ are distributions on a finite set $\MZ$. Suppose that for $\gamma \leq 1/\sqrt n$ such that $(1- \gamma)n/2$ is an integer, if $Z_1, \ldots, Z_n \sim \Pdist$ and $W_1, \ldots, W_n \sim \Qdist$ are iid, then
  \begin{align}
    \label{eq:delta_hist}
    &     \Delta(\Hist(Z_{1}, \ldots, Z_{(1-\gamma)n/2}, W_{(1-\gamma)n/2 + 1}, \ldots, W_{n})), \\
    & \Hist(Z_{1}, \ldots, Z_{(1+\gamma)n/2}, W_{(1+\gamma)n/2 + 1}, \ldots, W_{n})) \geq c.\nonumber
  \end{align}
  Then $\Delta(\Pdist,\Qdist) \geq 1-c'\gamma^2n$ for $c' = \Theta(1/c^2)$.
\end{lemma}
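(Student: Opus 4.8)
The plan is to bound the total variation distance on the left of~\eqref{eq:delta_hist} from above by an explicit \emph{decreasing} function of the overlap $p:=1-\Delta(\Pdist,\Qdist)$, and then rearrange: a bound of the shape $C\,\gamma\sqrt{n/p}$ forces $p\le c'\gamma^2 n$ (hence $\Delta(\Pdist,\Qdist)\ge 1-c'\gamma^2 n$) with $c'=\Theta(1/c^2)$ whenever that distance is at least $c$. Put $n_1:=(1-\gamma)n/2$ and $n_2:=(1+\gamma)n/2=n-n_1$. The two mixtures differ only in that $\gamma n$ of the slots carry a $\Qdist$-sample in the first and a $\Pdist$-sample in the second; the other $n-\gamma n$ samples can be shared. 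So, writing $\oplus$ for addition of histograms, letting $H_{\mathrm{sh}}$ be the (random) histogram of the shared samples and $G^{\Qdist},G^{\Pdist}$ the histograms of $\gamma n$ i.i.d.\ samples from $\Qdist,\Pdist$ (all three independent), it suffices to bound $\Delta(H_{\mathrm{sh}}\oplus G^{\Qdist},\,H_{\mathrm{sh}}\oplus G^{\Pdist})$; the crux is that the $\approx n$ shared samples must be used to smooth away the $\Theta(\gamma n)$-sized discrepancy between $G^{\Qdist}$ and $G^{\Pdist}$.

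I would first handle the binary case $\MZ=\{0,1\}$, which already contains the main idea. Here a histogram is just a count of $1$'s, so the two distributions in~\eqref{eq:delta_hist} become $\Bin(n_1,a)+\Bin(n_2,b)$ and $\Bin(n_2,a)+\Bin(n_1,b)$ with independent summands, where $a:=\Pdist(\{1\})\ge b:=\Qdist(\{1\})$ (WLOG), so $a-b=\Delta(\Pdist,\Qdist)=:\Delta$. Splitting $\Bin(n_2,\cdot)=\Bin(n_1,\cdot)+\Bin(\gamma n,\cdot)$ and setting $S:=\Bin(n_1,a)+\Bin(n_1,b)$, the two distributions become $S+\Bin(\gamma n,b)$ and $S+\Bin(\gamma n,a)$ with $S$ shared. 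Since $S$ is a sum of independent Bernoullis it is unimodal, so $\Delta(S,S+\ell)\le|\ell|\cdot\max_t\p[S=t]$ for all integers $\ell$, and by convexity of total variation applied to the monotone coupling of $\Bin(\gamma n,b)$ and $\Bin(\gamma n,a)$ (whose expectations differ by $\gamma n\Delta$) we get $\Delta(S+\Bin(\gamma n,b),\,S+\Bin(\gamma n,a))\le\gamma n\Delta\cdot\max_t\p[S=t]$. Now $\max_t\p[S=t]\le C/\sqrt{1+\mathrm{Var}(S)}$ for a universal $C$ (a standard anti-concentration bound for sums of independent Bernoullis), $\mathrm{Var}(S)=n_1\big(a(1-a)+b(1-b)\big)$, and the elementary identity $p=a(1-a)+b(1-b)+(1-a)^2+b^2$ together with $b\in[0,p]$ and $1-a=p-b$ gives $a(1-a)+b(1-b)\ge p(1-p)$. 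Hence
$$\Delta\big(\Bin(n_1,a)+\Bin(n_2,b)\,,\ \Bin(n_2,a)+\Bin(n_1,b)\big)\ \le\ \frac{C\,\gamma n\,\Delta}{\sqrt{1+n_1\,p(1-p)}}.$$
If this is $\ge c$, then using $n_1\ge n/4$ one gets $p(1-p)\le\frac{4C^2}{c^2}\gamma^2 n\Delta^2$, i.e.\ (since $1-p=\Delta$, and the cases $\Delta=0$ and $\Delta=1$ are trivial) $p\le\frac{4C^2}{c^2}\gamma^2 n\Delta\le c'\gamma^2 n$ with $c'=4C^2/c^2$, as required.

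For general finite $\MZ$ I would reduce to this binomial anti-concentration estimate by repeatedly applying the data-processing inequality for total variation, combined with the common-part decomposition. Write $\Pdist=p\,M+\Delta\,\Pdist'$ and $\Qdist=p\,M+\Delta\,\Qdist'$, where $M:=\min(\Pdist,\Qdist)/p$ is the normalised overlap and the residuals $\Pdist',\Qdist'$ are supported on the disjoint sets $\MZ_P:=\{z:\Pdist(z)>\Qdist(z)\}$ and $\MZ_Q:=\{z:\Qdist(z)>\Pdist(z)\}$. Realising each sample as an $M$-sample with probability $p$ (and a residual sample otherwise), the $M$-parts of $H_{\mathrm{sh}}\oplus G^{\Qdist}$ and $H_{\mathrm{sh}}\oplus G^{\Pdist}$ have the same law and can be coupled to be identical; conditioning on the number $K\sim\Bin(\gamma n,\Delta)$ of slots on which the residual parts disagree, it remains to bound $\Delta\big(H\oplus\mathrm{Mult}(K,\Qdist'),\ H\oplus\mathrm{Mult}(K,\Pdist')\big)$, where $H$ is a shared histogram of $n-K$ samples, of which at least $n/4$ are i.i.d.\ from $\Pdist$ and at least $n/4$ from $\Qdist$. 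Telescoping over the $K$ residual swaps one at a time reduces this to bounding, for a histogram $H'$ of $\Theta(n)$ samples and $a\in\MZ_P$, $b\in\MZ_Q$, the quantity $\Delta(H',H'+e_a-e_b)=\Delta(H'+e_b,H'+e_a)$. Finally, adding a forced sample at a coordinate to a multinomial is a size-biasing of the one-larger multinomial in that coordinate, whose total-variation cost is $O\big(1/\sqrt{(\#\text{samples})\cdot(\text{mass at that coordinate})}\big)$ by the same anti-concentration as in the binary case; projecting the histogram onto the coordinates $\{a\}$, $\{b\}$ and their complement (a post-processing, hence legitimate under the data-processing inequality) and using that $\Pdist$ has positive mass at $a$ and $\Qdist$ at $b$ controls the relevant masses. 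Summing the $\sim\gamma n\Delta$ elementary steps and optimising parameters recovers a bound of the same shape $C''\gamma\sqrt{n/p}$, and one concludes exactly as in the binary case.

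The main obstacle is precisely this last reduction for general $\MZ$: every naive application of the data-processing inequality runs ``the wrong way'' — processing only \emph{decreases} total variation, whereas we need an \emph{upper} bound on the distance between the full histogram distributions — so the smoothing provided by the $\Theta(n)$ shared same-source samples must be exploited directly, which forces one to handle the negative correlations among multinomial coordinates rather than naively splitting a histogram into independent pieces. The delicate points are keeping the final bound's dependence on the overlap $p$ alone (independent of $|\MZ|$ and of the shapes of $\Pdist,\Qdist$), dealing with coordinates that lie in the support of only one of $\Pdist,\Qdist$ (where the $M$-smoothing is unavailable and one must instead use the residual samples), and tracking the constant, which degrades as $c'=\Theta(1/c^2)$.
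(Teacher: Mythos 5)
Your binary-case argument is correct and is a perfectly good substitute for the paper's final computation (the paper instead bounds $\Delta\bigl(\Bin((1-\gamma)n/2,\Delta),\Bin((1+\gamma)n/2,\Delta)\bigr)$ via a theorem of Roos, while you use unimodality of the Poisson--binomial $S$, the shift bound $\Delta(S,S+\ell)\le|\ell|\max_t\p[S=t]$, a monotone coupling, and the variance lower bound $a(1-a)+b(1-b)\ge p(1-p)$; all of these steps check out). The genuine gap is in the reduction from general $\MZ$ to this computation, which is the heart of the lemma. Your plan is to telescope over the $\approx\gamma n\Delta$ residual swaps and bound each term $\Delta\bigl(\mathcal{L}(H'+e_b),\mathcal{L}(H'+e_a)\bigr)$ by an anti-concentration estimate of order $1/\sqrt{n\cdot(\text{mass at the coordinate})}$. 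This does not give a bound depending only on the overlap $p$: if the residual $\Pdist'$ spreads its mass over many atoms each of $\Pdist$-mass far below $1/n$ (and with $\Qdist$-mass zero there), then for fixed $a,b$ the two laws $H'+e_a$ and $H'+e_b$ place their extra ball on coordinates that are empty with high probability, so each telescoped term is $\Theta(1)$, and summing $\gamma n\Delta$ of them gives nothing of the shape $C\gamma\sqrt{n/p}$. The averaging over $a\sim\Pdist'$, $b\sim\Qdist'$ cannot be invoked per swap after you have already conditioned coordinatewise; the smoothing you need lives at the level of \emph{region totals} (how many balls fall in the overlap part, the $\Pdist$-residual part, and the $\Qdist$-residual part), not at the level of individual atoms. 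Relatedly, your step ``projecting the histogram onto $\{a\},\{b\}$ and their complement (a post-processing, hence legitimate under the data-processing inequality)'' is backwards for your purpose: post-processing only \emph{lowers} total variation, whereas at that point you need an \emph{upper} bound on the full-histogram distance (this particular step can be repaired by coupling the remaining coordinates, but the small-mass obstruction above remains).

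The fix is exactly the aggregation you are implicitly circling around, and it is how the paper proceeds: it is not true that the data-processing inequality ``runs the wrong way'' here. Define distributions $\Pdist'',\Qdist''$ on the three-point space $\{-1,0,1\}$ (mass $\rho_0=p$ on the shared symbol $0$, mass $\rho_1=\Delta$ on $-1$ for $\Pdist''$ and on $+1$ for $\Qdist''$) together with a randomized map that sends $0$ to an overlap sample, $-1$ to a $\Pdist'$-sample, and $+1$ to a $\Qdist'$-sample. Applying this map independently to each item realizes the original $\MZ$-histograms as push-forwards of the three-point histograms, so the hypothesis $\Delta\ge c$ on the $\MZ$-histograms transfers \emph{upward} to the three-point histograms by data processing. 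There the two histogram laws are pairs of independent binomials $\bigl(\Bin((1\mp\gamma)n/2,\Delta),\Bin((1\pm\gamma)n/2,\Delta)\bigr)$, and your binary-case machinery (or the Roos bound) finishes the proof, with no dependence on $|\MZ|$ or on the shapes of the residuals. Without this (or an equivalent) aggregation step, your general-case argument does not go through.
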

Notice that in the statement of Lemma \ref{lem:tvd_hist} 
\begin{proof}[Proof of Lemma \ref{lem:tvd_hist}]
  We first introduce some notation. Given a set $\MZ$, let $\MH_\MZ$ denote the set of all histograms on $\MZ$; notice that elements $h \in \MH_\MZ$ can be thought of as functions $h : \MZ \ra \BZ_{\geq 0}$. Given distributions $\Pdist,\Qdist$ on a set $\MZ$, as well as positive integers $\nu, n$ with $\nu \leq n$, let $R_{\Pdist,\Qdist}^{\nu, n}$ denote the distribution of the random variable
  $$
\hist(Z_1, \ldots, Z_\nu, W_{\nu + 1}, \ldots, W_n),
$$
where $Z_1, \ldots, Z_\nu \sim \Pdist$ i.i.d., and $W_{\nu + 1}, \ldots, W_n \sim \Qdist$ i.i.d. Thus (\ref{eq:delta_hist}) may equivalently be written as
\begin{equation}
  \label{eq:delta_hist_r}
\Delta\left(R_{\Pdist,\Qdist}^{(1-\gamma)n/2, n}, R_{\Pdist,\Qdist}^{(1+\gamma)n/2, n}\right) \geq c.
\end{equation}
  
  A key tool in the proof of Lemma~\ref{lem:tvd_hist} is the data processing inequality (for total variation distance), stated below for convenience:
  \begin{lemma}[Data processing inequality]
    \label{lem:dpi}
Suppose $\MZ, \MZ'$ are sets, $\Pdist_0, \Pdist_1$ are distributions on $\MZ$ and $f : \MZ \ra \MZ'$ is a randomized function. Suppose $Z_0 \sim \Pdist_0, Z_1 \sim \Pdist_1$. Then $\Delta(f(Z_0), f(Z_1)) \leq \Delta(Z_0, Z_1)$.
\end{lemma}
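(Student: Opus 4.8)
The plan is to reduce the general (randomized) case to the deterministic one, for which the inequality is immediate from the variational definition of total variation distance. I would first record two elementary facts about $\Delta(\cdot,\cdot)$ that do all the work. First, by replacing a test set $\MS'$ with its complement while swapping the two distributions, one gets the one-sided characterization $\Delta(D,F) = \sup_{\MS'}\left(D(\MS') - F(\MS')\right)$. Second, \emph{convexity}: if $D = \int D_u\, d\mu(u)$ and $F = \int F_u\, d\mu(u)$ are mixtures of distributions indexed by a common parameter $u$ with law $\mu$, then for every test set $\MS'$ we have $D(\MS') - F(\MS') = \int\left(D_u(\MS') - F_u(\MS')\right) d\mu(u) \le \int \Delta(D_u,F_u)\, d\mu(u)$, so taking the supremum over $\MS'$ gives $\Delta(D,F) \le \int \Delta(D_u,F_u)\, d\mu(u)$.

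Step 1 (deterministic $f$): if $f : \MZ \ra \MZ'$ is an ordinary function, then for every $\MS' \subseteq \MZ'$ one has $\P[f(Z_i) \in \MS'] = \P[Z_i \in f^{-1}(\MS')]$, hence by the one-sided characterization
$$
\Delta(f(Z_0), f(Z_1)) = \sup_{\MS' \subseteq \MZ'}\left( \P[Z_0 \in f^{-1}(\MS')] - \P[Z_1 \in f^{-1}(\MS')] \right) \le \sup_{\MS \subseteq \MZ}\left( \P[Z_0 \in \MS] - \P[Z_1 \in \MS] \right) = \Delta(Z_0, Z_1),
$$
simply because the sets of the form $f^{-1}(\MS')$ form a subfamily of all subsets of $\MZ$.

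Step 2 (randomized $f$): I would write $f(z) = g(z,U)$, where $U$ is a random seed independent of $Z_0, Z_1$ and $g$ is deterministic, so that the law of $f(Z_i)$ is the mixture over $u \sim U$ of the laws of $g(Z_i, u)$. For each fixed seed value $u$, the map $z \mapsto g(z,u)$ is deterministic, so Step 1 gives $\Delta\left(g(Z_0,u), g(Z_1,u)\right) \le \Delta(Z_0, Z_1)$. Applying the convexity fact with the mixing done over $U$ then yields
$$
\Delta(f(Z_0), f(Z_1)) \le \E_U\left[ \Delta\left( g(Z_0, U), g(Z_1, U) \right) \right] \le \Delta(Z_0, Z_1).
$$

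I do not expect a genuine obstacle here — this is a textbook fact. The only point needing care is the bookkeeping in Step 2: one must isolate all internal randomness of $f$ into the seed $U$ so that, conditioned on $U = u$, the map is truly deterministic, and one must apply convexity mixing over $U$ (independent of the inputs) rather than over any quantity correlated with $Z_0, Z_1$.
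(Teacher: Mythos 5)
Your proof is correct. Note that the paper does not actually prove this lemma at all -- it is stated ``for convenience'' as a standard fact and used as a black box in the proof of Lemma~\ref{lem:tvd_hist} -- so there is no in-paper argument to compare against; your two-step route (deterministic case via preimages, then randomized case by conditioning on an independent seed and using convexity of $\Delta$) is the standard one and is valid here, since in the paper's usage $\MZ$ is finite so the representation $f(z)=g(z,U)$ with an independent seed $U$ exists without any measurability issues. If you want to avoid the seed bookkeeping entirely, you can argue directly through the kernel: for any $\MS'\subseteq\MZ'$ set $h(z):=\p[f(z)\in\MS']\in[0,1]$, so that $\p[f(Z_0)\in\MS']-\p[f(Z_1)\in\MS']=\E[h(Z_0)]-\E[h(Z_1)]\leq\Delta(Z_0,Z_1)$, since total variation distance upper-bounds the difference of expectations of any $[0,1]$-valued function; taking the supremum over $\MS'$ gives the claim in one line.
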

Using Lemma~\ref{lem:dpi} twice, we will reduce to the case in which $|\MZ| = 3$, in two stages. For the first stage, consider the set $\MZ \sqcup \MZ := \{ (z, b) : z \in \MZ, b \in \{0,1\} \}$, as well as the function $f : \MZ \sqcup \MZ \ra \MZ$, defined by $f((z,b)) = z$. Moreover define distributions $\Pdist', \Qdist'$ on $\MZ \sqcup \MZ$, as follows: for all $z \in \MZ$, we have
\begin{align*}
  \Pdist'((z,0)) &= \Pdist(z) - \min \{ \Pdist(z), \Qdist(z) \}.\\
  \Qdist'((z,0)) &= \Qdist(z) - \min \{ \Pdist(z), \Qdist(z) \}.\\
  \Pdist'((z,1)) &= \min \{ \Pdist(z), \Qdist(z) \}.\\
  \Qdist'((z,1)) &= \min \{ \Pdist(z), \Qdist(z) \}.
\end{align*}
It is immediate from the definition of $\Pdist', \Qdist'$ that $\Delta(\Pdist',\Qdist') = \Delta(\Pdist,\Qdist)$. Moreover, it is clear that if $Z_0, Z_1$ are distributed according to $\Pdist', \Qdist'$, respectively, then $f(Z_0), f(Z_1)$ are distributed according to $\Pdist,\Qdist$, respectively. To describe the effect of applying $f$ to histograms on $\MZ \sqcup \MZ$, we make the following definition:
\begin{defn}[Push-forward histogram]
Consider sets $\MY, \MY'$ together with a (possibly randomized) function $f : \MY \ra \MY'$. For a histogram $h \in \MH_{\MY}$, the {\it push-forward histogram} $f_* h \in \MH_{\MY'}$ (i.e., $f_* h : \MY' \ra \BZ_{\geq 0}$) is defined as follows. If $h$ is expressed as $h = \hist(y_1, \ldots, y_n)$ for $y_1, \ldots, y_n \in \MY$, then $f_*h$ is the (possibly random) histogram given by $\hist(f(y_1), \ldots, f(y_n))$.
\end{defn}
It follows that if $H \in \MH_{\MZ \sqcup \MZ}$ is a random variable distributed according to $R_{\Pdist',\Qdist'}^{\nu, n}$, then the push-forward histogram $f_* H : \MZ \ra \BZ_{\geq 0}$ (which in this case is given by $f_*H(z) = H((z,0)) + H((z,1))$), is distributed according to $R_{\Pdist,\Qdist}^{\nu, n}$. Since $f_*H$ is a (deterministic) function of $H$, it follows from Lemma~\ref{lem:dpi} that
\begin{equation}
  \label{eq:f_reduction}
\Delta\left(R_{\Pdist,\Qdist}^{(1-\gamma)n/2, n}, R_{\Pdist,\Qdist}^{(1+\gamma)n/2, n}\right) \leq \Delta \left( R_{\Pdist',\Qdist'}^{(1-\gamma)n/2, n}, R_{\Pdist',\Qdist'}^{(1+\gamma)n/2, n} \right).
\end{equation}

Next, we define a randomized function $g : \{-1, 0, 1 \} \ra \MZ \sqcup \MZ$, as follows. First make the following definitions:
$$
\rho_0 = \sum_{z \in \MZ} \Pdist'((z,1)) = \sum_{z \in \MZ} \Qdist'((z,1)),  \quad \rho_1 = \sum_{z \in \MZ} \Qdist'((z,0)) = \sum_{z \in \MZ} \Pdist'((z,0)) = 1-\rho_0.
$$
Let $g(0)$ be the distribution over $ \{ (z,1) : z \in \MZ \}$ that assigns to the point $(z,1)$ a mass of $\frac{\Pdist'((z,1))}{\rho_0} = \frac{\Qdist'((z,1))}{\rho_0}$. Let $g(-1)$ be the distribution over $\{ (z,0) : z \in \MZ \}$ that assigns to the point $(z,0)$ a mass of $\frac{\Pdist'((z,0))}{\rho_{1}}$. Finally let $g(1)$ be the distribution over $\{ (z,0) : z \in \MZ \}$ that assigns to the point $(z, 0)$ a mass of $\frac{\Qdist'((z,0))}{\rho_1}$.

Finally define distributions $\Pdist'', \Qdist''$ on $\{ -1, 0 , 1\}$ as follows:
\begin{align*}
  \Pdist''(-1) = \rho_{1}, & \quad \Pdist''(0) = \rho_0, \quad \Pdist''(1) = 0 \\
  \Qdist''(-1) = 0, & \quad \Qdist''(0) = \rho_0, \quad \Qdist''(1) = \rho_1.
\end{align*}
From the \emph{definitions} of $\Pdist'', \Qdist''$ we see that $\Delta(\Pdist'', \Qdist'') = \rho_1 = \Delta(\Pdist,\Qdist)$. Moreover, if $Z_0', Z_1'$ are distributed according to $\Pdist'', \Qdist''$, respectively, then $g(Z_0'), g(Z_1')$ are distributed according to $\Pdist', \Qdist'$. 
Next, let $H' \in \MH_{\{-1,0,1\}}$ be the random histogram distributed according to $R_{\Pdist'',\Qdist''}^{\nu, n}$. Then the push-forward histogram $g_*H' \in \MH_{\MZ \sqcup \MZ}$ is distributed according to $R_{\Pdist',\Qdist'}^{\nu, n}$. It follows from Lemma~\ref{lem:dpi} that
\begin{equation}
  \label{eq:g_reduction}
 \Delta \left( R_{\Pdist',\Qdist'}^{(1-\gamma)n/2, n}, R_{\Pdist',\Qdist'}^{(1+\gamma)n/2, n} \right) \leq \Delta \left( R_{\Pdist'', \Qdist''}^{(1-\gamma)n/2, n}, R_{\Pdist'', \Qdist''}^{(1+\gamma)n/2, n} \right).
\end{equation}
Thus, since $\Delta(\Pdist'', \Qdist'') = \Delta(\Pdist', \Qdist') = \Delta(\Pdist,\Qdist)$, and by (\ref{eq:delta_hist_r}), (\ref{eq:f_reduction}), and (\ref{eq:g_reduction}) it suffices to show that there is some constant $c'$ such that, assuming $\Delta \left( R_{\Pdist'', \Qdist''}^{(1-\gamma)n/2, n}, R_{\Pdist'', \Qdist''}^{(1+\gamma)n/2, n} \right) \geq c$, it follows that $1 - \rho_0 =\Delta(\Pdist'', \Qdist'') \geq 1 - c' \gamma^2 n$.

Let random variables $H^{(1-\gamma)n/2}, H^{(1+\gamma)n/2}$ be distributed according to $R_{\Pdist'', \Qdist''}^{(1-\gamma)n/2, n}$ and $R_{\Pdist'', \Qdist''}^{(1+\gamma)n/2, n}$, respectively. Since $H^{(1\pm \gamma)n/2}$ are each histograms of $n$ elements, they are completely determined by their values on $-1$ and $1$. Next note that the distribution of the tuple $(H^{(1-\gamma)n/2}(-1), H^{(1-\gamma)n/2}(1))$ is the distribution of $(Bin((1-\gamma)n/2, 1-\rho_0), Bin((1+\gamma)n/2, 1-\rho_0))$, where the two binomial random variables are independent. Similarly, the distribution of the tuple $(H^{(1+\gamma)n/2}(-1), H^{(1+\gamma)n/2}(1))$ is the distribution of $(Bin((1+\gamma)n/2, 1-\rho_0), Bin((1-\gamma)n/2, 1-\rho_0))$. To upper bound
$$
\Delta ( (H^{(1-\gamma)n/2}(-1), H^{(1-\gamma)n/2}(1)), (H^{(1+\gamma)n/2}(-1), H^{(1+\gamma)n/2}(1)))
$$
it therefore suffices to upper bound $\Delta\left( Bin((1-\gamma)n/2, 1-\rho_0), Bin((1+\gamma)n/2, 1-\rho_0)\right)$. To do this, we use \cite[Theorem 2, Eq. (15)]{roos}, which implies that as long as $\gamma \geq 1/n$ and $\frac{3 \gamma^2 n(1-\rho_0)}{\rho_0} < 1/4$, we have that
\begin{equation}
  \label{eq:delta_use}
\Delta\left( Bin((1-\gamma)n/2, 1-\rho_0), Bin((1+\gamma)n/2, 1-\rho_0)\right) \leq 2\sqrt{e} \cdot \sqrt{\frac{3\gamma^2 n(1-\rho_0)}{\rho_0}}
\end{equation}
If $\frac{3\gamma^2 n(1-\rho_0)}{\rho_0} \geq 1/4$ then we get $\rho_0 \leq 12 \gamma^2 n$. Otherwise, (\ref{eq:delta_use}) together with $\Delta \left( R_{\Pdist'', \Qdist''}^{(1-\gamma)n/2, n}, R_{\Pdist'', \Qdist''}^{(1+\gamma)n/2, n} \right) \geq c$ gives us that $\frac c2 \leq 2\sqrt{e} \cdot \sqrt{3\gamma^2 n/\rho_0}$, meaning that $\rho_0 \leq O(\gamma^2 n / c^2)$. In particular, for some constant $c' = \Theta(1/c^2)$, we have shown that $1-\rho_0 \geq 1-c'\gamma^2 n$. 
\if 0
\begin{align}
  & \Delta\left( Bin((1-\gamma)n/2, 1-\rho_0), Bin((1+\gamma)n/2, 1-\rho_0)\right)\nonumber\\
  & =  \P\left[Bin((1-\gamma)n/2, 1-\rho_0) \leq n/2\right] - \P\left[ Bin((1+\gamma)n/2, 1-\rho_0) \leq n/2 \right] \nonumber\\
  & = 1 - 2 \cdot \P\left[Bin((1+\gamma)n/2, 1-\rho_0) \leq n/2 \right] \nonumber\\
  \label{eq:clt}
  & \leq 1 - 2 \cdot \frac 12\exp\left( - \Theta \left( \frac{(\gamma n)^2}{\rho_0 \cdot n} \right) \right)\\
  & \leq 1 - \left( 1 - \Theta(\gamma^2 n /\rho_0) \right) \nonumber\\
  & = \Theta(\gamma^2 n /\rho_0)\nonumber,
\end{align}

where (\ref{eq:clt}) follows from the Berry-Esseen Theorem (a quantitative version of the Central Limit Theorem---see, e.g., \cite{odonnell-booleanfunc14} for details). The result follows by noting that we must have $\Theta(\gamma^2 n / \rho_0) \geq c$, i.e., $\rho_0 \leq \Theta(\gamma^2 n /c)$. Hence we may choose $c' = \Theta(1/c)$. 
\fi
\if 0
    Let $d = 1 - \Delta(\Pdist,\Qdist)$. Our goal is to show there is some constant $c'$ such that $d \leq c' \gamma^2 n$. Let $\MS \subset \MZ$ be the set
    $$
    \MS := \{ z \in \MZ : \Pdist(z) \geq \Qdist(z) \}.
    $$
    It is clear that $\Pdist(\MS) \geq 1-d$, $\Qdist(\MS) \leq d$, and $\Pdist(\MS) \geq \Qdist(\MS)$; moreover, by flipping the roles of $\Pdist,\Qdist$ if necessary, we may further assume without loss of generality that $d/2 \leq \Qdist(\MS)$.
    Also let $p = \Pdist(\MS), q = \Qdist(\MS)$, so that $(p+q)/2 \geq 1/2$.

    Next, notice that for any $m > 0$, we may subdivide each element $z$ of $\MZ$ into $m$ ``new'' elements, and split the masses $\Pdist(z), \Qdist(z)$ evenly into $m$ parts for each of the new elements. Doing so clearly does not alter $\Delta(\Pdist,\Qdist)$. Moreover, by the data processing inequality for total variation distance, it does not decrease either the total variation distance in (\ref{eq:delta_hist}). \todo{We may therefore assume, by choosing $m$ to be sufficiently large, that with all but negligible probability, $Z_1, \ldots, Z_n$ are all distinct, as are $W_1, \ldots, W_n$.}

    For some constants $1 \geq b > a \geq 0$ to be chosen later, let $T_0 \subset \MZ^n$ be the set consisting of tuples $(z_1, \ldots, z_n)$ such that at least $\left(\frac{p+q}{2} \right) \cdot n + \frac{a\gamma n}{2} \cdot (1-d)$ and at most $\left( \frac{p+q}{2} \right) \cdot n + \frac{b\gamma n}{2} \cdot (1-d)$ of the $z_i$ are in $\MS$.
    It follows by \todo{Bernoulli anti-concentration bounds} that
    $$
\P[(Z_1, \ldots, Z_{(1-\gamma)n/2}, W_{(1-\gamma)n/2 + 1}, \ldots, W_n) \in T_0] \asymp \exp \left( - \Theta \left(\frac{n\gamma^2}{d} \right) \right).
    $$
    Let $T = \{ \hist((z_1, \ldots, z_n)) : (z_1, \ldots, z_n) \in T_0 \}$. Our next goal is to show that
    \begin{equation}
      \label{eq:stretch_home}
\sum_{h \in T} \left[ \P[\hist(Z_1, \ldots, Z_{(1-\gamma)n/2}, W_{(1-\gamma)n/2 + 1}, \ldots, W_n) = h] - \P[ \hist(Z_1, \ldots, Z_{(1+\gamma)n/2}, W_{(1+\gamma)n/2 + 1}, \ldots, W_n) = h] \right]_+
\end{equation}
is small. \todo{By our assumption above} we only have to consider histograms $h \in T$ that are functions $h : \MZ \ra \{0,1\}$. Next, for any $\nu \in [n]$, let the distribution of $\hist(Z_1, \ldots, Z_\nu, W_{\nu + 1}, \ldots, W_n)$ be denoted by $R_\nu$. Thus the distribution $\hist(Z_1, \ldots, Z_{(1-\gamma)n/2}, W_{(1-\gamma)n/2 + 1}, \ldots, W_n)$ is $R_{(1-\gamma)n/2}$ and the distribution $\hist(Z_1, \ldots, Z_{(1+\gamma)n/2}, W_{(1+\gamma)n/2 + 1}, \ldots, W_n)$ is $R_{(1+\gamma)n/2}$. For any $h : \MZ \ra \{0,1\}$, note that
$$
R_\nu(h) = \left( \frac{n(1-\gamma)}{2}\right)! \left( \frac{n(1+\gamma)}{2} \right)! \sum_{h_1, h_2 : \MZ \ra \{0,1\}, h = h_1 + h_2, \sum_z h_1(z) = \nu, \sum_z h_2(z) = n-\nu} \prod_{z \in \MZ} \Pdist(z)^{h_1(z)} \Qdist(z)^{h_2(z)}.
$$
\todo{Next, note that there is a bijection $B$ between those subsets $D \subset [n]$ of size $|D| = n(1-\gamma)/2$ and those subsets $D \subset [n]$ of size $|D| = n(1+\gamma)/2$ so that for each $D$ with $|D| = n(1-\gamma)/2$, $D \subset B(D)$.} In other words, for each partition $h = h_1 + h_2$ with $\sum_z h_1(z) = (1-\gamma)n/2$ and $\sum_z h_2(z) = (1+\gamma)n/2$, we may associate it uniquely with another partition $h = h_1' + h_2'$ with $\sum_z h_1'(z) = (1-\gamma)n/2$ and $\sum_z h_2'(z) = (1+\gamma)n/2$.
\fi
\end{proof}

Now we are ready to prove Lemma~\ref{lem:lb_tvd} using Lemma~\ref{lem:tvd_hist}.
\begin{proof}[Proof of Lemma~\ref{lem:lb_tvd}]
By increasing $\alpha$ by at most a constant factor we may ensure that $(1 \pm \alpha)n/2$ are integers. We define two distributions of inputs, $D_1$ and $D_2$. For $D_1$, exactly $(1-\alpha)n/2$ of the $x_i$ are set to $v$ and the remaining $(1+\alpha)n/2$ of the $x_i$ are drawn uniformly from $[\B]$. For $D_2$, exactly $(1+\alpha)n/2$ of the $x_i$ are set to $v$ and the remaining $(1-\alpha)n/2$ of the $x_i$ are drawn uniformly from $[\B]$. Under both $D_1$ and $D_2$, the subset of which users are chosen to have their $x_i$ fixed to $v$ is chosen uniformly at random.

  By the Chernoff bound, for $\lambda \geq 0$, as long as $\alpha \leq 1$, we have that
  \begin{eqnarray}
    \label{eq:chernoff_hist}
    \P_{(x_1, \ldots, x_n) \sim D_1} \left[ \sum_{i=1}^n (e_{x_i})_v \geq (1-\alpha)n/2 + (1+\lambda) (1+\alpha)n/(2\B) \right] &\leq&  \exp(-n\min\{\lambda,\lambda^2\} / (3\B)).\\
    \label{eq:chernoff_hist-2}
    \P_{(x_1, \ldots, x_n) \sim D_2} \left[ \sum_{i=1}^n (e_{x_i})_v \leq (1+\alpha) n/2 + (1-\lambda)(1-\alpha) \cdot n/(2\B) \right] & \leq & \exp(-n\min\{\lambda,\lambda^2\} / (3\B)).
  \end{eqnarray}

  Choose $\lambda$ so that $\exp(-n\min\{\lambda,\lambda^2\}/(3\B)) = (1-2\beta)/4$, i.e., $\min\{\lambda,\lambda^2\}/\B = 3/n \cdot \log(4/(1-2\beta))$. Explicitly, we have:
  $$
\lambda =
\max \left\{
  \frac{3 \B \log(4/(1-2\beta))}{n}, 
  \sqrt{\frac{3 \B \log(4/(1-2\beta))}{n}} 
  \right\}.
  $$

  We now consider two possible cases for $\lambda$:
  
{\bf Case 1.}  In the case that $\lambda = 3\B \log(4/(1-2\beta))/n$, we have $\lambda \geq 1$. Moreover, since $2\lambda /\B = 6 \log(4/(1-2\beta))/n \leq \alpha/2$ (by (\ref{eq:constrain_alpha})), the fact that $R$ is $(\alpha/6, \beta)$-accurate implies that $R$ is $(\frac 13 \cdot (\alpha - 2\lambda /\B), \beta)$-accurate. In turn it follows that $R$ is $(\frac 13 \cdot (\alpha - (\lambda + \alpha)/\B), \beta)$-accurate.

{\bf Case 2.} In the case that $\lambda = \sqrt{3\B \log(4/(1-2\beta))/n}$, we have $\lambda \leq 1$. Since $\lambda / \B = \sqrt{3 \log(4/(1-2\beta))/(n\B)} \leq \alpha/4 \leq \alpha/2 - \alpha/\B$ (by (\ref{eq:constrain_alpha})), the fact that $R$ is $(\alpha/6, \beta)$-accurate implies that $R$ is $(\frac 13 \cdot (\alpha - (\lambda + \alpha)/\B), \beta)$-accurate.

$\newline$
In both cases, it follows that, by Definition~\ref{def:accuracy}, there exists some analyzer $A : \MZ^n \ra [0,1]$ so that, for any dataset $X = (x_1, \ldots, x_n)$, with probability $1-\beta$ over the local randomizers, $\left|A(R(x_1), \ldots, R(x_n)) - \frac 1n \sum_{i=1}^n (e_{x_i})_v \right| \leq  \frac 13 \cdot (\alpha - (\la+\alpha)/\B)$. (Here we are only using accuracy on the $v$th coordinate.) Define $f : [0,1] \ra \{0,1\}$ by $f(x) = 1$ if $x \geq \frac 12 + \frac{(1+ \la \alpha)n}{2\B}$ and $f(x) = 0$ otherwise. Using (\ref{eq:chernoff_hist}) and (\ref{eq:chernoff_hist-2}) it follows that
\begin{equation}
  \label{eq:d12_tv}
\E_{(x_1, \ldots, x_n) \sim D_2} \left[ f(A(R(x_1), \ldots, R(x_n))) \right] - \E_{(x_1, \ldots, x_n) \sim D_1} \left[ f(A(R(x_1), \ldots, R(x_n))) \right] \geq 1-2\beta - 2\exp(-n\lambda / (12\B)).
\end{equation}
For $b \in \{1,2\}$, let $H_b : \MZ \ra \BZ_{\geq 0}$ denote the random variable that is the histogram of the $R(x_i)$ when $(x_1, \ldots, x_n)$ are drawn according to $D_b$, and let $\tilde D_b$ be the distribution of $H_b$.  For a histogram $h : \MZ \ra \BZ_{\geq 0}$ with $\sum_{z \in \MZ} h(z) = n$, define $G(h) \in \MZ^n$ to be the random variable obtained by permuting uniformly at random the multiset consisting of $h(z)$ copies of $z$ for each $z \in \MZ$. Note that for $b \in \{1,2\}$, the distribution of $G(H_b)$ is exactly the distribution of $(R(x_1), \ldots, R(x_n))$ when $(x_1, \ldots, x_n) \sim D_b$. It follows from (\ref{eq:d12_tv}) and our choice of $\la$ that
$$
\E_{H_2 \sim \tilde D_2} \left[ f(A(G(H_2))) \right] - \E_{H_1\sim \tilde D_1} \left[ f(A(G(H_1))) \right] \geq \frac{1-2\beta }{2}.
$$
  Hence $\Delta(\tilde D_1, \tilde D_2) \geq 1-2\beta - 2 \exp(-n\lambda/(12\B))$. 

  By definition, $\tilde D_1$ is the distribution of $\hist(Z_1, \ldots, Z_{n(1-\alpha)/2}, W_{n(1-\alpha)/2+1}, \ldots, W_n)$ and $\tilde D_2$ is the distribution of $\hist(Z_1, \ldots, Z_{n(1+\alpha)/2}, W_{n(1+\alpha)/2+1}, \ldots, W_n$ when $Z_1, \ldots, Z_n \sim \PP_v$ iid, and $W_1, \ldots, W_n \sim \QQ$ iid. 
By Lemma~\ref{lem:tvd_hist} with $\Pdist = \PP_v, \Qdist = \QQ$, we must have that $\Delta(\PP_v, \QQ) \geq 1 - c \alpha^2 n$ for some $c = \Theta\left( \frac{1}{(1-2\beta)^2} \right)$. 
\end{proof}

\subsection{Lower Bounds for Single-Message Selection}
\label{sec:selection_lb}
In this section we prove Theorem~\ref{th:selection_single_message}, stated formally below:
\begin{theorem}[Nearly tight lower bound for single-message shuffled model selection]
  \label{thm:selection_formal}
Suppose $n, \B \in \BN$ and $\ep \leq O(1)$ and $\delta \leq o(1/(n\B))$. Any single-message shuffled model protocol that is $(\ep, \delta)$-\DP and solves the selection problem with $n$ users and with probability at least $4/5$ must have $n \geq \Omega \left(\B \right)$. 
\end{theorem}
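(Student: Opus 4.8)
\medskip
\noindent\textbf{Proof plan.}
The plan is to mirror the template used for frequency estimation: reduce to the local model, establish a mutual information upper bound tailored to the low-privacy regime $\ep_L \approx \ln n$, and conclude via Fano's inequality. First I would invoke Lemma~\ref{lem:pratio}: any $(\ep,\delta)$-\DP\ single-message shuffled protocol with $\ep = O(1)$ is $(\ep_L,\delta_L)$-locally \DP\ with $\ep_L = \ep + \ln n = \ln n + O(1)$ and $\delta_L = \delta = o(1/(n\B))$, with identical behavior on every input. So it suffices to rule out local-model protocols solving selection with probability $4/5$ when $n = o(\B)$, the case $n \ge \B$ being vacuous.

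Next I would fix the hard distribution. Let $L \sim \Unif(\BB)$, $J \sim \Unif([\B])$, and let $X_{L,J}$ be uniform over the subcube $\{ x \in \BB^\B : x_J = L\}$; the users receive i.i.d.\ copies $X_1,\dots,X_n$ of $X_{L,J}$ conditioned on $(L,J)$. A Chernoff estimate shows that coordinate $J$ has empirical count $\approx n$ when $L=1$ and $0$ when $L=0$, whereas every other coordinate has count $\approx n/2$; hence any protocol solving selection with probability $4/5$ outputs $j^* = J$ with probability $\ge 4/5 - o(1)$ when $L=1$, so the guess $(1,j^*)$ recovers $(L,J)$ with probability bounded below by an absolute constant. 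Combining the chain rule for mutual information with Fano's inequality (Lemma~\ref{lem:fano}) then yields
$$
\Omega(1) \;\le\; \frac{n\cdot I\big((L,J);R(X_{L,J})\big) + 1}{\log(2\B)},
$$
so the theorem reduces to the single-randomizer bound
$$
I\big((L,J);R(X_{L,J})\big) \;\le\; \tilde O(1/\B) + O\big(\delta_L(\B+n)\big),
$$
which, given $\ep_L = \ln n + O(1)$, $\delta_L = o(1/(n\B))$ and $n \le \B$, forces $n = \Omega(\B)$.

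For the mutual information bound I would, for each message $z$, introduce the density ratio $f_z(x) := \p[R(x)=z]/\p[R(X_{L,J})=z]$ on $\BB^\B$, a nonnegative function of mean $1$. Following Ullman~\cite{ullman2018tight}, I would expand $I\big((L,J);R(X_{L,J})\big) = \E_{z\sim R(X_{L,J})}\big[\KL(\mu_z \,\|\, \text{prior})\big]$, where the posterior on $(l,j)$ is proportional to $1 + (-1)^l\hat f_z(\{j\})$; using $\log(1+t) = t - t^2/2 + O(t^3)$ and the cancellation of linear terms under $l\mapsto 1-l$, this equals $\Theta(1/\B)\cdot\E_z[\bW^1[f_z]]$ up to lower-order corrections, so everything comes down to bounding $\E_z[\bW^1[f_z]]$. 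Parseval alone only gives $\bW^1[f_z]\le\|f_z\|_2^2\le e^{2\ep_L}=\Theta(n^2)$, too weak by a polynomial in $n$. The new ingredient I would use is truncation: set $g_z(x):=f_z(x)\cdot\One[f_z(x)\le Cn]$ for an appropriate constant $C$. Since $(\ep_L,\delta_L)$-local DP gives $f_z(x)\le e^{\ep_L}+\delta_L/\p[R(X_{L,J})=z]$ and $e^{\ep_L}=\Theta(n)$, the functions $f_z$ and $g_z$ differ only on inputs where $\p[R(X_{L,J})=z]$ is tiny, and summing the resulting discrepancies over $z$ against the $\Theta(1/\B)$ Fourier weighting contributes only $O(\delta_L(\B+n))$. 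For the truncated function, $g_z/(Cn)$ is $[0,1]$-valued with mean $\approx 1/(Cn)$, so the Level-$1$ inequality (Theorem~\ref{thm:level1}) gives $\bW^1[g_z/(Cn)]=O(n^{-2}\log n)$, hence $\bW^1[g_z]=O(\log n)=O(\ep_L)$. Combining, $I\big((L,J);R(X_{L,J})\big)\le O(\ep_L/\B)+O(\delta_L(\B+n))$, as required.

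I expect the truncation step to be the main obstacle: one has to isolate precisely the set of ``bad'' pairs $(z,x)$ on which $f_z$ exceeds $Cn$, track their contribution through both the Fourier expansion and the $\KL$ computation so that the accumulated error genuinely collapses to $O(\delta_L(\B+n))$, and verify that the Level-$1$ inequality is applied with the correct normalization and the right bound on $\E[g_z]$. By contrast, the reduction to the local model via Lemma~\ref{lem:pratio}, the Chernoff bounds defining the hard instance, and the final Fano/chain-rule bookkeeping (including the trivial $n\ge\B$ case) should be routine.
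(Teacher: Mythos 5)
Your proposal is correct and follows essentially the same route as the paper: reduction to the local model via Lemma~\ref{lem:pratio}, the subcube hard distribution with Fano/chain-rule bookkeeping (the paper's Lemma~\ref{lem:selection_amax}), and the mutual information bound $O(\ep_L/\B) + O(\delta_L(\B + e^{\ep_L}))$ proved by truncating $f_z$ at $\Theta(e^{\ep_L})$, charging the tail to $O(\delta_L)$ via the DP guarantee, and applying the Level-$1$ inequality to the truncated ratio (the paper's Lemma~\ref{lem:izbj}). The only cosmetic differences are that the paper controls the KL with an exact elementary inequality (Fact~\ref{fac:gen_pinsker}) rather than a Taylor expansion, and applies Theorem~\ref{thm:level1} after shifting $g_z$ to have mean exactly $1$ rather than rescaling by $Cn$; both yield the same $\bW^1 = O(\ep_L)$ bound.
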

As we did for frequency estimation in the single-message shuffled model, we will prove Theorem~\ref{thm:selection_formal} by appealing to Lemma~\ref{lem:pratio} and proving an analogous statement for $(\ep_L, \delta_L)$-\DP\ {\it local-model} protocols where the privacy parameter $\ep_L$ is approximately $\ln n$. The proof is similar in structure to that of \cite{ullman2018tight}, established a lower bound on $n$ which is tight in the case that $\ep_L = O(1)$ and $\delta_L = 0$.

We begin by defining a distribution under which we shall show selection to be hard. 
\begin{defn}[Distributions $\DD_{ \eell, j}$]
  Fix $\B \in \BN$. For $\ell \in \{0,1\}, j \in [\B]$, let the distribution $\DD_{\ell, j}$ be the uniform distribution on the subcube $\{ x \in \BB^\B : x_j = \ell \}$.

  Let $\bar \DD$ denote the joint distribution of $(L, J, X)$, where $(L, J) \sim \Unif(\BB \times [\B])$ and conditioned on $L,J$, $X \sim \DD_{L, J}$. 
  
\end{defn}

\begin{defn}[Distributions $\PP_{\eell, j}$]
  \label{def:selection_PP}
  Next suppose that for some finite set $\MZ$, $\Rprot : \BB^\B \ra \MZ$ is a fixed local randomizer. Let $\PP_{\eell, j}$ be the distribution of $\Rprot(X)$ when $X \sim \DD_{\eell, j}$. 
  Let $Q$ be the distribution $\frac{1}{2\B} \sum_{j \in [\B], \ell \in \{0,1\}} \PP_{\ell, j}$. Note that $Q$ is the distribution of $R(X)$ when $X \sim \Unif(\BB^\B)$.
\end{defn}

To prove Theorem~\ref{thm:selection_formal}, we first establish Lemma~\ref{lem:selection_amax} below, which applies to any protocol that is differentially private in the {\it local model} of differential privacy (with a large privacy parameter $\ep_L$):
\begin{lemma}
  \label{lem:selection_amax}
For a sufficiently small positive constant $c$, the following holds. Suppose $R : \BB^\B \ra \MZ$ is an $(\ep_L, \delta_L)$-(locally) differentially private protocol with $\delta_L \leq \frac{c}{n(\B + \exp(\ep_L))}$. Moreover suppose that $\Aprot : \MZ^n \ra \{ 0, 1 \} \times [\B]$ is a function so that, if $\EELL \sim \{ 0, 1\}, J \sim [\B]$ are uniform and independent, then
\begin{equation}
  \label{eq:1/3contradict}
\P_{\EELL, J, X_1, \ldots, X_n \sim (\DD_{\EELL,J} | \EELL, J)} \left[ \Aprot(\Rprot(X_1), \ldots, \Rprot(X_n)) = (\EELL,J) \right] \geq \frac 13.
\end{equation}
Then
$$
n \geq \frac{c \B \log \B}{1+ \ep_L}.
$$
\end{lemma}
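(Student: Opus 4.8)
The plan is to run the information-theoretic template of Lemma~\ref{lem:real_const}, but with a mutual-information bound that comes purely from privacy via Fourier analysis on $\BB^\B$, in the spirit of Ullman~\cite{ullman2018tight} but with a sharper final step. Assume $\Aprot$ satisfies~\eqref{eq:1/3contradict}. Since $\Aprot(\Rprot(X_1),\dots,\Rprot(X_n))$ is a randomized function of $(\Rprot(X_1),\dots,\Rprot(X_n))$, Fano's inequality (Lemma~\ref{lem:fano}, with $\MZ=\{0,1\}\times[\B]$ so $\log|\MZ|=\log(2\B)$) together with data processing gives $\tfrac13 \le \tfrac{I((\EELL,J);\,\Rprot(X_1),\dots,\Rprot(X_n))+1}{\log(2\B)}$. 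Since $X_1,\dots,X_n$ are i.i.d.\ conditioned on $(\EELL,J)$, the messages $\Rprot(X_i)$ are conditionally independent given $(\EELL,J)$, so a standard chain-rule computation gives $I((\EELL,J);\Rprot(X_1),\dots,\Rprot(X_n)) \le n\cdot I((\EELL,J);\Rprot(X))$, where $(\EELL,J,X)\sim\bar\DD$. Hence everything reduces to the single-user bound
\begin{equation*}
I\big((\EELL,J);\Rprot(X)\big) \;\le\; C\Big(\tfrac{1+\ep_L}{\B}+\delta_L(\B+e^{\ep_L})\Big)
\end{equation*}
for an absolute constant $C$: granting this bound, the hypothesis $\delta_L\le \tfrac{c}{n(\B+e^{\ep_L})}$ makes $n\cdot C\delta_L(\B+e^{\ep_L})\le Cc$, which for $c$ small is at most half of $\log(2\B)/3-1$, so $n\cdot C(1+\ep_L)/\B=\Omega(\log\B)$ and thus $n=\Omega\big(\tfrac{\B\log\B}{1+\ep_L}\big)$ (for $\B$ below an absolute constant the claimed bound is vacuous once $c$ is small enough).

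To prove it I would use Fourier analysis of Boolean functions. For each message $z$ with $q_z:=Q(z)>0$ put $f_z(x):=\pp_{x,z}/q_z$, so $f_z\ge 0$ and $\hat f_z(\nn)=\E_{x\sim\Unif(\BB^\B)}[f_z(x)]=1$. Averaging any function over the subcube $\{x:x_j=\eell\}$ annihilates every Fourier coefficient except those indexed by subsets of $\{j\}$, so $\PP_{\eell,j}(z)=q_z\big(1+(-1)^{\eell}\hat f_z(\{j\})\big)$, and writing $\phi(a):=\tfrac12\big[(1+a)\log(1+a)+(1-a)\log(1-a)\big]$ (which satisfies $\phi(a)\le a^2$ for $|a|\le1$), one gets
\begin{equation*}
I\big((\EELL,J);\Rprot(X)\big) = \E_{(\EELL,J)}\big[\KL(\PP_{\EELL,J}\,\|\,Q)\big] = \frac{1}{\B}\sum_{j\in[\B]}\sum_z q_z\,\phi\big(\hat f_z(\{j\})\big) \le \frac{1}{\B}\sum_z q_z\,\bW^1[f_z].
\end{equation*}
If $\Rprot$ were $(\ep_L,0)$-DP then $f_z\le e^{\ep_L}$, so Parseval gives $\bW^1[f_z]\le\E[f_z^2]\le e^{\ep_L}$ and, with $\sum_z q_z=1$, the bound $O(e^{\ep_L}/\B)$ of~\cite{ullman2018tight}; this is far too weak when $\ep_L\approx\ln n$.

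To obtain the exponentially better dependence I would: (i) truncate, setting $g_z:=f_z\cdot\One[f_z\le M]$ with $M=\Theta(e^{\ep_L})$, a nonnegative function with $\E[g_z]\le 1$; (ii) apply the Level-$1$ inequality (Theorem~\ref{thm:level1}) to $g_z$ and average against $q_z$, using $\|g_z\|_\infty\le M$, $\E[g_z]\le 1$, and $\sum_z q_z\E[g_z]\le\sum_z q_z\E[f_z]=1$, to get $\frac{1}{\B}\sum_z q_z\bW^1[g_z]=O((1+\ep_L)/\B)$; and (iii) control the discarded mass $\eta_z:=\E[f_z-g_z]$, using $(\ep_L,\delta_L)$-privacy together with the identity $\tfrac12(\PP_{0,j}+\PP_{1,j})=Q$ (valid for every $j$, which yields in particular $\PP_{\eell,j}(z)\le 2q_z+\delta_L$), to show $\sum_z q_z\eta_z=O(\delta_L(\B+e^{\ep_L}))$. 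Combining these via $\hat f_z(\{j\})=\hat g_z(\{j\})+\widehat{(f_z-g_z)}(\{j\})$, the estimate $|\widehat{(f_z-g_z)}(\{j\})|\le\eta_z$, $(a+b)^2\le 2a^2+2b^2$, and $\eta_z^2\le\eta_z$, gives $\frac{1}{\B}\sum_z q_z\bW^1[f_z]\le\frac{2}{\B}\sum_z q_z\bW^1[g_z]+2\sum_z q_z\eta_z=O\big(\tfrac{1+\ep_L}{\B}+\delta_L(\B+e^{\ep_L})\big)$, which is the required single-user bound.

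I expect step (iii) to be the main obstacle. A pathological but $(\ep_L,\delta_L)$-private randomizer can have $f_z(x)$ as large as $2^\B$ for some $x,z$, so neither Markov's inequality on $f_z$ nor group privacy across the cube controls $\eta_z$; one has to exploit the combinatorial structure of $\bar\DD$ — the identity $\tfrac12(\PP_{0,j}+\PP_{1,j})=Q$, and the fact that a sample from $\DD_{0,j}$ and one from $\DD_{1,j}$ differ in a single coordinate — via a dyadic decomposition of the level sets $\{f_z>2^kM\}$ and a double-counting argument over the edges of the Hamming cube (together with a harmless preprocessing step merging the globally rarest messages so that $|\MZ|$ stays controlled). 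The remaining ingredients — Fano, the chain rule, the Fourier identity, and the Level-$1$ inequality — are routine.
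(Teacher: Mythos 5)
Your architecture is the same as the paper's: Fano plus the chain rule reduce everything to the single-user bound $I((\EELL,J);R(X)) \le O\bigl(\tfrac{1+\ep_L}{\B}+\delta_L(\B+e^{\ep_L})\bigr)$, which is then proved by writing $f_z(x)=\pp_{x,z}/q_z$, truncating at $\Theta(e^{\ep_L})$, applying the Level-$1$ inequality to the truncation, and using $(\ep_L,\delta_L)$-privacy to control the discarded part; the paper's Lemmas~\ref{lem:selection_amax} and~\ref{lem:izbj} do exactly this. Your bookkeeping differs cosmetically (you pass through the exact identity $I=\tfrac1\B\sum_j\sum_z q_z\,\phi(\hat f_z(\{j\}))$ and the quadratic bound $\phi(a)\le a^2$, then split $\bW^1[f_z]\le 2\bW^1[g_z]+2\bW^1[h_z]$, whereas the paper uses Fact~\ref{fac:gen_pinsker} and carries a cross term $\hat g_z\hat h_z$ plus a $\log(\E[f_z])\le\B$ factor), and one small point needs the paper's fix: Theorem~\ref{thm:level1} requires mean exactly $1$, so shift $g_z$ by the constant $1-\hat g_z(\nn)$ (which leaves $\bW^1$ unchanged) rather than rescaling.

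The only real issue is your step (iii), which you leave unproved and flag as the main obstacle, proposing a dyadic decomposition of level sets and a double-counting over Hamming-cube edges. None of that is needed, and the pointwise size of $f_z$ is a red herring: you never need to control $h_z$ pointwise, only the $q_z$-weighted average over $z$ for each fixed $x$. Concretely, for each $x$ set $\MT_x:=\{z\in\MZ:\pp_{x,z}>2e^{\ep_L}q_z\}$; averaging the DP inequality $\pp_{x,\MS}\le e^{\ep_L}\pp_{y,\MS}+\delta_L$ over $y$ gives $\pp_{x,\MT_x}\le e^{\ep_L}q_{\MT_x}+\delta_L\le \tfrac12\pp_{x,\MT_x}+\delta_L$, hence $\pp_{x,\MT_x}\le 2\delta_L$ uniformly in $x$, and therefore
\begin{equation*}
\sum_{z\in\MZ} q_z\,\eta_z \;=\; \sum_{z\in\MZ} q_z\,\E_{x\sim\Unif(\BB^\B)}[h_z(x)] \;=\; \E_{x\sim\Unif(\BB^\B)}\bigl[\pp_{x,\MT_x}\bigr]\;\le\;2\delta_L .
\end{equation*}
This is exactly how the paper closes the argument, and plugged into your decomposition (via $\bW^1[h_z]\le\B\eta_z^2\le\B\eta_z$) it even yields the slightly cleaner bound $O\bigl(\tfrac{1+\ep_L}{\B}+\delta_L\bigr)$, which is more than enough for the stated conclusion. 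So the proposal is correct in outline and essentially the paper's proof; the missing piece has a two-line argument, and the heavy combinatorial machinery you anticipated (which resembles what the paper needs for the frequency-estimation bound in Lemma~\ref{lem:intermed_mutinf}) is unnecessary here.
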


Theorem~\ref{thm:selection_formal} is a straightforward consequence of Lemma~\ref{lem:selection_amax} and Lemma~\ref{lem:pratio} (see \cite{ullman2018tight}). We provide the proof for completeness.
\begin{proof}[Proof of Theorem~\ref{thm:selection_formal}]
Let $c_0 \in (0,1)$ be a sufficiently small positive constant to be specified later.  Suppose for the purpose of contradiction that $P_S = (R,S,A)$ is an $(\ep, \delta)$-\DP\ single-message shuffled model protocol that solves the selection problem with $n < c_0 \B$ users and probability at least $4/5$. By Lemma \ref{lem:pratio}, $P_L := (R,A)$ is an $(\ep + \ln n, \delta)$-locally \DP protocol that solves the selection problem with $n$ users and probability at least $4/5$.

\if 0
  Before continuing we generalize the selection problem as follows: we say that a protocol $P : (\BB^\B)^n \ra [\B]$ solves the selection problem {\it with error $\alpha$} and probability $1-\beta$ if $P(x_1, \ldots, x_n)$ outputs some index $j^*$ such that, with probability $1-\beta$,
  $$
\sum_{i=1}^n (x_i)_{j^*} \geq \max_{j \in [\B]} \sum_{i=1}^n (x_i)_j - n\alpha.
$$
(In our original definition of the selection problem (\ref{eq:variable_selection_def}) we used error $\alpha = 1/10$.)

Now we make the following observation:
\begin{claim}
\label{clm:selection_reduce}
For any constants $0 < \alpha_0 <\alpha_1 < 1/2$, $N_0 \in \BN$ such that the following holds. If $P = (R,A)$ is $(\ep_L, \delta_L)$-locally \DP and solves the selection problem with error $\alpha_0$ and $n_0 > N_0 \log^2 \B$ users with probability at least $1-\beta$, then for any $a > 1$, $n = a  n_0$, there is a protocol $P' = (R',A')$ which is $(\ep_L, \delta_L/a)$-locally \DP and solves the selection problem with error $\alpha_1$, $n$ users, and with probability at least $9(1-\beta)/10$.
\end{claim}
\fi

It follows by a Chernoff bound and a union bound that if $J \sim \Unif([\B])$ and $X_1, \ldots, X_n \sim \DD_{1,J} | J$, then
\begin{equation}
  \label{eq:jj}
\p_{J, X_1, \ldots, X_n \sim  \DD_{1,J} | J}\left[A(R(X_1), \ldots, R(X_n)) = J\right] \geq 3/4
\end{equation}
as long as $n \geq \Omega(\sqrt{\log \B})$. (In particular, we can guarantee that with probability at least $1-1/20$, for all $j' \neq J$, $\sum_{i=1}^n (X_i)_{j'} < \sum_{i=1}^n (X_i)_J - n/10 = 9n/10$.)

It follows from (\ref{eq:jj}) that if $L \sim \Unif(\BB)$ is independent of $J$,
\begin{equation}
  \p_{J, X_1, \ldots, X_n \sim  \DD_{L,J} | J}\left[A(R(X_1), \ldots, R(X_n)) = (J,L)\right] \geq 3/8\nonumber.
\end{equation}
The above equation is a contradiction to Lemma \ref{lem:selection_amax} in light of the fact that $R$ is $(\ep + \ln n, \delta)$-\DP, $n <c_0 \B <  \frac{c \B\log \B}{1 + \ep + \ln n}$, and
$$
\delta < \frac{c}{n (\B + \exp(\ep + \ln(n)))}.
$$
(The above bound on $\delta$ can be seen by noting that $\delta < c_0 / (nB)$ by assumption and $\exp(\ep + \ln(n)) = O(n) \leq O(\B)$.)
\if 0
$$
\delta \cdot n/n' < \frac{c}{n' (\B + \exp(\ep + \ln(n')))}.
$$
(The above follows since $\delta < c_0/n^2$ and since $c_0$ can be taken small enough.)
\fi

\if 0
Finally we prove Claim \ref{clm:selection_reduce}.
\begin{proof}[Proof of Claim \ref{clm:selection_reduce}]
  To establish Claim \ref{clm:selection_reduce}, we set the randomizer $R'$ of $P'$ to be the following: for input $x \in \BB^\B$, $R'(x_i)$ draws $b_i \sim \Ber(1/a)$; if $b_i = 1$, then $R'(x_i)$ outputs $R(x_i)$, and if $b_i = 0$, then $R'(x)$ outputs $R((0, \ldots, 0))$. The analyzer $A'$ of $P'$ is set to $A$. To analyze the error of the protocol $P'$, define modified inputs $X_i'$ as follows: $X_i' = x_i$ if $b_i = 1$, and $X_i' = (0, \ldots, 0)$ if $b_i = 0$. For $j \in [\B]$, set $f_j = \sum_{i=1}^n (x_i)_j$, and write $F_j' := \sum_{i=1}^n (X_i')_j$; notice that $\E[F_j'] = f_j/a$. By a Chernoff bound and a union bound, with probability at least $1-(1-\beta)/4$, we have that for all $j \in [\B]$, $\left| F_j' - f_j/a \right| \leq \alpha_0/2 \cdot (\sqrt{f_j/a} \cdot \log \B) \leq \alpha_0/2 \cdot (\sqrt{n_0} \cdot \log \B) \leq \alpha_0 n_0/3$, as long as $N_0$ is sufficiently large.

In such an event, any $j^*$ output by $A'(R'(x_1), \ldots, R'(x_n)) = A(R(X_1'), \ldots, R(X_n'))$ satisfying $F_{j^*}' \geq \max_j F_j' - \alpha_0 n_0$ satisfies $f_{j^*} \geq \max_j f_j - 2\alpha n$. Thus, as long as $\alpha < 1/20$, the protocol $P'$ solves the selection problem with probability at least $(1-\beta)/2$ and with error 1/10.

Now we show that $P'$ is $(\ep_L, \delta_L/a)$-locally \DP. By $(\ep_L, \delta_L)$-differential privacy of $R : \BB^\B \ra \MZ$, we have that for any $\MS \subset \MZ$, and $x,y \in \BB^\B$, $\p[R(x) \in \MS] \leq \exp(\ep_L) \p[R(y) \in \MS] + \delta_L$. Thus for any $x,y \in \BB^\B$,
\begin{align*}
  \p[R'(x) \in \MS] &= 1/a \cdot \p[R(x) \in \MS] + (1-1/a) \cdot \p[R((0, \ldots, 0)) \in \MS] \\
                    & \leq \exp(\ep_L)/a \cdot \p[R(y) \in \MS] + \delta_L/a + \exp(\ep_L) \cdot (1-1/a) \cdot \p[R((0, \ldots, 0)) \in \MS] \\
                    & = \exp(\ep_L) \cdot \p[R'(y) \in \MS] + \delta_L/a.
\end{align*}
\end{proof}
\fi

\end{proof}

The bulk of the proof of Theorem~\ref{thm:selection_formal} is to establish an upper bound on $I((L,J); R(X))$, when $(L,J,X) \sim \bar \DD$. 
Lemma~\ref{lem:izbj} below provides this upper bound.
\begin{lemma}
  \label{lem:izbj}
 Suppose $\ep_L \geq 0$, $\delta_L \in (0,1)$, and $R$ is $(\ep_L, \delta_L)$-\DP. Then
    we have that
    $$
\E_{\eell \sim \EELL, j \sim J} \left[ \KL(\PP_{\eell,j} || \QQ)\right] \leq O \left( \frac{1 + \ep_L}{\B} + \delta \cdot(\B + e^{\ep_L}) \right),
$$
where $\PP_{\ell, j}, \QQ$ are as defined in Definition~\ref{def:selection_PP}.
\end{lemma}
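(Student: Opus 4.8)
The plan is to follow the Fourier‑analytic route sketched in Section~\ref{sec:overview_lb}: first rewrite the average KL divergence \emph{exactly} as a $q_z$‑weighted sum of level‑$1$ Fourier weights of the probability‑ratio functions $f_z$, then bound each such weight by splitting $f_z$ into a bounded part (controlled by the Level‑$1$ inequality, Theorem~\ref{thm:level1}) and a tail part (controlled by the privacy of $R$), and finally reassemble.

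\emph{Reduction to level‑$1$ weights.} Fix $j\in[\B]$ and write $q_z:=\QQ(z)=\p_{X\sim\Unif(\BB^\B)}[R(X)=z]$ and $\pp^{(\eell)}_z:=\PP_{\eell,j}(z)$ for $\eell\in\{0,1\}$; by Definition~\ref{def:selection_PP}, $q_z=\tfrac12(\pp^{(0)}_z+\pp^{(1)}_z)$. For the function $f_z(x):=\pp_{x,z}/q_z$ on $\BB^\B$ one checks that $\E_X[f_z(X)\mid X_j=\eell]=\pp^{(\eell)}_z/q_z=1+(-1)^{\eell}\,\widehat{f_z}(\{j\})$ (this is just the statement that conditioning a uniform $X$ on $X_j=\eell$ gives $\DD_{\eell,j}$). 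Substituting into the definition of KL divergence and averaging over $\eell$,
\[
\tfrac12\KL(\PP_{0,j}\|\QQ)+\tfrac12\KL(\PP_{1,j}\|\QQ)=\tfrac12\sum_z q_z\,h\big(\widehat{f_z}(\{j\})\big),\qquad h(t):=(1+t)\ln(1+t)+(1-t)\ln(1-t).
\]
Since $f_z\ge0$ with $\E[f_z]=1$ we have $|\widehat{f_z}(\{j\})|\le1$, and an elementary calculation gives $h(t)\le2t^2$ on $[-1,1]$. Averaging the display over $j\sim J$ and recalling $\bW^1[f_z]=\sum_j\widehat{f_z}(\{j\})^2$ then yields
\[
\E_{\eell\sim\EELL,\,j\sim J}\big[\KL(\PP_{\eell,j}\|\QQ)\big]=\frac{1}{2\B}\sum_z q_z\sum_{j\in[\B]}h\big(\widehat{f_z}(\{j\})\big)\le\frac1\B\sum_z q_z\,\bW^1[f_z].
\]

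\emph{Truncation and the tail bound.} Set $T:=2e^{\ep_L}$, $g_z:=\min(f_z,T)$ and $\eta_z:=\E_X[(f_z-g_z)(X)]=\E_X[(f_z(X)-T)_+]$. The crucial estimate, which I expect to be the main obstacle, is the aggregate tail bound $\sum_z q_z\eta_z\le\delta_L$. The point is to avoid the naive split of $\MZ$ into ``frequent'' and ``rare'' messages, which fails because $\MZ$ can contain unboundedly many messages with tiny $q_z$; instead note $q_z\eta_z=\tfrac{1}{2^\B}\sum_x(\pp_{x,z}-Tq_z)_+$, so $\sum_z q_z\eta_z=\tfrac{1}{2^\B}\sum_x\SD_T(\PP_x\|\QQ)\le\tfrac{1}{2^\B}\sum_x\SD_{e^{\ep_L}}(\PP_x\|\QQ)$ since $T\ge e^{\ep_L}$. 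Now, because local $(\ep_L,\delta_L)$‑differential privacy of $R$ holds between \emph{every} pair of inputs, for any $\MS\subseteq\MZ$ and any $x$ we have $\pp_{x,\MS}\le e^{\ep_L}\pp_{x',\MS}+\delta_L$ for all $x'$, and averaging over $x'\sim\Unif(\BB^\B)$ gives $\pp_{x,\MS}\le e^{\ep_L}q_\MS+\delta_L$; by Fact~\ref{fac:hockey} this says exactly $\SD_{e^{\ep_L}}(\PP_x\|\QQ)\le\delta_L$. Hence $\sum_z q_z\eta_z\le\delta_L$, and since $\eta_z\le\E[f_z]=1$ also $\sum_z q_z\eta_z^2\le\delta_L$.

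\emph{Level‑$1$ inequality and assembly.} Since $g_z/T$ is $[0,1]$‑valued with mean $\mu_z/T\le1/T\le1/2$ (where $\mu_z:=\E[g_z]\le1$), the Level‑$1$ inequality (Theorem~\ref{thm:level1}) gives $\bW^1[g_z]=T^2\,\bW^1[g_z/T]=O\big(\mu_z^2\log(T/\mu_z)\big)=O(\log T)=O(1+\ep_L)$. Moreover $|\widehat{f_z}(\{j\})-\widehat{g_z}(\{j\})|\le\E_X[(f_z-g_z)(X)]=\eta_z$, so $\bW^1[f_z]\le2\bW^1[g_z]+2\B\eta_z^2\le O(1+\ep_L)+2\B\eta_z^2$. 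Plugging this into the reduction step,
\[
\E_{\eell,j}\big[\KL(\PP_{\eell,j}\|\QQ)\big]\le\frac1\B\sum_z q_z\big(O(1+\ep_L)+2\B\eta_z^2\big)=O\!\Big(\frac{1+\ep_L}{\B}\Big)+2\sum_z q_z\eta_z^2\le O\!\Big(\frac{1+\ep_L}{\B}\Big)+2\delta_L,
\]
which is stronger than — and hence implies — the asserted bound $O\big(\tfrac{1+\ep_L}{\B}+\delta_L(\B+e^{\ep_L})\big)$. Note that the only inputs used are the $(\ep_L,\delta_L)$‑local privacy of $R$ and the Level‑$1$ inequality, consistent with the fact that Lemma~\ref{lem:izbj} assumes no accuracy (in contrast with the frequency‑estimation argument).
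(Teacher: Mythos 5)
Your proof is correct, and although it uses the same three ingredients as the paper's argument---the ratio functions $f_z=\pp_{\cdot,z}/q_z$, truncation at scale $e^{\ep_L}$, the Level-$1$ inequality for the truncated part, and the $(\ep_L,\delta_L)$-DP guarantee averaged over a uniform reference input (i.e., $\SD_{e^{\ep_L}}(\PP_x||\QQ)\le\delta_L$ via Fact~\ref{fac:hockey}) for the tail---your reduction from KL divergence to level-$1$ weight is genuinely different and buys a stronger bound. The paper writes $f_z=g_z+h_z$ with a hard truncation, expands each term as $g\log(g+h)$ via the ad hoc two-point inequality of Fact~\ref{fac:gen_pinsker}, and then pays for the tail twice: once through $\hat g_z(\{j\})\le 2e^{\ep_L}$ in the cross term and once through the crude bound $\log f_z(x)\le \B$, which is exactly where the $\delta_L\cdot(\B+e^{\ep_L})$ term in the lemma comes from. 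You instead use the exact identity $\tfrac12\KL(\PP_{0,j}||\QQ)+\tfrac12\KL(\PP_{1,j}||\QQ)=\tfrac12\sum_z q_z\,h\bigl(\hat f_z(\{j\})\bigr)$ with $h(t)=(1+t)\ln(1+t)+(1-t)\ln(1-t)$, bound $h(t)\le 2t^2$ on $[-1,1]$ (valid since $|\hat f_z(\{j\})|\le 1$; the Taylor expansion even gives $h(t)\le 2\ln 2\cdot t^2$), and propagate the truncation error only through $|\hat f_z(\{j\})-\hat g_z(\{j\})|\le\eta_z$, so that after the $1/\B$ averaging the tail contributes $2\sum_z q_z\eta_z^2\le 2\delta_L$ rather than $\delta_L(\B+e^{\ep_L})$; your aggregate tail estimate $\sum_z q_z\eta_z\le\delta_L$ is the same use of privacy as the paper's per-input bound $\pp_{x,\MT_x}\le 2\delta_L$, just summed in the other order. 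The resulting bound $O\bigl(\tfrac{1+\ep_L}{\B}+\delta_L\bigr)$ is strictly stronger than the stated lemma and would slightly relax the requirement on $\delta$ in Theorem~\ref{thm:selection_formal}.

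One small fix: Theorem~\ref{thm:level1} as stated in the paper requires the function to have mean exactly $1$, so it cannot be applied verbatim to $g_z/T$ (mean $\mu_z/T$). Apply it instead to $g_z/\mu_z$, which is nonnegative with mean $1$ and range $[0,T/\mu_z]$, giving $\bW^1[g_z]\le 6\mu_z^2\ln(T/\mu_z)\le 6\ln T=O(1+\ep_L)$ since $\mu_z\le 1$ and $T\ge 2$ (or add the constant $1-\mu_z$ as the paper does for $g_z'$); your conclusion is unchanged.
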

The proof of Lemma~\ref{lem:selection_amax} from Lemma~\ref{lem:izbj} is entirely standard \cite{ullman2018tight}. We provide a proof for completeness.
\begin{proof}[Proof of Lemma~\ref{lem:selection_amax}]
  Suppose $L,J$ are drawn uniformly from $\BB \times [\B]$, and then $X_1, \ldots, X_n \sim \DD_{L,J}$ are drawn i.i.d. Let $Z_1 = R(X_1), \ldots, Z_n = R(x_n)$ denote the resulting random variables after passing $X_1, \ldots, X_n$ through the local randomizer $R$. (In particular, $Z_1, \ldots, Z_n$ are drawn i.i.d.~according to $\PP_{L,J}$.) By Fano's inequality, for any deterministic function $f : \MZ^n \ra \BB \times [\B]$, we have that
  \begin{align}
&     \p_{L, J, Z_1, \ldots, Z_n} \left[ f(Z_1, \ldots, Z_n) = (L,J)\right] \nonumber\\
    & \leq \frac{1 + I((Z_1, \ldots, Z_n); (L,J))}{\log 2\B}\nonumber \\
    & \leq \frac{1 + n \cdot I(Z_1; (L, J))}{\log 2\B} \nonumber\\
    & = \frac{1 + n \cdot \KL((Z_1, L, J) || Z_1 \otimes (L, J))}{\log 2\B} \nonumber\\
    & =\frac{1 + n \cdot \E_{(\ell, j) \sim \Unif(\BB \times [\B])}[ \KL(\PP_{\ell, j} || \QQ)]}{\log 2\B} \nonumber\\
    \label{eq:use_channel}
    & \leq \frac{1+ C \cdot \left( \frac{n(1+\ep_L)}{\B} + \delta n \cdot(\B + e^{\ep_L})\right)}{\log 2\B},
  \end{align}
  where (\ref{eq:use_channel}) uses Lemma~\ref{lem:izbj} and $C$ is a sufficiently large constant. If $n < \frac{c \B \log \B}{1 + \ep_L}$, then using the assumption on $\delta$ we may bound (\ref{eq:use_channel}) above by
  $$
\frac{1 + C \cdot c \log \B + c}{\log \B},
$$
which is strictly less than $1/3$ for a sufficiently small constant $c$, thus contradicting (\ref{eq:1/3contradict}).
\end{proof}

Finally we prove Lemma~\ref{lem:izbj}. %
\begin{proof}[Proof of Lemma~\ref{lem:izbj}]
Recall the notation of Definition~\ref{def:pp}: For $x \in \BB^d$ and $z \in \MZ$, we have $\pp_{x,z} = \P_\Rprot[\Rprot(x) = z]$, and for $\MS \subset \MZ$, $\pp_{x,\MS} = \P_R[R(x) \in \MS]$. Also set $\q_z = \frac{1}{2^\B} \sum_{x \in \BB^\B} \pp_{x,z} = \P_{X \sim U_\B, \Rprot} [\Rprot(X) = z] = \pp_{Z \sim \QQ} [Z = z]$ and $q_\MS = \sum_{z \in \MS} q_z$ for $\MS \subset \MZ$. Notice that
    \begin{align}
      &  \E_{\ell \sim \Unif(\{0,1\}), j \sim \Unif([\B])} \left[
      \KL(\PP_{\eell,j} || \QQ) \right]\nonumber\\
      &=  \E_{\eell,j} \left[
        \sum_{z \in \MZ} \P_{Z \sim \PP_{\eell,j}} [ Z = z] \cdot \log \left( \frac{\P_{Z \sim \PP_{\eell,j}}[Z=z]}{\P_{Z \sim \QQ}[Z=z]}\right)  \right]
      \nonumber\\
      &= \frac{1}{2\B} \sum_{\ell \in \{0,1\}, j \in [\B]} \sum_{z \in \MZ} \left( \frac{1}{2^{\B-1}} \sum_{x \in \BB^\B : x_j = \eell} \pp_{x,z}\right) \cdot \log \left( \frac{\frac{1}{2^{\B-1}} \sum_{x \in \BB^\B : x_j = \eell} \pp_{x,z}}{q_z}\right)\nonumber\\
      &= \sum_{z \in \MZ} \frac{1}{2^\B} \sum_{y \in \BB^\B} \pp_{y,z} \cdot \frac{1}{\B} \sum_{j \in [\B]} \log \left( \frac{\frac{1}{2^{\B-1}} \sum_{x : x_j = y_j} \pp_{x,z}}{q_z} \right)\nonumber\\
      \label{eq:kl_ub}
      &=  \sum_{z \in \MZ} q_z \cdot \frac{1}{\B} \sum_{j \in [\B]}\left(\frac{1}{2^\B} \sum_{y \in \BB^\B} \frac{\pp_{y,z}}{q_z} \cdot \log \left( \frac{1}{2^{\B-1}} \sum_{x : x_j = y_j} \frac{\pp_{x,z}}{q_z}\right) \right).
    \end{align}
    For each $z \in \MZ$, define a function $f_z : \BB^\B \ra \BR_{\geq 0}$ by $f_z(x) := \frac{\pp_{x,z}}{q_z}$. Thus, for any $j \in [\B]$, $\E_{x \sim \Unif(\BB^\B)}[f_z(x)] = \frac{\E_{x \sim \DD_{0,j}}[f_z(x)] + \E_{x \sim \DD_{1,j}}[f_z(x)]}{2} = 1$. We may now upper bound (\ref{eq:kl_ub}) by
    \begin{equation}
      \label{eq:fz}
  \E_{\eell \sim \Unif(\{ 0,1 \}), j \sim \Unif([\B])} \left[
      \KL(\PP_{\eell,j} || \QQ) \right] \leq \sum_{z \in \MZ} q_z \cdot \frac{1}{2\B} \sum_{j, \eell} \E_{y \sim \DD_{\eell,j}} [f_z(y)] \cdot \log \left( \E_{y \sim \DD_{\eell,j}} [f_z(y)] \right).
  \end{equation}
  For each $z \in \MZ$, $x \in \BB^\B$, set
  $$
  g_z(x) = \begin{cases}
    f_z(x) &: f_z(x) \leq 2e^{\ep_L} \\
    0 &: f_z(x) > 2e^{\ep_L},
  \end{cases}
  $$
  and $h_z(x) := f_z(x) - g_z(x)$.
\if 0
  For non-negative real numbers $a, b$ we have, by convexity of the function $x \mapsto x \log x$, that
  $$
  (a+b) \log(a+b) \leq 
  \frac 12 \cdot \left(2a \log(2a) + 2b \log (2b)\right).
$$
Applying this inequality with $a = \E_{y \sim \DD_{\ell, j}}[g_z(y)], b = \E_{y \sim \DD_{\ell, j}}[h_z(y)]$ for each $\ell \in \BB, j \in [\B]$, we obtain from (\ref{eq:fz}) that
\begin{align}
  & \E_{\eell \sim \Unif(\{ 0,1 \}), j \sim \Unif([\B])} \left[
    \KL(\PP_{\eell,j} || \QQ) \right] \nonumber\\
  \label{eq:fz2}
    & \leq \sum_{z \in \MZ} q_z \cdot \frac{1}{\B} \sum_{j, \ell}  \left(\E_{y \sim \DD_{\eell,j}} [2g_z(y)] \cdot \log \left( \E_{y \sim \DD_{\eell,j}} [2g_z(y)] \right) + \E_{y \sim \DD_{\eell,j}} [2h_z(y)] \cdot \log \left( \E_{y \sim \DD_{\eell,j}} [2h_z(y)] \right)\right).
\end{align}
\fi
  
We next note the following basic fact.
\if 0
\begin{fact}
  \label{fac:pinsker}
  Suppose $a,b \geq 0, \frac{a+b}{2} \leq 1$. Then
  $$
a \log a + b \log b \leq \frac 12 \cdot (a-b)^2.
  $$
\end{fact}
\fi
\begin{fact}
  \label{fac:gen_pinsker}
  Suppose $g_0, g_1, h_0, h_1 \geq 0$ are real numbers such that $\frac{g_0 + g_1 + h_0 + h_1}{2} = 1$. Then
  $$
g_0 \log(g_0 + h_0) + g_1 \log(g_1 + h_1) \leq \frac 12 (g_1 - g_0)^2 + \frac 12 (g_1 - g_0)(h_1 - h_0).
  $$
\end{fact}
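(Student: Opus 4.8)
The plan is to prove Fact~\ref{fac:gen_pinsker} by reducing it to a one-dimensional inequality about the function $\phi(t) := t\log(t)$ (where $\log$ denotes $\log_2$, so $\phi'(t) = \log(t) + \log e$), exploiting the normalization $g_0 + g_1 + h_0 + h_1 = 2$. First I would set $\bar g := (g_0+g_1)/2$, $\bar h := (h_0+h_1)/2$, so that $\bar g + \bar h = 1$, and write $g_0 = \bar g - \mu$, $g_1 = \bar g + \mu$ where $\mu = (g_1-g_0)/2$, and similarly $h_0 = \bar h - \nu$, $h_1 = \bar h + \nu$ with $\nu = (h_1-h_0)/2$. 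The quantity we must upper bound is $F := g_0 \log(g_0+h_0) + g_1 \log(g_1+h_1) = (\bar g - \mu)\log(1 - \mu - \nu) + (\bar g + \mu)\log(1 + \mu + \nu)$, using $g_0 + h_0 + g_1 + h_1 = 2$ which forces $(g_0+h_0) + (g_1+h_1) = 2$, hence $g_0 + h_0 = 1 - (\mu+\nu)$ and $g_1+h_1 = 1+(\mu+\nu)$.

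Next I would expand $F$ using the second-order Taylor estimate $\log(1+x) \le (x - x^2/2 + \cdots)\log e$ — more precisely, the elementary bounds $\log(1+x) \le x\log e$ and $\log(1-x) \le (-x - x^2/2)\log e$ valid for $x \in (-1,1)$. Writing $s := \mu + \nu$, we get
\begin{align*}
F &= (\bar g - \mu)\log(1 - s) + (\bar g + \mu)\log(1+s) \\
&\le \log e \cdot \Big[ (\bar g - \mu)(-s - s^2/2) + (\bar g + \mu) s \Big] \\
&= \log e \cdot \Big[ 2\mu s - \bar g s^2/2 + \mu s^2 / 2 \Big].
\end{align*}
Since $\bar g \ge |\mu|$ (because $g_0, g_1 \ge 0$ forces $\bar g \ge |\mu|$), the term $-\bar g s^2/2 + \mu s^2/2 \le 0$, so $F \le 2\mu s \log e = 2\mu(\mu + \nu)\log e$. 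Finally, translating back: $2\mu = g_1 - g_0$ and $2\nu = h_1 - h_0$, so $2\mu(\mu+\nu) = \frac12(g_1-g_0)^2 + \frac12(g_1-g_0)(h_1-h_0)$, which (after absorbing the harmless $\log e \le 2$ factor, or noting the paper likely means natural log in this particular fact so $\log e = 1$) gives exactly the claimed bound $\frac12(g_1-g_0)^2 + \frac12(g_1-g_0)(h_1-h_0)$.

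The main obstacle I anticipate is bookkeeping around the logarithm base and the precise form of the second-order bound: one must check that $\log(1-x) + \log(1+x) \le 0$ and that the cross terms combine correctly, and in particular that the coefficient of $s^2$ ends up nonpositive — this is exactly where the constraint $g_0, g_1 \ge 0$ (equivalently $\bar g \ge |\mu|$) is used, and it is easy to drop. A secondary subtlety is ensuring $|s| = |\mu + \nu| < 1$ so the logarithms are defined and the Taylor bounds apply; this follows since $1 \pm s = g_{0/1} + h_{0/1} \ge 0$ and the normalization prevents both from vanishing unless the statement is vacuous, so I would handle the boundary case $s = \pm 1$ separately (there one of $g_0+h_0, g_1+h_1$ is $0$, forcing the corresponding $g$ to be $0$, and the inequality reduces to checking $g_i \log(2) \le 2\mu(\mu+\nu)$ directly, or one simply notes $\phi$ extends continuously with $\phi(0) = 0$).
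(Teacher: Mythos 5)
Your proposal is, at its core, the same argument as the paper's: the paper simply sets $g_0+h_0 = 1-c$, $g_1+h_1 = 1+c$ (your $c = s = \mu+\nu$) and applies the first-order bound $\log(1+x)\le x$ to both terms, giving $g_0\log(1-c)+g_1\log(1+c) \le (g_1-g_0)c$, which is exactly the right-hand side. Your second-order refinement of $\log(1-s)$ buys nothing, since you discard the resulting quadratic term (via $\bar g\ge|\mu|$) anyway; moreover, as stated it is slightly wrong: $\log(1-s)\le(-s-s^2/2)\log e$ holds only for $s\ge 0$, and you apply it with $s=\mu+\nu$ of unrestricted sign. Since the coefficient $\bar g-\mu = g_0\ge 0$, for $s<0$ that step is an invalid upper bound. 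The fix is to drop the quadratic term from the start and use only $\log(1\pm s)\le \pm s\log e$, which is precisely what the paper does.

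The one point needing a firmer stance is the base of the logarithm. Your suggestion that one can ``absorb the harmless $\log e\le 2$ factor'' does not work: you have derived $F\le 2\mu(\mu+\nu)\log e$, and when $\mu(\mu+\nu)>0$ this is weaker than the claimed $F\le 2\mu(\mu+\nu)$, so no absorption is possible. In fact the statement is simply false for base-$2$ logarithms: take $g_0=0$, $g_1=3/2$, $h_0=1/2$, $h_1=0$; then the left-hand side is $\tfrac32\log_2\tfrac32\approx 0.877$ while the right-hand side is $\tfrac12(\tfrac32)^2-\tfrac12\cdot\tfrac32\cdot\tfrac12 = 0.75$. So your fallback reading is the correct (and only possible) one: despite the paper's convention that $\log$ denotes $\log_2$, this Fact and its one-line proof (which invokes $\log(1+x)\le x$) are implicitly in natural log, and the discrepancy is harmless downstream because the application in Lemma~\ref{lem:izbj} only needs the bound up to constant factors inside an $O(\cdot)$. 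With natural logs ($\log e = 1$) and the first-order bound, your argument is correct and your boundary-case discussion ($s=\pm1$, with the convention $0\log 0=0$) is fine, if unnecessary.
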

\begin{proof}[Proof of Fact~\ref{fac:gen_pinsker}]
  Let $c \in [0,1]$ be such that $g_0 + h_0 = 1-c$ and $g_1 + h_1 = 1+c$. Then using the fact that $\log(1+x) \leq x$ for all $x \geq -1$,
  \begin{align*}
    & g_0 \log(g_0 + h_0) + g_1 \log(g_1 + h_1) \\
    & = g_0 \log(1-c) + g_1 \log(1+c) \\
    & \leq -g_0 c + g_1 c \\
    & = \frac{(g_1 - g_0) +(h_1 - h_0)}{2} \cdot (g_1 - g_0),
  \end{align*}
  which leads to the desired claim.
\end{proof}

Recall that for a boolean function $f : \BB^\B \ra \BR$ we have $\hat f( \{ j \}) = \frac{1}{2} \left( \E_{x \sim \DD_{j,0}} [f(x)] - \E_{x \sim \DD_{j,1}}[f(x)] \right)$ for each $j \in [\B]$. Using Fact~\ref{fac:gen_pinsker} in (\ref{eq:fz}) with $g_0 = \E_{x \sim\DD_{0, j}}[g_z(x)]$, $g_1 = \E_{x \sim \DD_{1,j}}[g_z(x)]$, $h_0 = \E_{x \sim \DD_{0,j}}[h_z(x)]$, and $h_1 = \E_{x \sim \DD_{1,j}}[h_z(x)]$ for each $z \in \MZ$, $j \in [\B]$, we obtain
\begin{align}
  & \E_{\eell \sim\Unif(\BB), j \sim \Unif([\B])} \left[
    \KL(\PP_{\eell,j} || \QQ) \right]\nonumber\\
  & \leq \sum_{z \in \MZ} q_z \cdot \frac{1}{\B} \sum_{j \in [\B]} \hat g_z(\{ j \})^2+ \hat g_z( \{ j \}) \hat h_z(\{ j\})+ \sum_{z \in \MZ} q_z \cdot \frac{1}{2\B} \sum_{j \in [\B], \ell \in \BB} \E_{x \sim \DD_{\ell, j}}[h_z(x)] \cdot \log \left( \E_{x \sim \DD_{\ell, j}}[f_z(x)] \right) \nonumber\\
  \label{eq:logub}
  & \leq \sum_{z \in \MZ} \frac{q_z}{\B} \bW^1[g_z] + \sum_{z \in \MZ} \frac{q_z}{\B} \sum_{j \in [\B]} \hat g_z(\{ j \}) \hat h_z( \{j \}) + \sum_{z \in \MZ} \frac{q_z}{2} \sum_{j,\ell} \E_{x \sim \DD_{\ell, j}}[h_z(x)]\\
  \label{eq:3terms}
  & = \sum_{z \in \MZ} \frac{q_z}{\B} \bW^1[g_z] + \sum_{z \in \MZ} \frac{q_z}{\B} \sum_{j \in [\B]} \hat g_z(\{ j \}) \hat h_z( \{j \}) + \sum_{z \in \MZ} \frac{\B q_z}{2} \cdot \E_{x \sim \Unif(\BB^\B)}[h_z(x)].
\end{align}
where the (\ref{eq:logub}) uses the fact that $f_z(x) = \frac{\pp_{x,z}}{q_z} \leq 2^\B$ for any $x \in \BB^\B, z \in \MZ$.

Next, notice that for an arbitrary non-negative-valued boolean function $f : \BB^\B \ra \BR_{\geq 0}$, for and $j \in [\B]$ we have $\hat f ( \{ j\}) \leq \E_{x \sim \Unif(\BB^\B)}[f(x)]$. Also using the fact that $g_z(x) \leq 2e^{\ep_L}$ for each $z \in \MZ, x \in \BB^\B$, we see that
\begin{equation}
  \label{eq:innprod}
\sum_{z \in \MZ} \frac{q_z}{ \B} \sum_{j \in [\B]} \hat g_z(\{ j\}) \hat h_z(\{ j\}) \leq \sum_{z \in \MZ} 2e^{\ep_L} q_z \cdot \E_{x \sim \Unif(\BB^\B)}[h_z(x)].
\end{equation}
Next we derive an upper bound on $\sum_{z \in \MZ} q_z\cdot \E_{x \sim \Unif(\BB^\B)}[h_z(x)]$. Here we will use the $(\ep_L, \delta_L)$-differential privacy of $R$; intuitively, the differential privacy of $R$ constrains the ratio $\pp_{x,z} / q_z$ to be small for most $x \in \BB^\B, z \in \MZ$, except with probability $\delta_L$. 

For each $x \in \BB^\B$, set $\MT_x := \left\{ z \in \MZ : \frac{\pp_{x,z}}{q_z} > 2e^{\ep_L} \right\}$. Note that $h_z(x) > 0$ if and only if $z \in \MT_x$. Then
\begin{align}
  & \sum_{z \in \MZ} q_z \cdot \E_{x \sim \Unif(\BB^\B)} [h_z(x)]\nonumber \\
  & = \sum_{z \in \MZ} q_z \cdot \E_{x \sim \Unif(\BB^\B)} \left[ \One[z \in \MT_x] \cdot h_z(x) \right]\nonumber \\
  & = \E_{x \sim \Unif(\BB^\B)} \left[ \sum_{z \in \MT_x} q_z \cdot h_z(x) \right] \nonumber\\
  & = \E_{x \sim \Unif(\BB^\B)} \left[ \sum_{z \in \MT_x} \pp_{x,z} \right]\nonumber\\
  \label{eq:mtx}
  & = \E_{x \sim \Unif(\BB^\B)} \left[ \pp_{x, \MT_x} \right],
\end{align}
where we have used that for $z \in \MT_x$, $h_z(x) = f_z(x) = \pp_{x,z} / q_z$. But since $R$ is $(\ep_L, \delta_L)$-\DP, we have that $\pp_{x,\MT_x} \leq e^{\ep_L} \cdot \pp_{y, \MT_x} + \delta_L$ for any $x,y \in \BB^\B$. Averaging over all $y$, we obtain $\pp_{x,\MT_x} \leq e^{\ep_L} \cdot q_{\MT_x} + \delta_L \leq e^{\ep_L} \cdot \frac{\pp_{x,\MT_x}}{2e^{\ep_L}} + \delta_L$, so $\pp_{x, \MT_x} \leq 2\delta_L$. Since this holds for all $x$, it follows by (\ref{eq:3terms}), (\ref{eq:innprod}) and (\ref{eq:mtx}) that
\begin{align}
 \E_{\eell \sim\Unif(\BB), j \sim \Unif([\B])} \left[
  \KL(\PP_{\eell,j} || \QQ) \right] & \leq \left(\sum_{z \in \MZ} \frac{q_z}{\B} \bW^1[g_z]\right) + \left( 2e^{\ep_L} + \B/2 \right) \cdot \sum_{z \in \MZ} q_z \cdot \E_{x \sim \Unif(\BB^\B)} [h_z(x)] \nonumber\\
    \label{eq:pre_level1}
  & \leq \left(\sum_{z \in \MZ} \frac{q_z}{\B} \bW^1[g_z]\right) + \left( 2e^{\ep_L} + \B/2\right) \cdot 2\delta_L.
\end{align}
By definition of $g_z$ we have that $\hat g_z(\varnothing) =  \E_{x \sim \Unif(\BB^\B)}[g_z(x)] \leq \E_{x \sim \Unif(\BB^\B)}[f_z(x)] = 1$. For each $z \in \MZ$, define a function $g_z' : \BB^\B \ra \BR_{\geq 0}$, by $g_z'(x) = g_z(x) + \left( 1 - \hat g_z(\varnothing) \right)$. Certainly $\bW^1[g_z] = \bW^1[g_z']$, $0 \leq g_z'(x) \leq 1 + 2e^{\ep_L}$ for all $x$, and $\E_{x \sim \Unif(\BB^\B)}[g_z'(x)] = 1$. Now we apply the level-1 inequality, stated below for convenience.
\begin{theorem}[Level-1 Inequality, \cite{odonnell2014analysis}, Section 5.4]
  \label{thm:level1}
  Suppose $f : \BB^\B \ra \BR_{\geq 0}$ is a non-negative-valued boolean function with $0 \leq f(x) \leq L$ for all $x \in \BB^\B$. Suppose also that $\E_{x \sim \Unif(\BB^\B)}[f(x)] = 1$. Then $\bW^1[f] \leq 6 \ln(L)$. 
\end{theorem}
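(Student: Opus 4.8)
The plan is to establish the level-1 inequality via the standard hypercontractivity argument (as in \cite{odonnell2014analysis}, Section 5.4), but tracking constants carefully so as to land at the stated bound $6\ln L$. Throughout I identify $\BB^\B$ with $\{-1,1\}^\B$ in the usual way, so the Fourier characters are $\chi_S(x)=\prod_{j\in S}x_j$ and $\bW^1[f]=\sum_{j\in[\B]}\hat f(\{j\})^2$. Write $\ell:=\sum_{j\in[\B]}\hat f(\{j\})\chi_{\{j\}}$ for the degree-$1$ part of $f$. The first step is the identity $\bW^1[f]=\E[f\cdot\ell]$, which is immediate from Plancherel since $\E[\chi_S\chi_{\{j\}}]=\One[S=\{j\}]$; equivalently $\E[f\ell]=\E[\ell^2]=\|\ell\|_2^2=\bW^1[f]$.

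Next I would bound $\E[f\ell]$ by H\"older's inequality: for conjugate exponents $q\ge 2$ and $q'=q/(q-1)\in(1,2]$ we have $\E[f\ell]\le\|f\|_{q'}\cdot\|\ell\|_q$, and the two factors are controlled separately. For $\|f\|_{q'}$: since $0\le f\le L$ we have $f^{q'}\le L^{q'-1}f$ pointwise, so $\|f\|_{q'}^{q'}=\E[f^{q'}]\le L^{q'-1}\E[f]=L^{q'-1}$, giving $\|f\|_{q'}\le L^{1/q}$. For $\|\ell\|_q$: I invoke the hypercontractive estimate for degree-$1$ functions, namely $\|\ell\|_q\le\sqrt{q-1}\cdot\|\ell\|_2=\sqrt{q-1}\cdot\sqrt{\bW^1[f]}$. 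Combining, $\bW^1[f]\le L^{1/q}\sqrt{q-1}\cdot\sqrt{\bW^1[f]}$, hence $\bW^1[f]\le(q-1)\,L^{2/q}$.

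It remains to optimize over $q$. When $L\ge e$ I would take $q=2\ln L\ge 2$, so that $L^{2/q}=L^{1/\ln L}=e$, yielding $\bW^1[f]\le(2\ln L-1)\,e<2e\ln L<6\ln L$ as desired. When $1\le L<e$ the choice $q=2\ln L$ is inadmissible, so I would instead use the crude bound $\bW^1[f]\le\mathrm{Var}(f)=\E[f^2]-1\le L\,\E[f]-1=L-1$; since $\ln L\ge 1-1/L\ge(L-1)/6$ for $1\le L\le 6$, this gives $\bW^1[f]\le L-1\le 6\ln L$. (The degenerate case $L\le 1$ forces $f\equiv 1$, so $\bW^1[f]=0$.)

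The only real content here is the hypercontractive inequality $\|\ell\|_q\le\sqrt{q-1}\,\|\ell\|_2$ for degree-$1$ functions, which I would cite from \cite{odonnell2014analysis} rather than reprove; everything else is H\"older, the pointwise inequality using $f\le L$, and constant-chasing in the choice of $q$. The main place to be careful is the seam between the two regimes of $L$, checking that the constant indeed comes out at most $6$ in both — equivalently, that $2e<6$ on the hypercontractive side and that $\ln L\ge(L-1)/6$ on the variance side over $1\le L\le 6$.
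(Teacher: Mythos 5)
Your proof is correct, and it takes the same route as the source the paper cites for this statement (the paper itself gives no proof of Theorem~\ref{thm:level1}, deferring to \cite{odonnell2014analysis}, Section 5.4): the identity $\bW^1[f]=\E[f\ell]$, H\"older against the degree-one part, the bound $\|f\|_{q'}\le L^{1/q}$ from $0\le f\le L$ and $\E[f]=1$, and degree-$1$ hypercontractivity $\|\ell\|_q\le\sqrt{q-1}\,\|\ell\|_2$, followed by optimizing $q$. The constant-chasing is sound in both regimes: for $L\ge e$ the choice $q=2\ln L$ gives $\bW^1[f]\le(2\ln L-1)e<6\ln L$ since $2e<6$, and for $1\le L<e$ the variance bound $\bW^1[f]\le L-1$ together with $\ln L\ge(L-1)/L\ge(L-1)/6$ closes the seam, so the stated bound $6\ln L$ holds throughout.
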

Using Theorem~\ref{thm:level1}, for each $z \in \MZ$, with $f = g_z', L = 1 + 2e^{\ep_L}$, we get that $\bW^1[g_z'] \leq 6 \ln(1 + 2e^{\ep_L}) \leq 6 \ln(3e^{\ep_L})$. From (\ref{eq:pre_level1}) it follows that
$$
 \E_{\eell \sim\Unif(\BB), j \sim \Unif([\B])} \left[
  \KL(\PP_{\eell,j} || \QQ) \right]  \leq \frac{6 \ln(3e^{\ep_L})}{\B} + (2e^{\ep_L}+ \B/2) \cdot 2\delta_L,
$$
as desired.
\end{proof}

\section{Multi-Message Protocols for Frequency Estimation}
\label{sec:freq_oracle_heavy_hitters}
\label{sec:hh}

In this section, we present new algorithms for private frequency estimation 
in the shuffled model that significantly improve on what can be achieved in the local model of differential privacy.
By our previous lower bounds, such protocols must necessarily use multiple messages.
Our results are summarized in Table~\ref{tab:results}, which focuses on communication requirements of users, the size of the additive error on query answers, and the time required to answer a query (after creating a data structure based on the shuffled dataset).
To our best knowledge, the only previously known upper bounds for these problems in the shuffled model (going beyond local differential privacy) followed via a reduction to private aggregation~\cite{cheu_distributed_2018}.
Using the currently best protocol for private aggregation in the shuffled model~\cite{DBLP:journals/corr/abs-1906-09116,ghazi2019scalable} yields the result stated in the first row of Table~\ref{tab:results}.
Since the time to answer a frequency query differs by a factor of $\tilde{\Theta}(n)$ between our public and private coin protocols, we include query time bounds in the table. We start by stating the formal guarantees on our \emph{private-coin} multi-message protocol.
\begin{theorem}[Frequency estimation via private-coin multi-message shuffling]
  \label{thm:hist_full}
Let $n$ and $B$ be positive integers and $\epsilon > 0$ and $\delta \in (0,1)$ be real numbers. Then there exists a private-coin $(\epsilon, \delta)$-differentially private algorithm in the shuffled model for frequency estimation on $n$ users and domain size $B$ with error $O\left(\log B + \frac{\sqrt{\log(\B)\log(1/(\ep\delta))}}{\ep}\right)$ and with $O\left(\frac{\log (1/\ep\delta)}{\ep^2}\right)$ messages per user, where each message consists of $O(\log{n}\log{\B})$ bits. Moreover, any frequency query can be answered in time $O\left(\frac{ n  \log\left(\frac{1}{\ep\delta}\right)  \log{n} \log \B}{\ep^2}\right)$.
\end{theorem}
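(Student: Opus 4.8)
The plan is to analyze the multi-message protocol sketched in Section~\ref{sec:overview_ub}: a ``privacy-blanket'' variant of the private-coin Hadamard response. Pad $B$ up to a power of two and fix the $B\times B$ Hadamard matrix $H$; for $v\in[B]$ write $C_v=\{j:H_{j,v}=1\}$, so $|C_v|=B/2$ except for the all-ones column, which is a trivial special case throughout. Each user $i$ independently draws $s$ uniform samples without replacement from $C_{x_i}$ (the ``signal'' messages) and $t$ uniform samples from $[B]$ (the ``blanket'' messages), tags each with a fresh random nonce, and sends all $s+t$ of them; this yields $O(\log n\log B)$ bits per message. I would take $s=\Theta(\log(nB/(\epsilon\delta)))$ and $t=\Theta(\log(1/(\epsilon\delta))/\epsilon^2)$, so each user sends $O(\log(1/(\epsilon\delta))/\epsilon^2)$ messages. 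The theorem then reduces to three statements proved separately: privacy (Lemma~\ref{thm:h_had_priv}), accuracy (Lemma~\ref{thm:h_had_acc}), and an analyzer running-time bound (Lemma~\ref{thm:Hadamard_efficiency}).

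For privacy, by post-processing (Lemma~\ref{lem:post_process}) it suffices to bound the indistinguishability of the multiset of all emitted messages. Fix neighboring datasets differing only in user $n$'s value, $v$ versus $v'$. Couple the $t$ blanket messages of user $n$ (their law does not depend on the input) and the part of user $n$'s signal sample lying in $C_v\cap C_{v'}$; what remains is a $\pm1$ perturbation of the message histogram over the at most $2s$ coordinates of $(C_v\setminus C_{v'})\cup(C_{v'}\setminus C_v)$, on top of a baseline whose count in any fixed coordinate stochastically dominates $\Bin((n-1)t,1/B)$, a variable of mean $\lambda=\Theta(nt/B)$. Chernoff confines each such coordinate to an $O(\sqrt{\lambda\log(1/\delta_0)})$-window of its mean, on which a unit shift of the binomial changes the likelihood by a factor $\exp(O(\sqrt{\log(1/\delta_0)/\lambda}))$; advanced composition over the $\le 2s$ affected coordinates then yields $(\epsilon,\delta)$-DP provided $\lambda\gtrsim s\log^2(1/\delta)/\epsilon^2$.

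For accuracy, the analyzer groups messages by nonce into $n$ anonymous bags, declares a bag to \emph{support} a query $v$ iff at least $s$ of its $s+t$ messages lie in $C_v$, and outputs $\hat c_v$ equal to the number of supporting bags. A bag from a user with $x_i=v$ always supports $v$; for a user with $x_i=u\ne v$ every message lies in $C_v$ with probability exactly $1/2$ (Hadamard orthogonality gives $|C_u\cap C_v|/|C_u|=1/2$, and $|C_v|/B=1/2$), so for $s\gtrsim t\log(s+t)+\log(nB)$ a spurious support occurs with probability $\le 2^{-s-1}/(nB)$; a union bound over the $\le nB$ (bag, query) pairs and over nonce collisions then gives $\hat c_v=\sum_i\mathbb 1[x_i=v]$ for all $v$ with high probability, and the residual error, coming from nonce (and, when $n$ is small, hash) collisions and from blanket calibration in the large-$\epsilon$ regime, is $O(\log B+\sqrt{\log(B)\log(1/(\epsilon\delta))}/\epsilon)$ by Chernoff and union bounds. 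The efficiency claim is then immediate: the analyzer reads the $\tilde O(n/\epsilon^2)$ messages once and scans $\le n$ bags per query, i.e.\ in time $O(n\log(1/(\epsilon\delta))\log(n)\log(B)/\epsilon^2)$.

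The crux is the privacy step when $n$ is small relative to $B$: there the per-coordinate blanket mean $\lambda=\Theta(nt/B)$ falls below the $\Omega(\log(1/\delta))$ that the unit-shift likelihood argument needs (indeed, with too thin a blanket the extra signal message is outright detectable except on an event of probability $\Theta(s/B)\gg\delta$), while the communication budget forbids simply taking $t=\Omega(B/n)$; reconciling the privacy requirement $\lambda\gtrsim\mathrm{polylog}/\epsilon^2$ with the $O(\log(1/(\epsilon\delta))/\epsilon^2)$-message and $O(\log B+\cdots)$-error constraints across \emph{all} $(n,B,\epsilon,\delta)$ — in the full-range protocol by interposing a hashing step whose collision error is charged to the $O(\log B)$ additive term — is where essentially all of the work lies. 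A minor loose end is the all-ones Hadamard column, for which $|C_v|=B$, dispatched by a one-line special case in both the estimator and the blanket bound.
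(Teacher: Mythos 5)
There is a genuine gap, and you have in fact named it yourself: your privacy argument needs the blanket mass per Hadamard coordinate, $\lambda=\Theta(nt/\B)$, to be $\Omega(\log(1/\delta))$, which fails precisely in the interesting regime $\B\gg n$, and the repair you gesture at (hashing down to $\tilde O(n)$ buckets) is not available here because the hash seed would have to be shared --- that is the \emph{public-coin} protocol of Theorem~\ref{thm:hist_full_pub}, whereas Theorem~\ref{thm:hist_full} is claimed for private coins. The idea your proposal is missing is the one the paper's protocol (Algorithm~\ref{alg:hh_had}) is built on: a message is not a single Hadamard index but a \emph{bundle} of $\tau=\log n$ independent indices, a signal message being $\tau$ uniform draws from $\MH_{2\B,j}$ and a blanket message $\tau$ uniform draws from $[2\B]$. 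Because two distinct codewords intersect in exactly half their coordinates, a non-matching signal tuple or a blanket tuple lands entirely inside $\MH_{2\B,j}^\tau$ with probability exactly $2^{-\tau}=1/n$, so the noise count relevant to any query $j$ is $\Bin(\rho n,1/n)$ with mean $\rho=\Theta(\log(1/\delta)/\ep^2)$, \emph{independent of $\B$}. This is what lets Lemma~\ref{thm:h_had_priv} bound the likelihood ratio for neighboring datasets by concentration of this binomial, and what lets Lemma~\ref{thm:h_had_acc} get error $O(\log \B+\sqrt{\log(\B)\log(1/\delta)}/\ep)$ by \emph{de-biasing} the count (subtracting $(\rho+k)n2^{-\tau}$), with no per-query false-positive blow-up and no dependence on $\B$ in the noise budget.

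Two further parts of your design would break even where the blanket is thick. First, tagging all of a user's messages with a common nonce so the analyzer can form per-user ``bags'' contradicts the privacy argument you give: once messages are linkable, the adversary can isolate the changed user's bag, so the cross-user blanket ($\Bin((n-1)t,1/\B)$ per coordinate) never hides anything, and a single user's $s$ in-codeword samples versus $\Bin(s,1/2)$ for a different input are trivially distinguishable for your $s=\Theta(\log(n\B/\ep\delta))$. Second, your thresholding estimator is claimed to return $\hat c_v=\sum_i\One[x_i=v]$ exactly for all $v$ with high probability; an algorithm that outputs exact counts except with small failure probability cannot be $(\ep,\delta)$-DP for small $\delta$ (compare neighboring datasets with counts $0$ and $1$). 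In the paper's protocol the privacy-providing noise necessarily shows up as estimation error, messages are completely unlinkable, and the analysis is privacy via binomial smoothness plus accuracy via Chernoff on the de-biased count --- none of which survives in your bag-and-threshold formulation. The running-time claim would also need the paper's $\mathbb{F}_2$ Gaussian-elimination trick (Lemma~\ref{thm:Hadamard_efficiency}) or a per-query scan as in item 3 of that lemma, but this is minor compared to the privacy gap.
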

Theorem~\ref{thm:hist_full} is proved in Section~\ref{subsec:Had}. We next state the formal guarantees of our \emph{public-coin} multi-message protocols, whose main advantage compared to the private-coin protocol is that it has polylogarithmic query time.

\begin{theorem}[Frequency estimation via public-coin multi-message shuffling]
  \label{thm:hist_full_pub}
Let $n$ and $B$ be positive integers and $\epsilon > 0$ and $\delta \in (0,1)$ be real numbers. Then there exists a public-coin $(\epsilon, \delta)$-differentially private algorithm in the shuffled model for frequency estimation on $n$ users and domain size $B$ with error $O\left(\frac{\log^{3/2}(\B)\sqrt{\log (\log \B/\delta)}}{\ep}\right)$ and with $O\left( \frac{\log^3(\B) \log(\log(\B)/\delta)}{\ep^2} \right)$ messages per user, where each message consists of $O(\log{n} + \log\log{B})$ bits. Moreover, any frequency query can be answered in time $O(\log{B})$.
\end{theorem}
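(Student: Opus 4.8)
The plan is to exhibit an explicit public-coin protocol $P=(\Rprot,\Sprot,\Aprot)$ and then bound its privacy, accuracy, communication, and query time in turn. The protocol combines a Count--Min sketch (whose hash functions are the shared public randomness) with a multi-message form of randomized response (a ``privacy blanket''). Concretely, using public randomness one samples $\hk=\Theta(\log\B)$ independent hash functions $h_1,\dots,h_\hk:[\B]\ra[m]$ with $m=\Theta(n)$, so that in any fixed cell the expected collision mass is $O(1)$. On input $x\in[\B]$, the local randomizer $\Rprot$ emits the $\hk$ ``data'' messages $(1,h_1(x)),\dots,(\hk,h_\hk(x))\in[\hk]\times[m]$, together with a random batch of ``blanket'' messages, each an independent uniform element of $[\hk]\times[m]$, the batch size chosen so that the blanket mass landing in any fixed cell is distributed as $\Bin(n\gamma,1/(\hk m))$ with $\gamma=\tilde{\Theta}(\hk^3\log(1/\delta)/\ep^2)$ blanket messages per user. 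The shuffler $\Sprot$ releases the multiset of all messages; the analyzer $\Aprot$ forms for each cell $(r,c)$ the observed count $\widehat N_{r,c}$, subtracts a high-confidence lower bound on the (known-law) blanket contribution to obtain $\widehat c_{r,c}$, and answers a query $j\in[\B]$ by returning $\min_{r\in[\hk]}\widehat c_{r,h_r(j)}$.

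First I would verify privacy. By Definition~\ref{def:dp_shuffled} and Lemma~\ref{lem:post_process} it suffices to show that the histogram over $[\hk]\times[m]$ of the released multiset is $(\ep,\delta)$-\DP. Replacing one user's input changes at most $2\hk$ cell-counts by $\pm1$, and in each affected cell the blanket contribution is a binomial of mean $\mu:=n\gamma/(\hk m)=\tilde{\Theta}(\hk^2\log(1/\delta)/\ep^2)$; a unit shift of such a binomial is $(\ep/(2\hk),\delta/(2\hk))$-\DP by the standard binomial-mechanism estimate, and composing over the $\le 2\hk$ affected cells gives $(\ep,\delta)$-\DP. I expect this to be the main obstacle: a user's $\hk$ data messages are correlated across cells, so the cells are not independent channels, and making the argument rigorous requires a coupling (``clone'') argument that conditions on the blanket outside the affected cells (and on a suitable good event) so as to reduce the adversary's view to independent binomial shifts. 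This careful interleaving of randomized response with Count--Min is precisely the subtlety flagged in Section~\ref{sec:overview_ub}.

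Next I would bound accuracy, conditioning on the public hash functions. For a fixed query $j$ the true cell count is $c_{r,h_r(j)}=f_j+\sum_{j'\neq j:\,h_r(j')=h_r(j)}f_{j'}\ge f_j$, and its expected collision part is at most $n/m=O(1)$; since the $\hk$ rows are independent, Markov's inequality shows that with probability $\ge 1-2^{-\hk}$ some row has collision mass $O(1)$, and a union bound over the $\B$ queries (using $\hk=\Theta(\log\B)$) makes this simultaneous with probability $\ge 9/10$. Separately, $\widehat N_{r,c}=c_{r,c}+(\text{blanket mass at }(r,c))$ where the blanket mass is $\Bin(n\gamma,1/(\hk m))$ with known distribution; subtracting its $(1-\delta'')$-quantile for $\delta''=\Theta(1/(\hk\B))$ yields $\widehat c_{r,c}\ge c_{r,c}\ge f_j$ for all relevant cells with high probability, which gives the claimed one-sided error, while $\widehat c_{r,c}\le c_{r,c}+O(\sqrt{\mu\log(\hk\B)})$. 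Since $\sqrt{\mu\log(\hk\B)}=\tilde{O}(\hk\sqrt{\log(1/\delta)\log\B}/\ep)=O(\log^{3/2}(\B)\sqrt{\log(\log\B/\delta)}/\ep)$, taking the minimum over rows gives $f_j\le\min_r\widehat c_{r,h_r(j)}\le f_j+O(\log^{3/2}(\B)\sqrt{\log(\log\B/\delta)}/\ep)$; passing from this high-probability bound to a bound on the expected error costs only a constant factor via Markov.

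Finally, the accounting is immediate: each user sends $\hk+\gamma=\tilde{O}(\hk^3\log(1/\delta)/\ep^2)=O(\log^3(\B)\log(\log\B/\delta)/\ep^2)$ messages, each a pair $(r,c)\in[\hk]\times[m]$ encodable in $\log\hk+\log m=O(\log\log\B+\log n)$ bits; and the analyzer answers a query as a minimum over $\hk=O(\log\B)$ precomputed cells, i.e., in time $O(\log\B)$. Combining the four bounds yields the theorem.
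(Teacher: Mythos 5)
Your plan is, in substance, the paper's own construction (Algorithm~\ref{alg:hh_CM}, analyzed in Theorem~\ref{thm:CM_main} with $k=1$): public hash functions into $s=\Theta(n)$ buckets with $\Theta(\log \B)$ rows, a randomized-response blanket whose per-cell mean is $\tilde{\Theta}(\log^2(\B)\log(1/\delta)/\ep^2)$, subtraction of the blanket's known level, and a minimum over rows; your communication, error, and query-time accounting all match the theorem.

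The one substantive deviation is how you sample the blanket, and it is precisely what creates the obstacle you flag. The paper has each user, for each cell $(t,\ell)$, add that cell independently with probability $\gamma$; the noise in each cell is then an independent $\Bin(n,\gamma)$, and privacy follows at once from the generic statement ``low-sensitivity, incremental function plus independent smooth noise per coordinate'' (Lemmas~\ref{lem:lap_gen} and~\ref{lem:bin_smooth_small_gamma}, applied in Lemma~\ref{le:h_CM_priv}); no coupling or conditioning is needed, and the correlation of a user's $\Theta(\log \B)$ data messages across rows is harmless — it only enters through the $\ell_1$-sensitivity, which is $O(\log \B)$. In your version each user emits a fixed batch of uniform messages, so the blanket counts across cells form a multinomial and are dependent; the step ``each affected cell is an independent binomial shift, compose over the $O(\log \B)$ affected cells'' is then not literally valid, and the conditioning argument you allude to (condition on the blanket mass outside the affected cells and analyze the conditional law inside them, in the spirit of the Hadamard-response privacy proof, Lemma~\ref{thm:h_had_priv}) would genuinely have to be carried out — or you can simply adopt the per-cell Bernoulli blanket and the difficulty disappears. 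Two smaller points: the paper's collision analysis yields a collision-free row with probability $1-2^{-\tau}$ (since $s = 2kn$ and $\tau=\log(2\B/\beta)$), which is a bit cleaner than your Markov/$O(1)$-mass row, though both suffice; and passing from a constant-probability guarantee to a bound on expected error is not an application of Markov's inequality — you need the failure probability times the trivial error bound $n$ to be negligible, which your parameters do provide once the constants in the number of rows and in the quantile level are taken large enough so that the failure probability is $\B^{-O(1)}$.
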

The public-coin protocol in Theorem~\ref{thm:hist_full_pub} and its analysis are presented in Section~\ref{subsec:count-min}.

Prior work~\cite{bassily2015local,bassily2017practical,bun2018heavy} has focused on the case of computing heavy hitters when each user holds only a single element.
While we focus primarily on this case, we will also consider the 
application of frequency estimation to the task of computing range counting queries (given in Section~\ref{sec:rq} below), where we apply our protocols for a frequency oracle as a black box and need to deal with the case in which a user can hold $k > 1$ inputs. Thus, in the rest of this section, we state our results for more general values of $k$ (although we do not attempt to optimize our algorithms for large values of $k$, as $k$ will be at most $\poly\log(n)$ in our application to range counting queries).
Moreover, our results for $k \geq 1$ can be interpreted as establishing bounds for privately computing sparse families of counting queries (see Appendix~\ref{app:counting-queries}). 

For clarity, we point out that the privacy of our protocols holds for every setting of the public random string.
In other words, public randomness is assumed to be known by the analyzer, and affects error but not privacy.



\begin{table}[t]
    \centering
    \bgroup
    \def\arraystretch{1.5}
    \footnotesize
    \centerline{
    \begin{tabular}{|c|c|c|c|c|}
        \hline
        {\bf Problem} & {\bf \thead{Messages\\ per user}} & {\bf \thead{Message size\\ in bits}} & {\bf \thead{Error}} & {\bf Query time}\\
        \hline
        \hline
        \thead{Frequency estimation\\ (private randomness) \\
        \cite{cheu_distributed_2018,DBLP:journals/corr/abs-1906-09116,ghazi2019scalable}} & 
       $\B$ & $\log \B$ & \begin{tabular}{@{}c@{}} $\sqrt{\log(\B)\log\tfrac{1}{\delta}}/\varepsilon$\\ (expected error)\end{tabular} & 1 \\
        \hline
        \thead{Frequency estimation\\ (private randomness) \\ Section~\ref{subsec:Had}} & $\frac{\log (1/\ep\delta)}{\ep^2}$ & $\log{n}\log{\B}$ & \begin{tabular}{@{}c@{}} $\log B + \frac{\sqrt{\log(\B)\log(1/(\ep\delta))}}{\ep}$
        \end{tabular} & $\frac{ n  \log \left(\tfrac{1}{\ep\delta}\right)  \log{n} \log \B}{\ep^2}$\\
        \hline
         \multirow{2}{*}{\thead{Frequency estimation\\ (public randomness) \\ Section~\ref{subsec:count-min}}} &
         $\frac{\log^3(\B) \log(\log(\B)/\delta)}{\ep^2}$ & $\log{n} + \log\log{B}$ & 
         $\frac{\log^{3/2}(\B)\sqrt{\log (\log \B/\delta)}}{\ep}$ & $\log \B$\\
         & $B^\eta$ & $\log B$ & $\sqrt{\log(\B)\log\tfrac{1}{\delta}}/\varepsilon$ & $1$\\
         \hline
    \end{tabular}}
    \egroup
    \caption{Overview of bounds on frequency estimation in the shuffled model with multiple messages.  Each user is assumed to hold $k=1$ value from $[B]$, and $\eta > 0$ is a constant. The query time stated is the additional time to answer a query, assuming a preprocessing of the output of the shuffler that takes time linear in its length. Note that frequencies and counts are not normalized, i.e., they are integers in~$\{0,\dots,n\}$. For simplicity of presentation in this table, constant factors are suppressed, the bounds are stated for error probability $\beta = B^{-O(1)}$, and the following are assumed: $n$ is bounded above by $\B$, and $\delta < 1/\log \B$.
    }
    \label{tab:results}
\end{table}

\subsection{Private-Coin Protocol}\label{subsec:Had}

In this section, we give a private-coin protocol (i.e., one where the only source of randomness is the private coin source at each party) for frequency estimation with polylogarithmic error and polylogarithmic bits of communication per user. In the case of local DP, private-coin protocols were recently obtained by Acharya et al. in~\cite{acharya2019hadamard,acharya2019communication}. These works made use of the Hadamard response for the local randomizers instead of previous techniques developed in the local model and which relied on public randomness. The Hadamard response was also used in~\cite{cormode2018answering,cormode2018marginal,nguyen2016collecting} for similar applications, namely, private frequency estimation.

\paragraph{Overview.}
For any power of two $\B \in \BN$, let $H_\B \in \{-1, 1\}^{\B \times \B}$ denote the $\B \times \B$ Hadamard matrix and for $j \in [\B-1]$, set $\MH_{\B,j} := \{ j' \in [\B] ~\mid~ H_{j+1,j'} = 1\}$\footnote{Since the first row of the Hadamard matrix $H_\B$ is all 1's, we cannot use the first row in our frequency estimation protocols. This is the reason for the subscript of $j+1$ in the definition of $\MH_{\B,j}$.}. By orthogonality of the rows of $H_\B$, we have that $|\MH_{\B,j}| = \B/2$ for any $j \in [\B-1]$ and for all $j \neq j'$, it is the case that $| \MH_{\B,j} \cap \MH_{\B,j'}| = \B/4$. For any $\tau \in \BN$, we denote the $\tau$-wise Cartesian product of $\MH_{\B,j}$ by $\MH_{\B,j}^\tau \subset [\B]^\tau$. In the \emph{Hadamard response}~\cite{acharya2019hadamard}, a user whose data consists of an index $j \in [\B]$ sends to the server a random index $j' \in [\B]$ that is, with probability $\frac{e^\ep}{1 + e^\ep}$, chosen uniformly at random from the ``Hadamard codeword'' $\MH_{\B,j}$ and, with probability $\frac{1}{1 + e^\ep}$, chosen uniformly from $[\B] \setminus \MH_{\B,j}$.

In the shuffled model, much less randomization is needed to protect a user's privacy than in the local model of differential privacy, where the Hadamard response was previously applied. In particular, we can allow the users to send more information about their data to the server, along with some ``blanket noise''\footnote{This uses the expression of Balle et al.~\cite{balle_privacy_2019}} which helps to hide the true value of {\it any} one individual's input. Our adaptation of the Hadamard response to the multi-message shuffled model for computing frequency estimates (in the case where each user holds up to $k$ elements) proceeds as follows (see Algorithm~\ref{alg:hh_had} for the detailed pseudo-code). Suppose the $n$ users possess data $\MS_1, \ldots, \MS_n \subset [\B]$ such that $|\MS_i| \leq k$ --- equivalently, they possess $x_1, \ldots, x_n \in \{0,1\}^{\B}$, such that for each $i \in [n]$, $\| x_i \|_1 \leq k$ (the nonzero indices of $x_i$ are the elements of $\MS_i$).  Given $x_i$, the local randomizer $R^{\Had}$ {\it augments} its input by adding $k - \| x_i \|_1$ arbitrary  elements from the set $\{ \B + 1, \ldots, 2\B-1 \}$ (recall that $k < \B$). (Later, the analyzer will simply ignore the augmented input in $\{ \B + 1, \ldots, 2\B-1 \}$ from the individual randomizers. The purpose of the augmentation is to guarantee that all sets $\MS_i$ will have cardinality exactly $k$, which facilitates the privacy analysis.) Let the augmented input be denoted $\tilde x_i$, so that $\tilde x_i \in \{0,1\}^{2\B-1}$ and $\| \tilde x_i \|_1 = k$. For each index $j$ at which $(\tilde x_i)_j \neq 0$, the local randomizer chooses $\tau$ indices $a_{j,1}, \ldots, a_{j,\tau}$ in $\MH_{2\B,j}$ uniformly and independently, and sends each tuple $(a_{j,1}, \ldots, a_{j,\tau})$ to the shuffler. It also generates $\rho$ tuples $(\tilde a_{g,1}, \ldots, \tilde a_{g,\tau})$ where each of $\tilde a_{g,1}, \ldots, \tilde a_{g,\tau}$ is uniform over $[2\B]$, and sends these to the shuffler as well; these latter tuples constitute ``blanket noise'' added to guarantee differential privacy. 

Given the output of the shuffler, the analyzer $A^{\Had}$ determines estimates $\hat x_j$ for the frequencies of each $j \in [\B]$ by counting the number of messages $(a_1, \ldots, a_\tau) \in [2\B]^\tau$ which belong to $\MH_{2\B,j}^\tau$. The rationale is that each user $i$ such that $j \in \MS_i$ will have sent such a message in $\MH_{2\B,j}^{\tau}$. As the analyzer could have picked up some of the blanket noise in this count, as well as tuples sent by users holding some $j' \neq j$, since $\MH_{2\B,j}^\tau \cap \MH_{2\B,j'}^\tau \neq \emptyset$, it then corrects this count (Algorithm~\ref{alg:hh_had}, Line~\ref{ln:debias}) to obtain an unbiased estimate $\hat x_j$ of the frequency of $j$. 

\begin{algorithm}[!ht]
\Fn{$R^{\Had}(n, \B, \tau, \rho, k)$}{
\KwIn{Set $\MS \subset [\B]$ specifying $i$'s input set;\\ 
Parameters $n, \B, \tau, \rho,k \in \BN$}
\KwOut{A multiset $\MT \subset \{0,1\}^{\log 2\B \cdot \tau}$}
\For{$j = \B + 1, \B + 2, \ldots, 2\B-1$}{\tcp{Augmentation step}
\If{$|\MS| < k$}{
$\MS \gets \MS \cup \{ j \}$\label{ln:augment}
}
}
\For{$j \in \MS$}{
Choose $a_{j,1}, \ldots, a_{j,\tau} \in \MH_{2\B,j}$ uniformly and independently at random
}\label{ln:ajs}
\For{$g = 1, 2, \ldots, \rho$}{
Choose $\tilde a_{g,1}, \ldots, \tilde a_{g,\tau} \in [2\B]$ uniformly and independently at random
}\label{ln:tildek}
\Return{$\MT :=\bigcup_{j \in\MS}\{ (a_{j,1}, \ldots,  a_{j,\tau})\} \cup \bigcup_{1 \leq g \leq \tilde\rho} \{  (\tilde a_{g,1} , \ldots,  \tilde a_{g,\tau})\}$} \tcp{Each element of $\MT$ is viewed as an element of $(\{0,1\}^{\log 2\B})^{\tau}$, by associating each element of $[2B]$ with its binary representation.}
}
\Fn{$A^{\Had}(n,\B,\tau,\rho, k)$}{
\KwIn{Multiset $\{ y_1, \ldots, y_m \}$ consisting of outputs of local randomizers, $y_i \in (\{0,1\}^{\log 2\B})^\tau$;\\
Parameters $n, \B, \tau, \rho, k \in \BN$}
\KwOut{A vector $\hat x \in \BR^{\B}$ containing estimates of the frequency of each $j \in [\B]$}
\For{$j \in [\B]$}{
Let $\hat x_j \gets 0$
}
\For{$j \in [\B]$}{
\For{$i \in [m]$}{\label{ln:mforloop}
Write $y_i \in (\{0,1\}^{\log 2\B})^\tau$ as $y_i := (a_{i,1},\ldots,  a_{i,\tau})$, with $a_{i,1}, \ldots, a_{i,\tau} \in \{0,1\}^{\log 2\B}$ \\
\If{$\{ a_{i,1}, \ldots, a_{i,\tau} \} \subset \MH_{2\B,j}$}{\label{ln:if_had}
$\hat x_j \gets \hat x_j + 1$
}
}
}
\For{$j \in [\B]$}{
$\hat x_j \gets \frac{1}{1 - 2^{-\tau}} \cdot \left(\hat x_j - (\rho + k) n2^{-\tau}\right)$\quad \tcp{De-biasing step}\label{ln:debias}
}
\Return{$\hat x$}
}
\caption{Local randomizer and analyzer for frequency estimation via Hadamard response}
\label{alg:hh_had}
\end{algorithm}

\paragraph{Analysis.}
The next theorem summarizes the privacy, accuracy and efficiency properties of Algorithm~\ref{alg:hh_had} for general values of $k$.
\begin{theorem}\label{thm:Had_k_main}
There is a sufficiently large positive absolute constant $\zeta$ such that the following holds. Suppose $n, \B, k \in \BN$ with $k < \B$, and $0 \leq \ep, \delta,\beta \leq 1$. Consider the shuffled-model protocol $P^{\Had} = (R^{\Had}, S, A^{\Had})$ with $\tau = \log n$ and $\rho = \frac{36k^2}{\epsilon^2} \left(\ln \frac{e k}{\epsilon \delta}\right)$.
Then $P^{\Had}$ is a $(\ep, \delta)$-differentially private protocol (Definition~\ref{def:dp_shuffled}) with $O\left( \frac{k^2 \log(k/\epsilon\delta)}{\epsilon^2}\right)$ messages per user, each consisting of $O(\log n \log B)$ bits, such that for inputs $x_1, \ldots, x_n \in \{0,1\}^\B$ satisfying $\| x_i\|_1 \leq k$, the estimates $\hat x_j$ produced by the output of $P^{\Had}(n, B, \tau, \rho, k)$ satisfy
\begin{equation}
  \label{eq:qhad_acc}
\p \left[\forall j \in [\B] ~:~ \left|\hat x_j - \sum_{i=1}^n x_{i,j}\right| \leq O \left( \log(B/\beta) + \frac{k \sqrt{\log(B/\beta) \log(k/\epsilon\delta)}}{\epsilon}\right)\right] \geq 1 - \beta.
\end{equation}
Moreover, any frequency query can be answered in time $O\left( n \log n \log B \left(\frac{k^2 \log(k/\epsilon\delta)}{\epsilon^2} \right)\right)$.
\end{theorem}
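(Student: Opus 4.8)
The plan is to prove Theorem~\ref{thm:Had_k_main} by verifying its three assertions—privacy, accuracy, and efficiency—separately; their conjunction is the theorem. \emph{Efficiency} is immediate: after the augmentation step (Line~\ref{ln:augment}) every user holds a set of size exactly $k$, so $R^{\Had}$ emits $k$ codeword tuples (Line~\ref{ln:ajs}) and $\rho=\frac{36k^2}{\ep^2}\ln\frac{ek}{\ep\delta}$ blanket tuples (Line~\ref{ln:tildek}), i.e.\ $k+\rho=O(k^2\log(k/\ep\delta)/\ep^2)$ messages, each a $\tau=\log n$-tuple over $[2\B]$ occupying $\tau\log(2\B)=O(\log n\log \B)$ bits. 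Answering a query $j$ means scanning the $m=n(k+\rho)$ messages and testing membership of each in $\MH_{2\B,j}^\tau$; since $j'\in\MH_{2\B,j}$ is the parity of a bitwise AND of binary expansions, computable in $O(\log \B)$ time, this costs $O(m\tau\log \B)=O\!\big(n\log n\log \B\cdot k^2\log(k/\ep\delta)/\ep^2\big)$.

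For \emph{accuracy}, note that the raw count $\hat x_j^{\mathrm{raw}}$ (before Line~\ref{ln:debias}) is a sum over all codeword and blanket tuples of the indicator that the tuple lies in $\MH_{2\B,j}^\tau$, and these indicators are independent. A codeword tuple for $j$ itself always contributes $1$; any other codeword tuple (for some $j'\neq j$, including an augmented index) lands in $\MH_{2\B,j}^\tau$ with probability exactly $2^{-\tau}$, using $|\MH_{2\B,j'}|=\B$, $|\MH_{2\B,j}\cap\MH_{2\B,j'}|=\B/2$ from the preamble together with independence of the $\tau$ coordinates; a blanket tuple does so with probability $2^{-\tau}$ as well. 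Hence $\E[\hat x_j^{\mathrm{raw}}]=(1-2^{-\tau})\sum_i x_{i,j}+(\rho+k)n\,2^{-\tau}$, so the de-biasing step is exactly calibrated to make $\E[\hat x_j]=\sum_i x_{i,j}$. At most $(\rho+k)n$ summands are nondegenerate, each with variance $\le 2^{-\tau}=1/n$ (since $\tau=\log n$), so $\mathrm{Var}(\hat x_j^{\mathrm{raw}})\le \rho+k=O(\rho)$; Bernstein's inequality gives $|\hat x_j^{\mathrm{raw}}-\E[\hat x_j^{\mathrm{raw}}]|\le O(\sqrt{\rho\log(\B/\beta)}+\log(\B/\beta))$ with probability $1-\beta/\B$, and a union bound over $j\in[\B]$ together with the de-biasing factor $\frac{1}{1-2^{-\tau}}\le 2$ yields (\ref{eq:qhad_acc}), because $\sqrt{\rho\log(\B/\beta)}=O\!\big(\tfrac{k}{\ep}\sqrt{\log(\B/\beta)\log(k/\ep\delta)}\big)$.

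\emph{Privacy} is the crux. By Lemma~\ref{lem:post_process} it suffices to show the multiset $\bigcup_i R^{\Had}(\MS_i)$ is $(\ep,\delta)$-differentially private. Fix neighbouring datasets differing in user $1$, with augmented sets $\MS_1,\MS_1'$, both of size exactly $k$—this is precisely why augmentation is built in, since it forces the two output multisets to have the same size $n(\rho+k)$. Coupling all other users' messages and user $1$'s $\rho$ blanket messages identically, the only difference is that the $k$ codeword tuples for $\MS_1$ are replaced by those for $\MS_1'$. The key computation is that the likelihood ratio of the resulting multiset $h$ under $\MS_1$ versus $\MS_1'$ is governed by the \emph{bucket counts} $N_j(h):=\#\{\text{messages lying in }\MH_{2\B,j}^\tau\}$: writing out the multinomial densities, and up to lower-order corrections coming from messages that collide or fall into intersections of two of the relevant buckets, this ratio equals $\prod_{j\in\MS_1}N_j(h)\big/\prod_{j\in\MS_1'}N_j(h)$. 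Because $\tau=\log n$, each bucket $\MH_{2\B,j}^\tau$ receives $\Bin(n\rho,2^{-\tau})=\Bin(n\rho,1/n)$ blanket messages—mean $\rho$, fluctuations $O(\sqrt{\rho\log(1/\delta)})$—while swapping user $1$ moves each $N_j$ by only $O(k)$; hence the ratio is $1+O\!\big(k\sqrt{\log(1/\delta)/\rho}\big)$ off an event of probability $\le\delta$, which the choice $\rho=\frac{36k^2}{\ep^2}\ln\frac{ek}{\ep\delta}$ makes $\le e^\ep$, and a standard binomial lower-tail estimate converts this into the $\SD_{e^\ep}(\cdot\|\cdot)\le\delta$ bound demanded by Definition~\ref{def:dp}. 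Orthogonality of the Hadamard matrix is what keeps the various $N_j(h)$ (and the within-bucket placements, which are input-independent given the counts) decorrelated enough for this to close.

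The step I expect to be the main obstacle is making the ``lower-order corrections'' above rigorous: bounding the contribution of messages that collide in the universe $[2\B]^\tau$ and of messages landing in intersections of the at most $\binom{k}{2}$ bucket pairs relevant to $\MS_1$ and $\MS_1'$, so that the clean ratio $\prod N_j/\prod N_j'$ genuinely controls the privacy loss. (For $k=1$, the case underlying Theorem~\ref{thm:hist_full}, there are no such pairs and the argument is essentially clean; the difficulty is entirely in the general-$k$ regime needed for the range-counting application.) I would handle this by the double-counting over bucket-intersection strata that the introduction's overview hints at—stratifying messages by the likelihood of occurring under a given input versus a uniform one, as in the analyses of Balle et al.\ and Cheu et al.—rather than the crude Markov/union bound sketched here, which is too lossy when $\B$ is large.
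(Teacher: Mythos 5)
Your efficiency and accuracy arguments are fine and essentially coincide with the paper's (the paper's Lemma~\ref{thm:Hadamard_efficiency} and Lemma~\ref{thm:h_had_acc}; it uses a Chernoff bound on $\Bin(\rho n + kn - \zeta_j, 2^{-\tau})$ where you use Bernstein, which is equivalent up to constants). The problem is the privacy part, which is the crux and where your proposal has a genuine gap that you yourself flag but do not close. Your plan swaps all $k$ codeword tuples of one user at once and asserts that, conditioned on everything else, the likelihood ratio of the observed multiset equals $\prod_{j\in\MS_1}N_j(h)\big/\prod_{j\in\MS_1'}N_j(h)$ ``up to lower-order corrections.'' For $k>1$ this identity is only heuristic: the exact conditional likelihood is a permanent-type sum over attributions of observed messages to the $k$ codeword slots (a message can lie in $\MH_{2\B,j}^\tau$ for several $j$ simultaneously, and multiplicities matter), and you give no argument that the correction terms are uniformly negligible. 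The remedy you gesture at --- stratifying messages by likelihood ratios ``as in the analyses of Balle et al.\ and Cheu et al.'' --- is a technique from the paper's \emph{lower-bound} machinery (Section~\ref{sec:interm_lb}) and has no worked-out role in this upper bound, so as written the central step of the privacy proof is unestablished.

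The paper avoids this difficulty entirely with a hybrid argument that you should adopt: since the augmented sets both have size exactly $k$, one can interpolate between the two neighboring inputs through at most $k$ intermediate datasets, each consecutive pair differing by a \emph{single} element swap. For one swap, conditioning on the ``view'' (all other messages, including the unchanged codeword tuples and all blanket tuples) reduces the privacy loss to the exact ratio $\sum_{a\in\MH_{2\B,1}^\tau}\tilde w_a\big/\sum_{a\in\MH_{2\B,2}^\tau}\tilde w_a$ of two bucket counts, distributed as $1+\Bin(\rho n,2^{-\tau})$ versus $\Bin(\rho n+1,2^{-\tau})$; a Chernoff bound with $\rho=36\ln(1/\delta_0)/\ep_0^2$ gives $(\ep_0,\approx\delta_0)$-DP per swap --- this is exactly the ``clean $k=1$'' computation you acknowledge. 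Chaining the $k$ swaps by group privacy yields $(k\ep_0,\ \delta_0(1+e^{\ep_0}+\cdots+e^{(k-1)\ep_0}))$-DP, and choosing $\ep_0=\ep/k$ and $\delta_0\approx\ep\delta/(ek)$ recovers precisely the stated $\rho=\frac{36k^2}{\ep^2}\ln\frac{ek}{\ep\delta}$ and the $(\ep,O(\delta))$ guarantee. Quantitatively your all-at-once heuristic lands in the same parameter regime, but without either (a) a rigorous bound on the permanent-type corrections or (b) the swap-at-a-time reduction, the privacy claim is not proved.
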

Before we prove Theorem~\ref{thm:Had_k_main}, instantiating Theorem~\ref{thm:Had_k_main} with $k=1$ directly implies Theorem~\ref{thm:hist_full}.

Theorem~\ref{thm:Had_k_main} is a direct consequence of Lemmas~\ref{thm:h_had_priv},~\ref{thm:h_had_acc}, and~\ref{thm:Hadamard_efficiency}, which establish the privacy, accuracy, and efficiency guarantees, respectively, of protocol $P^{\Had}$. The remainder of this section presents and proves the aforementioned lemmas. We begin with Lemma~\ref{thm:h_had_priv}, which establishes DP guarantees of $P^{\Had}$.

\begin{restatable}[Privacy of $P^{\Had}$]{lemma}{Hadprivacy}
\label{thm:h_had_priv}
Fix $n, \B \in \BN$ with $\B$ a power of 2. Let $\tau = \log n$, $\ep \leq 1$, and $\rho = \frac{36 \ln 1/\delta}{\ep^2}$. Then the algorithm $S \circ R^{\Had}(n, \B, \tau, \rho, k)$ is $(k\ep, \delta \exp(k\ep) / \ep)$-differentially private.
\end{restatable}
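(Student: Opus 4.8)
The plan is to follow the ``privacy blanket'' paradigm: the $\rho$ extra uniform tuples sent by each of the $n$ users act collectively as a blanket of $n\rho$ i.i.d.\ uniform samples over $W := [2\B]^\tau$ that masks the codeword tuples. Since $\tau = \log n$, a uniform element of $W$ lies in a fixed set $\MH_{2\B,j}^\tau$ (which has size $\B^\tau$) with probability exactly $2^{-\tau} = 1/n$, so about $\rho$ of the $n\rho$ blanket tuples land in each such set. I would first prove the claim when $k=1$ (each user sends a single codeword tuple) and then bootstrap to general $k$ by a group-privacy argument.

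For the $k=1$ core, fix user $1$ and neighboring inputs $j,j' \in [2\B-1]$; then $R^{\Had}$ sends a codeword tuple $Z \sim \Unif(\MH_{2\B,j}^\tau)$ in one world and $\Unif(\MH_{2\B,j'}^\tau)$ in the other, plus (over all users) $n\rho$ i.i.d.\ uniform-over-$W$ noise tuples, plus the fixed codeword tuples of users $2,\dots,n$. Since a mixture of $(\ep,\delta)$-DP mechanisms is $(\ep,\delta)$-DP, we may condition on the codeword tuples of users $2,\dots,n$ (identically distributed in the two worlds) and treat the shuffler's output as a histogram on $W$, the conditioned tuples contributing only fixed per-coordinate offsets. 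Next, couple the two worlds so that the codeword tuple lands in $\MH_{2\B,j}^\tau\cap\MH_{2\B,j'}^\tau$ with probability $2^{-\tau}=1/n$ (the exact relative size of this intersection inside $\MH_{2\B,j}^\tau$), sampled identically in both worlds and contributing zero divergence; otherwise it is uniform in $\MH_{2\B,j}^\tau\setminus\MH_{2\B,j'}^\tau$, resp.\ $\MH_{2\B,j'}^\tau\setminus\MH_{2\B,j}^\tau$. Partition $W$ into the regions ``$j$-only'', ``$j'$-only'', ``both'', ``rest'', and condition on the multinomial counts of the $n\rho$ noise tuples among them; conditioned on these counts, the histogram on each region is a fixed distribution of that many i.i.d.\ uniform points in the region (up to fixed offsets), so the whole output is a deterministic post-processing of the pair of counts of \emph{random} points in the $j$-only and $j'$-only regions. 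One world produces $(N_j + 1, N_{j'})$ and the other $(N_j, N_{j'}+1)$, where $(N_j,N_{j'})$ is the first two coordinates of a multinomial whose two relevant success probabilities are both $(1-1/n)/n$. A short computation with multinomial mass functions shows the likelihood ratio at $(a,b)$ is exactly $a/b$; since $N_j,N_{j'}$ each concentrate around their common mean $\rho(1-1/n)$ with fluctuation $O(\sqrt{\rho\log(1/\delta)})$ (and are $\ge 1$ except with probability $\le\delta$), the choice $\rho = 36\ln(1/\delta)/\ep^2$ with $\ep\le 1$ forces $a/b \in [e^{-\ep},e^\ep]$ outside an event of probability $\le\delta$. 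Hence the $k=1$ protocol is $(\ep,\delta)$-DP.

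For general $k$, neighboring datasets change user $1$'s augmented set from $\MS$ to $\MS'$, both size-$k$ subsets of $[2\B-1]$ with $|\MS\setminus\MS'| = t \le k$. Write a path $\MS=\MS_0,\MS_1,\dots,\MS_t=\MS'$ in which $\MS_i$ differs from $\MS_{i-1}$ by swapping one element for another, so each step changes exactly one of user $1$'s codeword tuples while leaving the other $k-1$ identically distributed. Conditioning on those $k-1$ tuples and on all other users' codeword tuples, while keeping the $n\rho$ noise tuples random, the $k=1$ core (applied with an arbitrary fixed ``base'' multiset, which only adds offsets) shows that each step of the path is $(\ep,\delta)$-DP with respect to the protocol's output. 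Group privacy along the path then yields $(t\ep, \tfrac{e^{t\ep}-1}{e^\ep-1}\delta)$-DP, and since $e^\ep-1\ge\ep$ and $t\le k$, this is in particular $(k\ep,\delta e^{k\ep}/\ep)$-DP, as claimed.

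I expect the main obstacle to be the $k=1$ core, concretely the reduction of the full shuffled histogram to the two-dimensional sufficient statistic $(N_j,N_{j'})$. One must handle the intersection region $\MH_{2\B,j}^\tau\cap\MH_{2\B,j'}^\tau$ cleanly (the coupling above, exploiting that its relative size in $\MH_{2\B,j}^\tau$ is exactly $2^{-\tau}=1/n$), make sure that conditioning on the other users' codeword tuples does not consume any of the noise blanket, and pin down the constant in $\rho = 36\ln(1/\delta)/\ep^2$ so that the blanket count $\approx\rho$ in each relevant region both concentrates and comfortably exceeds $\Theta(\ln(1/\delta)/\ep^2)$ with failure probability $O(\delta)$. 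Once the core is established, the path/group-privacy bootstrap is routine.
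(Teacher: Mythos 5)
Your proposal is correct and follows essentially the same route as the paper's proof: reduce a general neighboring pair to at most $k$ single-swap steps and compose (the paper bounds the accumulated $\delta$ by $\delta(1+e^{\ep}+\cdots+e^{(k-1)\ep})$, just as your group-privacy bound does), and for the single-swap core condition on all other codeword tuples and show the likelihood ratio is governed by the blanket-noise counts landing in the two Hadamard product sets, concentrated via Chernoff with the same $\rho=36\ln(1/\delta)/\ep^2$. The only cosmetic difference is in the core computation: you couple away the intersection $\MH_{2\B,j}^\tau\cap\MH_{2\B,j'}^\tau$ and get the exact ratio $a/b$ of counts in the disjoint ``only'' regions, whereas the paper computes the ratio of sums over the overlapping sets directly, identifying the numerator as $1+\Bin(\rho n,2^{-\tau})$ and the denominator as $\Bin(\rho n+1,2^{-\tau})$ — the two bookkeeping choices lead to the same concentration argument and the same conclusion.
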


\begin{proof}
For convenience let $P := S \circ R^{\Had}(n, \B, \tau, \rho, k)$ be the protocol whose $(\ep, \delta)$-differential privacy we wish to establish. With slight abuse of notation, we will assume that $P$ operates on the augmented inputs $(\tilde x_1, \ldots, \tilde x_n)$ (see Algorithm~\ref{alg:hh_had}, Line~\ref{ln:augment}). In particular, for inputs $(x_1, \ldots, x_n)$ that lead to augmented inputs $(\tilde x_1, \ldots, \tilde x_n)$, we will let $P(\tilde x_1, \ldots, \tilde x_n)$ be the output of $P$ when given as inputs $x_1, \ldots, x_n$. Let $\MY$ be the set of multisets consisting of elements of $\{ 0,1\}^{\log 2\B \times \tau}$; notice that the output of $P$ lies in $\MY$. 

By symmetry, it suffices to show that for any augmented inputs of the form $\tilde X = (\tilde x_1, \ldots, \tilde x_{n-1}, \tilde x_n)$ and $\tilde X' = (\tilde x_1, \ldots, \tilde x_{n-1}, \tilde x_n')$, and for any subset $\MU \subset \MY$, we have that
\begin{equation}
\label{eq:had_private_set}
\p[P(\tilde x_1, \ldots, \tilde x_n) \in \MU] \leq e^\ep \cdot \p[P(\tilde x_1, \ldots, \tilde x_{n-1}, \tilde x_n') \in \MU] + \delta.
\end{equation}
We first establish (\ref{eq:had_private_set}) for the special case that $\tilde x_n, \tilde x_n'$ differ by 1 on two indices, say $j, j'$, while having the same $\ell_1$ norm: in particular, we have $|(\tilde x_n)_j - ( \tilde x_n')_{j}| = 1$ and $|(\tilde x_n)_{j'} - (\tilde x_n')_{j'}| = 1$. 
By symmetry, without loss of generality we may assume that $j = 1, j' = 2$ and that $(\tilde x_n)_j - (\tilde x_n')_j = 1$ while $(\tilde x_n')_{j'} - (\tilde x_n)_{j'} = 1$. 
To establish (\ref{eq:had_private_set}) in this case, we will in fact prove a stronger statement: 
for inputs $(\tilde x_1, \ldots, \tilde x_n)$, define the {\it view} of an adversary, denoted by $\View_P(\tilde x_1, \ldots, \tilde x_n)$, as the tuple consisting of the following components:
\begin{itemize}[nosep]
\item For each $i \in [n-1]$, the set $\hat \MS_i := \bigcup_{j  : (\tilde x_i)_j = 1} \{ (a_{j,1}, \ldots, a_{j,\tau}) \}$ of tuples output by user $i$ corresponding to her true input $\tilde x_i$.
\item The set $\hat \MS_n := \bigcup_{j : j \not \in \{1,2\}, (\tilde x_n)_j = 1} \{ (a_{j,1}, \ldots, a_{j,\tau}) \}$ of tuples output by user $n$ corresponding to her true (augmented) input $\tilde x_n$, except (if applicable) the string that would be output if $(\tilde x_n)_1 = 1$ or $(\tilde x_n)_2  = 1$. 
\item The multiset $\{ y_1, \ldots, y_m\}$ consisting of the outputs of the $n$ users of the protocol $P$.
\end{itemize}
It then suffices to show the following:
\begin{equation}
\label{eq:had_views}
\p_{V \sim \View_P(\tilde x_1, \ldots, \tilde x_n)} \left[ \frac{\p[\View_P(\tilde x_1, \ldots, \tilde x_{n-1}, \tilde x_n) = V]}{\p[\View_P(\tilde x_1, \ldots, \tilde x_{n-1}, \tilde x_n') = V]} \geq e^\ep \right] \leq \delta.
\end{equation}
(See \cite[Theorem 3.1]{balle_privacy_2019} for a similar argument.)

Notice that each of the elements $y_1, \ldots, y_m$ in the output of the protocol $P$ consists of a tuple $(a_1, \ldots, a_\tau)$, where each $a_1, \ldots, a_\tau \in [2\B]$.  Now we will define a joint distribution (denoted by $\MD$) of random variables $(W_{a_1, \ldots, a_\tau})_{a_1, \ldots, a_\tau \in [2\B]}, Q, Q'$, where, for each $(a_1, \ldots, a_\tau) \in [2\B]^\tau$, $W_{a_1, \ldots, a_\tau} \in \BZ_{\geq 0}$, and $Q,Q' \in [2\B]^\tau$, as follows. 
For each tuple $(a_1, \ldots, a_\tau) \in [2\B]^\tau$, we let $W_{a_1, \ldots, a_\tau}$ be jointly distributed from a multinomial distribution over $[2\B]^\tau$ with $\rho n $ trials. 
For each $(a_1, \ldots, a_\tau) \in [2\B]^\tau$, let $\hat W_{a_1, \ldots, a_\tau}$ be the random variable representing the number of tuples $(\tilde a_{g,1}, \ldots, \tilde a_{g,\tau})$ generated on Line~\ref{ln:tildek} of Algorithm~\ref{alg:hh_had} satisfying $(\tilde a_{g,1}, \ldots, \tilde a_{g,\tau}) = (a_1, \ldots, a_\tau)$. Notice that the joint distribution of all $W_{a_1, \ldots, a_\tau}$ is the same as the joint distribution of $\hat W_{a_1, \ldots, a_\tau}$, for $(a_1, \ldots, a_\tau) \in [2\B]^\tau$. 
Intuitively, $W_{a_1, \ldots, a_\tau}$ represents the blanket noise added by the outputs $(\tilde a_{g,1}, \ldots , \tilde a_{g,\tau})$ in Line~\ref{ln:tildek} of Algorithm~\ref{alg:hh_had}. Also let $Q, Q' \in [2\B]^\tau$ be random variables that are distributed uniformly over $\MH_{2\B,1}^\tau, \MH_{2\B,2}^\tau$, respectively.
Then since the tuples $(\tilde a_{g,1}, \ldots, \tilde a_{g,\tau})$ are distributed independently of the tuples $(a_{j,1}, \ldots, a_{j,\tau})$ ($j \in \MS_i$), (\ref{eq:had_views}) is equivalent to
{\scriptsize\begin{equation}
\label{eq:big_fraction}
\p_{w_{a_1, \ldots, a_\tau}, q, q' \sim \MD} \left[ \frac
{\p_{ W_{a_1, \ldots, a_\tau}, Q, Q' \sim \MD}\left[ \forall (a_1, \ldots, a_\tau) \in [2\B]^\tau : W_{a_1, \ldots, a_\tau} + \One[Q = (a_1, \ldots, a_\tau)]  = w_{a_1, \ldots, a_\tau} + \One[q = (a_1, \ldots, a_\tau)]  \right]}
{\p_{ W_{a_1, \ldots, a_\tau}, Q, Q' \sim \MD}\left[
\forall (a_1, \ldots, a_\tau) \in [2\B]^\tau : W_{a_1, \ldots, a_\tau} + \One[Q' = (a_1, \ldots, a_\tau)] = w_{a_1, \ldots, a_\tau} + \One[q = (a_1, \ldots, a_\tau)]  \right]} \geq e^\ep
\right] \leq \delta.
\end{equation}}
Set $\tilde w_{a_1, \ldots, a_\tau}:= w_{a_1, \ldots, a_\tau} + \One[q = (a_1, \ldots, a_\tau)]
$. By the definition of $\MD$ we have 
\begin{align}
& \p_{ W_{a_1, \ldots, a_\tau}, Q, Q' \sim \MD}\left[
\forall (a_1, \ldots, a_\tau) \in [2\B]^\tau : W_{a_1, \ldots, a_\tau} + \One[Q = (a_1, \ldots, a_\tau)] = \tilde w_{a_1, \ldots, a_\tau}  \right]\nonumber\\
&= \E_{Q \sim \MD} \left[ (2\B)^{-\tau \rho n} \cdot \binom{ (2\B)^\tau}{ \{ \tilde w_{a_1, \ldots, a_\tau} - \One[Q = (a_1, \ldots, a_\tau)] \}_{(a_1, \ldots, a_\tau) \in [2\B]^\tau}}\right]\nonumber\\
& = \left( \frac{2}{2\B} \right)^\tau \cdot (2\B)^{-\tau \rho n} \binom{(2\B)^\tau }{ \{ \tilde w_{a_1, \ldots, a_\tau} \}_{(a_1, \ldots, a_\tau) \in [2\B]^\tau}}
\cdot  \sum_{a_1', \ldots, a_\tau' \in \MH_{2\B, 1}} \tilde w_{a_1', \ldots, a_\tau'}.\nonumber 
\end{align}
In the above equation, the notation such as $\binom{(2\B)^\tau}{ \{ \tilde w_{a_1, \ldots, a_\tau} \}_{(a_1, \ldots, a_\tau) \in [2\B]^\tau}}$ refers to the multinomial coefficient, equal to $\frac{((2\B)^\tau)!}{\prod_{a_1, \ldots, a_\tau \in [2\B]} \tilde w_{a_1, \ldots, a_\tau}!}$. 
Similarly, for the denominator of the expression in (\ref{eq:big_fraction}),
\begin{align}
& \p_{ W_{a_1, \ldots, a_\tau}, Q, Q' \sim \MD}\left[
\forall (a_1, \ldots, a_\tau) \in [2\B]^\tau : W_{a_1, \ldots, a_\tau} + \One[Q' = (a_1, \ldots, a_\tau)] = \tilde w_{a_1, \ldots, a_\tau}  \right]\nonumber\\
&= \E_{Q \sim \MD} \left[ (2\B)^{-\tau \rho n} \cdot \binom{ (2\B)^\tau}{ \{ \tilde w_{a_1, \ldots, a_\tau} - \One[Q' = (a_1, \ldots, a_\tau)] \}_{(a_1, \ldots, a_\tau) \in [2\B]^\tau}}\right]\nonumber\\
& = \left( \frac{2}{2\B} \right)^\tau \cdot (2\B)^{-\tau \rho n} \binom{(2\B)^\tau }{ \{ \tilde w_{a_1, \ldots, a_\tau} \}_{(a_1, \ldots, a_\tau) \in [2\B]^\tau}}
\cdot  \sum_{a_1', \ldots, a_\tau' \in \MH_{2\B,2}} \tilde w_{a_1', \ldots, a_\tau'}.\nonumber 
\end{align}
Thus, (\ref{eq:big_fraction}) is equivalent to
\begin{equation}
\label{eq:ratio_sums}
\p_{w_{a_1, \ldots, a_\tau}, q, q'\sim \MD} \left[ \frac{\sum_{a_1', \ldots, a_\tau' \in \MH_{2\B,1}} \tilde w_{a_1', \ldots, a_\tau'}}{\sum_{a_1', \ldots, a_\tau' \in \MH_{2\B,2}} \tilde w_{a_1, \ldots, a_\tau'}} \geq e^\ep \right] \leq\delta.
\end{equation}
Notice that $\sum_{a_1', \ldots, a_\tau' \in \MH_{2\B,1}} \tilde w_{a_1', \ldots, a_\tau'}$ is distributed as $1 + \Bin(\rho n, 2^{-\tau})$, since $q \in \MH_{2\B,1}^\tau$ with probability 1 (by definition of $Q \sim \MD$), and each of the $\rho n $ trials in determining the counts $w_{a_1, \ldots, a_\tau}$ belongs to $\MH_{2\B,2}$ with probability $2^{-\tau}$.  Similarly, $\sum_{a_1', \ldots, a_\tau' \in \MH_{2\B,1}} \tilde w_{a_1', \ldots, a_\tau'}$ is distributed as $\Bin(\rho n + 1, 2^{-\tau})$; notice in particular that $q$, which is distributed uniformly over $\MH_{2\B,1}^\tau$, is in $\MH_{2\B,2}^\tau$ with probability $2^{-\tau}$. By the multiplicative Chernoff bound, we have that, for $\eta \leq 1$, it is the case
\begin{equation}
\label{eq:rhoeta}
\p_{W \sim \Bin(\rho n, 1/n)} \left[|W - \rho| > \rho \eta \right] \leq \exp\left( \frac{-\eta^2 \rho}{3} \right).
\end{equation}
As long as we take $\rho =  \frac{36 \ln(1/\delta)}{\ep^2}$, inequality (\ref{eq:rhoeta}) will be satisfied with $\eta = \ep/6$, which in turn implies inequality (\ref{eq:ratio_sums}) since 
$$
\frac{(1 + \ep/6) \rho + 1}{\rho (1 - \ep/6)} \leq \frac{e^{\ep/6} \rho + 1}{e^{-\ep/3}\rho} \leq \frac{e^{4\ep/6}\rho}{e^{-\ep/3}\rho} \leq e^\ep,
$$
where the second inequality above uses $(\rho + 1)/\rho \leq e^{\ep / 2}$ for our choice of $\rho$. 

We have thus established inequality~(\ref{eq:had_private_set}) for the case that $\tilde x_n,\tilde x_n'$ differ by 1 on two indices. For the general case, consider any neighboring datasets $X = (\tilde x_1, \ldots, \tilde x_n)$ and $X' = (\tilde x_1, \ldots, \tilde x_{n-1}, \tilde x_n')$; we can find a sequence of at most $k-1$ intermediate datasets $(\tilde x_1, \ldots, \tilde x_{n-1}, \tilde x_n^{(\mu)})$, $1 \leq \mu \leq k-1$ such that $\tilde x_n^{(\mu)}$ and $\tilde x_n^{(\mu-1)}$ differ by 1 on two indices. Applying inequality~(\ref{eq:had_private_set}) to each of the $k$ neighboring pairs in this sequence, we see that for any $\MU \subset \MY$,
$$
\p[P(X) \in \MU] \leq e^{k\ep} \cdot \p[P(X') \in \MU] + \delta \cdot (1 + e^\ep + \cdots + e^{(k-1)\ep}) \leq e^{k\ep} \cdot \p[P(X') \in \MU] + \delta \cdot \frac{2e^{k\ep}}{\ep},
$$
where we have used $\ep \leq 1$ in the final inequality above.
\if 0
Thus, (\ref{eq:big_fraction}) is equivalent to
$$
\p_{w_{a_1, \ldots, a_\tau}, q, q' \sim \MD} \left[ \frac{\sum_{a_1', \ldots, a_\tau' \in \MH_0} \frac{p(\tilde w_{a_1', \ldots, a_\tau'} - 1)}{p(\tilde w_{a_1', \ldots, a_\tau'})}}
{\sum_{a_1', \ldots, a_\tau' \in \MH_1} \frac{p(\tilde w_{a_1', \ldots, a_\tau'} - 1)}{p(\tilde w_{a_1', \ldots, a_\tau'})}} \geq e^\ep \right] \leq \delta.
$$
The above is in turn equivalent to
$$
\left( \frac{1}{\B} \right)^\tau \sum_{(a_1'', \ldots, a_\tau'') \in [\B]^\tau}
\p_{w_{a_1, \ldots, a_\tau}\sim \MD}\left[
\frac
{\frac{p(w_{a_1'', \ldots, a_\tau''})}{p(w_{a_1'', \ldots, a_\tau''} + 1)} + \sum_{(a_1', \ldots, a_\tau') \in \MH_0, (a_1', \ldots, a_\tau') \neq (a_1'', \ldots, a_\tau'')} \frac{p( w_{a_1', \ldots, a_\tau'} - 1)}{p( w_{a_1', \ldots, a_\tau'})}}
{\frac{p(w_{a_1'', \ldots, a_\tau''})}{p(w_{a_1'', \ldots, a_\tau''} + 1)} +  \sum_{(a_1', \ldots, a_\tau') \in \MH_1, (a_1', \ldots, a_\tau') \neq (a_1'', \ldots, a_\tau'')} \frac{p( w_{a_1', \ldots, a_\tau'} - 1)}{p( w_{a_1', \ldots, a_\tau'})}} \geq e^\ep
\right] \leq \delta
$$
For $\rho n \geq w \geq 1$, we have that
$$
0 \leq \frac{p(w-1)}{p(w)} = \frac{(1-\B^{-\tau}) \cdot (\rho n - w + 1)}{\B^{-\tau} \cdot w} \leq \B^\tau \cdot \rho n.
$$
Next consider

Moreover, for $\B \geq 4$, and $n$ sufficiently large so that $2(\rho n - 1) B^{-\tau} \leq 2 (\rho n - 1)/n^2 \leq 1/2$, we have
$$
p(0) = (1-\B^{-\tau})^{\rho n} \geq \exp(-2\rho n \cdot \B^{-\tau}) \geq 1 - 2 \rho n B^{-\tau}
$$
and
$$
p(1) = \rho n B^{-\tau} (1-B^{-\tau})^{\rho n - 1} \geq \rho n B^{-\tau} \exp(-2 (\rho n-1) B^{-\tau}) \geq \rho n B^{-\tau} (1 - 2(\rho n - 1)B^{-\tau}) \geq \rho n B^{-\tau} / 2.
$$
Thus, for $j \in \{0,1\}$, and any $(a_1'', \ldots, a_\tau'') \in [\B]^\tau$,
$$
\E_{\MD}\left[\sum_{(a_1', \ldots, a_\tau') \in \MH_j, (a_1' ,\ldots, a_\tau') \neq (a_1'', \ldots, a_\tau'')} \frac{p(w_{a_1', \ldots, a_\tau'}) - 1}{p(w_{a_1', \ldots, a_\tau'})}\right] \geq \frac{((\B/2)^\tau - 1) (\rho n )^2}{2\B^\tau}
$$

Next, notice that
{\scriptsize\begin{align}
\label{eq:num_sum}
&\p_{W_{a_1, \ldots, a_\tau}, Q, Q' \sim \MD} \left[ \forall a_1, \ldots, a_\tau \in \MH_0^\tau \cup \MH_1^\tau : W_{a_1, \ldots, a_\tau} + \One[Q = (a_1, \ldots, a_\tau)] = w_{a_1, \ldots, a_\tau} + \One[q = (a_1, \ldots, a_\tau)] \right] \nonumber\\
&=
\left( \frac{2}{\B} \right)^\tau \sum_{a_1', \ldots, a_\tau' \in \MH_0} \p[\forall a_1, \ldots, a_\tau \in \MH_0^\tau \cup \MH_1^\tau : W_{a_1, \ldots, a_\tau} = w_{a_1, \ldots, a_\tau} + \One[q = (a_1, \ldots, a_\tau)] - \One[(a_1, \ldots, a_\tau) = (a_1', \ldots, a_\tau')]
\end{align}}
and
{\scriptsize\begin{align}
\label{eq:den_sum}
& \p_{W_{a_1, \ldots, a_\tau}, Q, Q' \sim \MD} \left[ \forall a_1, \ldots, a_\tau \in \MH_0^\tau \cup \MH_1^\tau: W_{a_1, \ldots, a_\tau} + \One[Q' = (a_1, \ldots, a_\tau)] = w_{a_1, \ldots, a_\tau} + \One[q = (a_1, \ldots, a_\tau)] \right] = \nonumber\\
& \left( \frac{2}{\B} \right)^\tau \sum_{a_1', \ldots, a_\tau' \in \MH_1} \p[\forall a_1, \ldots, a_\tau \in \MH_0^\tau \cup \MH_1^\tau : W_{a_1, \ldots, a_\tau} = w_{a_1, \ldots, a_\tau} + \One[q = (a_1, \ldots, a_\tau)] - \One[(a_1, \ldots, a_\tau) = (a_1', \ldots, a_\tau')].
\end{align}}
Thus the ratio of (\ref{eq:num_sum}) and (\ref{eq:den_sum}) is given by
$$
\frac
{}
{}
$$
\fi
\end{proof}

We next prove the accuracy of Algorithm~\ref{alg:hh_had}.
\begin{restatable}[Accuracy of $P^{\Had}$]{lemma}{Hadaccuracy}
\label{thm:h_had_acc}
Fix $n, \B \in \BN$ with $\B$ a power of 2. Then with $\tau, \rho$ as in Lemma~\ref{thm:h_had_priv}, the estimate $\hat x$ produced in $A^{\Had}$ in the course of the shuffled-model protocol $P^{\Had} = (R^{\Had}, S, A^{\Had})$ with input $x_1, \ldots, x_n \in \{0,1\}^\B$ satisfies
$$
\p \left[ \left\| \hat x - \sum_{i=1}^n x_i \right\|_\infty \leq  \sqrt{3 \ln(2\B/\beta) \cdot \max \{ {3 \ln(2\B/\beta)}, \rho + k \}}\right] \geq 1- \beta.
$$
\end{restatable}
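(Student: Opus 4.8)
The plan is to reduce the claim, for each coordinate $j \in [\B]$, to a single Chernoff bound on a binomial random variable. Fix $j$ and write $c_j := \sum_{i=1}^n x_{i,j}$ for the true count. First I would classify the tuples received by $A^{\Had}$ and compute, for each, the probability that it lies in $\MH_{2\B,j}^\tau$ (the event checked on Line~\ref{ln:if_had}). A user-tuple generated on Line~\ref{ln:ajs} for an index $j'$ with $(\tilde x_i)_{j'}=1$ lies in $\MH_{2\B,j}^\tau$ with probability $1$ when $j'=j$, and with probability exactly $2^{-\tau}$ when $j'\neq j$: by orthogonality of the Hadamard rows, $|\MH_{2\B,j}|=\B$ and $|\MH_{2\B,j}\cap\MH_{2\B,j'}|=\B/2$ for $j'\neq j$, so each of the $\tau$ independent coordinates of the tuple lands in $\MH_{2\B,j}$ with probability $1/2$. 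The same holds for the augmented dummy indices $j'\in\{\B+1,\dots,2\B-1\}$, since these are always $\neq j$ (as $j\le\B$), and for each of the $\rho n$ blanket tuples from Line~\ref{ln:tildek}, whose coordinates are uniform on $[2\B]$. Because the augmentation step forces $\|\tilde x_i\|_1=k$ for every $i$, there are exactly $c_j$ user-tuples of the first kind and $M_j := (kn-c_j)+\rho n$ of the second kind, and all the indicator variables for the second kind are mutually independent. Hence the value of $\hat x_j$ just before the de-biasing step (Line~\ref{ln:debias}) is distributed as $c_j + N_j$ where $N_j\sim\Bin(M_j,2^{-\tau})$.

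Next I would carry out the de-biasing. Since $\tau=\log n$ we have $2^{-\tau}=1/n$, and a direct computation shows that the final output satisfies $\hat x_j - c_j = \tfrac{1}{1-2^{-\tau}}\bigl(N_j - \E[N_j]\bigr)$; in particular $\hat x_j$ is unbiased, and $\E[N_j]=M_j\cdot 2^{-\tau}\le (k+\rho)n\cdot 2^{-\tau}=k+\rho$. It therefore suffices to bound $|N_j-\E[N_j]|$. I would use the two-sided multiplicative Chernoff bound in the form $\p\bigl[|N_j-\E N_j|\ge s\bigr]\le 2\exp\!\bigl(-s^2/(3\max\{\E N_j,s\})\bigr)$, and set $s := \sqrt{3\ln(2\B/\beta)\cdot\max\{3\ln(2\B/\beta),\,k+\rho\}}$. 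One checks that $s\le\max\{3\ln(2\B/\beta),k+\rho\}$ and $\E N_j\le k+\rho$ in both cases of the maximum, so this choice makes the right-hand side at most $2e^{-\ln(2\B/\beta)}=\beta/\B$. After accounting for the benign factor $\tfrac{1}{1-2^{-\tau}}=\tfrac{n}{n-1}=1+O(1/n)$ (absorbed into the constant, or handled by a marginally sharper Bernstein-type estimate), this gives $|\hat x_j-c_j|\le s$ with probability at least $1-\beta/\B$; a union bound over the $\B$ coordinates yields the lemma.

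I expect the only delicate point to be the bookkeeping of the first paragraph — namely establishing that the spurious mass landing on coordinate $j$ is \emph{exactly} $\Bin((kn-c_j)+\rho n,\,2^{-\tau})$. This relies on three facts that must be checked carefully: the exact Hadamard overlap $|\MH_{2\B,j}\cap\MH_{2\B,j'}|=\B/2$ for $j\neq j'$ (which makes every spurious tuple land on $j$ with the \emph{same} probability $2^{-\tau}$, so the count is genuinely binomial rather than merely Poisson-binomial); the fact that augmentation makes the number of user-tuples equal to $kn$ independently of the inputs; and the fact that the dummy indices lie outside $[\B]$ and so never collide with a query coordinate. Once the distribution $c_j + \Bin((kn-c_j)+\rho n, 2^{-\tau})$ is in hand, the rest is a routine concentration-plus-union-bound argument.
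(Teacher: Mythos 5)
Your proposal is correct and follows essentially the same route as the paper's proof: identify the pre-debiasing count at coordinate $j$ as $\zeta_j + \Bin(\rho n + kn - \zeta_j, 2^{-\tau})$ (using the Hadamard overlap, the augmentation to exactly $k$ indices, and the dummy indices lying outside $[\B]$), observe the de-biased estimate is the scaled centered binomial, and conclude via the two-sided multiplicative Chernoff bound split on whether $3\ln(2\B/\beta)$ exceeds $\rho+k$, followed by a union bound over $j \in [\B]$. Even your slightly loose handling of the $\tfrac{1}{1-2^{-\tau}} = \tfrac{n}{n-1}$ factor matches the paper, which likewise treats it as negligible.
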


\begin{proof}
Fix any $j \in [\B]$. Let $\zeta_j = \sum_{i=1}^n (x_i)_j$. We will upper bound the probability that $\xi_j := \hat x_j - \zeta_j$ is large. Notice that the distribution of $\hat x_j$ is given by 
$$
\frac{1}{1 - 2^{-\tau}} \cdot \left(\zeta_j +  \Bin(\rho n + kn - \zeta_j, 2^{-\tau}) - (\rho n + kn) 2^{-\tau}\right).
$$
This is because each of the $\rho n $ tuples $(\tilde a_{g,1}, \ldots, \tilde a_{g,\rho})$ chosen uniformly from $[2\B]^\tau$ on Line~\ref{ln:tildek} of Algorithm~\ref{alg:hh_had} has probability $2^{-\tau}$ of belonging to $\MH_{2\B,j}^\tau$, and each of the $kn - \zeta_j$ tuples $(a_{j',1}, \ldots, a_{j',\tau})$ (for $j' \in\MS_i$, $j' \neq j$, $i \in [n]$) chosen uniformly from $\MH_{2\B,j'}^\tau$ in Line~\ref{ln:ajs} of Algorithm~\ref{alg:hh_had} also has probability $2^{-\tau}$ of belonging to $\MH_{2\B,j}^\tau$. (Moreover, each of the $\zeta_j$ tuples $(a_{j,1}, \ldots, a_{j,\tau})$ chosen in Line~\ref{ln:ajs} of the local randomizer always belongs to $\MH_{2\B,j}^\tau$.)

Therefore, the distribution of $\xi_j$ is given by
$$
\frac{1}{1 - 2^{-\tau}} \cdot \left( \Bin(\rho n + kn - \zeta_j, 2^{-\tau}) - (\rho n + kn - \zeta_j) 2^{-\tau} \right).
$$
Using that $\tau = \log n$, we may rewrite the above as
$$
\frac{n}{n-1} \cdot \left( \Bin((\rho + k - \zeta_j /n) \cdot n, 1/n) - (\rho + k - \zeta_j / n) \right).
$$
For any reals $c > 0$ and $0 \leq \eta \leq 1$, by the Chernoff bound, we have
$$
\p_{\xi \sim \Bin(cn, 1/n)} \left[ |z - c| > \eta c \right] \leq 2\exp\left( \frac{-\eta^2 c}{3} \right).
$$
Moreover, for $\eta > 1$, we have
$$
\p_{\xi \sim \Bin(cn, 1/n)}  \left[ |z - c| > \eta c \right] \leq 2\exp\left( \frac{-\eta c}{3} \right).
$$

We have $2 \exp(-\eta^2 c / 3) \leq \beta/\B$ as long as $\eta \geq \sqrt{3 \ln(2\B/\beta)/ c}$. Set $c = \rho + k - \zeta_j / n$, so that $\rho \leq c \leq \rho + k$. First suppose that $\sqrt{3 \ln(2\B/\beta)/(\rho + k - \zeta_j / n)} \leq 1$. Then we see that
\begin{equation}
\label{eq:had_acc_1}
\p[|\hat x_j - \zeta_j| > \sqrt{3 \ln(2\B/\beta) \cdot (\rho + k)}] \leq \beta/\B.
\end{equation}
In the other case, namely $\rho + k - \zeta_j/n = c < 3 \ln(2\B/\beta)$, set $\eta = 3 \ln(2\B/\beta)/c$, and we see that
\begin{equation}
\label{eq:had_acc_2}
\p[|\hat x_j - \zeta_j | > 3 \ln(2\B/\beta)] \leq \beta/\B.
\end{equation}
The combination of (\ref{eq:had_acc_1}) and (\ref{eq:had_acc_2}) with a union bound over all $j \in [\B]$ completes the proof of Lemma~\ref{thm:h_had_acc}.
\end{proof}

\if 0
Setting $k = 1$ in Lemma~\ref{thm:h_had_acc} gives the following accuracy bound for $P^{\Had}$:
\begin{corollary}
\label{cor:h_had_acc_ke1}
Fix $n, \B$ with $\B$ a power of 2. Then with $\tau, \rho$ as in Lemma~\ref{thm:h_had_priv} and $k = 1$, the estimate $\hat x$ produced in $A^{\Had}$ in the course of the shuffled model protocol $P^{\Had} = (R^{\Had}, S, A^{\Had})$ with input $x_1, \ldots, x_n \in \{0,1\}^\B$ satisfies
$$
\p \left[ \left\| \hat x - \sum_{i=1}^n x_i \right\|_\infty \leq O \left( \sqrt{\log(\B/\beta) \cdot \max \left\{ \log(\B / \beta), \log(1/\delta) / \ep^2 \right\}} \right) \right] \geq 1 - \beta.
$$
\end{corollary}
\fi

Next we summarize the communication and computation settings of the shuffled model protocol $P^{\Had}$. 

\begin{restatable}[Efficiency of $P^{\Had}$]{lemma}{Hadefficiency}
\label{thm:Hadamard_efficiency}
Let $n, \B, \tau, \rho, k \in \BN$. Then the protocol $P^{\Had} = (R^{\Had}(n,\B,\tau,\rho, k), S, \\ A^{\Had}(n,\B,\tau,\rho,k))$  satisfies the following:
\begin{enumerate}
\item On input $(x_1, \ldots, x_n) \in \{0,1\}^\B$, the output of the local randomizers $R^{\Had}(n,\B,\tau,\rho,k)$ consists of $n(k+\rho)$ messages of length $\tau \log 2\B$ bits.
\item The runtime of the analyzer $A^{\Had}(n,\B,\tau,\rho,k)$ on input $\{ y_1, \ldots, y_m\}$ is at most $O(\B m \tau)$ and its output has space $O(\B\log(n(k+\rho)))$ bits. Moreover, if $\tau = \log n$ (i.e., as in Lemma~\ref{thm:h_had_priv}), and if its input $\{ y_1, \ldots, y_m \}$ is the output of the local randomizers on input $x_1, \ldots, x_n$ (so that $m = n(\rho + k)$), there is a modification of the implementation of $A^{\Had}$ in Algorithm~\ref{alg:hh_had} that, for $\beta \in [0,1]$, completes in time $O((\rho + k) n\log^3 \B + \B\rho \log \B/\beta)$ with probability $1-\beta$.  
\item There is a separate modification of $A^{\Had}(n,\B,\tau,\rho,k)$ that on input $\{ y_1, \ldots, y_m \}$ produces an output data structure $(\FO, \MA)$ with space $O(m\tau\log \B)$ bits, such that a single query $\MA(\FO, j)$ of some $j \in [\B]$ takes time $O(m\tau \log \B)$.
\end{enumerate}
\end{restatable}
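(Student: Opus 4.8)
This is a routine bookkeeping lemma about the resource usage of Algorithm~\ref{alg:hh_had}; the plan is simply to read off each bound from the pseudocode.

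The plan for the proof is as follows.

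\textbf{Part 1 (message count and size).} I would trace through $R^{\Had}(n,\B,\tau,\rho,k)$: the augmentation step (Lines using Line~\ref{ln:augment}) ensures $|\MS| = k$, so the loop at Line~\ref{ln:ajs} emits exactly $k$ tuples, and the loop at Line~\ref{ln:tildek} emits exactly $\rho$ tuples; hence each user sends $k+\rho$ messages and, summing over $n$ users, the shuffler receives $n(k+\rho)$ messages. Each message is an element of $[2\B]^\tau$, which is encoded as $\tau$ binary strings of length $\log 2\B$, i.e.\ $\tau\log 2\B$ bits. This establishes item~1.

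\textbf{Part 2 (analyzer runtime, plus the fast modification).} For the na\"{i}ve implementation in Algorithm~\ref{alg:hh_had}, the double loop over $j\in[\B]$ and $i\in[m]$ performs, for each pair, an $O(\tau)$ check of whether $\{a_{i,1},\dots,a_{i,\tau}\}\subset \MH_{2\B,j}$ (computing each Hadamard entry $H_{j+1,a_{i,\ell}}$ takes $O(\log \B)$ bit operations, but one can absorb this into $\tilde O(\tau)$ or state it as $O(\B m\tau)$ up to $\log\B$ factors as in the lemma statement), giving $O(\B m \tau)$ total; the de-biasing loop (Line~\ref{ln:debias}) is $O(\B)$; and the output vector $\hat x\in\BR^\B$ has $O(\B\log(n(k+\rho)))$ bits since each $\hat x_j$ is a count bounded by $m = n(k+\rho)$. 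For the faster modification when $\tau=\log n$ and the input is genuinely the output of the local randomizers on $x_1,\dots,x_n$: the key observation is that the ``real'' tuples $(a_{j,1},\dots,a_{j,\tau})$ for a user holding $j$ are supported on $\MH_{2\B,j}^\tau$, so rather than testing all $\B$ buckets against all $m$ messages, one can group messages. I would argue that (i) for each of the $n(\rho+k)$ messages one can, in $O(\log^3\B)$ time, identify the (at most polylogarithmically many) indices $j$ for which it could plausibly be a ``real'' contribution by exploiting the recursive/Fourier structure of the Hadamard matrix, and (ii) the blanket-noise messages contribute a total of $O(\B\rho\log\B/\beta)$ work to the bucket counts with probability $1-\beta$ by a Chernoff/union bound showing each bucket $\MH_{2\B,j}^\tau$ catches only $O(\rho\log\B/\beta)$ noise tuples (since the probability a uniform $[2\B]^\tau$ tuple lands in $\MH_{2\B,j}^\tau$ is $2^{-\tau}=1/n$, and there are $n\rho$ of them, so $O(\rho)$ in expectation and $O(\rho\log(\B/\beta))$ whp). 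Summing gives $O((\rho+k)n\log^3\B + \B\rho\log\B/\beta)$.

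\textbf{Part 3 (oracle data structure).} For the query-oracle variant, I would have the analyzer simply store the shuffled multiset $\{y_1,\dots,y_m\}$ verbatim — $O(m\tau\log\B)$ bits — together with whatever $O(1)$-size auxiliary constants ($\rho,k,n,\tau$) are needed for de-biasing. A query $\MA(\FO,j)$ then scans all $m$ messages, checks membership in $\MH_{2\B,j}^\tau$ in $O(\tau\log\B)$ time each, and applies the de-biasing formula of Line~\ref{ln:debias}; total $O(m\tau\log\B)$.

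\textbf{Main obstacle.} Parts~1 and~3 are immediate. The only genuinely nontrivial piece is the faster analyzer of item~2: one must (a) pin down exactly how to use the Hadamard structure to avoid the $\B m$ blowup — the cleanest route is to observe that determining, for a fixed message $(a_1,\dots,a_\tau)$, the set of $j$ with $(a_1,\dots,a_\tau)\in\MH_{2\B,j}^\tau$ amounts to solving $\tau$ linear systems over $\mathbb{F}_2$ of dimension $\log 2\B$, doable in $O(\tau\cdot\mathrm{polylog}\,\B)$ time, and that this set has size at most $2$ unless $a_1,\dots,a_\tau$ are ``degenerate'' — and (b) control, via concentration, the contribution of the blanket-noise messages so that the $\B\rho\log\B/\beta$ term is correct with probability $1-\beta$. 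I expect (a) to require the most care to state rigorously; everything else follows by direct inspection of Algorithm~\ref{alg:hh_had} together with standard Chernoff bounds.
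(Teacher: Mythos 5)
Your plan matches the paper's proof essentially step for step: items 1 and 3 are read off the pseudocode exactly as you describe, and for the fast analyzer in item 2 the paper uses the same two ingredients you identify — for each message, enumerating the buckets $j$ with $Aj=0$ over $\mathbb{F}_2$ (where $A$ is the $\tau\times\log 2\B$ matrix of the message's binary representations) via Gaussian elimination in $O(\log^3\B)$ time, plus a Chernoff-and-union bound showing the total number of (message, bucket) incidences is $kn + O(\B\rho\log(\B/\beta))$ with probability $1-\beta$. The only cosmetic differences are that the paper phrases this as one kernel computation rather than ``$\tau$ linear systems'' and bounds all incidences (including real tuples of one user landing in another bucket $\MH_{2\B,j'}^\tau$) in aggregate, so your unneeded side claim that the solution set ``has size at most $2$ unless degenerate'' can simply be dropped.
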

\begin{proof}[Proof]
The first item is immediate from the definition of $R^{\Had}$ in Algorithm~\ref{alg:hh_had}. For the second item, note first that $A^{\Had}$ as written in Algorithm~\ref{alg:hh_had} takes time $O(\B m \tau\log \B)$: for each message $y_i = (a_{i,1}, \ldots, a_{i,\tau})$, it loops through each $j \in [\B]$ to check if each $a_{i,g} \in \MH_{2\B,j}$ for $1 \leq g \leq \tau$ (determination of whether $a_{i,g} \in \MH_{2\B,j}$ takes time $O(\log \B)$).

Now suppose that the messages $y_1, \ldots, y_m$ are the union of the multisets output by each of $n$ shufflers on input $x_1, \ldots, x_n$. Notice that for each $j' \in [2\B-1]$, the number of messages $(a_{j,1}, \ldots, a_{j,\tau}) \in \MH_{2\B,j}$ (Line~\ref{ln:ajs} of Algorithm~\ref{alg:hh_had}) such that $j \neq j'$ and also $(a_{j,1}, \ldots, a_{j,\tau}) \subset \MH_{2\B,j'}$ is distributed as $Bin(n', 1/n)$ for some $n' \leq n$ (recall $\tau = \log n$). Moreover, for each $j' \in [2\B-1]$, the number of messages of the form $(\tilde a_{g,1}, \ldots, \tilde a_{g,\tau})$ (Line~\ref{ln:tildek} of Algorithm~\ref{alg:hh_had})  satisfying $(\tilde a_{g,1}, \ldots, \tilde a_{g,\tau}) \subset \MH_{2\B, j'}$ is distributed as $Bin(\rho n, 1/n)$. Therefore, by the multiplicative Chernoff bound and a union bound, for any $0 \leq \beta \leq 1$, the sum, over all $j' \in [\B]$, of the number of messages $(a_1, \ldots, a_\tau)$ that belong to $\MH_{2\B,j'}$ is bounded above by
\begin{equation}
\label{eq:bound_solutions}
kn + \frac{4 \B (n + \rho n) (1 + \ln(\B/\beta))}{n} = kn + 4\B(1+\rho)(1 + \ln(\B/\beta)),
\end{equation}
with probability $1 - \beta$. Next, consider any individual message $y_i = (a_{i,1}, \ldots, a_{i,\tau})$ (as on Line~\ref{ln:if_had}). Notice that the set of $j$ such that $\{ a_{i,1}, \ldots, a_{i,\tau} \} \subset \MH_{2\B,j}$ can be described as follows: write $j = (j_1, \ldots, j_{\log 2\B}) \in \{0,1\}^{2\B}$ to denote the binary representation of $j$, and arrange the $\log 2\B$-bit binary representations of each of $a_{i,1}, \ldots, a_{i,\tau}$ to be the rows of a $\tau \times \log 2\B$ matrix $A \in \{0,1\}^{\tau \times \log 2\B}$. Then $\{ a_{i,1}, \ldots, a_{i,\tau} \} \subset \MH_{2\B,j}$ if and only if $Aj = 0$, where arithmetic is performed over $\mathbb{F}_2$. This follows since for binary representations $i = (i_1, \ldots, i_{\log 2\B}) \in \{0,1\}^{\log 2\B}$ and $j = (j_1, \ldots, j_{\log 2\B}) \in \{0,1\}^{\log 2\B}$, the $(i,j)$-element of $H_\B$ is $(-1)^{\sum_{t=1}^{\log 2\B} i_tj_t}$. Using Gaussian elimination one can enumerate the set of $j \in \{0,1\}^{\log 2\B}$ in the kernel of $A$ in time proportional to the sum of $O(\log^3 \B)$ and the number of $j$ in the kernel. Since the sum, over all messages $y_i$, of the number of such $j$ is bounded above by $(\ref{eq:bound_solutions})$ (with probability $1-\beta$), the total running time of this modification of $A^{\Had}$ becomes $O((\rho + k)n \log^3 \B + kn + \B\rho \log(\B/\beta))$.

For the last item of the theorem, the analyzer simply outputs the collection of all tuples $y_i = (a_{i,1}, \ldots, a_{i,\tau})$; to query the frequency of some $j \in [\B]$, we simply run the for loop on Line~\ref{ln:mforloop} of Algorithm~\ref{alg:hh_had}, together with the debiasing step of Line~\ref{ln:debias}.
\end{proof}

\subsection{Public-Coin Protocol with Small Query Time}\label{subsec:count-min}

In this subsection, we give a public-coin protocol for frequency estimation in the shuffled model with error $\poly\log{B}$ and communication per user $\poly(\log{B}, \log{n})$ bits. As discussed in Section~\ref{sec:overview_ub}, our protocol is based on combining the Count Min data structure~\cite{cormode2005improved} with a multi-message version of randomized response~\cite{warner1965randomized}. We start by giving a more detailed overview of the protocol.

\paragraph{Overview.}
On a high level, the presence of public randomness (which is assumed to be known to the analyzer) allows the parties to jointly sample random seeds for hash functions which they can use to compute and communicate (input-dependent) updates to a probabilistic data structure. The data structure that we will use is Count Min which we recall next. Assume that each of $n$ users holds an input from $[B]$ where $n \ll B$. We hash the universe $[B]$ into $s$ buckets where $s = O(n)$.%
~\footnote{It is possible to introduce a trade-off here: By increasing $s$ we can improve privacy and accuracy at the cost of requiring more communication. For simplicity we present our results for the case $s = O(n)$, minimizing communication, and discuss larger $s$ at the end of section~\ref{subsec:count-min}}
Then for each user, we increment the bucket to which its input hashes. This ensures that for every element of $[B]$, its hash bucket contains an overestimate of the number of users having that element as input. 
However, these bucket values are not enough to unambiguously recover the number of users holding any specific element of $[B]$---this is because on average, $B/s$ different elements hash to the same bucket. To overcome this, the Count Min data structure repeats the above idea $\tau = O(\log{B})$ times using independent hash functions.
Doing so ensures that for each element $j \in [B]$, it is the case that (i) no other element $j' \in [B]$ hashes to the same buckets as $j$ for all $\tau$ repetitions, and (ii) for at least one repetition, no element of $[B]$ that is held by a user (except possibly $j$ itself) hashes to the same bucket as $j$. 
To make the Count Min data structure differentially private, we use a multi-message version of randomized response~\cite{warner1965randomized}.
Specifically, we ensure that sufficient independent noise is added to each bucket of each repetition of the Count Min data structure. This is done by letting each user independently, using its private randomness, increment every bucket with a very small probability. The noise stability of Count Min is used to ensure that the frequency estimates remain accurate after the multi-message noise addition. We further use the property that updates to this data structure can be performed using a logarithmic number of ``increments'' to entries in the sketch for two purposes: (i) to bound the privacy loss for a single user, and (ii) to obtain a communication-efficient implementation of the protocol.
The full description appears in Algorithm~\ref{alg:hh_CM}.

\begin{algorithm}[!ht]
\Fn{$R^{\CM}(n, \B, \tau, \gamma, \tablesize)$}{
\KwIn{Subset $\MS \subset [\B]$ specifying the user's input set\\ 
{\bf Parameters:} $n, \B, \tau, \tablesize \in \BN$ and $\gamma \in [0,1]$\\ 
\textbf{Public Randomness}: A random hash family $\{h_t:[B] \to [\tablesize], ~ \forall t \in [\tau]\}$}

\KwOut{A multiset $\MT \subset [\tau] \times [\tablesize]$}
\For{$j \in \MS$}{
\For{$t \in [\tau]$}{
Add the pair $(t, h_t(j))$ to $\MT$.
}
}
\For{$t \in [\tau]$}{
\For{$\ell \in [\tablesize]$}{
Sample $b_{t,\ell}$ from $\Ber(\gamma)$.\\
\If{$b_{t,\ell} = 1$}{
Add the pair $(t, \ell)$ to $\MT$.
}
}
}
\Return{$\MT$.}
}

\Fn{$A^{\CM}(n, B, \tau, s)$}{
\KwIn{Multiset $\{ y_1, \ldots, y_m \}$ containing outputs of local randomizers\\ 
{\bf Parameters:} $n, B, \tau, s \in \BN$\\ 
\textbf{Public Randomness}: A random hash family $\{h_t:[B] \to [\tablesize], ~ \forall t \in [\tau]\}$}
\KwOut{A noisy Count Min data structure $C: [\tau] \times [\tablesize] \to \BN$}
\For{$t \in [\tau]$}{
\For{$\ell \in [\tablesize]$}{
$C[t, \ell] = 0$.\\ 
}
}
\For{$j \in [m]$}{
$C[y_j] \gets C[y_j] + 1$.
}
\Return{$C$}
}

\Fn{$Q^{\CM}(n, B, \tau, s)$}{
\KwIn{Element $j \in [B]$\\ 
{\bf Parameters:} $n, B, \tau, s \in \BN$\\ 
\textbf{Public Randomness}: A random hash family $\{h_t:[B] \to [\tablesize], ~ \forall t \in [\tau]\}$}
\KwOut{A non-negative real number which is an estimate of the frequency of element $j$}
\Return{$\hat x_j := \max \left\{\min\{ C[t, h_t[j]] - \gamma n: ~ t \in [\tau]\}, 0 \right\}$}}
\caption{Local randomizer, analyzer and query for frequency estimation via Count Min.}
\label{alg:hh_CM}
\end{algorithm}

\paragraph{Analysis.}
We next show the accuracy, efficiency, and privacy guarantees of Algorithm~\ref{alg:hh_CM} which are summarized in the following theorem.
\begin{theorem}\label{thm:CM_main}
  There is a sufficiently large positive absolute constant $\zeta$ such that the following holds. Suppose $n, \B, k \in \BN$, and $0 \leq \ep, \delta,\beta \leq 1$. Consider the shuffled-model protocol $P^{\CM} = (R^{\CM}, S, A^{\CM})$ with $\tau = \log (2\B/\beta)$, $s = 2kn$, and
$$
\gamma =\frac 1n \cdot  \zeta \cdot \max \left\{ \log n, \frac{\log^2(\B/\beta) k^2 \log(\log( \B/\beta) k /\delta)}{\ep^2} \right\}.
$$
Then $P^{\CM}$ is $(\ep, \delta)$-differentially private (Definition~\ref{def:dp_shuffled}), each user sends $O(\gamma kn \log(\B/\beta))$ messages consisting of $O(\log n + \log \log \B/\beta)$ bits each with probability $1-\beta$, and for inputs $x_1, \ldots, x_n \in \{0,1\}^\B$ ($\| x_i\|_1 \leq k$), the estimates $\hat x_j$ produced by $Q^{\CM}$ satisfy:
\begin{equation}
   \label{eq:qcm_acc}
   \p \left[\forall j \in [\B] ~:~ \left|\hat x_j - \sum_{i=1}^n x_{i,j}\right| \leq O \left( \sqrt{\log\left( \frac{\B n }{\beta} \right) \cdot \left(\log\left( \frac{\B n }{\beta} \right) +  \frac{k^2 \log^2(\B/\beta) \log((\log \B/\beta) k /\delta)}{\ep^2}\right)} \right) \right]
     \geq 1 - \beta.
   \end{equation}
   Moreover, any frequency query can be answered in time $O(\log \B/\beta)$.
\end{theorem}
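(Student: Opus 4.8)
The plan is to prove Theorem~\ref{thm:CM_main} via three separate lemmas---on privacy, accuracy, and efficiency of $P^{\CM}$---mirroring the structure of Section~\ref{subsec:Had}. The starting observation is that every message of $R^{\CM}$ is a pair $(t,\ell)\in[\tau]\times[\tablesize]$, so the shuffled output $S(R^{\CM}(x_1,\dots,x_n))$ is, up to post-processing, exactly the count array $C$, where $C[t,\ell]$ equals a \emph{deterministic contribution} $A_{t,\ell}$ (the number of incidences $(i,j)$ with $j\in\MS_i$ and $h_t(j)=\ell$) plus the number of users who sampled $b_{t,\ell}=1$. Since the $b_{i,t,\ell}$ are mutually independent, conditioned on the hash family and the inputs, $C[t,\ell]=A_{t,\ell}+N_{t,\ell}$ where $N_{t,\ell}\sim\Bin(n,\gamma)$ are independent across $(t,\ell)$.

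For privacy I would fix the public hash family (privacy must hold for every setting of it) and compare two neighboring datasets. Changing one user's set $\MS_i$ to $\MS_i'$ changes at most $2k$ items, hence changes the incidence array by $\|A-A'\|_1\le 2k\tau$, while leaving all noise distributions intact. Because $C$ is a product distribution over buckets, it then suffices to prove a single-bucket fact: shifting one coordinate of $A$ by $d$ and adding independent $\Bin(n,\gamma)$ noise is $(|d|\cdot\ep_0,\delta_0)$-DP for $\ep_0=\ep/(2k\tau)$ and a suitable $\delta_0$, and then take the product (summing the $\ep$'s to at most $\ep$ by the $L_1$ budget, and summing the $\delta$'s). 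The single-bucket fact reduces to comparing $\Bin(n,\gamma)$ with $\Bin(n,\gamma)+d$: the ratio of the pmfs at $v$ is $\prod_{i=0}^{|d|-1}\frac{(n-v+1+i)\gamma}{(v-i)(1-\gamma)}$, which is bounded by $e^{|d|\ep_0}$ whenever $v$ is within a $\Theta(\ep_0)$ relative window of $n\gamma$, and a Chernoff bound shows the complement of that window (which also contains the point $v=0$) has binomial mass at most $\exp(-\Theta(\ep^2 n\gamma/(k^2\tau^2)))$. Requiring this to be $\le \delta/(2k\tau)$ gives exactly the condition $n\gamma = \Omega\!\big(k^2\log^2(\B/\beta)\log(\log(\B/\beta)k/\delta)/\ep^2\big)$ (using $\tau=\log(2\B/\beta)$), which holds by the choice of $\gamma$ and a large enough $\zeta$.

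For accuracy I would combine the standard Count Min analysis with binomial concentration. Write $x_j=\sum_i x_{i,j}$. For fixed $j$ and repetition $t$, the number of colliding incidences (from $(i,j')$ with $j'\in\MS_i$, $j'\ne j$, $h_t(j')=h_t(j)$) has mean $\le kn/\tablesize = 1/2$, so by Markov each of the $\tau=\log(2\B/\beta)$ independent repetitions is collision-free with probability $\ge 1/2$; hence the probability that all repetitions collide is at most $2^{-\tau}=\beta/(2\B)$, and a union bound over $j\in[\B]$ costs $\beta/2$, leaving an event on which every $j$ has some repetition $t^*$ with $C[t^*,h_{t^*}(j)]=x_j+N_{t^*}$. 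Separately, a Chernoff bound and a union bound over all $\tau\tablesize=\Theta(kn\log(\B/\beta))$ buckets show that all $N_{t,\ell}$ lie within $\Delta:=O\!\big(\sqrt{n\gamma\,\log(\B n/\beta)}\big)$ of their common mean $n\gamma$ except with probability $\beta/2$. On the intersection, $x_j+\min_t N_t \le \min_t C[t,h_t(j)] \le C[t^*,h_{t^*}(j)]=x_j+N_{t^*}$, so $|\min_t C[t,h_t(j)]-n\gamma - x_j|\le \Delta$ for every $j$; since $x_j\ge 0$, the outer $\max\{\cdot,0\}$ in $Q^{\CM}$ only decreases the error, yielding $|\hat x_j - x_j|\le\Delta$. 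Substituting $n\gamma = \Theta(\max\{\log n, k^2\log^2(\B/\beta)\log(\cdots)/\ep^2\})$ and using $\log n\le\log(\B n/\beta)$ gives~\eqref{eq:qcm_acc}.

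Finally, efficiency: each user sends $|\MS_i|\tau\le k\tau$ deterministic messages plus $\Bin(\tau\tablesize,\gamma)$ noise messages, and $\tau\tablesize\gamma=2k\tau\cdot\gamma n=\Theta(\gamma k n\log(\B/\beta))$ with $\gamma n\ge\zeta\log n$, so by a Chernoff bound the total is $O(\gamma k n\log(\B/\beta))$ with probability $1-\beta$, each message using $O(\log\tau+\log\tablesize)=O(\log n+\log\log(\B/\beta))$ bits; a query evaluates $\tau$ hashes and lookups and so takes $O(\tau)=O(\log(\B/\beta))$ time. I expect the privacy step to be the main obstacle: it requires the tight computation of the binomial pmf ratio above and care that the per-bucket noise level interacts correctly with the $L_1$ sensitivity $2k\tau$ of the incidence array---in particular, one cannot invoke the low-collision behavior of the hash functions here, since privacy must hold for worst-case public randomness, so a single user change may shift one bucket by as much as $k$. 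The accuracy and efficiency steps are routine once the Count Min collision bound and standard tail inequalities are in place.
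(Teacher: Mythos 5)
Your proposal is correct and follows essentially the same route as the paper: privacy via the $\ell_1$-sensitivity ($\Theta(k\tau)$) and per-bucket ($k$-incremental) structure of the noiseless Count Min array combined with a smoothness/pmf-ratio bound for $\Bin(n,\gamma)$ outside a Chernoff window (the paper packages exactly this as Lemmas~\ref{lem:lap_gen} and~\ref{lem:bin_smooth_small_gamma}), accuracy via the $(kn/s)^\tau$ collision bound plus Chernoff and a union bound over the $\tau s$ buckets as in Lemma~\ref{le:h_CM_acc}, and efficiency/query time as in Lemma~\ref{le:h_CM_eff}. Your parameter bookkeeping (choice of $\gamma n$, the $\sqrt{\gamma n\log(\B n/\beta)}$ error, and the worst-case treatment of the public hash family) matches the paper's, so no gap to flag.
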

Notice that by decreasing $\beta$ by at most a constant factor (and thus increasing the error bounds by at most a constant factor), we may ensure that $\tau = \log(2 \B/\beta)$ in the theorem statement is an integer. Note also that the additive error in (\ref{eq:qcm_acc}) is $\tilde O(k/\ep)$, where the $\tilde O(\cdot)$ hides factors logarithmic in $\B, n, k, 1/\delta, 1/\beta$. 

Theorem~\ref{thm:CM_main} with $k=1$ directly implies Theorem~\ref{thm:hist_full_pub}. The next lemma is used to prove the accuracy of Algorithm~\ref{alg:hh_CM}.

\begin{restatable}[Accuracy of $P^{\CM}$]{lemma}{CMacc}
\label{le:h_CM_acc}
Let $n$, $\B$, and $\tau$ be positive integers, and $\gamma \in [0,1]$, $\xi \in [0,\sqrt{\gamma n}]$ be real parameters. Then the estimate $\hat x_j$ produced by $Q^{\CM}$ on input $j \in [B]$ and as an outcome of the shuffled-model protocol $P^{\CM} = (R^{\CM}, S, A^{\CM})$ with input $x_1, \ldots, x_n \in \{0,1\}^\B$ ($\| x_i\|_1 \leq k$) satisfies $\hat x_j \geq \sum_{i=1}^n x_{i,j}$ and
$$
\p \left[\left|\hat x_j - \sum_{i=1}^n x_{i,j}\right| \leq \xi \sqrt{\gamma n}\right] \geq 1- (kn/s)^{\tau} - 2^{\log(2s\tau) - \xi^2/3}.
$$
\end{restatable}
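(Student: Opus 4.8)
Fix a query element $j \in [\B]$ and write $f_j := \sum_{i=1}^n x_{i,j}$ for its true frequency. The plan is to analyze the $\tau$ rows of the Count Min table separately and then reason about their minimum. For a row $t \in [\tau]$, the query inspects bucket $C[t, h_t(j)]$, and I would decompose its value as $C[t, h_t(j)] = f_j + M_t + W_t$, where $M_t := \#\{(i, j') : j' \in \MS_i,\ j' \neq j,\ h_t(j') = h_t(j)\} \ge 0$ collects the increments coming from other elements that collide with $j$ under $h_t$, and $W_t$ is the number of users who added the pair $(t, h_t(j))$ in the Bernoulli-noise loop. Since each of the $n$ users independently adds that pair with probability $\gamma$, we have $W_t \sim \Bin(n,\gamma)$, and over all $s\tau$ buckets the noise counts are mutually independent and independent of the hash family and the inputs. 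After de-biasing, $C[t, h_t(j)] - \gamma n = f_j + M_t + (W_t - \gamma n)$.

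Next I would isolate two bad events. The first is $\mathcal{E}_{\mathrm{coll}} := \{\forall t \in [\tau]:\ M_t > 0\}$: letting $A_j$ be the set of distinct elements $j' \neq j$ held by at least one user, so $|A_j| \le kn$, a union bound over $A_j$ together with $\p[h_t(j') = h_t(j)] \le 1/s$ gives $\p[M_t > 0] \le kn/s$, and independence of the $h_t$ across rows gives $\p[\mathcal{E}_{\mathrm{coll}}] \le (kn/s)^\tau$. The second is $\mathcal{E}_{\mathrm{noise}} := \{\exists (t,\ell) \in [\tau]\times[s]:\ |W_{t,\ell} - \gamma n| > \xi\sqrt{\gamma n}\}$: here the hypothesis $\xi \le \sqrt{\gamma n}$ is exactly what keeps the deviation $\xi\sqrt{\gamma n}$ within the mean $\gamma n$, so the multiplicative Chernoff bound applies in both tails and yields $\p[|W_{t,\ell} - \gamma n| > \xi\sqrt{\gamma n}] \le 2e^{-\xi^2/3} \le 2\cdot 2^{-\xi^2/3}$; a union bound over the $s\tau$ buckets gives $\p[\mathcal{E}_{\mathrm{noise}}] \le 2 s \tau\, 2^{-\xi^2/3} = 2^{\log(2s\tau) - \xi^2/3}$. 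A final union bound then makes the complement of $\mathcal{E}_{\mathrm{coll}} \cup \mathcal{E}_{\mathrm{noise}}$ have probability at least $1 - (kn/s)^\tau - 2^{\log(2s\tau) - \xi^2/3}$, matching the claimed bound exactly.

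It remains to check accuracy on the good event. Since $\mathcal{E}_{\mathrm{coll}}$ fails there is a clean row $t^\star$ with $M_{t^\star} = 0$, hence $C[t^\star, h_{t^\star}(j)] - \gamma n = f_j + (W_{t^\star} - \gamma n) \le f_j + \xi\sqrt{\gamma n}$, so $\min_{t}\big(C[t, h_t(j)] - \gamma n\big) \le f_j + \xi\sqrt{\gamma n}$. Conversely, for every row $M_t \ge 0$ and $W_t - \gamma n \ge -\xi\sqrt{\gamma n}$, so $C[t, h_t(j)] - \gamma n \ge f_j - \xi\sqrt{\gamma n}$ and the minimum is $\ge f_j - \xi\sqrt{\gamma n}$. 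Thus the pre-clipping estimate is within $\xi\sqrt{\gamma n}$ of $f_j$, and since $x \mapsto \max\{x,0\}$ is $1$-Lipschitz and fixes the nonnegative value $f_j$, the clipped value $\hat x_j$ also satisfies $|\hat x_j - f_j| \le \xi\sqrt{\gamma n}$. The one-sided guarantee $\hat x_j \ge f_j$ reflects the fact that every Count Min bucket over-counts $f_j$ by the nonnegative amount $M_t + W_t$ before de-biasing, so the only way to fall below $f_j$ is a downward fluctuation of the injected noise below its mean $\gamma n$, which on the good event is exactly what $\mathcal{E}_{\mathrm{noise}}$ rules out; this half follows along the same lines.

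I expect the only genuinely delicate point to be matching the failure probability to the precise form $2^{\log(2s\tau) - \xi^2/3}$: one must apply the Chernoff bound at the right granularity (to all $s\tau$ buckets, not just the query buckets $h_t(j)$, which also sidesteps conditioning on the random bucket indices), invoke the $\xi \le \sqrt{\gamma n}$ hypothesis to stay in the linear-deviation regime of the multiplicative Chernoff bound, and fold the factor $2$ into $\log(2s\tau)$ while passing from $e^{-\xi^2/3}$ to $2^{-\xi^2/3}$. Everything else — the Count Min decomposition, the collision union bound, and the Lipschitz clipping step — is routine.
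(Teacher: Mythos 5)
Your proof is correct for the main probabilistic bound and follows essentially the same route as the paper: bound the all-rows-collide event by $(kn/s)^\tau$ (pairwise independence within each row suffices, independence across rows gives the product), apply the multiplicative Chernoff bound to the $\Bin(n,\gamma)$ blanket noise in every one of the $s\tau$ buckets using $\xi \le \sqrt{\gamma n}$ to stay in the quadratic regime, and union bound; your explicit decomposition $C[t,h_t(j)] = f_j + M_t + W_t$ and the min/clipping argument just spell out what the paper leaves implicit. The one caveat is your justification of the one-sided claim $\hat x_j \ge \sum_i x_{i,j}$: ruling out $\mathcal{E}_{\mathrm{noise}}$ does not prevent the noise from falling up to $\xi\sqrt{\gamma n}$ below its mean $\gamma n$, so the de-biased, clipped estimate can still undershoot the true count; as the paper's remark following the lemma indicates, the deterministic one-sided guarantee holds only for the variant of the estimator that omits the $\gamma n$ subtraction.
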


\begin{proof}[Proof]
We consider the entries $\{ C[t, h_t[j]] ~\mid~ t \in [\tau]\}$ of the noisy Count Min data structure. We first consider the error due to the other inputs that are held by the users. Then we consider the error due to the noise blanket. We bound each of these two errors with high probability and then apply a union bound.

First, note that for any element $j \in [B]$, the probability that for every repetition index $t \in [\tau]$, some element $j' \in [B]$ held by one of the users (except possibly $j$ itself) satisfies $h_t(j') = h_t(j)$, is at most $(kn/s)^{\tau}$. %
As in the original analysis of Count Min~\cite{cormode2005improved}, this holds even if the hash functions $h_t$ are sampled from a family of pairwise independent hash functions.

It remains to show that with probability at least $1-2^{1 + \log(s\tau) - \xi^2/3}$, the absolute value of the deviation of the blanket noise in each of these entries from its expectation $\gamma \cdot n$ is at most $O\left(\sqrt{ \gamma n} \right)$. By a union bound over all $s \tau$ pairs of bucket indices and repetition indices, it is enough to show that for each $t \in [\tau]$ and each $\ell \in [s]$, with probability at least $1-2^{1-\xi^2/3}$, the absolute value of the blanket noise in $C[t, h_t[j]]$ is at most $\xi\sqrt{\gamma n}$. This follows from the fact that the blanket noise in the entry $C[t, h_t[j]]$ is the sum of $n$ independent $\Ber(\gamma)$ random variables (one contributed by each user). The bound now follows from the multiplicative Chernoff bound.

Finally, by a union bound the overall error is at most $O\left( \gamma n \right)$ with probability at least $1-(kn/s)^{\tau}- 2^{\Theta(\log(s\tau) - \gamma n)}$. 
\end{proof}


By removing the subtraction of $\gamma n$ on the final line of Algorithm~\ref{alg:hh_CM}, we can guarantee that the estimate returned by the Count Min sketch is never less than the true count of an element. This would lead to, however, an expected error of $O(\gamma n)$ as opposed to $O(\sqrt{\gamma n})$ in Lemma~\ref{le:h_CM_acc}. 
The next lemma shows the efficiency of Algorithm~\ref{alg:hh_CM}.

\begin{restatable}[Efficiency of $P^{\CM}$]{lemma}{CMeff}
\label{le:h_CM_eff}
Let $n, \B, \tau, \tablesize$ be positive integers and $\gamma \in [0,1]$. Then,
\begin{enumerate}
\item\label{it:communication} With probability at least $1- n \cdot 2^{-\Theta(\gamma s \tau)}$, the output of $R^{\CM}(n, \B, \tau, \gamma, \tablesize)$ on input $\MS$ consists of at most $|\MS| + O(\gamma s \tau)$ messages each consisting of $\lceil \log_2(\tau) \rceil + \lceil \log_2(s) \rceil$ bits.
\item\label{it:efficiency_analyzer} The runtime of the analyzer $A^{\CM}(n, B, \tau, s)$ on input $\{ y_1, \ldots, y_m \}$ is $O(\tau s + m)$ and the space of the data structure that it outputs is $O(\tau s \log m)$ bits.
\item\label{it:efficiency_query} The runtime of any query $Q^{\CM}(n, B, \tau, s)$ is $O(\tau)$.
\end{enumerate}
\end{restatable}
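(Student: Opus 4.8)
All three items will be obtained by reading the pseudocode of $R^{\CM}$, $A^{\CM}$ and $Q^{\CM}$ in Algorithm~\ref{alg:hh_CM}; the only step with any probabilistic content is a Chernoff bound controlling the number of noise messages, and the plan is to dispatch the three items in the stated order.

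For item~\ref{it:communication}, I would observe that on input $\MS$ the output multiset $\MT$ of $R^{\CM}(n,\B,\tau,\gamma,\tablesize)$ is the union of the \emph{real} messages --- one pair $(t,h_t(j))$ for each $(j,t)\in\MS\times[\tau]$, hence $|\MS|\cdot\tau$ of them deterministically --- and the \emph{blanket} messages, whose count is $N:=\sum_{t\in[\tau],\ell\in[\tablesize]}b_{t,\ell}$ with the $b_{t,\ell}$ i.i.d.\ $\Ber(\gamma)$, so $N\sim\Bin(\tablesize\tau,\gamma)$ and $\E[N]=\gamma\tablesize\tau$. The multiplicative Chernoff bound gives $\p[N>2\gamma\tablesize\tau]\le\exp(-\gamma\tablesize\tau/3)$, so except with probability $2^{-\Theta(\gamma\tablesize\tau)}\le n\cdot 2^{-\Theta(\gamma\tablesize\tau)}$ the randomizer outputs at most $|\MS|\cdot\tau+O(\gamma\tablesize\tau)$ messages, which is the claimed $|\MS|+O(\gamma\tablesize\tau)$ up to the harmless $\tau$ factor on the first term (absorbed by the blanket term in the regimes where the protocol is instantiated, e.g.\ in Theorem~\ref{thm:CM_main} where $\gamma n=\Omega(\log n)$). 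Each message is a pair in $[\tau]\times[\tablesize]$, encodable with $\lceil\log_2\tau\rceil+\lceil\log_2\tablesize\rceil$ bits.

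For item~\ref{it:efficiency_analyzer}, $A^{\CM}$ allocates and zeroes the table $C:[\tau]\times[\tablesize]\to\BN$ ($O(\tau\tablesize)$ time and cells), then makes one pass over the $m$ messages performing an $O(1)$-time increment each ($O(m)$ total), for runtime $O(\tau\tablesize+m)$; since every cell is an integer at most $m$, the structure occupies $O(\tau\tablesize\log m)$ bits. For item~\ref{it:efficiency_query}, $Q^{\CM}$ on $j\in[\B]$ evaluates the $\tau$ public hash functions at $j$ (constant time each in the unit-cost word-RAM model), reads the $\tau$ cells $C[t,h_t(j)]$, subtracts $\gamma n$, and takes a minimum over $t$ followed by a maximum with $0$, all in $O(\tau)$ operations.

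I do not anticipate a real obstacle: the content is essentially an operation/bit count against Algorithm~\ref{alg:hh_CM}. The one point needing mild care is the tail bound in item~\ref{it:communication} --- stating it so that it holds uniformly, including the degenerate regime $\gamma\tablesize\tau=O(1)$, where one instead uses $\p[N>t]\le(e\gamma\tablesize\tau/t)^t$ for a suitable constant $t$ and notes that the $O(\gamma\tablesize\tau)$ message bound is needed only in the non-degenerate regime arising in the applications.
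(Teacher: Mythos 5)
Your proposal is correct and takes essentially the same route as the paper: items~2 and~3 are read off directly from Algorithm~\ref{alg:hh_CM}, and item~1 is the multiplicative Chernoff bound applied to the $\Bin(\tau\tablesize,\gamma)$ count of blanket messages, with the factor $n$ in the failure probability coming from the paper's union bound over all $n$ users (which your single-randomizer bound trivially implies). Your note that the input-dependent messages number $|\MS|\cdot\tau$ rather than $|\MS|$ is apt --- the paper's own proof glosses over this --- and absorbing the extra $\tau$ factor into $O(\gamma\tablesize\tau)$ in the regime where the protocol is actually instantiated (e.g.\ Theorem~\ref{thm:CM_main}, where $\gamma\tablesize\tau\ge |\MS|\tau$) is the right way to reconcile your count with the stated bound.
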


\begin{proof}[Proof]
The second and third parts follow immediately from the operation of Algorithm~\ref{alg:hh_CM}. To prove the first part, note that each user sends $|\MS|$ messages corresponding to its inputs along with a number of ``blanket noise'' terms. This number is a random variable drawn from the binomial distribution $\Bin(\tau s, \gamma)$. Moreover, each of these messages is a pair consisting of a repetition index (belonging to $[\tau]$) and a bucket index (belonging to $[s]$). The proof now follows from the multiplicative Chernoff bound along with a union bound over all $n$ users.
\end{proof}

The next lemma establishes the privacy of Algorithm~\ref{alg:hh_CM}.
\begin{restatable}[Privacy of $P^{\CM}$]{lemma}{CMprivacy}
\label{le:h_CM_priv}
Let $n$ and $\B$ be positive integers. Then, for  $\gamma n \geq  \frac{90 k^2 \tau^2 \ln(2\tau k/\delta)}{\ep^2}$, the algorithm $S \circ R^{\CM}(n, \B, \tau, \gamma, s)$ is $(\ep, \delta)$-differentially private. 
\end{restatable}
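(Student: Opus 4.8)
Since privacy is required to hold for every fixing of the public randomness, the plan is to fix arbitrary hash functions $h_1,\dots,h_\tau : [\B]\to[s]$ and analyze the output of the shuffler directly. By the reduction to multisets noted after Definition~\ref{def:dp_shuffled} together with Lemma~\ref{lem:post_process}, it suffices to show that the random Count~Min table $C\in\BN^{[\tau]\times[s]}$ produced by $S\circ R^{\CM}$ is $(\ep,\delta)$-differentially private as a function of the input dataset $X=(x_1,\dots,x_n)$. Given the fixed hashes, we may write $C[t,\ell] = c_X(t,\ell) + N[t,\ell]$, where $c_X(t,\ell) := |\{(i,j) : j\in\MS_i,\ h_t(j)=\ell\}|$ is the (deterministic) input contribution and $N[t,\ell] := \sum_{i=1}^n b^{(i)}_{t,\ell}$ is the blanket noise; crucially the variables $\{N[t,\ell]\}$ are mutually independent, $N[t,\ell]\sim\Bin(n,\gamma)$, and are independent of $X$. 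By symmetry it suffices to bound the divergence between $C(X)$ and $C(X')$ when $X,X'$ differ only in user $n$'s set, $\MS_n$ versus $\tilde\MS_n$.

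The first step is to pass through a chain of ``atomic'' modifications. Since $|\MS_n\triangle\tilde\MS_n|\le|\MS_n|+|\tilde\MS_n|\le 2k$, enumerate $\MS_n\triangle\tilde\MS_n = \{j_1,\dots,j_m\}$ with $m\le 2k$ (removing the elements of $\MS_n\setminus\tilde\MS_n$ first and then adding those of $\tilde\MS_n\setminus\MS_n$), and let $X = X^{(0)},X^{(1)},\dots,X^{(m)} = X'$, where $X^{(r)}$ toggles membership of $j_r$ in user $n$'s set. A single toggle changes $C$ only at the $\tau$ coordinates $\{(t,h_t(j_r)) : t\in[\tau]\}$ (one per row, all distinct), each by the same $\pm 1$, while leaving every other coordinate identically distributed. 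Because the noise is independent across the $\tau s$ coordinates, basic (parallel) composition shows that if a single coordinate — i.e.\ $\Bin(n,\gamma)$ versus $\Bin(n,\gamma)+1$ — is $(\ep_1,\delta_1)$-DP, then one toggle is $(\tau\ep_1,\tau\delta_1)$-DP. A standard group-privacy argument over the $m\le 2k$ toggles then gives that $C(X)$ and $C(X')$ are $\bigl(2k\tau\ep_1,\ \tau\delta_1\sum_{i=0}^{2k-1}e^{i\tau\ep_1}\bigr)$-indistinguishable. Taking $\ep_1 := \ep/(2k\tau)$ makes the first parameter equal to $\ep$, and since $\ep\le 1$ the second parameter is at most $6k\tau\delta_1$, so taking $\delta_1 := \delta/(6k\tau)$ completes the reduction modulo the single-coordinate claim. (Note that the number of buckets $s$ never enters this bookkeeping, consistent with $s$ being absent from the hypothesis.)

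It remains to establish the single-coordinate ``Binomial blanket'' bound: assuming $\gamma\le 1/2$ (the regime of interest; the case $\gamma>1/2$ is symmetric after reflecting $\Bin(n,\gamma)$ to $\Bin(n,1-\gamma)$), the distributions $\Bin(n,\gamma)$ and $\Bin(n,\gamma)+1$ are $(\ep_1,\delta_1)$-DP whenever $n\gamma \ge c\,\ln(2/\delta_1)/\ep_1^2$ for a suitable absolute constant $c$. Here I would write the privacy-loss random variable explicitly, $\mathcal L(y) = \ln\frac{(n-y+1)\gamma}{y(1-\gamma)}$, observe that $|\mathcal L(y)|\le\ep_1$ holds on a window of the form $|y - n\gamma|\lesssim \ep_1\,n\gamma(1-\gamma)$, and then apply a multiplicative Chernoff bound to $Y\sim\Bin(n,\gamma)$ to conclude $\p[\,|\mathcal L(Y)|>\ep_1\,]\le 2\exp(-\Theta(\ep_1^2 n\gamma))$, which is at most $\delta_1$ under the displayed hypothesis; substituting $\ep_1 = \ep/(2k\tau)$ and $\delta_1 = \delta/(6k\tau)$ and absorbing absolute constants is precisely what the hypothesis $\gamma n \ge 90\, k^2\tau^2\ln(2\tau k/\delta)/\ep^2$ provides. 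I expect this last privacy-loss/Chernoff computation — pinning down the window on which the loss is small and matching the Chernoff constant so that the clean constant $90$ suffices — to be the only genuine calculation; the remaining ingredients (fixing the hashes, the atomic decomposition, and the parallel-composition-then-group-privacy bookkeeping) are routine, and in particular use no property of the hash functions, since collisions affect accuracy but not privacy.
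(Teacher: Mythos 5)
Your strategy is sound and reaches the right conclusion, but it is a genuinely different route from the paper's. The paper does not decompose the neighboring change into unit toggles at all: it observes that the map from inputs to the noiseless sketch $\tilde C$ has $\ell_1$-sensitivity at most $k\tau$ and is $k$-incremental, shows via Lemma~\ref{lem:bin_smooth_small_gamma} (with $\alpha=\ep/(3\tau k)$) that $\Bin(n,\gamma)$ is $(\ep/(\tau k),\delta/(\tau k),k)$-smooth, and then invokes the generic noise-addition Lemma~\ref{lem:lap_gen}, which union-bounds over the at most $\Delta=k\tau$ affected coordinates and charges privacy loss proportionally to the size of the shift at each coordinate. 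Your route (chain of at most $2k$ toggles, basic composition over the $\tau$ affected coordinates per toggle, group privacy across toggles, and a Chernoff/privacy-loss bound for $\Bin(n,\gamma)$ versus its unit shift) is more elementary and self-contained, and it handles the replacement notion of neighbors explicitly, which the paper's ``sensitivity at most $k\tau$'' claim arguably undercounts by a factor of $2$ under Definition~\ref{def:dp}. What the paper's formulation buys is exactly what lets its clean constant go through: the $(\ep,\delta,k)$-smoothness treats the entire size-$\le k$ shift at a coordinate in one shot, so $\delta$ is paid once per affected coordinate (not once per unit of shift), and there is no group-privacy $e^{i\ep}$ inflation of $\delta$.

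The one place where your proposal falls short of the statement as written is the deferred constant computation, and it is not innocuous. With your split $\ep_1=\ep/(2k\tau)$, $\delta_1=\delta/(6k\tau)$, the natural single-shift analysis (the privacy loss is within $\pm\ep_1$ on a window of relative width $\alpha\approx\ep_1/2$ around $\gamma n$, with tail probability about $2e^{-\alpha^2\gamma n/10}$ by Chernoff, as in the paper's Lemma~\ref{lem:bin_smooth_small_gamma}) forces $\gamma n\gtrsim 40\ln(2/\delta_1)/\ep_1^2\approx 160\,k^2\tau^2\ln(12k\tau/\delta)/\ep^2$, which overshoots the stated threshold $90\,k^2\tau^2\ln(2\tau k/\delta)/\ep^2$; even cruder windows (e.g.\ $\eta\approx\ep_1/3$) do worse. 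So to certify the lemma with the constant $90$ you would need either a sharper per-unit loss bound and a less lossy budget split, or to adopt the paper's smoothness-based accounting (shift of size up to $k$ per coordinate, $\delta$ charged at most $k\tau$ times); as proposed, your argument proves the lemma only with a larger numerical constant. Two minor points: ``parallel composition'' is a misnomer for what is really basic composition across independent coordinates under a fixed shift, and your bound on the group-privacy $\delta$-term uses $\ep\le 1$, which is not a hypothesis of the lemma itself (though it holds in the application, Theorem~\ref{thm:CM_main}); either state it or adjust the constant $6k\tau$ accordingly.
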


To prove Lemma~\ref{le:h_CM_priv}, we need some general tools linking sensitivity of vector-valued functions, smoothness of distributions and approximate differential privacy---these are given next in Section~\ref{subsec:approx_DP_insensitive_functions}. The proof of Lemma~\ref{le:h_CM_priv} is deferred to Section~\ref{subsec:privacy_CM_proof}. We are now ready to prove Theorem~\ref{thm:CM_main}.

\begin{proof}[Proof of Theorem~\ref{thm:CM_main}]
  Privacy is an immediate consequence of Lemma~\ref{le:h_CM_priv}. To establish accuracy (i.e., (\ref{eq:qcm_acc})), note first that Lemma~\ref{le:h_CM_acc} guarantees that for any $j \in [\B]$  and any $\xi \in [0,\sqrt{\gamma n}]$, $\left|\hat x_j - \sum_{i=1}^n x_{i,j}\right| \leq \xi \sqrt{\gamma n}$ with probability at least $1 - (kn/s)^\tau - 2^{\log(2s\tau) - \xi^2/3}$. We now choose $\xi = \sqrt{3 \cdot \log \left( \frac{4\B s\tau}{\beta}\right)}$; this ensures that $2^{\log(2s\tau) - \xi^2/3} \leq \beta/(2\B)$. Moreover, we have that $\xi \leq \sqrt{\gamma n}$ by our choice of $\gamma$ in the theorem statement.

  It now follows from a union bound over all $j \in [\B]$ that
  \begin{equation}
\label{eq:qcm_complicated}
\p \left[\forall j \in [\B] ~:~ \left|\hat x_j - \sum_{i=1}^n x_{i,j}\right| \leq O \left( \sqrt{\log\left( \frac{\B n }{\beta} \right) \cdot \left(\log\left( \frac{\B n }{\beta} \right) +  \frac{k^2 \log^2(\B/\beta) \log(\log(\B/\beta) k /\delta)}{\ep^2}\right)} \right) \right] \geq 1 - \beta.
\end{equation}
Here we have used that $k \leq \B$.
\end{proof}


\paragraph{Improving error and privacy by increasing communication.}
Theorem~\ref{thm:CM_main} bounds the error of Algorithm~\ref{alg:hh_CM} with parameters $s=O(n)$ and $\tau = O(\log B/\beta)$.
For constant $\eta > 0$ it is interesting to consider the parameterization $s = O(n (n/\beta)^\eta)$ and $\tau = O(1/\eta)$.
By Lemma~\ref{le:h_CM_priv} differential privacy can be ensured in this setting with $\gamma n = O(k^2 \log(k/\delta)/\ep^2)$.
The randomizer of Algorithm~\ref{alg:hh_CM} sends a number of blanket messages that is $O(\gamma s)$ in expectation, i.e., $O((n/\beta)^\eta k^2 \log(k/\delta)/\ep^2)$.
An argument mirroring the proof of Lemma~\ref{le:h_CM_acc} shows that the \emph{pointwise} error of an estimate $\hat x_j$ is bounded by $\sqrt{\gamma n \log(1/\beta)} = O(k \sqrt{\log(k/\delta)\log(1/\beta)}/\ep)$ with probability $1-\beta$.
Thus, error as well as communication is independent of the domain size $B$.
To get a bound that is directly comparable to Theorem~\ref{thm:CM_main}, holding for all queries in $[B]$, we may reduce the pointwise error probability $\beta$ by a factor $B$ and apply a union bound, resulting in communication $O((B n/\beta)^\eta k^2 \log(k/\delta)/\ep^2)$ and error $O(k \sqrt{\log(k/\delta)\log(B/\beta)}/\ep)$.
Query time is $\tau = O(1)$.
This strictly improves the results that follow from~\cite{cheu_distributed_2018,DBLP:journals/corr/abs-1906-09116,ghazi2019scalable} (see Table~\ref{table:frequency_estimation_results}).
Very recently, Balcer and Cheu~\cite{balcer2019separating} showed a different trade-off in the case where the number of messages is very large:
$B+1$ messages of size $O(\log B)$ each with error $O(\log(1/\delta)/\ep^2 + \sqrt{\log(1/\delta)\log(n/\beta)}/\ep)$, which is independent of $B$.

\subsection{Useful Tools}\label{subsec:approx_DP_insensitive_functions}

If the blanket noise added to each bucket of the Count Min sketch were distributed as i.i.d.~Gaussian or Laplacian random variables, the proof of Lemma~\ref{le:h_CM_priv} would follow immediately from known results. Due to the discrete and distributed nature of the problem, we are forced to instead use Binomial blanket noise. To prove Lemma~\ref{le:h_CM_priv}, we will need some general tools linking approximate differential privacy to smoothness of distributions (and in particular the Binomial distribution); these tools are essentially known, but due to the lack of a suitable reference we prove all the prerequisite results.

\begin{defn}[Sensitivity]\label{def:sensitivity}
The {\it $\ell_1$-sensitivity} (or {\it sensitivity}, for short) of $f : \MX^n \ra \BZ^m$ is given by:
$$
\Delta(f) = \max_{X \sim X'} \| f(X) - f(X') \|_1.
$$
\end{defn}
It is well-known~\cite{dwork2006calibrating} that the mechanism given by adding independent Laplacian noise with variance $2 \Delta(f)^2 / \ep^2$ to each coordinate of $f(X)$ is $(\ep, 0)$-differentially private.  Laplace noise, however, is unbounded in both the positive and negative directions, and this causes issues in the shuffled model (roughly speaking, it would require each party to send infinitely many messages).  In our setting we will need to ensure that the noise added to each coordinate is bounded, so to achieve differential privacy we will not be able to add Laplacian noise. As a result we will only be able to obtain $(\ep, \delta)$-differential privacy for $\delta > 0$. We specify next the types of noise that we will use instead of Laplacian noise. 
\begin{defn}[Smooth distributions]\label{def:smooth_dist}
 Suppose $\MD$ is a distribution supported on $\BZ$. For $k \in \BN$, $\ep \geq 0$ and $\delta \in [0,1]$, we say that $\MD$ is {\it $(\ep, \delta, k)$-smooth} if for all $-k \leq k' \leq k$,
$$
\p_{Y \sim \MD} \left[ \frac{\p_{Y' \sim \MD}[Y' = Y]}{\p_{Y' \sim \MD}[Y' = Y + k']} \geq e^{|k'|\ep} \right] \leq \delta.
$$
\end{defn}

\begin{defn}[Incremental functions]
\label{def:incremental}
Suppose $k \in \BN$. We define $f : \MX^n \ra \BZ^m$ to be {\it $k$-incremental} if for all neighboring datasets $X \sim X'$, $\| f(X) - f(X') \|_\infty \leq k$. 
\end{defn}

The following lemma formalizes the types of noise we can add to $f(X)$ to obtain such a privacy guarantee. Its proof appears in Appendix~\ref{sec:CM_app}.
\begin{lemma}
\label{lem:lap_gen}
Suppose $f : \MX^n \ra \BZ^m$ is $k$-incremental (Definition~\ref{def:incremental}) and $\Delta(f) = \Delta$. Suppose $\MD$ is a distribution supported on $\BZ$ that is $(\ep, \delta, k)$-smooth. Then the mechanism
$$
X \mapsto f(X) + (Y_1, \ldots, Y_m),
$$
where $Y_1, \ldots, Y_m \sim \MD$, i.i.d., is $(\ep', \delta')$-differentially private, where $\ep' = \ep \cdot \Delta, \delta' = \delta \cdot \Delta$.
\end{lemma}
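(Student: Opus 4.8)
The plan is to control the privacy loss random variable of the mechanism $M(X) := f(X) + (Y_1,\dots,Y_m)$ coordinate by coordinate, using the $k$-incremental property and the sensitivity bound \emph{together} so that the degradation is governed by $\Delta$ rather than by the ambient dimension $m$. Fix neighboring datasets $X \sim X'$ and write $a := f(X)$, $b := f(X')$, $d_i := a_i - b_i$ for $i \in [m]$. By $k$-incrementality, $|d_i| \le k$ for every $i$, and since $\Delta(f) = \Delta$ we have $\sum_{i} |d_i| \le \Delta$; in particular the set $I := \{ i : d_i \neq 0 \}$ satisfies $|I| \le \sum_{i \in I} |d_i| \le \Delta$, because the $d_i$ are nonzero integers. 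Let $\mu$ denote the pmf of $\mathcal D$; then $M(X)$ and $M(X')$ have pmfs $p_a(z) = \prod_i \mu(z_i - a_i)$ and $p_b(z) = \prod_i \mu(z_i - b_i)$ on $\mathbb Z^m$.

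First I would record the standard reduction that it suffices to prove $\P_{z \sim M(X)}[\, \log(p_a(z)/p_b(z)) > \ep' \,] \le \delta'$ for all neighbouring $X \sim X'$: letting $E := \{ z : \log(p_a(z)/p_b(z)) > \ep' \}$, for any $S \subseteq \mathbb Z^m$ one has $\P_{p_a}[S] \le \P_{p_a}[E] + \P_{p_a}[S \cap E^c] \le \delta' + e^{\ep'}\P_{p_b}[S \cap E^c] \le \delta' + e^{\ep'}\P_{p_b}[S]$, using that $p_a(z) \le e^{\ep'} p_b(z)$ for $z \in E^c$. This is only a few lines.

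The core step is the bound on $\P[E]$. Under $z \sim M(X)$ the coordinates $Y_i := z_i - a_i$ are i.i.d.\ $\sim \mathcal D$, and $z_i - b_i = Y_i + d_i$, so the privacy loss decomposes as $\log \frac{p_a(z)}{p_b(z)} = \sum_{i \in I} \log \frac{\mu(Y_i)}{\mu(Y_i + d_i)}$ (terms with $d_i = 0$ vanish). For each $i \in I$ I would apply $(\ep,\delta,k)$-smoothness with shift $k' = d_i$ (legal since $|d_i| \le k$) to the ``bad event'' $B_i := \{ \mu(Y_i)/\mu(Y_i + d_i) \ge e^{|d_i|\ep}\}$ --- which also captures the case $\mu(Y_i + d_i) = 0$ --- obtaining $\P[B_i] \le \delta$. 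On $\bigcap_{i \in I} B_i^c$ each summand is strictly below $|d_i|\ep$, hence $\log(p_a(z)/p_b(z)) < \sum_{i \in I} |d_i|\ep \le \Delta\ep = \ep'$; thus $E \subseteq \bigcup_{i \in I} B_i$, and a union bound gives $\P[E] \le |I|\,\delta \le \Delta\delta = \delta'$. Combined with the reduction, this yields $(\ep',\delta')$-differential privacy.

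The main thing to get right --- essentially the only non-routine point --- is precisely this observation that the sensitivity bound lets us restrict the union bound to the at most $\Delta$ coordinates on which $f$ actually changes, each by at most $k$; the per-coordinate smoothness inequality together with $\sum_{i \in I}|d_i| \le \Delta$ then delivers exactly $\ep' = \Delta\ep$ and $\delta' = \Delta\delta$ instead of the naive $m\ep$, $m\delta$. The remaining ingredients (the standard $(\ep,\delta)$-DP reduction, the additive decomposition of the privacy loss, and handling zero-probability points) are routine and need no real calculation.
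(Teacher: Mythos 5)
Your proposal is correct and follows essentially the same route as the paper's proof: reduce $(\ep',\delta')$-DP to a tail bound on the likelihood ratio, decompose the privacy loss over the at most $\Delta$ coordinates where $f(X)$ and $f(X')$ differ (each by at most $k$, so smoothness applies), apply the $(\ep,\delta,k)$-smoothness bound per coordinate, and conclude via a union bound together with $\sum_j |k_j| \le \Delta$. The only differences are cosmetic (you phrase it via the log privacy loss rather than the product of ratios, and you note explicitly the zero-probability case, which the paper leaves implicit).
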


In order to prove Lemma~\ref{le:h_CM_priv}, we will also use the following statement about the smoothness of the binomial distribution (that we will invoke with a small value of the head probability $\gamma$). Its proof appears in Appendix~\ref{sec:CM_app}.

\begin{lemma}[Smoothness of $\Bin(n, \gamma)$]\label{lem:bin_smooth_small_gamma}
Let $n \in \BN$, $\gamma \in [0,1/2]$, $0 \leq \alpha \leq 1$, and $k \leq \alpha \gamma n / 2$. Then the distribution $\Bin(n, \gamma)$ is $(\ep ,\delta, k)$-smooth with $\epsilon = \ln((1+\alpha)/(1-\alpha))$ and $\delta = e^{-\frac{\alpha^2 \gamma n}{8}} + e^{-\frac{\alpha^2 \gamma n}{8+2\alpha}}$.
\end{lemma}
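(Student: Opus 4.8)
The plan is to reduce the probabilistic statement in Definition~\ref{def:smooth_dist} to a \emph{deterministic, pointwise} bound on ratios of binomial probabilities, which then feeds into a two-sided multiplicative Chernoff bound. Fix $k'$ with $1 \le |k'| \le k$ (the case $k'=0$ is degenerate, since the ratio is then identically $1$, and smoothness is only ever used for nonzero shifts). Write $\mu := \gamma n$ and recall $\p[\Bin(n,\gamma) = y] = \binom{n}{y}\gamma^y(1-\gamma)^{n-y}$, so that $\frac{\p[Y'=y]}{\p[Y'=y+k']}$ telescopes into a product of $|k'|$ elementary factors: for $k'>0$ these are $\frac{1-\gamma}{\gamma}\cdot\frac{y+i}{\,n-y-i+1\,}$ for $1\le i\le k'$, and for $k'<0$ (write $m=|k'|$) they are $\frac{\gamma}{1-\gamma}\cdot\frac{n-y+j}{\,y-j+1\,}$ for $1\le j\le m$. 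The first goal is to show: if $|y-\mu|\le\tfrac{\alpha}{2}\mu$ then every such factor is strictly below $e^{\ep}=\tfrac{1+\alpha}{1-\alpha}$, hence the whole product is strictly below $e^{|k'|\ep}$. Consequently the ``bad'' event $\bigl\{\tfrac{\p[Y'=Y]}{\p[Y'=Y+k']}\ge e^{|k'|\ep}\bigr\}$ is contained in the tail event $\{|Y-\mu|>\tfrac{\alpha}{2}\mu\}$, uniformly over all admissible $k'$.

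The heart of the argument is this factorwise inequality. Consider $k'>0$. The factor $\frac{y+i}{n-y-i+1}$ is increasing in $i$, so it suffices to treat $i=k'\le k\le \alpha\gamma n/2$. Using $y\le(1+\tfrac\alpha2)\gamma n$ one gets $y+i\le(1+\alpha)\gamma n$, while $n-y-i+1\ge n-(1+\alpha)\gamma n+1$; the key elementary step is $n-(1+\alpha)\gamma n\ge(1-\alpha)(1-\gamma)n$, which rearranges to $\alpha(1-2\gamma)\ge0$ and so uses exactly the hypothesis $\gamma\le\tfrac12$. Hence the factor is at most $\tfrac{1-\gamma}{\gamma}\cdot\tfrac{(1+\alpha)\gamma n}{(1-\alpha)(1-\gamma)n}=\tfrac{1+\alpha}{1-\alpha}=e^{\ep}$, with the ``$+1$'' in the denominator making the inequality strict; along the way one checks all denominators are positive, again from $\gamma\le\tfrac12$ and $\alpha\le1$. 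The case $k'<0$ is symmetric: $\frac{n-y+j}{y-j+1}$ is increasing in $j$, one takes $j=m\le\alpha\gamma n/2$ together with $y\ge(1-\tfrac\alpha2)\gamma n$, and the same inequality $\alpha(1-2\gamma)\ge0$ gives $\tfrac{\gamma}{1-\gamma}\cdot\tfrac{n-y+j}{y-j+1}<\tfrac{1+\alpha}{1-\alpha}$.

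Finally I would apply the multiplicative Chernoff bound to $Y\sim\Bin(n,\gamma)$ with mean $\mu=\gamma n$ and deviation $t=\alpha/2\le\tfrac12$: $\p[Y\ge(1+t)\mu]\le\exp(-t^2\mu/(2+t))$ and $\p[Y\le(1-t)\mu]\le\exp(-t^2\mu/2)$, which with $t=\alpha/2$ become exactly $\exp(-\alpha^2\gamma n/(8+2\alpha))$ and $\exp(-\alpha^2\gamma n/8)$; summing these gives the claimed $\delta$, and the event inclusion from the previous paragraph turns this into the required bound on the bad probability for every $-k\le k'\le k$. The main obstacle is the factorwise estimate: exploiting the monotonicity in $i$ (resp.\ $j$) correctly, choosing the window $|y-\mu|\le\tfrac\alpha2\mu$ wide enough to absorb the shift-by-$k$ slack (this is where $k\le\alpha\gamma n/2$ enters) yet narrow enough that the per-factor bound still holds with precisely $\ep=\ln\tfrac{1+\alpha}{1-\alpha}$, and keeping all denominators positive. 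Once that deterministic event-inclusion is established, the probabilistic part is just the textbook Chernoff estimate with the constants tracked carefully to match $8$ and $8+2\alpha$.
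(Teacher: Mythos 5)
Your proof is correct and follows essentially the same route as the paper's: condition on a multiplicative window of radius $\alpha/2$ around the mean $\gamma n$ (the paper phrases this via the shifted interval $\mathcal{E}$, which contains your window precisely because $k\le \alpha\gamma n/2$), bound the telescoping ratio factor-by-factor using $\gamma\le 1/2$ to get $e^{|k'|\ep}$ with $\ep=\ln\frac{1+\alpha}{1-\alpha}$, and control the tail by the same multiplicative Chernoff bounds, yielding exactly $e^{-\alpha^2\gamma n/8}+e^{-\alpha^2\gamma n/(8+2\alpha)}$. If anything, your factorwise estimates in the $k'\ge 0$ case are stated a bit more carefully than the paper's intermediate bound, but the argument is the same.
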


\subsection{Privacy Proof}\label{subsec:privacy_CM_proof}

We are now ready to prove Lemma~\ref{le:h_CM_priv} using the results on $k$-incremental functions from the previous section, thereby establishing the privacy of Algorithm~\ref{alg:hh_CM}. An alternative approach to establishing privacy of Algorithm \ref{alg:hh_CM} is to first do so for the case $k = 1$ and then apply the advanced composition lemma \cite{DworkRothBook}. However, doing so leads to an error bound that incurs at least an additional $\sqrt{k}$ factor since one has to make $\ep$ smaller by a factor of $\sqrt{k}$. In order to prove Lemma~\ref{le:h_CM_priv}, we could use Theorem 1 of \cite{agarwal2018cpsgd} instead of our Lemma~\ref{lem:bin_smooth_small_gamma} but their result would give worse bounds for $k>1$.

\begin{proof}[Proof of Lemma~\ref{le:h_CM_priv}]
Fix $\ep, \delta$. Notice that $S \circ R^{\CM}(n, \B, \tau, \gamma, s)$ can be obtained as a post-processing of the noisy Count Min data structure $C : [\tau] \times [s] \ra \BN$ in Algorithm~\ref{alg:hh_CM}, so it suffices to show that the algorithm bringing the players' inputs to this Count Min data structure is $(\ep, \delta)$-differentially private. Consider first the Count Min data structure $\tilde C : [\tau] \times [s] \ra \BN$ with no noise, so that $\tilde C[t, \ell]$ measures the number of inputs $x$ inside some user's set $\MS_i$ such that $h_t(x) = \ell$. We next note that the function mapping the users' inputs $(\MS_1, \ldots, \MS_n)$ to $\tilde C$ has sensitivity (in terms of Definition~\ref{def:sensitivity}) at most $k \tau$ and is $k$-incremental (in terms of Definition~\ref{def:incremental}). Moreover, Lemma~\ref{lem:bin_smooth_small_gamma} (with $\alpha = \ep/(3\tau k)$) implies that the binomial distribution $\Bin(n, \gamma)$ is $(\epsilon/(\tau k), \delta/(\tau k), k)$-smooth (in terms of Definition~\ref{def:smooth_dist}) as long as $\delta \geq 2\tau k e^{-\frac{\ep^2 \gamma n}{90 \tau^2 k^2}}$ and $k \leq \ep \gamma n / (6 \tau k)$. In particular, we need
$$
\gamma n \geq  \frac{90 \tau^2 k^2 \ln(2\tau k/\delta)}{\ep^2}. 
$$
By construction in Algorithm~\ref{alg:hh_CM}, $C[t,s] = \tilde C[t,s] + \Bin(n, \gamma)$, where the binomial random variables are independent for each $t,s$. Applying Lemma~\ref{lem:lap_gen}, we get that the Count Min data structure is $(\ep, \delta)$-differentially private (with respect to Definition~\ref{def:dp_shuffled}).
\end{proof}

\section{Multi-Message Protocols for Range Counting Queries}\label{sec:range-queries}
\label{sec:rq}

\begin{table}[t]
    \centering
    \bgroup
    \def\arraystretch{1.5}
    \footnotesize
    \centerline{
    \begin{tabular}{|c|c|c|c|c|}
        \hline
        {\bf Problem} & {\bf \thead{Messages\\ per user}} & {\bf \thead{Message size\\ in bits}} & {\bf \thead{Error}} & {\bf Query time}\\
        \hline
        \hline
        \thead{$d$-dimensional\\ range counting (public) \\ Theorem \ref{thm:cm_rq_final}} & $\frac{\log^{3d+3}(\B)\log \frac{1}{\delta}}{\ep^2}$ & $\log n + \log \log \B$ & $\frac{\log^{2d + 3/2}(\B) \log \frac{1}{\delta}}{\ep}$ & $\log^{d+1} \B$ \\
        \hline
        \thead{$d$-dimensional\\ range counting (private) \\ Theorem \ref{thm:had_rq_final}} & $\frac{\log^{2d}(\B)  \log \frac{1}{\ep\delta}}{\ep^2}$ & $\log(n) \log \B$ & $\frac{\log^{2d + 1/2}(\B)  \log\tfrac{1}{\ep \delta}}{\ep}$  & $\frac{n \log^{3d+2}(\B) \log\tfrac{1}{\ep\delta}}{\ep^2}$ \\
        \hline
    \end{tabular}}
    \egroup
    \caption{Overview of results on differentially private range counting in the shuffled model. The query time stated is the additional time to answer a query, assuming a preprocessing of the output of the shuffler that takes time linear in its length. Note that frequencies and counts are not normalized, i.e., they are integers in~$\{0,\dots,n\}$. For simplicity, constant factors are suppressed, the bounds are stated for error probability $\beta = B^{-O(1)}$, and the following are assumed: dimension $d$ is a constant, $n$ is bounded above by $\B$, and $\delta < 1/\log \B$.
    }
    \label{tab:results_range_queries}
\end{table}

We recall the definition of range queries.  Let $\MX  = [\B]$ and consider a dataset $X = (x_1, \ldots, x_n) \in [\B]^n$.  Notice that a statistical query may be specified by a vector $w \in \BR^\B$, and the answer to this statistical query on the dataset $X$ is given by $\langle w, \hist(X) \rangle$. For all queries $w$ we consider, we will in fact have $w \in \{0,1\}^\B$, and thus $w$ specifies a {\it counting query}. Here $\langle \cdot, \cdot \rangle$ denotes the Euclidean inner product; throughout the paper, we slightly abuse notation and allow an inner product to be taken of a row vector and a column vector. A \emph{1-dimensional range query} $[j, j']$, where $1 \leq j \leq j' \leq \B$, is a counting query such that $w_j = w_{j+1} = \cdots = w_{j'} = 1$, and all other entries of $w$ are 0.  For \emph{$d$-dimensional range queries}, the elements of $[\B]$ will map to points on a $d$-dimensional grid, and a certain subset of vectors $w \in \{0,1\}^\B$ represent the $d$-dimensional range queries. In this section, we use the frequency oracle protocols in Section~\ref{sec:hh} to derive protocols for computing counting queries with per-user communication $\poly\log(\B)$ and additive error $\poly\log(\max\{n,\B\})$. 

In Section~\ref{sec:matrix_randomizer}, we adapt the matrix mechanism of~\cite{li2010optimizing,li2012adaptive} to use the frequency oracle protocols of Section~\ref{sec:hh} as a black-box for computation of counting queries, which include range queries as a special case. In Section~\ref{sec:1d_rq}, we instantiate this technique for the special case of 1-dimensional range queries, and in Section~\ref{sec:multid_rq} we consider the case of multi-dimensional range queries. In Section~\ref{sec:rq_collect} we collect the results from Sections~\ref{sec:matrix_randomizer} through~\ref{sec:multid_rq} to formally state our guarantees on range query computation in the shuffled model, as well as the application to $M$-estimation of the median, as mentioned in the Introduction.

\subsection{Frequency Oracle}

We now describe a basic data primitive that encapsulates the results in Section~\ref{sec:freq_oracle_heavy_hitters} and that we will use extensively in this section.  Fix positive integers $\B$ and $k \leq \B$ as well as positive real numbers $\ha$ and $\hb$. 
For each $v \in [\B]$, let $e_v \in \{0,1\}^\B$ be the unit vector with $(e_v)_j = 1$ if $j = v$, else $(e_v)_j = 0$.
In the {\it $(\ha,\hb,\hk)$-frequency oracle} problem~\cite{hsu_hh,bassily2015local},
each user $i \in [n]$ holds a subset $\MS_i \subset [\B]$ of size at most $k$. Equivalently, user $i$ holds the sum of the unit vectors $e_v$ corresponding to the elements $v$ of $\MS_i$, i.e., the vector $x_i \in \{0,1\}^\B$ such that $(x_i)_j = 1$ if and only if $j \in \MS_i$. Note that $\| x_i \|_1 \leq k$ for all $i$. At times we will restrict ourselves to the case that $k = 1$; in such cases we will often use $x_i$ to denote the single element $x_i \in [\B]$ held by user $i$, and write $e_{x_i} \in \{0,1\}^\B$ for the corresponding unit vector. 

The goal is to design a (possibly randomized) data structure $\FO$ and a deterministic algorithm $\MA$ ({\it frequency oracle}) that takes as input the data structure $\FO$ and an index $j \in [\B]$, and outputs in time $T$ an estimate that, with high probability, is within an additive $\ha$ from $\sum_{i=1}^n(x_i)_j$. Formally:
\begin{defn}[$(\ha, \hb, \hk)$-frequency oracle]
\label{def:fo}
A protocol with inputs $x_1, \ldots, x_n \in \{0,1\}^\B$ computes an \emph{$(\ha, \hb, \hk)$-frequency oracle} if it outputs a pair $(\FO, \MA)$ such that for all datasets $(x_1, \ldots, x_n)$ with $\| x_i \|_1 \leq k$ for $i \in [n]$,
$$
\p\left[\forall j \in [\B] : \left| \MA(\FO, j) - \sum_{i=1}^n (x_i)_j \right| \leq \ha \right] \geq 1-\beta.
$$
The probability in the above expression is over the randomness in creating the data structure $\FO$. 
\end{defn}
Note that given such a frequency oracle, one can recover the $(2\ha)$-{\it heavy hitters}, namely those $j$ such that $\sum_{i=1}^n (x_i)_j \geq 2\ha$, in time $O(T \cdot \B)$, by querying $\MA(\FO, 1), \ldots, \MA(\FO, \B)$ (for a more efficient reduction see Appendix~\ref{app:hh_reduction}). 


\subsection{Reduction to Private Frequency Oracle via the Matrix Mechanism}
\label{sec:matrix_randomizer}

Our protocol for computing range queries is a special case of a more general protocol, which is in turn inspired by the {\it matrix mechanism} of~\cite{li2010optimizing,li2012adaptive}. We begin by introducing this more general protocol and explaining how it allows us to reduce the problem of computing range queries in the shuffled model to that of computing a frequency oracle in the shuffled model.

Finally, for a matrix $\M \in \BR^\B \times \BR^\B$, define the {\it sensitivity} of $\M$ as follows:
\begin{defn}[Matrix sensitivity,~\cite{li2010optimizing}]
\label{def:matrix_sensitivity}
For a matrix $\M$, let the {\it sensitivity} of $\M$, denoted $\Delta_\M$, be the maximum $\ell_1$ norm of a column of $\M$.
\end{defn}
For any column vector $y \in \BR^\B$, $\Delta_\M$ measures the maximum $\ell_1$ change in $\M y$ if a single element of $y$ changes by $1$. 
The matrix mechanism, introduced by Li et al.~\cite{li2010optimizing,li2012adaptive} in the central model of DP, allows one to release answers to a given set of counting queries in a private manner. It is parametrized by an invertible matrix $\M$, and given input $X$, releases the following noisy perturbation of $\hist(X)$: 
\begin{equation}
\label{eq:matrix_mech}
\hist(X) + \Delta_\M \cdot \M^{-1} z,
\end{equation}
where $z \in \BR^\B$ is a random vector whose components are distributed i.i.d.~according to some distribution calibrated to the privacy parameters $\ep, \delta$. The response to a counting query $w \in \BR^\B$ is then given by $\langle w, \hist(X) + \Delta_\M \cdot \M^{-1} z \rangle$. The intuition behind the privacy of (\ref{eq:matrix_mech}) is as follows: (\ref{eq:matrix_mech}) can be obtained as a post-processing of the mechanism $X \mapsto \M( \hist(X)) + \Delta_\M\cdot z$, namely via multiplication by $M^{-1}$. If we choose, for instance, each $z_i$ to be an independent Laplacian of variance $2/\ep$, then the algorithm $X \mapsto \M(\hist(X)) + \Delta_\M \cdot z$ is simply the Laplace mechanism, which is $(\ep, 0)$-differentially private~\cite{dwork2006calibrating}. 



In our modification of the matrix mechanism, the parties will send data that allows the analyzer to directly compute the ``pre-processed input'' $\M(\hist(X)) + \Delta_\M \cdot z$. Moreover, due to limitations of the shuffled model and to reduce communication, the distribution of the noise $z$ will be different from what has been previously used~\cite{li2010optimizing,li2012adaptive}. For our application, we will require $\M$ to satisfy the following properties: 
\begin{enumerate}[label=\textnormal{\arabic*}]
\item[(1)] For any counting query $w$ corresponding to a $d$-dimensional range query, $w\M^{-1}$ has at most $\poly\log(\B)$ nonzero entries, and all of those nonzero entries are bounded in absolute value by some $c > 0$. (Here $w \in \{0,1\}^\B$ is viewed as a row vector.) \label{it:matrix_accuracy}
\item[(2)]$\Delta_\M \leq \poly\log(\B)$.\label{it:matrix_sensitivity}
\end{enumerate}
By property (2) above and the fact that all entries of $\M$ are in $\{ 0,1\}$, (approximate) computation of the vector $\M(\hist(X))$ can be viewed as an instance of the frequency oracle problem where user $i \in [n]$ holds the $\leq \poly\log(\B)$ nonzero entries of the vector $\M(\hist(x_i))$. This follows since $\M(\hist(x_i))$ is the $x_i$th column of $\M$, $\Delta_\M \leq \poly\log(\B)$, and $\hist(X) = \sum_{i=1}^n \hist(x_i)$. Moreover, suppose there is some choice of local randomizer and analyzer (such as those in Section~\ref{sec:hh}) that approximately solve the frequency oracle problem, i.e., compute an approximation $\hat y$ of $\M(\hist(X))$ up to an additive error of $\poly \log \B$, in a differentially private manner. 
Since $w\M^{-1}$ has at most $\poly \log(\B)$ nonzero entries, each of magnitude at most $c$, it follows that \begin{equation}
\label{eq:analyzer_output}
\langle w \M^{-1},\hat y\rangle
\end{equation}
approximates the counting query $\langle w, \hist(X) \rangle$ up to an additive error of $c \cdot \poly \log (\B)$.

\begin{algorithm}[ht]
\Fn{$R^{\matrix}(n, \B, \M, R^{\FO})$}{
    \KwIn{$x \in [\B]$, parameters $n, \B \in \BN, \M \in \{0,1\}^{\B \times \B}, R^{\FO} : \{0,1\}^\B \ra \MT^*$}
    \KwOut{Multiset $\MS \subset \MT$, where $\MT$ is the output set of $R^{\FO}$}
    Let $\MA_x \gets \{ j \in [\B] : \M_{jx} \neq 0 \}$ \\
    \tcp{$\MA_x$ is the set of nonzero entries of the $x$th column of $\M$}
    \Return{$R^{\FO}(\MA_x)$}
}
\caption{Local randomizer for matrix mechanism}
\label{alg:matrix_rand}
\end{algorithm}
Perhaps surprisingly, for any constant $d \geq 1$, we will be able to find a matrix $\M$ that satisfies properties (1) and (2) above for $d$-dimensional range queries with $c = 1$. This leads to the claimed $\poly \log(\B)$ error for computation of $d$-dimensional range queries, as follows: the local randomizer $R^{\matrix}$ (Algorithm~\ref{alg:matrix_rand}) is parametrized by integers $n, \B \in \BN$, a matrix $\M \in \{ 0, 1\}^{B \times \B}$, and a local randomizer $R^{\FO} : [\B] \ra \MT^*$ that can be used in a shuffled model protocol that computes a frequency oracle. (Here $\MT$ is an arbitrary set, and $R^{\FO}$ computes a sequence of messages in $\MT$.) Given input $x \in [\B]$, $R^{\matrix}$ returns the output of $R^{\FO}$ when given as input the set of nonzero entries of the $x$th column of $\M$. The corresponding analyzer $A^{\matrix}$ (Algorithm~\ref{alg:matrix_analyzer}) is parametrized by integers $n, B \in \BN$, a matrix $\M  \in \{0,1\}^{\B \times \B}$, and an analyzer $A^{\FO}$ for computation of a frequency oracle in the shuffled model. Given a multiset $\MS$ consisting of the shuffled messages output by individual randomizers $R^{\matrix}$, it returns (\ref{eq:analyzer_output}), namely the inner product of $w\M^{-1}$ and the output of $A^{\FO}$ when given $\MS$ as input.
\begin{algorithm}[ht]
\Fn{$A^{\matrix}(n, \B, \M, A^{\FO})$}{
\KwIn{Multiset $\MS \subset [\B]$ consisting of the shuffled reports;\\
Parameters $\MW \subset \{0,1\}^\B$ specifying a set of counting queries, $n, \B \in \BN, \M \in \{0,1\}^{\B \times \B}$, analyzer $A^{\FO}$ for frequency oracle computation}
\KwOut{Map associating each $w \in \MW$ to $f_w \in [0,1]$, specifying an estimate for each counting query $w$}
Let $(\FO, \MA) \gets A^{\FO}(\MS)$ \\
\tcp{Frequency oracle output by $A^{\FO}$ (see Definition~\ref{def:fo})}
\Return{Map associating each $w \in \MW$ to $f_w := \sum_{j \in [\B] : (w\M^{-1})_j \neq 0} (w\M^{-1})_j \cdot \MA(\FO, j)$} \\
\tcp{Let $\hat y \in \BR^\B$ be such that $\hat y_j = \MA(\FO, j)$; then this returns the map associating $w \in \MW$ to $\langle w\M^{-1}, \hat y \rangle$.}\label{ln:ma_return}
}
\caption{Analyzer for matrix mechanism}\label{alg:matrix_analyzer}
\end{algorithm}
To complete the construction of a protocol for range query computation in the shuffled model, it remains to find a matrix $\M$ satisfying properties (1) and (2) above. We will do so in Sections~\ref{sec:1d_rq} and~\ref{sec:multid_rq}. First we state here the privacy and accuracy guarantees of the shuffled protocol $P^{\matrix} = (R^{\matrix}, S, A^{\matrix})$.
\begin{theorem}[Privacy of $P^{\matrix}$]
\label{thm:pmatrix_privacy}
Suppose $R^{\FO}$ is a local randomizer for computation of an $(\ha, \hb, \hk)$-frequency oracle with $n$ users and universe size $\B$, which satisfies $(\ep, \delta)$-differential privacy in the shuffled model. Suppose $\M \in \{0,1\}^\B$ satisfies $\Delta_M \leq k$. Then the shuffled protocol $S \circ R^{\matrix}(n, \B, \M, R^{\FO})$ is $(\ep, \delta)$-differentially private.
\end{theorem}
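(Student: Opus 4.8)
The plan is to observe that the shuffled protocol $S\circ R^{\matrix}(n,\B,\M,R^{\FO})$ is nothing more than the shuffled frequency-oracle protocol $S\circ R^{\FO}$ pre-composed, coordinate by coordinate, with the deterministic map sending a user's input $x\in[\B]$ to the indicator vector $\mathbf 1_{\MA_x}\in\{0,1\}^\B$ of the set $\MA_x=\{j\in[\B]:\M_{jx}\neq 0\}$ of nonzero entries of the $x$th column of $\M$ (cf.\ Algorithm~\ref{alg:matrix_rand}). Privacy then follows from the standard principle that pre-processing each user's input by a fixed function that carries neighboring datasets to neighboring (or identical) datasets preserves $(\ep,\delta)$-differential privacy; note that invertibility of $\M$ plays no role here and is only used for accuracy.

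First I would check that $\mathbf 1_{\MA_x}$ is a legal input to $R^{\FO}$. Since every entry of $\M$ lies in $\{0,1\}$, the $\ell_1$-norm of the $x$th column of $\M$ equals its number of nonzero entries, namely $|\MA_x|$; hence by Definition~\ref{def:matrix_sensitivity} we have $\|\mathbf 1_{\MA_x}\|_1=|\MA_x|\le\Delta_\M\le k$, so $\mathbf 1_{\MA_x}$ lies in the input domain of a randomizer for an $(\ha,\hb,\hk)$-frequency oracle with $\hk=k$. (This is precisely where the hypothesis $\Delta_\M\le k$ is needed.) Next, unwinding the definition of $R^{\matrix}$, for every dataset $(x_1,\dots,x_n)\in[\B]^n$ the output distributions satisfy
$$S\bigl(R^{\matrix}(x_1),\dots,R^{\matrix}(x_n)\bigr)\ =\ S\bigl(R^{\FO}(\mathbf 1_{\MA_{x_1}}),\dots,R^{\FO}(\mathbf 1_{\MA_{x_n}})\bigr).$$
Now take any neighboring datasets $X=(x_1,\dots,x_{n-1},x_n)$ and $X'=(x_1,\dots,x_{n-1},x_n')$ and any subset $\MU$ of the output space of the shuffler, and set $\tilde X=(\mathbf 1_{\MA_{x_1}},\dots,\mathbf 1_{\MA_{x_n}})$ and $\tilde X'=(\mathbf 1_{\MA_{x_1}},\dots,\mathbf 1_{\MA_{x_n'}})$. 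These two datasets agree on their first $n-1$ coordinates, so they are either equal or neighboring in the sense of Definition~\ref{def:dp}; applying the $(\ep,\delta)$-differential privacy of $S\circ R^{\FO}$ in the shuffled model to the pair $\tilde X,\tilde X'$ (the inequality being trivial when they coincide) gives
$$\p\bigl[S(R^{\FO}(\tilde X))\in\MU\bigr]\ \le\ e^{\ep}\,\p\bigl[S(R^{\FO}(\tilde X'))\in\MU\bigr]+\delta.$$
Rewriting both sides via the displayed identity yields $\p[S\circ R^{\matrix}(X)\in\MU]\le e^{\ep}\,\p[S\circ R^{\matrix}(X')\in\MU]+\delta$, which is the desired conclusion.

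There is no genuine obstacle in this argument; it is a clean pre-processing reduction, and the only step that needs any attention is the verification $|\MA_x|\le k$, which ensures that the vectors fed into $R^{\FO}$ respect the sparsity bound baked into the $(\ha,\hb,\hk)$-frequency-oracle primitive. If one wishes, the argument can be packaged as a general lemma: if a (possibly randomized) algorithm $T$ is $(\ep,\delta)$-differentially private in the shuffled model and $g$ is a function on single-user inputs such that $X\sim X'$ implies that applying $g$ coordinate-wise sends $X$ to a dataset that either equals or is a neighbor of the one obtained from $X'$, then $T$ pre-composed with the coordinate-wise action of $g$ is again $(\ep,\delta)$-differentially private; Theorem~\ref{thm:pmatrix_privacy} is the instance $g(x)=\mathbf 1_{\MA_x}$.
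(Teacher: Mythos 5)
Your proposal is correct and follows essentially the same route as the paper's proof: both reduce privacy of $S \circ R^{\matrix}$ to that of $S \circ R^{\FO}$ by observing that the coordinate-wise map $x \mapsto \MA_x$ sends neighboring datasets to neighboring (or identical) datasets of sets of size at most $k$ (using $\Delta_\M \le k$), and then invoke the $(\ep,\delta)$-differential privacy of the frequency-oracle protocol. Your extra remarks (the sparsity check and the general pre-processing lemma) are consistent with, and only slightly more explicit than, the paper's argument.
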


\begin{proof}[Proof]
Let $\MY$ be the message space of the randomizer $R^{\FO}$, and $\MY'$ be the set of multisets consisting of elements of $\MY$. Let $P = S \circ R^{\matrix}(n,\B,\M,R^{\FO})$. Consider neighboring datasets $X = (x_1, \ldots, x_n) \in [\B]^n$ and $X' = (x_1, \ldots, x_{n-1}, x_n') \in [\B]^n$. We wish to show that for any $\MT \subset \MY$,
\begin{equation}
\label{eq:pmatrix_priv}
\p[P(X) \in \MT] \leq e^\ep \cdot \p [P(X') \in \MT] + \delta.
\end{equation}
For $i \in [n]$, let $\MS_i = \{ j \in [\B] : \M_{j,x_i} \neq 0 \}$ and $\MS_n' = \{ j \in [\B] : \M_{j,x_n'} \neq 0\}$. Since $\Delta_M \leq k$, we have $|\MS_i| \leq k$ for $i \in [n]$ and $|\MS_n'| \leq k$. Since the output of $R^{\matrix}$ on input $x_i$ is simply $R^{\FO}(\MS_i)$,
$$
P(X) = S (R^{\FO}(\MS_1), \ldots, R^{\FO}(\MS_n)), \quad P(X') = S(R^{\FO}(\MS_1), \ldots, R^{\FO}(\MS_{n-1}),  R^{\FO}(\MS_n')).
$$
Then (\ref{eq:pmatrix_priv}) follows by  the fact that $(\MS_1, \ldots, \MS_n)$ and $(\MS_1, \ldots, \MS_{n-1}, \MS_n')$ are neighboring datasets for the $(\ha, \hb, \hk)$-frequency problem and $S \circ R^{\FO}$ is $(\ep, \delta)$-differentially private.
\end{proof}

\begin{theorem}[Accuracy \& efficiency of $P^{\matrix}$]
\label{thm:pmatrix_accuracy}
Suppose $R^{\FO}, A^{\FO}$ are the local randomizer and analyzer for computation of an $(\ha, \hb, \hk)$-frequency oracle with $n$ users and universe size $\B$. Suppose also that $\MW \subset \{0,1\}^\B$ is a set of counting queries and $\M \in \{0,1\}^\B$ is such that, for any $w \in \MW$, $\| wM^{-1} \|_1 \leq a$ and $\Delta_\M \leq \hk$. Consider the shuffled model protocol $P^{\matrix} = (R^{\matrix}(n,\B,\M, R^{\FO}), S, A^{\matrix}(n,\B,\M,A^{\FO},\MW))$. For any dataset $X = (x_1, \ldots, x_n)$, let the (random) estimates produced by the protocol $P^{\matrix}$ on input $X$ be denoted by $f_w \in [0,1]$ ($w \in \MW$). Then: 
\begin{equation}
\label{eq:pmatrix_prob}
\p \left[ \forall w \in \MW:  | f_w - \langle w, \hist(X) \rangle| \leq \ha \cdot a \right] \geq 1 - \beta.
\end{equation}
Moreover, if the set of nonzero entries of $w\M^{-1}$ and their values can be computed in time $T$, and $A^{\FO}$ releases a frequency oracle $(\FO, \MA)$ which takes time $T'$ to query an index $j$, then for any $w \in \MW$, the estimate $f_w$ can be computed in time $O(T + a \cdot T')$ by $A^{\matrix}$.
\end{theorem}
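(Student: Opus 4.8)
The plan is to reduce the accuracy of $P^{\matrix}$ directly to the accuracy guarantee of the underlying frequency oracle protocol $(R^{\FO}, S, A^{\FO})$, using only the two structural properties of $\M$ assumed in the statement (bounded column sparsity $\Delta_\M \le \hk$, and $\|w\M^{-1}\|_1 \le a$ for every query $w \in \MW$) together with the identity $w = (w\M^{-1})\M$ (here $\M$ is invertible, as in the description of the matrix mechanism).

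First I would identify what the composite randomizer actually feeds into the frequency oracle. By construction (Algorithm~\ref{alg:matrix_rand}), on input $x_i \in [\B]$ the randomizer $R^{\matrix}$ outputs $R^{\FO}(\MA_{x_i})$, where $\MA_{x_i} = \{ j \in [\B] : \M_{j,x_i} \neq 0\}$ is the support of the $x_i$th column of $\M$, i.e., the support of the $\{0,1\}$-vector $\M\hist(x_i)$. Since the entries of $\M$ lie in $\{0,1\}$ and $\Delta_\M \le \hk$, we have $|\MA_{x_i}| = \|\M\hist(x_i)\|_1 \le \hk$, so $(\MA_{x_1}, \ldots, \MA_{x_n})$ is a legal input to the $(\ha,\hb,\hk)$-frequency oracle problem. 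Running the shuffled protocol $(R^{\FO}, S, A^{\FO})$ on this input therefore yields, by Definition~\ref{def:fo}, a frequency oracle $(\FO, \MA)$ such that, on an event $\ME$ of probability at least $1-\beta$,
\[
\Bigl| \MA(\FO, j) - \sum_{i=1}^n (\M\hist(x_i))_j \Bigr| \le \ha \qquad \text{for all } j \in [\B].
\]
Writing $\hat y \in \BR^\B$ for the vector with $\hat y_j = \MA(\FO, j)$ and using $\sum_{i=1}^n \hist(x_i) = \hist(X)$, this says precisely that $\|\hat y - \M\hist(X)\|_\infty \le \ha$ on $\ME$. Crucially, $\ME$ does not depend on the query set $\MW$, so no union bound over $\MW$ is needed.

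Next I would convert this $\ell_\infty$ bound on $\hat y$ into the claimed additive error on each counting query. By Algorithm~\ref{alg:matrix_analyzer}, $A^{\matrix}$ returns $f_w = \sum_{j : (w\M^{-1})_j \neq 0} (w\M^{-1})_j \cdot \MA(\FO,j) = \langle w\M^{-1}, \hat y\rangle$. Since $(w\M^{-1})\M = w$, we have $\langle w, \hist(X)\rangle = \langle w\M^{-1}, \M\hist(X)\rangle$, and hence on the event $\ME$, for every $w \in \MW$ simultaneously,
\[
\bigl| f_w - \langle w, \hist(X)\rangle \bigr| = \bigl| \langle w\M^{-1},\, \hat y - \M\hist(X)\rangle \bigr| \le \|w\M^{-1}\|_1 \cdot \|\hat y - \M\hist(X)\|_\infty \le a \cdot \ha,
\]
using H\"older's inequality and the two hypotheses on $\M$. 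This is exactly (\ref{eq:pmatrix_prob}). (If one additionally clips each $f_w$ into the interval of achievable answers, which contains $\langle w,\hist(X)\rangle$, this only decreases the error, so the bound is unaffected.)

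Finally, for the efficiency claim I would read off the cost of Line~\ref{ln:ma_return}: after $A^{\FO}$ has produced $(\FO, \MA)$ during preprocessing, answering a query $w$ consists of (i) computing the nonzero entries of $w\M^{-1}$ and their values, which takes time $T$ by hypothesis, and (ii) summing $(w\M^{-1})_j \cdot \MA(\FO,j)$ over those nonzero coordinates $j$, each query to $\MA$ costing $T'$; since there are at most $a$ such coordinates (each nonzero entry of $w\M^{-1}$ has absolute value at least $1$ in all our instantiations, so their number is bounded by $\|w\M^{-1}\|_1 \le a$), the total is $O(T + a T')$. I do not expect a real obstacle here: the substance of the theorem is the existence of a matrix $\M$ with the two stated properties, which is handled separately in Sections~\ref{sec:1d_rq} and~\ref{sec:multid_rq}; the only points requiring care are that the frequency-oracle accuracy event is query-independent (so the ``for all $w \in \MW$'' is free) and that H\"older's inequality is applied pairing $\|\cdot\|_1$ of the query-image vector $w\M^{-1}$ with $\|\cdot\|_\infty$ of the estimation error $\hat y - \M\hist(X)$.
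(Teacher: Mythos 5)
Your proof is correct and follows essentially the same route as the paper's: feed the column supports of $\M$ (of size at most $\hk$ since $\Delta_\M \leq \hk$) into the frequency oracle, observe $\hist(\MS_1,\ldots,\MS_n) = \M\hist(X)$, and convert the $\ell_\infty$ guarantee into the per-query bound via H\"older's inequality together with $\langle w\M^{-1}, \M\hist(X)\rangle = \langle w, \hist(X)\rangle$, with the efficiency claim read off from Line~\ref{ln:ma_return}. Your added remark that the number of nonzero entries of $w\M^{-1}$ is bounded by $a$ (because those entries are integers in the instantiations used) is a small point the paper leaves implicit, but it does not change the argument.
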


\begin{proof}[Proof]
For $i \in [n]$, let $\MS_i = \{ j \in [\B] : \M_{j,x_i} \neq 0 \}$ be the set of nonzero entries of the $x_i$th column of $\M$. Denote by $(\FO, \MA)$ the frequency oracle comprising the output $A^{\FO}(S(R^{\FO}(\MS_1), \ldots, R^{\FO}(\MS_n)))$. Define $\hat y \in \BR^\B$ by $\hat y_j = \MA(\FO, j)$, for $j \in [\B]$. Then the output of $P^{\matrix}$, namely $$P^{\matrix}(X) = A^{\matrix}(S(R^{\matrix}(x_1), \ldots, R^{\matrix}(x_n))),$$ is given by the map associating each $w \in \MW$ to $\langle w M^{-1}, \hat y \rangle$ 
(Algorithms~\ref{alg:matrix_rand} and~\ref{alg:matrix_analyzer}).

Since $(\FO, \MA)$ is an $(\ha, \hb, \hk)$-frequency oracle, we have that
$$
\p \left[ \left\| \hat y - \hist(\MS_1, \ldots, \MS_n) \right\|_\infty \leq \ha\right] \geq 1-\beta.
$$
Notice that the histogram of $\MS_i$ is given by the $x_i$th column of $\M$, which is equal to $\M \hist(x_i)$. Thus $\hist(\MS_1, \ldots, \MS_n) = \M \hist(x_1, \ldots, x_n)$. 
By H{\"o}lder's inequality, it follows that with probability $1-\beta$, for all $w \in \MW$,
$$
\left|\langle w\M^{-1}, \hat y \rangle - \langle w\M^{-1}, \M \hist(x_1, \ldots, x_n) \rangle \right| \leq \ha \cdot \| w\M^{-1} \|_1 \leq \ha \cdot a.
$$
But $\langle w\M^{-1}, \M \hist(x_1, \ldots, x_n) \rangle = w\M^{-1} \M \hist(x_1, \ldots, x_n) = \langle w, \hist(x_1, \ldots, x_n)\rangle$ is the answer to the counting query $w$. This establishes (\ref{eq:pmatrix_prob}).

The final claim involving efficiency follows directly from Line~\ref{ln:ma_return} of Algorithm~\ref{alg:matrix_analyzer}.
\end{proof}

\subsection{Single-Dimensional Range Queries}
\label{sec:1d_rq}
We first present the matrix $\M$ discussed in previous section for the case of $d = 1$, i.e., single-dimensional range queries. In this case, the set $\MX = [\B]$ is simply identified with $\B$ consecutive points on a line, and a range query $[j,j']$ is specified by integers $j,j' \in [\B]$ with $j \leq j'$. We will assume throughout that $\B$ is a power of 2. (This assumption is without loss of generality since we can always pad the input domain to be of size a power of 2, with the loss of a constant factor in our accuracy bounds.) We begin by presenting the basic building block in the construction of $\M$, namely that of a {\it range query tree} $\MT_\B$ with $\B$ leaves and a {\it chosen set} $\MC_\B$ of $\B$ nodes of $\MT_\B$:
\begin{defn}[Range query tree]
\label{def:rqt}
Suppose $\B \in \BN$ is a power of 2, $\ell \in \BN, \gamma \in (0,1)$. Define a complete binary tree $\MT_\B$ of depth $\log \B$, where each node stores a single integer-valued random variable:
\begin{enumerate}[nosep]
\item For a depth $0 \leq t \leq \log \B$ and an index $1 \leq s \leq \B/2^{\log \B-t}$, let $v_{t,s}$ be the $s$th vertex of the tree at depth $t$ (starting from the left). We will denote the value stored at vertex $v_{t,s}$ by $y_{t,s}$.
The values $y_{t,s}$ will always have the property that $y_{t,s} = y_{t+1,2s-1} + y_{t+1, 2s}$; i.e., the value stored at $v_{t,s}$ is the sum of the values stored at the two children of $v_{t,s}$.\label{it:yts_defn}
\item Let $\MC_\B = \{ v_{t,s} : 0 \leq t \leq \log \B, s \equiv 1 \pmod{2}\}$. Let the $\B$ nodes in $\MC_\B$ be ordered in the top-to-bottom, left-to-right order. In particular, $v_{0,1}$ comes first, $v_{1,1}$ is second, $v_{1,3}$ is third, $v_{2,1}$ is fourth, and in general: the $j$th node in this ordering ($1 < j \leq \B$) is $v_{t_j,s_j}$, where $t_j = \lceil \log_2 j \rceil, s_j = 2(j-2^{t_j-1}) - 1$. \label{it:mcb_defn}%
\item For $1 \leq j \leq \B$, we will denote $z_j := y_{\log \B, j}$ and $y_j = y_{t_j,s_j}$.
\end{enumerate}
\end{defn}
See Figure~\ref{fig:rqtree} for an illustration of $\MT_4$. The next lemma establishes some basic properties of the set $\MC_\B$:
\begin{figure}[t]
\centering
\subfloat[Range query tree, $\B = 4$]{
\includegraphics[scale=0.75]{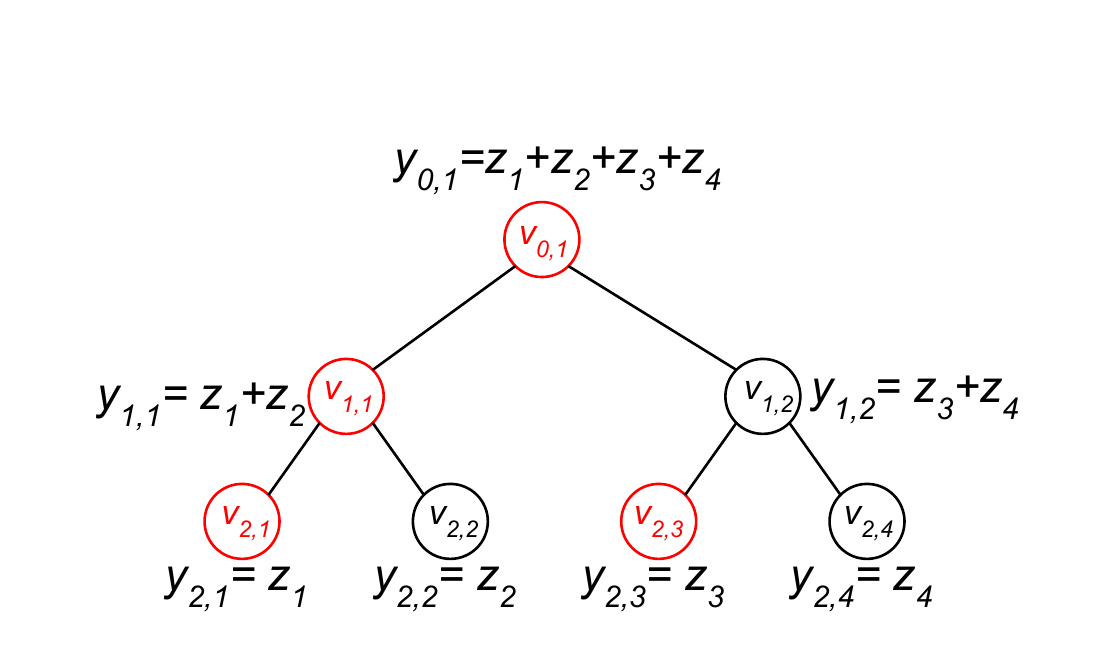}}
\subfloat[Path $P$ constructed to derive (\ref{eq:leaf_sum_contrib})]{
\includegraphics[scale=0.75]{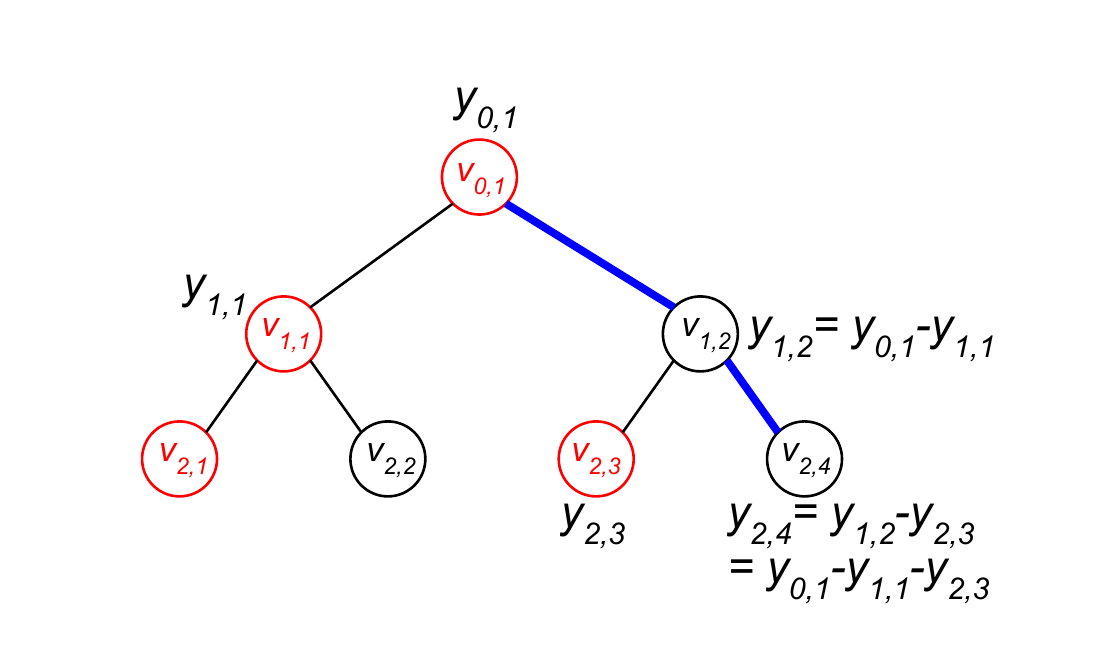}}
\caption{(a) The range query tree $\MT_4$. The nodes in $\MC_4$ are highlighted in red. The labels $y_{t,s}, z_s$ next to nodes show the values stored at the nodes and the relations between them. Notice that in the case $\B= 4$, we have $(t_1, s_1) = (0,1), (t_2, s_2) = (1,1), (t_3,s_3) = (2,1), (t_4, s_4) = (2,3)$. (b) The path $P$ described in (\ref{eq:leaf_sum_contrib}) for $j = 4$ is highlighted in blue. For this case ($\B = j = 4$) we have $z_4 = y_{0,1} - y_{1,2} - y_{2,3}$.} 
\label{fig:rqtree}
\end{figure}

\begin{lemma}
\label{lem:s_properties}
Fix $d$ a power of 2. We have the following regarding the set $\MC_\B$ defined in Definition~\ref{def:rqt}:
\begin{enumerate}
\item $\MC_\B$ is the union of the the root and set of nodes of $\MT_\B$ that are the left child of their parent. 
\item 
For any node $u \not \in \MC_\B$, there is some $v \in \MT_\B$ (which is an ancestor of $u$) so that there is a path from $v$ to $u$ that consists entirely of following the right child of intermediate nodes, starting from $v$.
\end{enumerate}
\end{lemma}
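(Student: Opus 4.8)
The statement to prove is Lemma~\ref{lem:s_properties}, which characterizes the structure of the chosen set $\MC_\B$ in the range query tree $\MT_\B$. Both parts are essentially combinatorial facts about the indexing scheme introduced in Definition~\ref{def:rqt}.

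\textbf{Plan for Part 1.} Recall $\MC_\B = \{ v_{t,s} : 0 \leq t \leq \log \B,\ s \equiv 1 \pmod 2\}$. I would argue that at each depth $t \geq 1$, the node $v_{t,s}$ with $s$ odd is precisely the left child of its parent $v_{t-1, (s+1)/2}$: by the tree's indexing convention in Definition~\ref{def:rqt}(\ref{it:yts_defn}), the children of $v_{t-1, s'}$ are $v_{t, 2s'-1}$ (left) and $v_{t, 2s'}$ (right), so $v_{t,s}$ is a left child iff $s = 2s'-1$ for some $s'$, i.e., iff $s$ is odd. Conversely, every left child has odd second index. Together with the root $v_{0,1}$ (the one node at depth $0$), this shows $\MC_\B$ equals the union of the root and the set of all left-children, as claimed.

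\textbf{Plan for Part 2.} For $u \not\in \MC_\B$, $u$ is not the root, so it has a parent, and since $u \notin \MC_\B$ it is not a left child, hence $u$ is a right child. Now walk upward from $u$: as long as the current node is a right child of its parent, continue; since the root is not a right child, this process terminates at some node $w$ which is either the root or a left child of its parent — in either case $w \in \MC_\B$, but more to the point, taking $v$ to be the parent of the topmost right-child along this walk (equivalently, $v = w$ if we only need an ancestor), the downward path from $v$ to $u$ follows right children at every intermediate step by construction. I would phrase this as: let $u = u_0, u_1, u_2, \ldots$ be the sequence of ancestors with $u_{i+1}$ the parent of $u_i$; let $i^*$ be the largest index such that $u_i$ is a right child of $u_{i+1}$ for all $i < i^*$ — this is well-defined and finite because $u_0$ is a right child and the root is not a right child; then $v := u_{i^*}$ is an ancestor of $u$, and the path $v = u_{i^*} \to u_{i^*-1} \to \cdots \to u_0 = u$ consists entirely of right-child steps. (Note $i^* \geq 1$ since $u_0$ is a right child.)

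\textbf{Main obstacle.} Neither part is deep; the only real care needed is bookkeeping with the indexing conventions — making sure the parity argument in Part 1 correctly matches ``odd second index'' with ``left child'' under the specific convention $y_{t,s} = y_{t+1,2s-1} + y_{t+1,2s}$, and in Part 2 being precise that the upward walk terminates (which uses only that the root has no parent and is not anyone's right child). I would also double-check the edge case where $u$ itself is a leaf versus an internal node, but the argument is uniform in depth. A sanity check against Figure~\ref{fig:rqtree} (e.g., $v_{1,2}$ and $v_{2,3}$ for $\B=4$ are exactly the non-chosen nodes, and $v_{2,3}$ is the right child of $v_{1,2}$ which is the right child of $v_{0,1}$) confirms the claims.
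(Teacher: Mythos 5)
Your proof is correct and takes essentially the same route as the paper's: Part 1 is read off from the indexing convention (the children of $v_{t-1,s'}$ are $v_{t,2s'-1}$ and $v_{t,2s'}$, so odd second index is equivalent to being the left child or the root), and Part 2 walks upward from $u$ until the first node that is not a right child of its parent, exactly as in the paper. One correction to your closing sanity check: for $\B=4$ the non-chosen nodes are $v_{1,2}$, $v_{2,2}$, and $v_{2,4}$, not $v_{1,2}$ and $v_{2,3}$; indeed $v_{2,3}$ has odd second index, lies in $\MC_4$, and is the \emph{left} child of $v_{1,2}$ (whose right child is $v_{2,4}$), consistent with your own Part 1 argument, so the slip is only in the example and does not affect the proof.
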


\begin{proof}[Proof of Lemma~\ref{lem:s_properties}]
The first part is immediate from the definition of $\MC_\B$. For the second part, given $u$, we walk towards the root, continually going to the parent of the current node. The first time we arrive at a node that is the left child of its parent, we will be at a node in $\MC_\B$; we let this node be $v$.
\end{proof}

Next we make two more definitions that will aid in the analysis:
\begin{defn}
\label{def:2val}
    For an integer $j \in [\B]$, let $v(j)$ denote the number of steps from a node to its parent one must take starting at the leaf $v_{\log \B, j}$ of the tree $\MT_\B$ to get to a node in $\MC_\B$. Equivalently, $v(j)$ is the 2-adic valuation of $j$ (i.e., the base-2 logarithm of the largest power of 2 dividing $j$).
\end{defn}
\begin{defn}
\label{def:countones}
For a positive integer $j$, let $c(j)$ be the number of ones in the binary representation of $j$.
\end{defn}
By property (\ref{it:yts_defn}) of Definition~\ref{def:rqt}, the set of all values $y_{t,s}$, for $0 \leq t \leq \log \B$, $1 \leq s \leq \B/2^t$, is entirely determined by the values $z_s$: in particular, for any $v_{t,s}$, $y_{t,s}$ is the sum of all $z_s$ for which the leaf $v_{\log \B, s}$ is a descendant of $v_{t,s}$. Conversely, given the values of $y_{t,s}$ for which $v_{t,s} \in \MC_\B$ (equivalently, the values $y_{t_j,s_j}$ for $j \in [\B]$), the values $z_j = y_{\log \B, j}$ are determined as follows:%
\begin{equation}
\label{eq:leaf_sum_contrib}
z_j = y_{\log B, j} = y_{\log \B - v(j), j / 2^{v(j)}} - \sum_{t =1}^{v(j) - 1} y_{\log \B - v(j) + t, j/2^{v(j) - t} - 1}.
\end{equation}
Graphically, we follow the path $P$ from $v_{\log \B, j}$ to the root until we hit a node $v_{t, s}$ in $\MC_\B$; then $z_j$ is the difference of $y_{t,s}$ and the sum of the variables stored at the left child of each node in the path $P$. (See Figure~\ref{fig:rqtree} for an example.)

It follows from the argument in the previous paragraph that the linear transformation that sends the vector $(z_1, \ldots, z_\B)$ to the vector $(y_{t_1,s_1}, \ldots, y_{t_\B, s_\B})$ is invertible; let $\M_\B \in \{0,1\}^{\B \times \B}$ be the matrix representing this linear transformation. By (\ref{eq:leaf_sum_contrib}), which describes the linear transformation induced by $\M_\B^{-1}$, we have that $\M_\B^{-1} \in \{-1, 0, 1\}^{\B \times \B}$. 

Since each leaf has $1 + \log \B$ ancestors (including itself), we immediately obtain:
\begin{lemma}
\label{lem:ab_sensitivity}
The sensitivity of $\M_\B$ is given by $\Delta_{\M_\B} = 1 + \log \B$.
\end{lemma}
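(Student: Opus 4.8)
The plan is to read the entries of $\M_\B$ directly off Definition~\ref{def:rqt} and then reduce the sensitivity to a path-counting statement in the tree $\MT_\B$ (recall $\B$ is a power of two). First I would recall that, by Definition~\ref{def:matrix_sensitivity}, $\Delta_{\M_\B}$ is the largest $\ell_1$ norm of a column of $\M_\B$, and that $\M_\B$ is the matrix of the linear map sending $(z_1,\dots,z_\B)$ to $(y_{t_1,s_1},\dots,y_{t_\B,s_\B})$. By property~(\ref{it:yts_defn}) of Definition~\ref{def:rqt}, iterating the relation $y_{t,s}=y_{t+1,2s-1}+y_{t+1,2s}$ down to the leaves shows, by a straightforward downward induction on depth, that $y_{t,s}=\sum_\ell z_\ell$, the sum being over all leaves $v_{\log\B,\ell}$ that are descendants of $v_{t,s}$. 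Hence the $(i,j)$ entry of $\M_\B$ is the coefficient of $z_j$ in $y_{t_i,s_i}$, which equals $1$ if the leaf $v_{\log\B,j}$ is a descendant of the $i$th node $v_{t_i,s_i}$ of $\MC_\B$ (in the ordering of Definition~\ref{def:rqt}(\ref{it:mcb_defn})) and $0$ otherwise; in particular all entries lie in $\{0,1\}$, consistent with the stated fact $\M_\B\in\{0,1\}^{\B\times\B}$.

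Next I would observe that, with this description, the $\ell_1$ norm of column $j$ of $\M_\B$ is exactly the number of nodes of $\MC_\B$ lying on the unique path from the root $v_{0,1}$ down to the leaf $v_{\log\B,j}$, since these are precisely the nodes of $\MT_\B$ of which $v_{\log\B,j}$ is a descendant. This path contains exactly one node at each depth $0,1,\dots,\log\B$, hence exactly $1+\log\B$ nodes in total, so the $\ell_1$ norm of every column of $\M_\B$ is at most $1+\log\B$; this gives $\Delta_{\M_\B}\le 1+\log\B$. For the reverse inequality it suffices to exhibit one column attaining the bound: take $j=1$, so that the ancestors of $v_{\log\B,1}$ (itself included) are precisely the nodes $v_{t,1}$ for $t=0,1,\dots,\log\B$. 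Each of these has (odd) second coordinate $s=1$ and therefore lies in $\MC_\B$ by its definition, so column $1$ has $\ell_1$ norm exactly $1+\log\B$. Combining the two bounds yields $\Delta_{\M_\B}=1+\log\B$.

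I do not expect a real obstacle here: the argument is a short counting argument once the entries of $\M_\B$ are correctly identified. The one point requiring care is the bookkeeping between the ordering of $\MC_\B$ in Definition~\ref{def:rqt}(\ref{it:mcb_defn}) and the rows of $\M_\B$ — but permuting the rows of $\M_\B$ does not change the multiset of column $\ell_1$ norms, so for the sensitivity computation only the \emph{set} of tree nodes appearing as rows matters, namely exactly $\MC_\B$, and the specific ordering is irrelevant. The same style of counting will be reused when bounding the sensitivity of the higher-dimensional matrices constructed in Section~\ref{sec:multid_rq}.
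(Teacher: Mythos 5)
Your proof is correct and follows essentially the same route as the paper: identify the entries of $\M_\B$ via descendant relations in $\MT_\B$ and bound each column's $\ell_1$ norm by the $1+\log\B$ nodes on a root-to-leaf path. You are in fact slightly more careful than the paper's one-line argument, since you also exhibit column $j=1$ (all of whose ancestors $v_{t,1}$ lie in $\MC_\B$) to show the bound is attained, which the paper leaves implicit.
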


Next consider any range query $[j,j']$, so that $1 \leq j \leq j' \leq \B$, and let $w \in \BR^\B$ be the row vector representing this range query (see Section~\ref{sec:matrix_randomizer}). In particular all entries of $w$ are 0 apart from $w_j, w_{j+1}, \ldots, w_{j'}$, which are all 1. 
\begin{lemma}
\label{lem:ab_inverse}
For a vector $w$ representing a range query $[j,j']$, the vector $w\M_\B^{-1}$ belongs to $\{-1, 0,1\}^\B$, and it has at most $c(j-1) + c(j') \leq 2\log \B$ nonzero entries. Moreover, the set of these nonzero entries (and their values) can be computed in time $O(\log \B)$. 
\end{lemma}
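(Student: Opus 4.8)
The plan is to prove the statement through the canonical dyadic decomposition of prefix intervals. The first and key step is a structural claim: for every integer $k \in \{0,1,\dots,\B\}$ there is a set $\MD_k$ of exactly $c(k)$ nodes of $\MC_\B$ (with $\MD_0 = \varnothing$) whose leaf-sets partition the leaves $\{1,\dots,k\}$; equivalently, the prefix interval $[1,k]$ is the disjoint union of the dyadic intervals associated to the nodes in $\MD_k$. I would prove this by the obvious top-down recursion on $\MT_\B$: at a subtree of size $2^d$ carrying a prefix of length $k' < 2^d$, if the prefix reaches past the midpoint, insert the \emph{left} child of that subtree's root into $\MD_k$ and recurse on the right subtree with prefix length $k'-2^{d-1}$; otherwise recurse on the left subtree with prefix length $k'$; and at the very top, if $k=\B$ take $\MD_\B=\{v_{0,1}\}$. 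Two facts then fall out: (i) every node ever inserted is either the root or a left child of its parent, hence lies in $\MC_\B$ by Lemma~\ref{lem:s_properties} (using that being a left child is preserved under passing to subtrees); and (ii) reading off the binary digits of $k$ shows the number of inserted nodes equals the number of $1$-bits of $k$, i.e.\ $c(k)$ (Definition~\ref{def:countones}). (This can also be derived from~\eqref{eq:leaf_sum_contrib}, but the recursive decomposition is cleaner and directly yields both the support and the sign pattern.)

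Next I would convert this into a matrix identity. By property~\eqref{it:yts_defn} of Definition~\ref{def:rqt}, the value $y_{t,s}$ at any node equals $\sum_{\ell} z_\ell$ over the leaves $\ell$ below it; hence $\sum_{\ell=1}^{k} z_\ell = \sum_{u \in \MD_k} y_u$. Writing $\One_{[1,k]} \in \{0,1\}^\B$ for the row indicator of the prefix and $\One_{\MD_k} \in \{0,1\}^\B$ for the indicator of $\MD_k$ (coordinates being the positions of nodes in the fixed ordering of $\MC_\B$ from Definition~\ref{def:rqt}\eqref{it:mcb_defn}), this reads $\One_{[1,k]}\, z = \One_{\MD_k}\, (\M_\B z)$ for every column vector $z \in \BR^\B$, i.e.\ $\One_{[1,k]} = \One_{\MD_k}\M_\B$ as row vectors, i.e.\ $\One_{[1,k]}\M_\B^{-1} = \One_{\MD_k}$.

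Now for a range query $[j,j']$ the row vector is $w = \One_{[1,j']} - \One_{[1,j-1]}$, so $w\M_\B^{-1} = \One_{\MD_{j'}} - \One_{\MD_{j-1}}$. Being a difference of two $0/1$ vectors, this lies in $\{-1,0,1\}^\B$, and its number of nonzero coordinates is at most $|\MD_{j'}| + |\MD_{j-1}| = c(j') + c(j-1) \le 2\log \B$, since every integer in $\{0,\dots,\B\}$ has at most $\log\B$ ones in binary and $c(0)=0$. For the $O(\log\B)$ running time: given $j,j'$, compute the binary expansions of $j-1$ and $j'$ in $O(\log\B)$ time, read off $\MD_{j-1}$ and $\MD_{j'}$ directly from the $1$-bits (each node produced as a $(\text{depth},\text{index})$ pair), convert each such node to its position in the ordering of $\MC_\B$ via the explicit closed form in Definition~\ref{def:rqt}\eqref{it:mcb_defn}, and merge the two lists of at most $\log\B$ indices, cancelling coordinates appearing in both and signing the remainder; every step is $O(\log\B)$.

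The main obstacle is the structural claim of the first paragraph: one must verify carefully that the canonical dyadic decomposition of a prefix $[1,k]$ uses \emph{only} left-child (or root) nodes — this is precisely where the particular choice of $\MC_\B$ does its work — and that the count is exactly $c(k)$, getting the recursion's base cases and the left-versus-right-child bookkeeping right. Once that claim is established, the identity $w\M_\B^{-1} = \One_{\MD_{j'}} - \One_{\MD_{j-1}}$ and the bounds on the entries, the support size, and the running time are all immediate.
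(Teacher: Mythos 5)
Your proposal is correct and follows essentially the same route as the paper: both decompose the range $[j,j']$ as the difference of the two prefixes $[1,j']$ and $[1,j-1]$, decompose each prefix into $c(j')$ (resp.\ $c(j-1)$) nodes of $\MC_\B$ by collecting left children (or the root) along the path determined by the binary expansion — exactly the content of Lemma~\ref{lem:s_properties} and equations~(\ref{eq:jp_addition})--(\ref{eq:jm1_addition}) — and subtract, giving coefficients in $\{-1,0,1\}$ and the $O(\log\B)$ query time. Your top-down recursive phrasing of the dyadic decomposition and the matrix identity $\One_{[1,k]}\M_\B^{-1}=\One_{\MD_k}$ are just a repackaging of the paper's path-walking argument, not a different proof.
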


\begin{proof}[Proof of Lemma~\ref{lem:ab_inverse}]
Since $\M_\B$ is invertible, $w\M_\B^{-1}$ is the unique vector $\nu \in \BR^\B$ such that for any values of $\{y_{t,s}\}_{0 \leq t \leq \B, s \in [\B/ 2^t]}$ satisfying property (\ref{it:yts_defn}) of Definition~\ref{def:rqt}, we have
$$
z_j + z_{j+1} + \cdots + z_{j'} = y_{\log \B, j} + \cdots + y_{\log \B, j'} = \langle \nu, (y_{t_1,s_1}, \ldots, y_{t_\B, s_\B})\rangle.
$$

Next let $v_{\tilde t, \tilde s}$ be the first node in $\MC_\B$ that is reached on the leaf-to-root path starting at $v_{\log \B, j'}$. Recall from Definition~\ref{def:2val} that $\tilde t = \log B - v(j')$. Consider the path on the tree $\MT_\B$ from the root $v_{1,1}$ to the node $v_{\tilde t, \tilde s}$. Suppose the right child is taken at $h-1$ vertices of this path; it is not hard to see that $h = c(j')$ (see Definition~\ref{def:countones}). For $1 \leq k \leq h$, at the $k$th vertex on this path where the right child is taken, set $v_{t_k', s_k'}$ to be the left child of the parent vertex (so that $v_{t_k', s_k'}$ is not on the path). By Lemma~\ref{lem:s_properties}, $v_{t_k',s_k'} \in \MC_\B$. Also set $v_{t_h', s_h'} = v_{\tilde t, \tilde s}$. Then from Definition~\ref{def:rqt} (property (\ref{it:yts_defn})) we have
\begin{equation}
\label{eq:jp_addition}
z_1 + \cdots + z_{j'} = y_{\log B, 1} + \cdots + y_{\log B, j'} = \sum_{k=1}^h y_{t_k', s_k'}.
\end{equation}
The same computation for $j-1$ replacing $j'$ yields, with $\hat h = c(j-1)$,
\begin{equation}
\label{eq:jm1_addition}
z_1 + \cdots + z_{j-1} = y_{\log \B, 1} + \cdots + y_{\log B, j-1} = \sum_{k=1}^{\hat h} y_{\hat  t_k, \hat s_k},
\end{equation}
where the pairs $(\hat t_k, \hat s_k)$ replace the pairs $(t_k', s_k')$. Taking the difference of (\ref{eq:jp_addition}) and (\ref{eq:jm1_addition}) yields
$$
z_{j} + \cdots + z_{j'} =  \sum_{k=1}^h y_{t_k', s_k'} - \sum_{k=1}^{\hat h} y_{\hat  t_k, \hat s_k},
$$
i.e., $z_j + \cdots + z_{j'}$ is a linear combination of at most $c(j-1) + c(j')$ elements of $\{ y_{t,s} : v_{t,s} \in \MC_\B \}$, with coefficients in $\{-1,1\}$. The sets $\{ (t_k', s_k') \}_{1 \leq k \leq h}$ and $\{ (\hat t_k, \hat s_k) \}_{1 \leq k \leq \hat h}$ can be computed in $O(\log \B)$ time by walking on the leaf-to-root path starting at $v_{\log \B, j'}$ and $v_{\log \B, j-1}$, respectively. This establishes Lemma~\ref{lem:ab_inverse}.
\end{proof}

Lemmas~\ref{lem:ab_sensitivity} and~\ref{lem:ab_inverse} establish properties (1) and (2) required of the matrix $\M = \M_\B$ to guarantee $\poly \log(\B)$ accuracy and $\poly \log(\B)$ communication for private computation of 1-dimensional range queries. In the following section we use $\M_\B$ to construct a matrix which satisfies the same properties for $d$-dimensional range queries for any $d \geq 1$.

\subsection{Multi-Dimensional Range Queries}
\label{sec:multid_rq}
Fix any $d \geq 1$, and suppose the universe $\MX$ consists of $\B_0$ buckets in each dimension, i.e., $\MX = [\B_0]^d$. In this case, a range query $[j_1, j_1'] \times [j_2, j_2'] \times \cdots \times [j_d, j_d']$ is specified by integers $j_1, j_2, \dots, j_d, j_1', j_2', \dots, j_d' \in [\B_0]$ with $j_i \leq j_i'$ for all $i=1,2,\dots,d$.

Throughout this section, we will consider the case that $d$ is a constant (and $\B_0$ is large). Moreover suppose that $\B_0$ is a power of 2 (again, this is without loss of generality since we can pad each dimension to be a power of 2 at the cost of a blowup in $|\MX|$ by at most a factor of $2^d$). Write $\B = |\MX| = \B_0^d$. Our goal is to define a matrix $\M_{\B,d}$ which satisfies analogues of Lemmas~\ref{lem:ab_sensitivity} and~\ref{lem:ab_inverse} for $w \in \{0,1\}^\B$ representing multi-dimensional range queries (when $[\B]$ is identified with $[\B_0]^d$).

The idea behind the construction of $\M_{\B,d}$ is to apply the linear transformation $\M_{\B_0}$ in each dimension, operating on a single-dimensional slice of the input vector $(z_{j_1, \ldots, j_d})_{j_1, \ldots, j_d \in [\B_0]}$ (when viewed as a $d$-dimensional tensor) at a time. Alternatively, $\M_{\B,d}$ can be viewed combinatorially through the lens of {\it range trees}~\cite{rangetree}: $\M_{\B,d}$ is a linear transformation that takes the vector $(z_{j_1, \ldots, j_d})$ to a $\B$-dimensional vector whose components are the values stored at the nodes of a range tree defined in a similar manner to the range query tree $\MT_\B$ for the case $d = 1$. However, we opt to proceed linear algebraically: the matrix $\M_{\B,d}$ is defined as follows. Fix a vector $z \in \BR^\B$. We will index the elements of $z$ with $d$-tuples of integers in $[\B_0]$, i.e., we will write $z = (z_{j_1, \ldots, j_d})_{j_1, \ldots, j_d \in [\B_0]}$. For $1 \leq p \leq d$, let $\M^{\pre}_{\B,p}$ be the linear transformation that applies $\M_{\B_0}$ to each vector $(z_{j_1, \ldots, j_{p-1}, 1, j_{p+1}, \ldots, j_d}, \ldots, z_{j_1, \ldots, j_{p-1}, \B_0, j_{p+1}, \ldots, j_d})$, where $j_1, \ldots, j_{p-1}, j_{p+1}, \ldots, j_d \in [\B_0]$. That is, $\M_{\B_0}$ is applied to each slice of the vector $z$, where the slice is being taken along the $p$th dimension. Then let
\begin{equation}
\label{eq:mbd}
\M_{\B,d} := \M^{\pre}_{\B,d} \circ \cdots \circ \M^{\pre}_{\B,1} (z).
\end{equation}
We will also use an alternate characterization of $\M_{\B,d}$, which we develop next. First identify $\BR^\B$ with the $d$-wise tensor product of $\BR^{\B_0}$, in the following (standard) manner: 
Let $e_1, \ldots, e_{\B_0} \in \BR^{\B_0}$ be the standard basis vectors in $\BR^{\B_0}$. Then the collection of all $e_{j_1} \otimes \cdots \otimes e_{j_d}$, where $j_1, \ldots, j_d \in [\B_0]$, form a basis for $\BR^{\B_0} \otimes \cdots \otimes \BR^{\B_0}$. Under the identification $\BR^\B \simeq (\BR^{\B_0})^{\otimes d}$, a vector $z = (z_{j_1, \ldots, j_d})_{j_1, \ldots, j_d \in [\B_0]} \in \BR^\B$ is identified with the following linear combination of these basis vectors:
$$
\sum_{j_1, \ldots, j_d \in [\B_0]} z_{j_1, \ldots, j_d} \cdot e_{j_1} \otimes \cdots \otimes e_{j_d}.
$$
Under this identification, the matrix $\M_{\B,d}$ corresponds to the following linear transformation of $(\BR^{\B_0})^{\otimes d}$:
$$
\M_{\B_0} \otimes \cdots \otimes \M_{\B_0} : (\BR^{\B_0})^{\otimes d} \ra (\BR^{\B_0})^{\otimes d}.
$$
In the following lemmas, we will often abuse notation to allow $\M_{\B,d}$ to represent both the above linear transformation as well as the matrix in $\BR^{\B \times \B}$ representing this transformation.
\begin{lemma}
\label{lem:multid_sensitivity}
We have that $\M_{\B,d} \in \{0,1\}^{\B \times \B}$ and the sensitivity of $\M_{\B,d} : \BR^\B \ra \BR^\B$ is bounded by $\Delta_{\M_{\B,d}} \leq (1 + \log \B_0)^d$.
\end{lemma}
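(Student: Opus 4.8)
The plan is to establish the two claims of Lemma~\ref{lem:multid_sensitivity} separately: first that $\M_{\B,d}$ has all entries in $\{0,1\}$, and second the sensitivity bound $\Delta_{\M_{\B,d}} \leq (1 + \log \B_0)^d$. For the first claim, I would use the tensor-product characterization established just before the lemma statement, namely $\M_{\B,d} = \M_{\B_0}^{\otimes d}$. Since the $(\,(i_1,\dots,i_d),(j_1,\dots,j_d)\,)$ entry of a tensor product of matrices factors as $\prod_{p=1}^d (\M_{\B_0})_{i_p,j_p}$, and each factor lies in $\{0,1\}$ (this was noted in Section~\ref{sec:1d_rq}, where $\M_{\B_0} \in \{0,1\}^{\B_0 \times \B_0}$ because it records which leaves are descendants of each chosen node), the product of $\{0,1\}$-values is again in $\{0,1\}$. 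Hence $\M_{\B,d} \in \{0,1\}^{\B \times \B}$.

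For the sensitivity bound, recall from Definition~\ref{def:matrix_sensitivity} that $\Delta_{\M_{\B,d}}$ is the maximum $\ell_1$ norm of a column of $\M_{\B,d}$. Using the tensor factorization of entries again, the column of $\M_{\B,d}$ indexed by $(j_1,\dots,j_d)$ has its $(i_1,\dots,i_d)$ entry equal to $\prod_{p=1}^d (\M_{\B_0})_{i_p,j_p}$, so its $\ell_1$ norm is $\sum_{i_1,\dots,i_d} \prod_{p=1}^d (\M_{\B_0})_{i_p,j_p} = \prod_{p=1}^d \left( \sum_{i_p} (\M_{\B_0})_{i_p,j_p} \right) = \prod_{p=1}^d \| (\text{column } j_p \text{ of } \M_{\B_0}) \|_1$, where the factorization of the sum over a product follows by distributivity. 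By Lemma~\ref{lem:ab_sensitivity}, each such single-dimensional column $\ell_1$ norm is at most $\Delta_{\M_{\B_0}} = 1 + \log \B_0$. Therefore each column of $\M_{\B,d}$ has $\ell_1$ norm at most $(1 + \log \B_0)^d$, which gives $\Delta_{\M_{\B,d}} \leq (1+\log \B_0)^d$ as claimed.

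I expect the main (mild) obstacle to be purely bookkeeping: making precise the correspondence between the ``apply $\M_{\B_0}$ along each slice'' description in \eqref{eq:mbd} and the clean tensor-product form $\M_{\B_0}^{\otimes d}$, and confirming that the entrywise factorization holds under the chosen indexing convention $z = (z_{j_1,\dots,j_d})$. This is standard multilinear algebra, so I would state it as a one-line observation referencing the identification $\BR^\B \simeq (\BR^{\B_0})^{\otimes d}$ already set up in the text, rather than reproving it. Everything else (the $\{0,1\}$ closure, the distributivity step converting a sum of products of column entries into a product of column $\ell_1$ norms) is immediate. Since $d$ is treated as a constant throughout this section, no care about dependence on $d$ is needed beyond writing $(1+\log\B_0)^d$.
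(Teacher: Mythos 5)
Your proposal is correct and follows essentially the same route as the paper: it uses the entrywise factorization $(\M_{\B,d})_{(i_1,\dots,i_d),(j_1,\dots,j_d)} = \prod_{p=1}^d (\M_{\B_0})_{i_p,j_p}$ coming from the tensor-product structure to get the $\{0,1\}$ claim, and then bounds each column's $\ell_1$ norm by the product $\prod_p \|\text{column } j_p \text{ of } \M_{\B_0}\|_1 \leq (\Delta_{\M_{\B_0}})^d \leq (1+\log \B_0)^d$ via Lemma~\ref{lem:ab_sensitivity}. No gaps.
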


\begin{proof}[Proof of Lemma~\ref{lem:multid_sensitivity}]
Notice that the $((j_1, \ldots, j_d), (j_1', \ldots, j_d'))$ entry of $\M_{\B,d}$ is given by the following product:
$$
\prod_{p=1}^d (\M_{\B_0})_{j_p, j_p'}.
$$
Since $\M_{\B_0} \in \{0,1\}^{\B_0 \times \B_0}$, it follows immediately that $\M_{\B,d} \in \{0,1\}^{\B \times \B}$. Moreover, to upper bound the sensitivity of $\M_{\B, d}$ note that for any $(j_1', \ldots, j_d') \in [\B_0]^d$,
$$
\sum_{(j_1, \ldots, j_d) \in [\B_0]^d} \prod_{p=1}^d (\M_{\B_0})_{j_p, j_p'} = \prod_{p=1}^d \left( \sum_{j_p = 1}^{\B_0} (\M_{\B_0})_{j_p,j_p'} \right) \leq (\Delta_{\M_{\B_0}})^d \leq (1 + \log \B_0)^d,
$$
where the last inequality above uses Lemma~\ref{lem:ab_sensitivity}.
\end{proof}

\begin{lemma}
\label{lem:multid_inverse}
For the vector $w$ representing any range query $[j_1, j_1'] \times \cdots \times [j_d, j_d']$, the vector $w\M_{\B,d}^{-1}$ belongs to $\{-1,0,1\}^\B$ and moreover it has at most $$
\prod_{p=1}^d (c(j_p-1) + c(j_p')) \leq (2 \log \B_0)^d = (2 \log(\B^{1/d}))^d
$$
nonzero entries.
\end{lemma}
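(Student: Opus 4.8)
The plan is to exploit the Kronecker (tensor) product structure of $\M_{\B,d}$ that was set up in the paragraph preceding the lemma. First I would record the two structural facts I need. (i) Under the identification $\BR^\B \simeq (\BR^{\B_0})^{\otimes d}$ from \eqref{eq:mbd} and the subsequent discussion, we have $\M_{\B,d} = \M_{\B_0} \otimes \cdots \otimes \M_{\B_0}$ ($d$ factors); since each $\M_{\B_0}$ is invertible (Section~\ref{sec:1d_rq}), it follows that $\M_{\B,d}^{-1} = \M_{\B_0}^{-1} \otimes \cdots \otimes \M_{\B_0}^{-1}$, using $(A\otimes B)^{-1} = A^{-1}\otimes B^{-1}$. (ii) The indicator vector $w \in \{0,1\}^\B$ of the range query $[j_1,j_1']\times\cdots\times[j_d,j_d']$ factors as $w = w^{(1)} \otimes \cdots \otimes w^{(d)}$, where $w^{(p)} \in \{0,1\}^{\B_0}$ is the indicator of the $1$-dimensional range query $[j_p,j_p']$: indeed the $(k_1,\ldots,k_d)$-coordinate of $w$ equals $\prod_{p=1}^d \One[j_p \le k_p \le j_p'] = \prod_{p=1}^d (w^{(p)})_{k_p}$.

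Next I would apply the mixed-product property of the Kronecker product to row vectors, namely $(u^{(1)}\otimes\cdots\otimes u^{(d)})(A^{(1)}\otimes\cdots\otimes A^{(d)}) = (u^{(1)}A^{(1)})\otimes\cdots\otimes(u^{(d)}A^{(d)})$, which is immediate by expanding coordinates. With $u^{(p)} = w^{(p)}$ and $A^{(p)} = \M_{\B_0}^{-1}$ this gives $w\M_{\B,d}^{-1} = (w^{(1)}\M_{\B_0}^{-1})\otimes\cdots\otimes(w^{(d)}\M_{\B_0}^{-1})$. By Lemma~\ref{lem:ab_inverse}, each factor $w^{(p)}\M_{\B_0}^{-1}$ lies in $\{-1,0,1\}^{\B_0}$ and has at most $c(j_p-1)+c(j_p')$ nonzero entries. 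A coordinate of a tensor product of vectors is a product of one coordinate drawn from each factor, so $w\M_{\B,d}^{-1}$ has all entries in $\{-1,0,1\}$, and its number of nonzero coordinates is exactly the product over $p$ of the number of nonzero coordinates of $w^{(p)}\M_{\B_0}^{-1}$; hence it is at most $\prod_{p=1}^d\bigl(c(j_p-1)+c(j_p')\bigr)$. Finally, since $j_p,j_p' \in [\B_0]$ and $\B_0$ is a power of $2$, both $c(j_p-1)$ and $c(j_p')$ are at most $\log \B_0$, so the product is bounded by $(2\log\B_0)^d = (2\log(\B^{1/d}))^d$, which is the claimed bound.

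I do not expect a genuine obstacle here: the proof is essentially a one-line consequence of the mixed-product identity together with Lemma~\ref{lem:ab_inverse}. The only place that requires care — and the step I would write out most carefully — is the bookkeeping in fact (i): matching the coordinate indexing of $\M_{\B,d}$ as defined by the slice-wise recursion in \eqref{eq:mbd} with the clean tensor-product description, and checking that the fixed top-to-bottom, left-to-right ordering of $\MC_{\B_0}$ used to build $\M_{\B_0}$ (Definition~\ref{def:rqt}) is the one applied in every dimension, so that the factorization $w = \bigotimes_p w^{(p)}$ and the Kronecker structure of $\M_{\B,d}^{-1}$ are expressed in a common basis. Once that identification is fixed, everything else is routine.
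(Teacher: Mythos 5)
Your proposal is correct and follows essentially the same route as the paper: express $\M_{\B,d}^{-1}$ as the $d$-fold tensor product $\M_{\B_0}^{-1}\otimes\cdots\otimes\M_{\B_0}^{-1}$, factor $w$ as $w_1\otimes\cdots\otimes w_d$, apply the mixed-product identity to get $w\M_{\B,d}^{-1}=(w_1\M_{\B_0}^{-1})\otimes\cdots\otimes(w_d\M_{\B_0}^{-1})$, and invoke Lemma~\ref{lem:ab_inverse} on each factor. The closure of $\{-1,0,1\}$ under multiplication and the bound $c(j_p-1)+c(j_p')\le 2\log\B_0$ complete the argument exactly as in the paper.
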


\begin{proof}[Proof of Lemma~\ref{lem:multid_inverse}]
The inverse $\M_{\B,d}^{-1}$ of $\M_{\B,d}$ is given by the $d$-wise tensor product $\M_{\B_0}^{-1} \otimes \cdots \otimes \M_{\B_0}^{-1}$. This can be verified by noting that this tensor product and $\M_{\B,d}$ multiply (i.e., compose) to the identity:
\begin{align*}
(\M_{\B_0}^{-1} \otimes \cdots \otimes \M_{\B_0}^{-1}) \cdot \M_{\B,d} & = (\M_{\B_0}^{-1} \otimes \cdots \otimes \M_{\B_0}^{-1}) \cdot (\M_{\B_0}  \otimes \cdots \otimes \M_{\B_0}) \\
&= (\M_{\B_0}^{-1} \cdot \M_{\B_0}) \otimes \cdots \otimes (\M_{\B_0}^{-1} \cdot \M_{\B_0})\\
&= I_{\B_0} \otimes \cdots \otimes I_{\B_0}\\
&= I_{\B}.
\end{align*}
Recall that the (row) vector $w$ representing the range query $[j_1, j_1'] \times \cdots \times [j_d, j_d']$ satisfies, for each $(j_1'', \ldots, j_d'') \in [\B_0]^d$, $w_{j_1'', \ldots, j_d''} = 1$ if and only if $j_p'' \in [j_p, j_p']$ for all $1 \leq p \leq d$, and otherwise $w_{j_1'', \ldots, j_d''} = 0$. Therefore, we may write $w$ as the product of row vectors $w = w_1 \otimes \cdots \otimes w_d$, where for $1 \leq p \leq d$, $w_p$ is the (row) vector representing the range query $[j_p, j_p']$. In particular, for $1 \leq j'' \leq \B_0$, the $j''$th entry of $w_p$ is 1 if and only if $j''\in [j_p, j_p']$. It follows that
\begin{equation}
\label{eq:wminv_prod}
w\M_{\B,d}^{-1} = (w_1 \otimes \cdots \otimes w_d) (\M_{\B_0}^{-1} \otimes \cdots \otimes \M_{\B_0}^{-1}) = w_1\M_{\B_0}^{-1} \otimes \cdots \otimes w_d \M_{\B_0}^{-1}.
\end{equation}
By Lemma~\ref{lem:ab_inverse}, for $1 \leq p \leq d$, the vector $w_p \M_{\B_0}^{-1}$ has entries in $\{-1, 0, 1\}$, at most $c(j_p - 1) + c(j_p')$ of which are nonzero. Since $w\M_{\B,d}^{-1}$ is the tensor product of these vectors and the set $\{-1, 0, 1\}$ is closed under multiplication, it also has entries in $\{-1,0,1\}$, at most $\prod_{p=1}^d (c(j_p - 1) + c(j_p'))$ of which are nonzero.
\end{proof}

The following lemma allows us to bound the running time of the local randomizer (Algorithm~\ref{alg:matrix_rand}) and analyzer (Algorithm~\ref{alg:matrix_analyzer}):
\begin{lemma}
\label{lem:matrix_time_bounds}
Given $\B, d$ with $\B = \B_0^d$, the following can be computed in $O(\log^d \B_0)$ time:
\begin{enumerate}
\item[(1)] Given indices $(j_1, \ldots, j_d) \in [\B_0]^d$, the nonzero indices of $\M_{\B,d}$ for the column indexed by $(j_1, \ldots, j_d)$.\label{it:nonzero_m}
\item[(2)] Given a vector $w \in \BR^\B$ specifying a range query, the set of nonzero elements of $wM_{\B,d}^{-1}$ and their values (which are in $\{ -1, 1\}$).\label{it:nonzero_minv}
\end{enumerate} 
\end{lemma}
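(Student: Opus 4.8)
The plan is to exploit the tensor-product factorizations of $\M_{\B,d}$ and $\M_{\B,d}^{-1}$ already established, reducing each task to $d$ independent one-dimensional computations followed by forming a Cartesian product; since $d$ is constant, $(1+\log\B_0)^d$ and $(2\log\B_0)^d$ are both $O(\log^d \B_0)$.

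For part~(1), I would first note that under the identification $\BR^\B \simeq (\BR^{\B_0})^{\otimes d}$ the $((\ell_1,\ldots,\ell_d),(j_1,\ldots,j_d))$-entry of $\M_{\B,d}$ equals $\prod_{p=1}^d (\M_{\B_0})_{\ell_p, j_p}$, so the column indexed by $(j_1,\ldots,j_d)$ is the tensor product of the columns of $\M_{\B_0}$ indexed by $j_1,\ldots,j_d$. Thus $(\ell_1,\ldots,\ell_d)$ is a nonzero index of this column iff for every $p$ the node $v_{t_{\ell_p},s_{\ell_p}}\in\MC_{\B_0}$ is an ancestor of the leaf $v_{\log\B_0,j_p}$. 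For each $p\in[d]$ I would walk up $\MT_{\B_0}$ from $v_{\log\B_0,j_p}$ to the root (visiting $1+\log\B_0$ nodes in $O(\log\B_0)$ time), test membership of each visited node in $\MC_{\B_0}$ (an $O(1)$ parity check on its index, by Definition~\ref{def:rqt}), and for each node in $\MC_{\B_0}$ invert the explicit bijection of item~(\ref{it:mcb_defn}) of Definition~\ref{def:rqt} to recover $\ell_p$; this produces a list $L_p$ of at most $1+\log\B_0$ indices. The nonzero indices of the column are exactly $L_1\times\cdots\times L_d$, each with value $1$ (every factor $(\M_{\B_0})_{\ell_p,j_p}$ is $1$, using $\M_{\B_0}\in\{0,1\}^{\B_0\times\B_0}$), and there are at most $(1+\log\B_0)^d=O(\log^d\B_0)$ of them; enumerating the product and mapping each tuple to an index in $[\B]$ via the fixed mixed-radix bijection $[\B_0]^d\leftrightarrow[\B]$ costs $O(1)$ per tuple, giving total time $O(d\log\B_0+\log^d\B_0)=O(\log^d\B_0)$.

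For part~(2), by equation~(\ref{eq:wminv_prod}) the row vector $w\M_{\B,d}^{-1}$ factors as $w_1\M_{\B_0}^{-1}\otimes\cdots\otimes w_d\M_{\B_0}^{-1}$, where $w_p$ is the row vector of the one-dimensional range query $[j_p,j_p']$. By Lemma~\ref{lem:ab_inverse} each $w_p\M_{\B_0}^{-1}\in\{-1,0,1\}^{\B_0}$ has at most $c(j_p-1)+c(j_p')\le 2\log\B_0$ nonzero entries, and their positions and $\pm1$ values are computable in $O(\log\B_0)$ time by walking the leaf-to-root paths from $v_{\log\B_0,j_p-1}$ and $v_{\log\B_0,j_p'}$ exactly as in the proof of Lemma~\ref{lem:ab_inverse}. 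Taking the Cartesian product of these $d$ lists of nonzero positions yields all nonzero positions of $w\M_{\B,d}^{-1}$; the value at $(\ell_1,\ldots,\ell_d)$ is $\prod_{p=1}^d (w_p\M_{\B_0}^{-1})_{\ell_p}$, a product of numbers in $\{-1,1\}$ (each factor is nonzero by choice of the tuple) and hence itself in $\{-1,1\}$. There are at most $\prod_{p=1}^d(c(j_p-1)+c(j_p'))\le(2\log\B_0)^d=O(\log^d\B_0)$ of them, and the $d$ one-dimensional computations cost $O(d\log\B_0)$, so the total is again $O(\log^d\B_0)$.

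The only delicate point — and the place where a careless implementation could be wrong or lose a logarithmic factor — is the bookkeeping among the three representations in play: the abstract $d$-tensor $(z_{j_1,\ldots,j_d})$, the tensor-of-range-trees picture, and the flat matrix $\M_{\B,d}\in\BR^{\B\times\B}$. I would therefore fix once and for all the lexicographic (mixed-radix) bijection between $[\B]$ and $[\B_0]^d$ and verify that, under it, both the column-index map of part~(1) and the row-index map implicit in~(\ref{eq:wminv_prod}) are obtained by applying the corresponding one-dimensional index maps coordinatewise, each an $O(1)$ operation. Granting this, everything else is immediate from the one-dimensional statements (Lemmas~\ref{lem:ab_sensitivity} and~\ref{lem:ab_inverse}), the tensor factorizations already derived, and the constancy of $d$.
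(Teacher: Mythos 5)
Your proposal is correct and follows essentially the same route as the paper: exploit $\M_{\B,d} = (\M_{\B_0})^{\otimes d}$ and $w\M_{\B,d}^{-1} = w_1\M_{\B_0}^{-1}\otimes\cdots\otimes w_d\M_{\B_0}^{-1}$, solve each one-dimensional problem in $O(\log \B_0)$ time via the tree structure and Lemma~\ref{lem:ab_inverse}, and take the Cartesian product of the $d$ lists. If anything, your explicit leaf-to-root ancestor enumeration for part~(1) in the $d=1$ case is slightly more detailed than the paper's phrasing, which only notes that a single ancestry test takes $O(\log\B)$ time.
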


\begin{proof}[Proof of Lemma~\ref{lem:matrix_time_bounds}]
We first deal with the case $d = 1$, i.e., the matrix $\M_{\B,1} = \M_\B$. Given $j, j' \in [\B]$, the $(j',j)$-entry of $\M_\B$ is 1 if and only if the node $v_{t_{j'}, s_{j'}}$ of the tree $\MT_{\B}$ is an ancestor of the leaf $v_{\log \B, j}$. Since $t_j = \lceil \log_2 j \rceil, s_j = 2(j-2^{t_j - 1}) - 1$, whether or not $v_{t_{j'}, s_{j'}}$ is an ancestor of $v_{\log \B, j}$ can be determined in $O(\log \B)$ time, thus establishing (1) for the case $d=1$. Notice that the statement of Lemma~\ref{lem:ab_inverse} immediately gives (2) for the case $d = 1$.

To deal with the case of general $d$, notice that $\M_{\B,d} = (\M_{\B_0})^{\otimes d}$. Therefore, for a given $(j_1, \ldots, j_d)$ the set
\begin{equation}
\label{eq:cartesian_rows}
\{ (j_1', \ldots, j_d') : (\M_{\B,d})_{(j_1', \ldots, j_d'),(j_1, \ldots, j_d)} = 1\}
\end{equation}
of nonzero indices in the $(j_1, \ldots, j_d)$-th column of $\M_{\B,d}$ is equal to the Cartesian product
$$
\bigtimes_{1 \leq p \leq d} \{ j_p' : (\M_{\B_0})_{j_p', j_p} = 1\}.
$$
Since each of the sets $\{ j_p' : (\M_{\B_0})_{j_p', j_p} = 1\}$ can be computed in time $O(\log \B_0)$ (using the case $d = 1$ solved above), and is of size $O(\log \B_0)$, the product of these sets (\ref{eq:cartesian_rows}) can be computed in time $O(\log^d \B_0)$, thus completing the proof of item (1) in the lemma.

The proof of item (2) for general $d$ is similar. For $1 \leq p \leq d$, let $w_p$ be the vector in $\BR^{\B_0}$ corresponding to the 1-dimensional range query $[j_p, j_p']$. Then recall from (\ref{eq:wminv_prod}) we have that $w\M_{\B,d}^{-1} = w_1 \M_{\B_0}^{-1} \otimes \cdots \otimes w_d \M_{\B_0}^{-1}$. By item (2) for $d = 1$, the nonzero entries of each of $w_p \M_{\B_0}^{-1}$ (and their values) can be computed in time $O(\log \B_0)$; since each of these sets has size $O(\log \B_0)$, the set of nonzero entries of $w\M_{\B,d}^{-1}$, which is the Cartesian product of these sets, as well as the values of these entries, can be computed in time $O(\log^d \B_0)$.
\end{proof}

\subsection{Guarantees for Differentially Private Range Queries}
\label{sec:rq_collect}
In this section we state the guarantees of Theorems~\ref{thm:pmatrix_privacy} and~\ref{thm:pmatrix_accuracy} on the privacy and accuracy of the protocol $P^{\matrix} = (R^{\matrix}(n,\B, \M, R^{\FO}), S, A^{\matrix}(n, \B, \M, A^{\FO}))$ for range query computation when $\M = \M_{\B,d}$ and the pair $(R^{\FO}, A^{\FO})$ is chosen to be either  $(R^{\CM}, A^{\CM})$ (Count Min sketch-based approach; Algorithm~\ref{alg:hh_CM}) or $(R^{\Had}, A^{\Had})$ (Hadamard response-based approach; Algorithm~\ref{alg:hh_had}). 

For the Hadamard response-based frequency oracle, we obtain the following:
\begin{theorem}
\label{thm:had_rq_final}
Suppose $\B_0,n,d\in \BN$, $\B = \B_0^d$, and $0 \leq \ep \leq 1$, and $\beta, \delta \geq 0$ with $1/\beta \leq \B^{O(1)}$\footnote{The assumption that $1/\beta$ is polynomial in $\B$ is purely for simplicity and can be removed at the cost of slightly more complicated bounds.}. Consider the shuffled-model protocol $P^{\matrix} = (R^{\matrix}, S, A^{\matrix})$, where:
\begin{itemize}
\item $R^{\matrix} = R^{\matrix}(n,\B,\M_{\B,d},R^{\Had})$ is defined in Algorithm~\ref{alg:matrix_rand};
\item $A^{\matrix} = A^{\matrix}(n,\B,\M_{\B,d},A^{\Had})$ is defined in Algorithm~\ref{alg:matrix_analyzer};
\item and $R^{\Had} = R^{\Had}(n,\B,\log n, \rho, (\log 2\B)^d)$ and $A^{\Had} = A^{\Had}(n,\B,\log n,\rho, (\log 2\B)^d)$ are defined in Algorithm~\ref{alg:hh_had}, and 
\begin{equation}
\label{eq:define_rho}
\rho = \frac{36 (\log 2\B)^{2d} \ln(e(\log 2\B)^d / (\ep \delta))}{\ep^2}.
\end{equation}
\end{itemize}
Then:
\begin{itemize}
\item The protocol $P^{\matrix}$ is $(\ep, \delta)$-differentially private in the shuffled model (Definition~\ref{def:dp_shuffled}).
\item For any dataset $X = (x_1, \ldots, x_n) \in ([\B_0]^d)^n$, with probability $1-\beta$, the frequency estimate of $P^{\matrix}$ for each $d$-dimensional range query has additive error at most $O(\ep^{-1}d^{1/2}(2 \log \B)^{2d + 1/2} \cdot \sqrt{\log((\log \B) / ( \ep \delta))})$.
\item The local randomizers send a total of $O\left(n \cdot \rho\right)$  messages, each of length $O(\log n \log \B)$. The analyzer can either (a) produce a data structure of size $O(\B\log (n\rho))$ bits such that a single range query can be answered in time $O((2\log \B)^d)$, or (b) produce a data structure of size $O(n \rho \log n\log \B)$ such that a single range query can be answered in time $O(n \rho (2 \log \B)^d\log n \log \B)$. 
\end{itemize}
\end{theorem}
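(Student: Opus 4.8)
The plan is to obtain all three bullets by feeding the Hadamard-response frequency oracle $(R^{\Had},A^{\Had})$ of Section~\ref{subsec:Had} into the matrix-mechanism reduction of Section~\ref{sec:matrix_randomizer}, instantiated with the matrix $\M=\M_{\B,d}$ of Section~\ref{sec:multid_rq}. First I would fix $k:=(\log 2\B)^d$ (assuming $\B$ is large enough that $k<\B$, which is no loss since $d$ is a constant) and observe that the value of $\rho$ in~(\ref{eq:define_rho}) is exactly $\tfrac{36k^2}{\ep^2}\ln\tfrac{ek}{\ep\delta}$ for this $k$, so that $R^{\Had}(n,\B,\log n,\rho,k)$ and $A^{\Had}(n,\B,\log n,\rho,k)$ are precisely the instantiation to which Theorem~\ref{thm:Had_k_main} applies. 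Next I would record, via Lemma~\ref{lem:multid_sensitivity}, that $\Delta_{\M_{\B,d}}=(1+\log\B_0)^d=(\log 2\B_0)^d\le(\log 2\B)^d=k$; in particular every column of $\M_{\B,d}$ has at most $k$ nonzero entries, so the randomizer $R^{\matrix}$ always feeds $R^{\Had}$ a set of size at most $k$, which is exactly the regime in which $P^{\Had}$ realizes an $(\ha,\hb,k)$-frequency oracle.

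With that setup, privacy is immediate: Theorem~\ref{thm:Had_k_main} gives that $S\circ R^{\Had}$ is $(\ep,\delta)$-DP in the shuffled model, and since $\Delta_{\M_{\B,d}}\le k$, Theorem~\ref{thm:pmatrix_privacy} upgrades this to $(\ep,\delta)$-DP of $S\circ R^{\matrix}$. For accuracy I would plug the frequency-oracle error of Theorem~\ref{thm:Had_k_main}, namely $\ha=O\!\big(\log(\B/\beta)+k\ep^{-1}\sqrt{\log(\B/\beta)\log(k/\ep\delta)}\big)$, together with the bound $\|w\M_{\B,d}^{-1}\|_1\le(2\log\B_0)^d\le(2\log\B)^d$ from Lemma~\ref{lem:multid_inverse}, into the accuracy clause of Theorem~\ref{thm:pmatrix_accuracy}, which bounds every $d$-dimensional range-query error by $\ha\cdot\|w\M_{\B,d}^{-1}\|_1$ with probability $1-\beta$. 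Using $1/\beta\le\B^{O(1)}$ to replace $\log(\B/\beta)$ by $O(\log\B)$, the estimate $\log(k/\ep\delta)=d\log\log 2\B+\log(1/\ep\delta)\le d\log((\log 2\B)/(\ep\delta))$, and $(\log 2\B)^d(2\log\B)^d\le(2\log\B)^{2d}$ up to constants (the $O(\log\B)$ summand being absorbed since $d\ge1$), this collapses to $O\!\big(\ep^{-1}d^{1/2}(2\log\B)^{2d+1/2}\sqrt{\log((\log\B)/(\ep\delta))}\big)$, the claimed bound.

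For communication and running time I would invoke Lemma~\ref{thm:Hadamard_efficiency}: part~(1) gives that the local randomizers collectively emit $n(k+\rho)=O(n\rho)$ messages of $\tau\log 2\B=O(\log n\log\B)$ bits each. On the analyzer side, $A^{\matrix}$ runs $A^{\Had}$ to produce a frequency oracle $(\FO,\MA)$ and then, per query $w$, returns $\sum_j(w\M_{\B,d}^{-1})_j\,\MA(\FO,j)$; by Lemma~\ref{lem:matrix_time_bounds}(2) the at most $(2\log\B)^d$ nonzero entries of $w\M_{\B,d}^{-1}$ (and their signs) can be listed in $O(\log^d\B_0)$ time. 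Option~(a) uses the explicit-vector implementation of $A^{\Had}$ from Lemma~\ref{thm:Hadamard_efficiency}(2), a data structure of $O(\B\log(n\rho))$ bits with $O(1)$ per-index queries, so by the efficiency clause of Theorem~\ref{thm:pmatrix_accuracy} each range query costs $O((2\log\B)^d)$; option~(b) uses the lazy implementation of Lemma~\ref{thm:Hadamard_efficiency}(3), of size $O(n\rho\log n\log\B)$ bits with per-index query time $O(n\rho\log n\log\B)$, whence each range query costs $O(n\rho(2\log\B)^d\log n\log\B)$.

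Since the argument is a composition of already-proved black boxes, there is no genuinely hard step; the only place requiring care is the accuracy bookkeeping — tracking how the $k=(\log 2\B)^d$ in the frequency-oracle error, the $(2\log\B_0)^d$ bound on $\|w\M_{\B,d}^{-1}\|_1$, and the $\log(k/\ep\delta)$ term combine, and in particular verifying that the final exponent is $2d+1/2$ and that the dimension contributes only the stated $\sqrt d$ factor.
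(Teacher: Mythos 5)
Your proposal is correct and follows essentially the same route as the paper's proof: set $k=(\log 2\B)^d$, use Lemma~\ref{lem:multid_sensitivity} and Theorem~\ref{thm:Had_k_main} with Theorem~\ref{thm:pmatrix_privacy} for privacy, Lemma~\ref{lem:multid_inverse} with Theorem~\ref{thm:pmatrix_accuracy} for accuracy, and Lemma~\ref{thm:Hadamard_efficiency} together with Lemma~\ref{lem:matrix_time_bounds} for the communication and query-time claims. The bookkeeping you describe (absorbing $\log(\B/\beta)$ into $O(\log\B)$, extracting the $\sqrt d$ from $\log(k/\ep\delta)$, and combining the two $(\log\B)^d$-type factors into the exponent $2d+1/2$) matches the paper's calculation.
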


\begin{proof}[Proof of Theorem~\ref{thm:had_rq_final}]
Lemma~\ref{lem:multid_sensitivity} guarantees that $\Delta_{\M_{\B,d}} \leq (1 + \log \B)^d = (\log 2\B)^d$. Then by Theorem~\ref{thm:pmatrix_privacy}, to show $(\ep, \delta)$-differential privacy of $P^{\matrix}$ it suffices to show $(\ep, \delta)$-differential privacy of the shuffled-model protocol $P^{\Had} := (R^{\Had}, S, A^{\Had})$. By Theorem \ref{thm:Had_k_main} with $k = (\log 2\B)^d$, this holds with $\rho$ as in (\ref{eq:define_rho}).

Next we show accuracy of $P^{\matrix}$. Lemma~\ref{lem:multid_inverse} guarantees that for any $w \in \{0,1\}^\B$ representing a range query, $w\M_{\B,d}^{-1}$ has at most $(2 \log \B)^d$ nonzero entries, all of which are either $-1$ or $1$. Moreover, by Theorem \ref{thm:Had_k_main} with $k = (\log 2\B)^d$ and $\rho$ as in (\ref{eq:define_rho}), for any $1 \geq \hb \geq 0$, the shuffled model protocol $P^{\Had}$ provides a
$$
\left( O\left(\log(\B/\beta) + \frac{(\log 2\B)^d\sqrt{\log(\B/\beta) \log((\log2\B)^d/(\ep \delta))}}{\ep} \right)
,\hb,(\log  2\B)^d\right)
$$
frequency oracle. 
By Theorem~\ref{thm:pmatrix_accuracy} and the assumption that $1/\beta \leq \B^{O(1)}$, it follows that with probability $1-\beta$, the frequency estimates of $P^{\matrix}$ on each $d$-dimensional range query have additive error at most
 \begin{align*}
 & \leq O\left(\frac{(2 \log \B)^{2d + 1/2} \cdot \sqrt{d\left( \log((\log \B)/(\delta\ep))\right)}}{\ep}\right).
 \end{align*}
 This establishes the claim regarding accuracy of $P^{\matrix}$. 
 
 To establish the last item (regarding efficiency), notice that the claims regarding communication (the number of messages and message length) follow from Lemma~\ref{thm:Hadamard_efficiency} with $k = (\log  2\B)^d$. Part (a) of the claim regarding efficiency of the analyzer follows from item 2 of Lemma~\ref{thm:Hadamard_efficiency} and the last sentence in the statement of Theorem~\ref{thm:pmatrix_accuracy}. Part (b) of the claim regarding efficiency of the analyzer follows from item 3 of Lemma~\ref{thm:Hadamard_efficiency} and the last sentence in the statement of Theorem~\ref{thm:pmatrix_accuracy}.
\end{proof}
Similarly, for the Count Min sketch-based frequency oracle, we obtain
\begin{theorem}
\label{thm:cm_rq_final}
There is a sufficiently large constant $\zeta$ such that the following holds. Suppose $\B_0,n,d\in \BN$, $\B = \B_0^d \geq n$\footnote{The assumption $n \leq \B$ is made to simplify the bounds and can be removed.}, and $0 \leq \ep \leq 1$, and $\beta, \delta \geq 0$. Consider the shuffled-model protocol $P^{\matrix} = (R^{\matrix}, S, A^{\matrix})$, where:
\begin{itemize}
\item $R^{\matrix} = R^{\matrix}(n,\B,\M_{\B,d},R^{\CM})$ is defined in Algorithm~\ref{alg:matrix_rand};
\item $A^{\matrix} = A^{\matrix}(n,\B,\M_{\B,d},A^{\CM})$ is defined in Algorithm~\ref{alg:matrix_analyzer};
\item and $R^{\CM} = R^{\CM}(n,\B,\log 2\B/\beta, \gamma, 2kn)$ and $A^{\CM} = A^{\CM}(n,\B,\log 2\B/\beta, 2kn)$ are defined in Algorithm~\ref{alg:hh_CM}, where
$$
\gamma =\frac 1n \cdot  \zeta \cdot \frac{\log^2(\B/\beta) k^2 \log(\log(\B/\beta) k / \delta)}{\ep^2}
$$
and $k = (\log 2\B^{1/d})^d = (\log 2\B_0)^d$.
\end{itemize}
Then:
\begin{itemize}
\item The protocol $P^{\matrix}$ is $(\ep, \delta)$-differentially private in the shuffled model (Definition~\ref{def:dp_shuffled}).
\item For any dataset $X = (x_1, \ldots, x_n) \in ([\B_0]^d)^n$, with probability $1-\beta$, the frequency estimate of $P^{\matrix}$ for each $d$-dimensional range query has additive error at most
$$
O\left( \frac{(2 \log \B_0)^{2d}}{\ep} \cdot \sqrt{\log^3(\B/\beta) \log \left( (\log(\B/\beta)) (\log 2\B_0)^d/\delta\right)}\right).
$$ 
\item With probability at least $1 - \beta$, each local randomizer sends a total of at most
  $$
  \tilde m := O \left( \frac{\log^3(\B/\beta) (\log 2\B_0)^{3d}\log((\log (\B/\beta)) (\log 2\B_0)^d/\delta)}{\ep^2} \right)
  $$
messages, each of length $O(\log (\log \B / \beta) + \log ((\log 2\B_0)^d n))$. Moreover, in time $O( n\tilde m)$, the analyzer produces a data structure of size $O(n\log (\B / \beta)(\log 2\B_0)^d\log(n\tilde m))$ bits, such that a single range query can be answered in time $O((2\log  \B_0)^d \cdot \log \B / \beta)$. 
\end{itemize}
\end{theorem}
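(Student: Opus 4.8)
The plan is to follow the proof of Theorem~\ref{thm:had_rq_final} essentially verbatim, swapping the Hadamard‑response frequency oracle $(R^{\Had},A^{\Had})$ for the Count~Min oracle $(R^{\CM},A^{\CM})$ and Theorem~\ref{thm:Had_k_main} for Theorem~\ref{thm:CM_main}. All three conclusions---privacy, accuracy, efficiency---come from feeding the two structural facts about $\M_{\B,d}$ (Lemmas~\ref{lem:multid_sensitivity} and~\ref{lem:multid_inverse}) into the generic matrix‑mechanism wrapper $P^{\matrix}$ (Theorems~\ref{thm:pmatrix_privacy} and~\ref{thm:pmatrix_accuracy}), so the argument is mostly a matter of tracking logarithmic and $d$-dependent factors. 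Throughout set $k := (\log 2\B_0)^d = (1+\log\B_0)^d$, which is exactly the sparsity parameter with which $R^{\CM}$ and $A^{\CM}$ are invoked, and use $\B\ge n$ freely. For privacy: Lemma~\ref{lem:multid_sensitivity} gives $\Delta_{\M_{\B,d}}\le(1+\log\B_0)^d=k$, so by Theorem~\ref{thm:pmatrix_privacy} it suffices that $P^{\CM}=(R^{\CM},S,A^{\CM})$ on universe size $\B$ with sparsity $k$ is $(\ep,\delta)$-DP, which is the privacy clause of Theorem~\ref{thm:CM_main} (equivalently Lemma~\ref{le:h_CM_priv}) with $\tau=\log(2\B/\beta)$, $s=2kn$, and the stated $\gamma$. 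The one point to verify is that our $\gamma$, which omits the $\log n$ branch appearing in Theorem~\ref{thm:CM_main}, still satisfies the precondition $\gamma n\ge 90k^2\tau^2\ln(2\tau k/\delta)/\ep^2$ of Lemma~\ref{le:h_CM_priv}: since $\B\ge n$ forces $\log n\le\log(\B/\beta)$, the second branch dominates (after enlarging the absolute constant $\zeta$), and $\log^2(\B/\beta)\asymp\tau^2$, $\log((\log\B/\beta)k/\delta)\asymp\ln(2\tau k/\delta)$.

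For accuracy, Lemma~\ref{lem:multid_inverse} shows that for every row vector $w\in\{0,1\}^\B$ encoding a $d$-dimensional range query, $w\M_{\B,d}^{-1}\in\{-1,0,1\}^\B$ has at most $(2\log\B_0)^d$ nonzero coordinates, hence $\|w\M_{\B,d}^{-1}\|_1\le a:=(2\log\B_0)^d$. By Theorem~\ref{thm:CM_main} with sparsity $k$, the protocol $P^{\CM}$ computes an $(\ha,\beta,k)$-frequency oracle (in the sense of Definition~\ref{def:fo}) with
$$
\ha = O\bigl(\sqrt{\log(\B n/\beta)\cdot(\log(\B n/\beta) + k^2\log^2(\B/\beta)\log((\log \B/\beta)k/\delta)/\ep^2)}\,\bigr);
$$
using $n\le\B$ to fold $\log(\B n/\beta)$ into $\log(\B/\beta)$ and $k\ge1,\ \ep\le1$ to let the $1/\ep^2$ term dominate, this simplifies to $\ha=O\bigl(k\log^{3/2}(\B/\beta)\sqrt{\log((\log\B/\beta)k/\delta)}/\ep\bigr)$. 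Theorem~\ref{thm:pmatrix_accuracy} then yields, with probability $1-\beta$, that every range‑query estimate $f_w$ obeys $|f_w-\langle w,\hist(X)\rangle|\le\ha\cdot a$; substituting $a=(2\log\B_0)^d$ and $k=(1+\log\B_0)^d\le(2\log\B_0)^d$ gives the claimed bound $O\bigl((2\log\B_0)^{2d}\ep^{-1}\sqrt{\log^3(\B/\beta)\log((\log(\B/\beta))(\log2\B_0)^d/\delta)}\bigr)$.

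For efficiency, the communication bound comes from Lemma~\ref{le:h_CM_eff}(\ref{it:communication}): each user sends $|\MS_i|+O(\gamma s\tau)$ messages, and since $|\MS_i|\le k$ is lower order and $\gamma s\tau=O(k^3\log^3(\B/\beta)\log((\log\B/\beta)k/\delta)/\ep^2)$ after inserting $\gamma$, $s=2kn$, $\tau=\log(2\B/\beta)$, this equals the quantity $\tilde m$ defined in the theorem statement with $k=(\log2\B_0)^d$; each message carries $\lceil\log_2\tau\rceil+\lceil\log_2 s\rceil=O(\log\log(\B/\beta)+\log((\log2\B_0)^d n))$ bits. The analyzer's runtime $O(\tau s+m)=O(n\tilde m)$ and output size $O(\tau s\log m)=O(n(\log2\B_0)^d\log(\B/\beta)\log(n\tilde m))$ bits follow from Lemma~\ref{le:h_CM_eff}(\ref{it:efficiency_analyzer}). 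Finally, the query‑time clause of Theorem~\ref{thm:pmatrix_accuracy} bounds a single range query by $O(T+a\cdot T')$, where $T=O(\log^d\B_0)$ is the time to list the nonzero entries of $w\M_{\B,d}^{-1}$ (Lemma~\ref{lem:matrix_time_bounds}) and $T'=O(\tau)=O(\log(\B/\beta))$ is the Count~Min query time (Lemma~\ref{le:h_CM_eff}(\ref{it:efficiency_query})), so the total is $O((2\log\B_0)^d\log(\B/\beta))$.

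Since every ingredient is already in place, there is no genuine conceptual obstacle; the only step needing care is the parameter bookkeeping---checking that the single quantity $k=(\log2\B_0)^d$ simultaneously upper‑bounds $\Delta_{\M_{\B,d}}$ and, up to the harmless factor $2^d$, $\|w\M_{\B,d}^{-1}\|_1$, and that the chosen $\gamma$ and $\zeta$ meet the smoothness/privacy precondition of Lemma~\ref{le:h_CM_priv} while still delivering the stated communication $\tilde m$.
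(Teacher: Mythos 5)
Your proposal is correct and follows essentially the same route as the paper's proof: Lemma~\ref{lem:multid_sensitivity} plus Theorem~\ref{thm:pmatrix_privacy} and Theorem~\ref{thm:CM_main} for privacy, Lemma~\ref{lem:multid_inverse} plus Theorems~\ref{thm:CM_main} and~\ref{thm:pmatrix_accuracy} for accuracy, and Lemmas~\ref{le:h_CM_eff} and~\ref{lem:matrix_time_bounds} for efficiency. Your explicit check that dropping the $\log n$ branch of $\gamma$ is harmless under $\B \geq n$ is a detail the paper leaves implicit, but it does not change the argument.
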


\begin{proof}[Proof of Theorem~\ref{thm:cm_rq_final}]
Lemma~\ref{lem:multid_sensitivity} guarantees that $\Delta_{\M_{\B,d}} \leq (1 + \log \B_0)^d = (\log 2\B^{1/d})^d$. (Recall our notation that $\B = (\B_0)^d$.) Then by Theorem~\ref{thm:pmatrix_privacy}, to show $(\ep, \delta)$-differential privacy of $P^{\matrix}$ it suffices to show $(\ep, \delta)$-differential privacy of the shuffled-model protocol $P^{\CM} := (R^{\CM}, S, A^{\CM})$. For the parameters above this follows from Theorem~\ref{thm:CM_main}.

Next we show accuracy of $P^{\matrix}$. Lemma~\ref{lem:multid_inverse} guarantees that for any $w \in \{0,1\}^\B$ representing a range query, $w\M_{\B,d}^{-1}$ has at most $(2 \log \B_0)^d$ nonzero entries, all of which are either $-1$ or $1$. Moreover, by Theorem~\ref{thm:CM_main}, the shuffled model protocol $P^{\CM}$ provides an
$
(\ha,\hb,(\log  2\B_0)^d)
$-frequency oracle with
$$
\ha \leq O\left( \frac{(\log 2\B_0)^d}{\ep} \cdot \sqrt{\log^3(\B/\beta) \log \left( (\log(\B/\beta)) (\log 2\B_0)^d/\delta\right)}\right).
$$
By Theorem~\ref{thm:pmatrix_accuracy} with $a = (2 \log \B_0)^d$, it follows that with probability $1-\beta$, the frequency estimates of $P^{\matrix}$ on each $d$-dimensional range query have additive error at most 
$$
O\left( \frac{(2 \log \B_0)^{2d}}{\ep} \cdot \sqrt{\log^3(\B/\beta) \log \left( (\log(\B/\beta)) (\log 2\B_0)^d/\delta\right)}\right).
$$
This establishes the second item. The final item follows from Lemma~\ref{le:h_CM_eff}, part (2) of Lemma~\ref{lem:matrix_time_bounds}, and the final sentence in the statement of Theorem~\ref{thm:pmatrix_accuracy}.
\end{proof}

\section{Conclusion and Open Problems}\label{sec:conclusion}
The shuffled model is a promising new privacy framework motivated by the significant interest on anonymous communication.
In this paper, we studied the fundamental task of frequency estimation in this setup. In the single-message shuffled model, we established nearly tight bounds on the error for frequency estimation and on the number of users required to solve the selection problem. We also obtained communication-efficient multi-message private-coin protocols with exponentially smaller error for frequency estimation, heavy hitters, range counting, and M-estimation of the median and quantiles (and more generally sparse non-adaptive SQ algorithms). We also gave public-coin protocols with, in addition, small query times. Our work raises several interesting open questions and points to fertile future research directions.

Our $\tilde{\Omega}(B)$ lower bound for selection (Theorem~\ref{th:selection_single_message}) holds for single-message protocols even with unbounded communication.  We conjecture that a lower bound on the error of $B^{\Omega(1)}$ should hold even for multi-message protocols (with unbounded communication) in the shuffled model, and we leave this as a very interesting open question. Such a lower bound would imply a first separation between the central and (unbounded communication) multi-message shuffled model.

Another interesting question is to obtain a private-coin protocol for frequency estimation with polylogarithmic error, communication per user, and query time; reducing the query time of our current protocol below $\tilde{O}(n)$ seems challenging.  In general, it would also be interesting to reduce the polylogarithmic factors in our guarantees for range counting as that would make them practically useful.

Another interesting direction for future work is to determine whether our efficient protocols for frequency estimation with much less error than what is possible in the local model could lead to more accurate and efficient shuffled-model protocols for fundamental primitives such as clustering \cite{stemmer_locally_2019} and distribution testing \cite{acharya_test_2018}, for which current locally differentially private protocols use frequency estimation as a black box.

Finally, a promising future direction is to extend our protocols for sparse non-adaptive SQ algorithms to the case of sparse aggregation. Note that the queries made by sparse non-adaptive SQ algorithms correspond to the special case of sparse aggregation where all non-zero queries are equal to $1$. Extending our protocols to the case where the non-zero coordinates can be arbitrary numbers would, e.g., capture sparse stochastic gradient descent (SGD) updates, an important primitive in machine learning. More generally, it would be interesting to study the complexity of various other statistical and learning tasks~\cite{smith_statest,wasserman,bassily_erm, chaudhuri_emprisk,chaudhuri_logistic,chaudhuri_pca} in the shuffled privacy model.


\medskip

\section*{Acknowledgments}
We would like to thank James Bell, Albert Cheu, \'Ulfar Erlingsson, Vitaly Feldman, Adri\`a Gasc\'on, Peter Kairouz, Pasin Manurangsi, Stefano Mazzocchi, Ilya Mironov,  Ananth Raghunathan, Kunal Talwar,  Abhradeep Guha Thakurta, Salil Vadhan, and Vinod Vaikuntanathan as well as the anonymous reviewers of a previous version of this paper, for very helpful comments and suggestions. In particular, we would like to thank an anonymous reviewer who pointed out the connection to nonadaptive SQ algorithms in Corollary~\ref{cor:sq}.


\appendix

\section{Proof of Theorem~\ref{thm:freq_ub}}
\label{apx:freq_ub}
In this section we prove Theorem~\ref{thm:freq_ub}. The proof is a simple consequence of the privacy amplification result of \cite{balle_privacy_2019} and known accuracy bounds for locally-\DP\ protocols. We first recall the privacy amplification result:
\begin{theorem}[Privacy amplification of single-message shuffling, \cite{balle_privacy_2019}, Corollary 5.3.1]
  \label{thm:pamp}
Suppose $R : \MX \ra \MZ$ is an $(\ep_L, 0)$-locally \DP\ randomizer with $\ep_L \leq \frac{\ln(n/\ln(1/\delta))}{2}$ for some $\delta > 0$. Then the shuffled algorithm $(x_1, \ldots, x_n) \mapsto S(R(x_1), \ldots, R(x_n))$ is $(\ep, \delta)$-\DP\ with $\ep = O \left( \ep_L \cdot e^{\ep_L} \cdot \sqrt{\ln(1/\delta)/n} \right)$. %
\end{theorem}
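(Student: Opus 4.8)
This is the ``privacy blanket'' amplification theorem of Balle et al., and the plan is to reprove it along those lines. The starting point is a \emph{blanket decomposition} of the local randomizer. Writing $\pp_{x,z}=\p[R(x)=z]$ and $\gamma:=\sum_{z}\inf_{x\in\MX}\pp_{x,z}$, define the \emph{blanket} distribution $\omega(z):=\gamma^{-1}\inf_{x}\pp_{x,z}$ and the residuals $\mu_x(z):=(\pp_{x,z}-\gamma\,\omega(z))/(1-\gamma)$; one checks $\mu_x$ is a probability distribution, and $(\ep_L,0)$-local DP gives $\pp_{x,z}\le e^{\ep_L}\inf_{x'}\pp_{x',z}=e^{\ep_L}\gamma\,\omega(z)$, hence $\gamma\ge e^{-\ep_L}$ and $\mu_x(z)/\omega(z)\le \tfrac{\gamma(e^{\ep_L}-1)}{1-\gamma}$. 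By moving mass out of the blanket when $\gamma>1/2$, one can always arrange $\gamma\in[e^{-\ep_L},1/2]$, so that $1-\gamma\ge 1/2$, $\mu_x(z)/\omega(z)\le\beta:=O(\gamma e^{\ep_L})$, and the relevant binomial below has mean $\Theta(\gamma n)$, whence $\beta/(\gamma n)=O(e^{\ep_L}/n)$. Consequently each user's message may be \emph{simulated} by: flip a $\Ber(\gamma)$ coin; on heads draw from $\omega$ (an input-independent ``blanket'' message), on tails draw from $\mu_{x_i}$.

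Now fix neighboring datasets differing only at user $n$ (by symmetry of the shuffler this is the only case to handle), say $x_n$ versus $x_n'$. Condition on all $n$ coin flips and on the tails-messages of users $1,\dots,n-1$: these form a fixed multiset $F$, identical across the two datasets; the heads-messages of users $1,\dots,n-1$ are $B$ i.i.d.\ draws from $\omega$ regardless of the inputs; and if user $n$'s coin is heads there is no difference at all. So it suffices to bound the privacy loss of the multiset of $B+1$ messages consisting of $B$ i.i.d.\ $\omega$-draws plus one draw from $\mu_{x_n}$ (versus $\mu_{x_n'}$), with $B\sim\Bin(n-1,\gamma)$. A short combinatorial computation — the observed multiset is a sufficient statistic, and the probability of seeing a given histogram $g$ factors as an input-independent multinomial prefactor times $\sum_z g(z)\,\mu_{x_n}(z)/\omega(z)$ — shows that the privacy-loss random variable equals
\[
\ln\frac{\sum_{i=1}^{B+1}A_i}{\sum_{i=1}^{B+1}C_i},\qquad A_i:=\frac{\mu_{x_n}(Z_i)}{\omega(Z_i)},\quad C_i:=\frac{\mu_{x_n'}(Z_i)}{\omega(Z_i)},
\]
where, under the first hypothesis, $Z_1,\dots,Z_B$ are i.i.d.\ $\omega$ and $Z_{B+1}\sim\mu_{x_n}$. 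Here $\E_\omega[A_i]=\E_\omega[C_i]=1$, $0\le A_i,C_i\le\beta$, and $\E_\omega[A_i^2]\le\beta$.

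The last step is concentration. Chernoff gives $B\ge\gamma n/2$ except with probability $e^{-\Omega(\gamma n)}$, which is at most a constant fraction of $\delta$ under the hypothesis $e^{\ep_L}\le\sqrt{n/\ln(1/\delta)}$ (which forces $\gamma n\gtrsim\sqrt{n\ln(1/\delta)}\gtrsim\ln(1/\delta)$). Conditioned on $B\ge\gamma n/2$, Bernstein applied to the bounded, variance-$\le\beta$ summands $A_i-1$ (and $C_i-1$) gives $\big|\sum_{i=1}^{B}(A_i-1)\big|,\big|\sum_{i=1}^{B}(C_i-1)\big|=O\!\big(\sqrt{B\beta\ln(1/\delta)}+\beta\ln(1/\delta)\big)$ except with probability $O(\delta)$, while the single extra term $A_{B+1}\le\beta$ is negligible next to $\sum_{i=1}^{B}A_i=\Theta(B)$. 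Using $\beta/B=O(e^{\ep_L}/n)$, this yields $\sum_i A_i/\sum_i C_i=1+O\!\big(e^{\ep_L/2}\sqrt{\ln(1/\delta)/n}\big)$, hence a privacy loss of $O(e^{\ep_L/2}\sqrt{\ln(1/\delta)/n})$, which is within the claimed $O(\ep_L e^{\ep_L}\sqrt{\ln(1/\delta)/n})$. Running this in both directions of the DP inequality, and handling the heads-event trivially, gives $(\ep,\delta)$-DP in the sense of Definition~\ref{def:dp_shuffled}.

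I expect the main obstacle — and the key insight — to be the sufficient-statistic collapse in the middle paragraph: a priori the likelihood ratio is an unwieldy function of a histogram over the (possibly enormous) message space $\MZ$, and recognizing that it depends only on the two scalars $\sum_i\mu_{x_n}(Z_i)/\omega(Z_i)$ and $\sum_i\mu_{x_n'}(Z_i)/\omega(Z_i)$ is what reduces the whole question to a one-dimensional concentration estimate. The conditioning/coupling that discards the first $n-1$ users' tails-messages also needs care (one must show the fixed multiset $F$ genuinely cancels in the likelihood ratio, not merely in the marginals). The remaining Bernstein bookkeeping — tracking the interplay of $\gamma$, $\beta$, and $B$ so that the additive term is dominated — is routine once the constraint $e^{\ep_L}\le\sqrt{n/\ln(1/\delta)}$ is in hand.
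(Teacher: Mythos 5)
First, a framing remark: the paper does not prove Theorem~\ref{thm:pamp} at all; it is quoted from Balle et al.~\cite{balle_privacy_2019} and used as a black box in Appendix~\ref{apx:freq_ub}, so there is no internal proof to compare you against. Your proposal is essentially a reconstruction of the privacy-blanket argument behind the cited result, and its structural steps are sound: the decomposition $\pp_{x,\cdot}=\gamma\,\omega+(1-\gamma)\mu_x$ with $\gamma\ge e^{-\ep_L}$ (pure LDP also forces $\omega(z)=0\Rightarrow \pp_{x,z}=0$ for all $x$, so the ratios $\mu_x(z)/\omega(z)$ are well defined and bounded), the reduction to $B\sim\Bin(n-1,\gamma)$ blanket draws plus user $n$'s message (the fixed tails-multiset $F$ of the first $n-1$ users can indeed be discarded: conditioned on $B$, the identity of the blanket users is uniform and independent of their messages, so an adversary who knows $x_1,\ldots,x_{n-1}$ can regenerate $F$ and the shuffle, making the full view a post-processing of the reduced one), and the sufficient-statistic collapse of the likelihood ratio to $\ln\bigl(\sum_i A_i/\sum_i C_i\bigr)$ all check out.

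The genuine gap is in your final sentence of the concentration step. Concentrating $\sum_i A_i$ and $\sum_i C_i$ \emph{separately} around $B$ yields a privacy loss of order $e^{\ep_L/2}\sqrt{\ln(1/\delta)/n}$, and this is \emph{not} ``within the claimed $O(\ep_L e^{\ep_L}\sqrt{\ln(1/\delta)/n})$'' when $\ep_L\ll1$: the claimed bound vanishes linearly in $\ep_L$, whereas your bound stays at $\Theta(\sqrt{\ln(1/\delta)/n})$ no matter how small $\ep_L$ is (at $\ep_L=0$ the true loss is $0$ because $A_i\equiv C_i$, a cancellation that two one-sided concentration bounds cannot see). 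To prove the theorem as stated you must use that $\mu_{x_n}$ and $\mu_{x_n'}$ are themselves close: write the loss as $\ln\bigl(1+\sum_i D_i/\sum_i C_i\bigr)$ with $D_i:=A_i-C_i$, note $\E_{\omega}[D_i]=0$ and $|D_i|\le O\bigl((e^{\ep_L}-1)e^{\ep_L}\gamma\bigr)$ by local DP, and apply Bernstein to $\sum_i D_i$; combining this with your argument (which is fine, indeed stronger than needed, once $\ep_L\ge1$) recovers the missing $\ep_L$ factor. Your bound does suffice for the only place the paper invokes the theorem, where $\ep_L\approx\ln n\gg1$, but not for the statement as written. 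A second, minor, point of bookkeeping: $e^{-\Omega(\gamma n)}\le O(\delta)$ requires $n\ge C\ln(1/\delta)$ for a sufficiently large constant $C$, which the hypothesis gives only up to constants; in the complementary regime $n=O(\ln(1/\delta))$ one should instead note that $\sqrt{\ln(1/\delta)/n}=\Omega(1)$ and that the shuffled protocol is trivially $(\ep_L,0)$-\DP, which already implies the claim there.
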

\begin{proof}[Proof of Theorem~\ref{thm:freq_ub}]
  We first treat the case of $n \leq \tilde O(\B^2)$ (i.e., the cases (\ref{eq:const_ub}), (\ref{eq:interm_ub})), where the locally differentially private protocol we use is $\B$-RAPPOR \cite{erlingsson2014rappor,duchi_minimax}. In particular, we consider the protocol $P_{\RAP} = (R_{\RAP}, A_{\RAP})$. For a given privacy parameter $\ep_L \geq 1$, the local randomizer $R_{\RAP} : [\B] \ra \{0,1\}^\B$ is defined as follows: for $v \in [\B]$, $R_{\RAP}(v) = (Z_1, \ldots, Z_\B)$, where each $Z_k$ is an independent bit that equals $(e_v)_k$ with probability $\frac{\exp(\ep_L/2)}{1 + \exp(\ep_L/2)}$ and equals $1 - (e_v)_k$ with probability $\frac{1}{1 + \exp(\ep_L/2)}$. For later use in the proof, we will also define $R_{\RAP}(\varnothing) = (Z_1, \ldots, Z_\B)$, where each $Z_k \sim \Ber \left( \frac{1}{1 + \exp(\ep_L/2)} \right)$.

  The analyzer $A_{\RAPPOR} : (\BB^\B)^n \ra \BB^\B$ is defined as follows: given as input $n$ bit-vectors $(z^1, \ldots, z^n)$, the analyzer outputs the vector $(\hat x_1, \ldots, \hat x_\B) \in [0,1]^n$ of frequency estimates defined by
  \begin{equation}
    \label{eq:arappor}
\hat x_j = \frac 1n \cdot \left( \left(\sum_{i=1}^n z^i_j\right) - \frac{n}{1+\exp(\ep_L/2)}\right) \cdot \left( \frac{\exp(\ep_L/2) + 1}{\exp(\ep_L/2) - 1}\right).
\end{equation}
First we prove the accuracy of the local-model protocol $P_{\RAP} = (R_{\RAP},A_{\RAP})$; it is clear, by symmetry of $A_{\RAP}$ that the same accuracy bounds hold when we insert the shuffler $S$. The expression (\ref{eq:arappor}) satisfies the following property: for any dataset $X = (x_1, \ldots, x_n)$, if $Z^i := R_{\RAP}(x_i)$, and it happens that  $\sum_{i=1}^n Z^i_j$ equals its expected value (over the randomness in the local randomizers $R_{\RAP}$), then $\hat x_j$ is equal to the true frequency $\frac 1n \sum_{i=1}^n (e_{x_i})_j$. It follows that if $\left| \frac 1n\sum_{i=1}^n Z^i_j - \E_{R_{\RAP}} \left[ \frac 1n\sum_{i=1}^n Z^i_j \right] \right| \leq \ha$, then $\left| \hat x_j - \frac 1n \sum_{i=1}^n (e_{x_i})_j \right| \leq \ha \cdot \left( \frac{\exp(\ep_L/2) + 1}{\exp(\ep_L/2) - 1} \right) \leq O(\ha)$, where the final inequality follows from $\ep_L \geq 1$. 

Let $p = \frac{1}{\exp(\ep_L/2) + 1}$. The random variable $\left| \sum_{i=1}^n Z^i_j - \E_{R_{\RAP}} \left[  \sum_{i=1}^n Z^i_j \right] \right|$ is stochastically dominated by the random variable $\left| Y - np \right|$, where $Y \sim \Bin\left( n, p \right)$. It follows that if $Y_1, \ldots, Y_\B \sim \Bin\left( n, p\right)$ i.i.d., then
\begin{align}
  & \E \left[\max_{j \in [\B]}\left| \sum_{i=1}^n Z^i_j - \E_{R_{\RAP}} \left[ \sum_{i=1}^n Z^i_j \right] \right|\right] \nonumber\\
  & \leq \E \left[\max_{j \in [\B]} \left| Y_j - np \right|\right] \nonumber\\
  \label{eq:maxjy}
  & \leq \E \left[\left( \max_{j \in [\B]} \{ Y_j\} - np \right)\right] + E[|Y_1 - np|].
\end{align}
By Jensen's inequality $\E[|Y_1 - np|] \leq \sqrt{np}$. By Exercise 2.19 of \cite{boucheron2012concentration}, we have that
$$
\E \left[ \max_{j \in [\B]} \{ Y_j \} \right] \leq np \exp\left( 1 + W \left( \frac{\ln(\B) - np}{enp} \right) \right),
$$
where $W(\cdot)$ is the Lambert $W$ function.\footnote{The lambert $W$ function $W : [-1/e, \infty) \ra \BR$ is defined implicitly by $W(x) \exp(W(x)) = x$ and $W(x) \geq -1$ for all $x \geq -1/e$. It is an increasing function.} We consider two cases regarding the value of $pn$:

{\bf Case 1.} $epn < \ln \B$. In this case we use the fact that $W(x) \leq \ln(e \cdot x)$ for all $x \geq 1$. Then since $W$ is increasing,
\begin{align*}
  np \exp\left( 1 + W \left( \frac{\ln(\B) - np}{enp} \right) \right) & \leq np \exp \left( 1 + W \left( \ln(\B) / (enp) \right) \right) \\
                                                                      & \leq np \exp(1 + \ln(\ln(\B)/(np)))\\
                                                                      & \leq np \exp(\ln(e\ln(\B)/(np)))\\
  & = np \cdot \frac{e \ln (\B)}{np} = e \ln \B.
\end{align*}
Thus in this case (\ref{eq:maxjy}) is bounded above by $O(\ln \B)$.

{\bf Case 2.} $epn \geq \ln \B$. In this case we use the fact that $W\left( \frac{-1}{e} + x \right) \leq -1 + 3\sqrt{x}$ for all $x \geq 0$. In particular, it follows from this fact that
\begin{align*}
  np \exp\left( 1 + W \left( \frac{\ln(\B)}{enp} - \frac{1}{e} \right) \right) & \leq np \exp\left( 1 -1 + 3\sqrt{\frac{\ln(\B)}{enp}} \right) \\
                                                                               & \leq np \cdot O \left( \sqrt{\frac{\ln \B}{enp}}\right)\\
  & = O \left( \sqrt{np \ln(\B) } \right),
\end{align*}
where the second inequality uses the fact that $\ln(\B) / np = O(1)$. 
Thus in this case (\ref{eq:maxjy}) is bounded above by $O(\sqrt{np \ln \B})$. 

Next we analyze privacy of $R_{\RAP}$ in the $n$-user shuffled model. It is clear that $R_{\RAP}$ is $(\ep_L, 0)$-\DP. In fact, $R_{\RAP}$ satisfies the following stronger property: for any $v \in [\B]$, and any vector $z \in \BB^\B$, we have that $e^{-\ep_L/2} \p[R_{\RAP}(\varnothing) = z] \leq \p[R_{\RAP}(v) = z] \leq e^{\ep_L/2} \p[R_{\RAP}(\varnothing) = z]$. Now write $M(x_1, \ldots, x_n) = S(R_{\RAP}(x_1), \ldots, R_{\RAP}(x_n))$. It is not difficult to see by inspecting the proof of Theorem~\ref{thm:pamp} that the following holds, as long as $\ep, \delta$ are chosen so that $\frac{\ep_L}{2} \leq \frac{\ln(n/\ln(1/\delta))}{2}$ and $\ep = O \left( \ep_L e^{\ep_L/2} \sqrt{\ln(1/\delta)/n} \right)$: For any subset $\MS \subset (\BB^\B)^n$, and any dataset $(x_1, \ldots, x_n) \in [\B]^n$,
\begin{align*}
& \p\left[ M(x_1,\ldots, x_{n-1}, x_n) \in \MS \right] \leq e^{\ep} \p\left[ M(x_1, \ldots, x_{n-1}, \nn) \in \MS \right ] + \delta\\
& \p\left[ M(x_1,\ldots, x_{n-1}, \nn) \in \MS \right] \leq e^{\ep} \p\left[ M(x_1, \ldots, x_{n-1}, x_n) \in \MS \right ] + \delta.
\end{align*}
It follows that $(x_1, \ldots, x_n) \mapsto M(x_1, \ldots,x_n)$ is $(2\ep, \delta(1+e^{\ep}))$-\DP\ (i.e., in the $n$-user shuffled model). Thus, by choosing $\ep_L = \ln(n/\ln(1/\delta)) - 2\ln\ln n + 2 \ln(\ep) + O(1)$, we obtain the accuracy bounds in (\ref{eq:const_ub}) and (\ref{eq:interm_ub}); in particular, the accuracy bound in Case 1 corresponds to $n \leq \frac{\ep^2 \log^2 \B}{\log^3 \log \B}$ and the accuracy bound in Case 2 corresponds to $n \geq \frac{\ep^2 \log^2 \B}{\log^3 \log \B}$. 

Finally we treat the case $n >  \Omega(\B^2)$ (i.e., the case (\ref{eq:small_ub})). In this case we will use the local randomizer of $\B$-randomized response \cite{warner1965randomized}. In particular, the local randomizer $R_{\RR} : [\B] \ra [\B]$ is defined as follows: for $u,v \in [\B]$,
$$
\p[R_{\RR}(v) = u] = \begin{cases}
  \frac{\exp(\ep_L)}{\exp(\ep_L) + \B - 1} &: u = v \\
  \frac{1}{\exp(\ep_L) + \B - 1} &: u \neq v.
\end{cases}
$$
The analyzer $A_{\RR} : [\B]^n \ra [\B]^n$, when given outputs of local randomizers $(z_1, \ldots, z_n) \in [\B]^n$, produces frequency estimates $A(z_1, \ldots, z_n) = (\hat x_1, \ldots, \hat x_\B)$, given by
\begin{equation}
  \label{eq:arr}
\hat x_j = \frac{1}{n} \left( \left( \sum_{i=1}^n (e_{z_i})_j \right) - \frac{n}{\exp(\ep_L) + \B - 1}\right) \cdot \left( \frac{\exp(\ep_L) + \B - 1}{\exp(\ep_L) -1} \right).
\end{equation}
First we analyze the accuracy of $A_{\RR}$. %
The analysis is quite similar to that of $A_{\RAP}$. In particular, first note that (\ref{eq:arr}) satisfies the following property: for any dataset $X = (x_1, \ldots, x_n)$, if $Z^i := R_{\RR}(x_i)$, and it happens that  $\sum_{i=1}^n (e_{Z^i})_j$ equals its expected value (over the randomness in the local randomizers $R_{\RR}$), then $\hat x_j$ is equal to the true frequency $\frac 1n \sum_{i=1}^n (e_{x_i})_j$. It follows that if $\left| \frac 1n\sum_{i=1}^n (e_{Z^i})_j - \E_{R_{\RR}} \left[ \frac 1n\sum_{i=1}^n (e_{Z^i})_j \right] \right| \leq \ha$, then $\left| \hat x_j - \frac 1n \sum_{i=1}^n (e_{x_i})_j \right| \leq \ha \cdot \left( \frac{\exp(\ep_L) + \B - 1}{\exp(\ep_L) - 1} \right) \leq O \left(\ha \cdot \left(\frac{\exp(\ep_L) + \B}{\exp(\ep_L)}\right)\right)$, where the last inequality follows from $\ep_L \geq 1$ (as we will see below).

Let $q = \frac{\B-1}{\exp(\ep_L) + \B - 1}$. The random variable $\left| \sum_{i=1}^n (e_{Z^i})_j - \E_{R_{\RR}} \left[  \sum_{i=1}^n (e_{Z^i})_j \right] \right|$ is stochastically dominated by the random variable $\left| Y - n(1-q) \right|$, where $Y \sim \Bin\left( n, 1-q \right)$. It follows by binomial concentration that if $Y_1, \ldots, Y_\B \sim \Bin\left( n, 1-q\right)$ i.i.d. and $qn \geq \Omega(\ln \B)$, then
$$
\E_{R_{\RR}} \left[ \max_{j \in [\B]} \left|\sum_{i=1}^n (e_{Z^i})_j  - \E_{R_{\RR}}\left[ \sum_{i=1}^n (e_{Z^i})_j \right]\right| \right] \leq O(\sqrt{qn \ln(\B)}). 
$$
Thus the expected error of $P_{\RR} = (R_{\RR}, A_{\RR})$ is bounded above by
\begin{equation}
  \label{eq:rr_1m_error}
\E_{(\hat x_1, \ldots, \hat x_\B) \sim P_{\RR}}\left[ \max_{j \in [\B]} \left| \hat x_j -\frac 1n \sum_{i=1}^n (e_{x_i})_jk \right| \right]\leq O\left( \sqrt{q \ln(\B)/n} \cdot \left( \frac{\exp(\ep_L) + \B}{\exp(\ep_L)} \right) \right).
\end{equation}

Next we analyze the privacy of $R_{\RR}$ in the $n$-user shuffled model. To do so, note that $R_{\RR}$ is clearly $(\ep_L, 0)$-(locally) \DP. Thus, by \cite[Theorem 3.1]{balle_privacy_2019}\footnote{In particular, the parameter $\gamma$ in Theorem 3.1 of \cite{balle_privacy_2019} is set to $\frac{\B}{\exp(\ep_L) + \B-1}$.}, if we take $\ep_L = {\ln(n/\ln(1/\delta))} + 2\ln(\ep) + O(1)$, then by the assumption $\ep \geq \omega(\ln^2(n) / \min\{\sqrt \B, \sqrt n \})$, the shuffled-model protocol $(R_{\RR}, S, A_{\RR})$ is $(\ep, \delta)$-\DP\ in the $n$-user shuffled model. 
As long as $\sqrt{n} > \B$ (i.e., $n > \B^2$, so that $\frac{n\ep^2}{\ln 1/\delta} \geq \Omega(\B)$), the error in (\ref{eq:rr_1m_error}) is bounded above by $O(\sqrt{q \ln(\B)/n}) = O\left(\frac{1}{n\ep} \cdot \sqrt{\B \ln(n) \ln(\B)}\right)$.
\end{proof}

\section{Low-Communication Simulation of Sparse Non-Adaptive SQ Algorithms}\label{app:counting-queries}
We now discuss an equivalent formulation of our results in terms of non-adaptive statistical query algorithms. A {\it statistical query} on a set $\MX$ is specified by a binary-valued predicate function $q : \MX \ra \{0,1\}$, and, for a distribution $\MD$ on $\MX$, takes the value $q(\MD) :=  \E_{x \sim \MD}[q(x)]$. Special cases of statistical queries include \emph{frequency queries}, specified by $q(x) = \One[x = y]$ for some $y \in \MX$, and \emph{range queries}, given by $q(x) = \One[x \in R]$, where $R$ is a rectangle in $\MX$. For $\tau \in [0,1]$, a statistical query oracle $\SQ_{\MD,\tau}$ of {\it tolerance $\tau$}, takes as input a statistical query $q$ and outputs a value $\SQ_{\MD,\tau}(q) \in [q(\MD) - \tau, q(\MD) + \tau]$.  A {\it statistical query (SQ) algorithm of tolerance $\tau$}, $\MQ$, may access the distribution $\MD$ through a number of queries $q$ to an oracle $\SQ_{\MD,\tau}$. $\MQ$ is called {\it non-adaptive} if the distribution of its queries is fixed \emph{a priori}, i.e., does not depend on the results of any of these queries. It was observed in the work of Blum et al.~\cite{blum2005practical} that any statistical query (SQ) algorithm can be simulated by a differentially private protocol (in the central model). The same was shown for locally differentially private protocols by Kasiviswanathan et al.~\cite{kasiviswanathan2008what}, albeit with worse parameters. 
In fact, it is known~\cite{kasiviswanathan2008what} that (non-adaptive) SQ algorithms are equivalent to (noninteractive) locally-differentially private algorithms, up to a polynomial factor in the tolerance
$\tau$. We refer the reader to \cite{kasiviswanathan2008what} for further background on SQ algorithms. 

A straightforward corollary of the techniques used to show Theorem~\ref{thm:hist_intro} is that one can efficiently and privately simulate {\it sparse} non-adaptive statistical query algorithms in the shuffled model. In particular, for $k \in \BN$, we say that a non-adaptive SQ algorithm $\MQ$ is {\it $k$-sparse} if, for each $x \in \MX$, with probability 1, there are at most $k$ distinct statistical queries $q$ that $\MQ$ makes satisfying $q(x) = 1$. Sparsity is a more stringent condition than having low sensitivity: in particular, if a $k$-sparse algorithm makes $\B$ queries $q_1, \ldots, q_\B$, then, letting $\hat \MD$ be the empirical distribution over a dataset $(x_1, \ldots, x_n)$, the mapping $(x_1, \ldots, x_n) \mapsto (q_1(\hat \MD), \ldots, q_\B(\hat \MD))$ has $\ell_1$ sensitivity $2k/n$. 
We show in Corollary~\ref{cor:cq_sparse} below that for any universe $\MX$, if $\MQ$ is a $k$-sparse SQ algorithm making at most $\B$ queries, then one can privately simulate $\MQ$ in the shuffled model, as long as the tolerance $\tau$ is roughly a multiplicative factor of $k/n$ times the corresponding error in Theorem~\ref{thm:hist_intro} (and communication blown up by a factor of $k^{2}$). 

Next we state and prove the formal version of Corollary \ref{cor:sq}:
\begin{corollary}
  \label{cor:cq_sparse}
  Fix any set $\MX$, and let $\MQ$ be a $k$-sparse non-adaptive SQ algorithm on $\MX$ making at most $\B$ queries of tolerance $\tau$ for a distribution $\MD$.
  Suppose that $\ep, \delta \in (0,1)$,
  \begin{equation}
    \label{eq:sq-private-tol}
n  \geq \Omega \left( \frac{\log(\B/\beta)}{\tau^2} + \frac{k \log(\B/\beta) \sqrt{ \log \left( \frac{k}{\delta \ep} \right)}}{\ep\tau} \right).
\end{equation}
  
Then there is a private-coin $(\epsilon, \delta)$-differentially private algorithm $P$ in the shuffled that receives as input $n$ iid samples $x_1, \ldots, x_n \sim \MD$, and produces output that agrees with that of $\MQ$ with probability at least $1-\beta$. Moreover, each user sends in expectation $O\left(\frac{k^2 \log(k/(\delta\ep))}{\ep^2} \right)$ messages consisting of $O(\log n \log \B)$ bits each.
\end{corollary}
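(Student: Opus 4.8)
The plan is to reduce the simulation of a $k$-sparse non-adaptive SQ algorithm to the frequency oracle / sparse-counting-query primitive already built in Section~\ref{sec:freq_oracle_heavy_hitters}, and then invoke Theorem~\ref{thm:Had_k_main} (or equivalently Theorem~\ref{thm:CM_main}) as a black box. First I would set up the encoding: since $\MQ$ is non-adaptive, fix once and for all the (at most $\B$) queries $q_1,\dots,q_\B$ that it makes, so these are independent of the data. For each user holding $x_i \sim \MD$, define $y_i \in \{0,1\}^\B$ by $(y_i)_\ell = q_\ell(x_i)$; by $k$-sparsity we have $\|y_i\|_1 \le k$ for all $i$ with probability $1$. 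The quantity the SQ oracle must approximate is $q_\ell(\MD) = \E_{x\sim\MD}[q_\ell(x)]$, and the natural estimator is the empirical mean $\frac1n\sum_{i=1}^n (y_i)_\ell$, which by a Chernoff/Hoeffding bound and a union bound over the $\B$ queries is within $\tau/2$ of $q_\ell(\MD)$ for all $\ell$ simultaneously, provided $n \ge \Omega(\log(\B/\beta)/\tau^2)$ — this accounts for the first term in \eqref{eq:sq-private-tol} and contributes a $\beta/2$ failure probability.

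Next I would run the private-coin frequency oracle $P^{\Had}$ of Theorem~\ref{thm:Had_k_main} on the inputs $y_1,\dots,y_n$ with sparsity parameter $k$ and failure probability $\beta/2$; this is $(\ep,\delta)$-differentially private in the shuffled model, uses $O\big(\tfrac{k^2\log(k/\ep\delta)}{\ep^2}\big)$ messages per user of $O(\log n\log\B)$ bits each, and produces estimates $\hat y_\ell$ satisfying, with probability $1-\beta/2$,
\[
\Big|\hat y_\ell - \sum_{i=1}^n (y_i)_\ell\Big| \le O\!\left(\log(\B/\beta) + \frac{k\sqrt{\log(\B/\beta)\log(k/\ep\delta)}}{\ep}\right)
\]
for all $\ell\in[\B]$. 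Dividing by $n$, the analyzer's estimate $\hat y_\ell/n$ of $q_\ell(\MD)$ has additive error at most the empirical-mean error ($\tau/2$) plus the privacy error divided by $n$. Requiring the latter to be at most $\tau/2$ is exactly the condition
\[
n \ge \Omega\!\left(\frac{\log(\B/\beta)}{\tau} + \frac{k\sqrt{\log(\B/\beta)\log(k/\ep\delta)}}{\ep\tau}\right),
\]
which is implied by \eqref{eq:sq-private-tol} (the first hypothesis there is even slightly stronger, with $\tau^2$ in place of $\tau$, so no issue arises). Thus with probability $\ge 1-\beta$ the analyzer can answer every query of $\MQ$ to within tolerance $\tau$; since $\MQ$ is a fixed (possibly randomized but data-independent) procedure applied to oracle answers, feeding it these estimates produces output distributed exactly as $\MQ$'s output conditioned on this good event, hence agreeing with $\MQ$ with probability $\ge 1-\beta$. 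Privacy is immediate from Theorem~\ref{thm:Had_k_main} together with the post-processing lemma (Lemma~\ref{lem:post_process}), since the only data-dependent step is running $P^{\Had}$ on the $y_i$, and computing $y_i$ from $x_i$ is a fixed per-user preprocessing that does not affect the shuffled-model privacy guarantee (it is applied inside the local randomizer). The communication bound is inherited directly from Theorem~\ref{thm:Had_k_main}.

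I do not anticipate a serious obstacle here; the only point requiring mild care is bookkeeping the two sources of error (sampling error from finitely many i.i.d.\ draws, and DP noise from the frequency oracle) and making sure the two terms in the sample-complexity bound \eqref{eq:sq-private-tol} line up with the two error terms — in particular noting that the $1/\tau^2$ (rather than $1/\tau$) dependence in the sampling term is what is actually needed for the Chernoff bound, and that this dominates the $1/\tau$ requirement coming from the DP error, so a single bound of the stated form suffices. One should also remark that $k\le\B$ (since the $k$ queries evaluating to $1$ are among the $\B$ queries), which is used implicitly to simplify logarithmic factors, and that $\beta$ can be decreased by constant factors without affecting the asymptotics, so we may assume the relevant quantities ($\tau$, error probabilities) are rescaled to make everything consistent.
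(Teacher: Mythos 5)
Your proposal is correct and follows essentially the same route as the paper's proof: encode each user's input as the $k$-sparse bit-vector of query values, control the sampling error via Chernoff--Hoeffding and a union bound over the $\B$ queries, and invoke Theorem~\ref{thm:Had_k_main} (the protocol $P^{\Had}$) as a black box for the privacy, accuracy, and communication guarantees, combining the two error sources so that \eqref{eq:sq-private-tol} suffices. Your added bookkeeping (the $1/\tau^2$ versus $1/\tau$ comparison and the post-processing remark for privacy) is sound and matches what the paper leaves implicit.
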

The sample complexity bound (\ref{eq:sq-private-tol}) improves upon an analogous result for locally differentially private simulation of $\MQ$, for which $n \geq \tilde \Omega\left(\frac{k}{\tau^2 \ep^2} \right)$ samples suffice \cite[Theorem 5.7]{edmondspower2020,kasiviswanathan2008what}. Moreover, for small $k$, it is close to what one gets in the central model, namely that $n \geq \tilde \Omega\left( \frac{1}{\tau^2} + \frac{\sqrt{k}}{\tau \ep} \right)$ \cite{blum2005practical} samples suffice. These observations follow from the fact that the $\ell_2$ sensitivity of the collection of queries made by $\MQ$ is bounded above by $\sqrt{k}$.
\begin{proof}[Proof of Corollary \ref{cor:cq_sparse}]
  Let us fix a set of $\B$ queries $q_1, \ldots, q_\B$ made by $\MQ$. We will show that with probability $1-\beta$ over the sample $X := (x_1, \ldots, x_n)$, the algorithm $P$ can output real numbers $P_1(X), \ldots, P_\B(X)$ so that for $1 \leq j \leq \B$, $\left| q_j(\MD) - P_j(X)  \right| \leq \tau$. Thus, conditioned on $\MQ$ making the queries $q_1, \ldots, q_\B$, $P$ simulates an SQ oracle of tolerance $\tau$ with probability $1-\beta$. The claimed result regarding the accuracy of $P$ follows by taking expectation over $q_1, \ldots, q_\B$.

  As long as $n \geq C \cdot \frac{\log \B}{\tau^2}$ for a sufficiently large constant $C$, the Chernoff-Hoeffding bound guarantees that with probability $1-\beta/2$, for $1 \leq j \leq \B$, we have $\left| q_j(\MD) - \frac{1}{n} \sum_{i=1}^n q_j(x_i) \right| \leq \tau/2$. Define a new universe $\MX' := \{ (q_1(x), q_2(x), \ldots, q_\B(x)) : x \in \MX \}$. Since $\MQ$ is $k$-sparse, we have that $\MX' \subseteq \{ v \in \{0,1\}^\B : \| v \|_1 \leq k\}$. Now let $P$ be the protocol $P^{\Had}$ with the parameters from Theorem \ref{thm:Had_k_main}, and $P(X)_j$, $1 \leq j \leq \B$, be the values $\hat x_j/n$ from Theorem \ref{thm:Had_k_main}. Then Theorem \ref{thm:Had_k_main} gives that
  $$
  \p\left[ \forall j \in [\B] : \left| \frac{1}{n} \sum_{i=1}^n q_j(x_i) - P(X)_j \right| \leq O \left( \frac{\log(\B) k \sqrt{\log(k/(\ep\delta))}}{n} \right)\right] \geq 1-\beta/2.
  $$
  By the choice of $n$ in (\ref{eq:sq-private-tol}), with probability at least $1-\beta/2$ over $P^{\Had}$, we have $\left| P(X)_j - \frac 1n \sum_{i=1}^n q_j(x_i) \right| \leq \tau/2$. Thus, with probability at least $1-\beta$ over $P^{\Had}$ and the sample $X$, we have $\left| P(X)_j - q_j(\MD) \right| \leq \tau$ for all $j \in [\B]$, as desired.
\end{proof}

Instead of the Hadamard response-based protocol, we could use the (public coin) count-min sketch based protocol $P^{\CM}$ of Theorem \ref{thm:CM_main} in the above corollary. This would give the inferior sample complexity bound of 
  $$
n \geq  \Omega \left(  \frac{k \log^{3/2}(\B)\sqrt{\log(k(\log \B)/\delta)}}{\ep\tau }  \right),
$$
an would involve each user sending in expectation $O \left( \frac{(\log^3 \B)k^3 \log(k (\log \B)/\delta)}{\ep^2} \right)$ messages consisting of $O(\log k + \log n + \log \log \B)$ bits each. (Recall that the advantage of the count-min sketch based protocol was efficient computation of a given statistical query in the data-structural setting when $\B$ is prohibitively large to compute all of them.)

Notice that the error and communication bounds in Corollary~\ref{cor:cq_sparse} degrade polynomially in $k$; thus, for families $\MQ$ which do not have any particular sparsity structure and for which $|\MQ| \geq n$, the bounds of Corollary~\ref{cor:cq_sparse} are vacuous. In the central model of differential privacy, much effort has gone into determining the optimal sample complexity of simulating in a differentially private manner an arbitrary (not necessarily sparse) non-adaptive statistical query algorithm. For instance, Nikolov et al.~\cite{nikolov2013geometry} demonstrated a mechanism that achieves sample complexity nearly equal to an efficiently computable lower bound for any differentially private mechanism releasing a fixed set of statistical queries.\footnote{This optimality is with respect to mean squared error over the set of queries.} 
The earlier work of Hardt and Rothblum~\cite{hardt2010multiplicative} showed that there is an $(\ep, \delta)$-differentially private mechanism that with high probability simulates a non-adaptive SQ algorithm $\MQ$ making $\B$ queries as long as  $n \geq \Omega\left(\frac{\log(\B)\sqrt{\log |\MX|\log(1/\delta)}}{\ep \tau^2} \right)$. 

We leave the question of generalizing Corollary~\ref{cor:cq_sparse} to the case of non-sparse $\MQ$ in a way analogous to \cite{nikolov2013geometry,hardt2010multiplicative} as an interesting question for future work. In particular, we would hope to maintain polylogarithmic-in-$\B$ growth of the tolerance and communication, while settling for a number of samples $n$ that grows as $\frac{1}{\ep \tau^\alpha}$ for some $\alpha > 1$.


\section{Proofs of Auxiliary Lemmas from Section~\ref{sec:freq_oracle_heavy_hitters}}\label{sec:CM_app}

In this section, we prove Lemmas~\ref{lem:lap_gen} and~\ref{lem:bin_smooth_small_gamma}.

\begin{replemma}{lem:lap_gen}
Suppose $f : \MX^n \ra \BZ^m$ is $k$-incremental (Definition~\ref{def:incremental}) and $\Delta(f) = \Delta$. Suppose $\MD$ is a distribution supported on $\BZ$ that is $(\ep, \delta, k)$-smooth. Then the mechanism
$$
X \mapsto f(X) + (Y_1, \ldots, Y_m),
$$
where $Y_1, \ldots, Y_m \sim \MD$, i.i.d., is $(\ep', \delta')$-differentially private, where $\ep' = \ep \cdot \Delta, \delta' = \delta \cdot \Delta$.
\end{replemma}
\begin{proof}[Proof of Lemma~\ref{lem:lap_gen}]
    Consider neighboring datasets $X = (x_1, \ldots, x_{n-1}, x_n)$ and $X' = (x_1, \ldots, x_{n-1}, x_n')$. We will show 
    \begin{equation}
    \label{eq:zzp_1}
    \p_{y_1, \ldots, y_m \sim \MD} \left[ \frac{\p_{Y_1, \ldots, Y_m \sim \MD}[f(X) + (Y_1, \ldots, Y_m) = f(X) + (y_1, \ldots, y_m)]}{\p_{Y_1, \ldots, Y_m \sim \MD}[f(X') + (Y_1, \ldots, Y_m) = f(X) + (y_1, \ldots, y_m)]}\geq e^{\ep'} \right] \leq \delta'.
    \end{equation}
    To see that (\ref{eq:zzp_1}) suffices to prove the Lemma~\ref{lem:lap_gen}, fix any subset $\MS \subset \BZ^m$, and write $P(X) = f(X) + (Y_1, \ldots, Y_m)$ to denote the randomized protocol. Let $\MT$ denote the set of $f(X) + (y_1, \ldots, y_m) \in \BZ^m$ such that the event in (\ref{eq:zzp_1}) does not hold; then we have $\p[P(X) \not \in \MT] \leq \delta'$. It follows that 
    \begin{align*}
    \p\left[P(X) \in \MS \right] & \leq \delta' + \sum_{w \in \MT \cap \MS} \p[P(X) = w]\\
    & = \delta' + \sum_{w \in \MT \cap \MS} \p_{Y_1, \ldots, Y_m \sim \MD}[f(X) + (Y_1, \ldots, Y_m) = w] \\
    & \leq \delta' + \sum_{w \in \MT \cap \MS} e^{\ep'}\p_{Y_1, \ldots, Y_m \sim \MD}[f(X') + (Y_1, \ldots, Y_m) = w ]\\
    & = \delta' + \sum_{w \in \MT \cap \MS} e^{\ep'} \cdot \p[P(X') = w] \\
    & \leq \delta' + e^{\ep'} \p[P(X') \in \MS].
    \end{align*}
It then suffices to show (\ref{eq:zzp_1}).  For $j \in [m]$, let $k_j = f(X)_j - f(X')_j$. Since the sensitivity of $f$ is $\Delta$, we have $\sum_{j=1}^m |k_j| \leq \Delta$. It follows that (\ref{eq:zzp_1}) is equivalent to


\begin{equation}
\label{eq:zzp_2}
\p_{y_1, \ldots, y_m \sim \MD} \left[ \prod_{j=1}^m
\frac{\p_{Y_j \sim \MD}[Y_j = y_j]}{\p_{Y_j \sim \MD}[Y_j = y_j + k_j]} 
\geq e^{\ep'} \right] \leq \delta'.
\end{equation}
For (\ref{eq:zzp_2}) to hold it in turn suffices, by a union bound and the fact that at most $\Delta$ of the $k_j$ are nonzero, that for each $j$ with $k_j \neq 0$,
\begin{equation}
\label{eq:use_smooth_1}
\p_{y \sim \MD} \left[ \frac{\p_{Y \sim \MD}[Y = y]}{\p_{Y \sim \MD}[Y = y + k_j]} \geq e^{|k_j|\ep'/\Delta} \right] \leq \delta' / \Delta.
\end{equation}
But (\ref{eq:use_smooth_1}) follows for $\ep'/\Delta = \ep, \delta'/\Delta = \delta$ since $\MD$ is $(\ep, \delta, k)$-smooth. This completes the proof.
\end{proof}

\begin{replemma}{lem:bin_smooth_small_gamma}
Let $n \in \BN$, $\gamma \in [0,1/2]$, $0 \leq \alpha \leq 1$, and $k \leq \alpha \gamma n / 2$. Then the distribution $\Bin(n, \gamma)$ is $(\ep ,\delta, k)$-smooth with $\epsilon = \ln((1+\alpha)/(1-\alpha))$ and $\delta = e^{-\frac{\alpha^2 \gamma n}{8}} + e^{-\frac{\alpha^2 \gamma n}{8+2\alpha}}$.
\end{replemma}

\begin{proof}[Proof of Lemma~\ref{lem:bin_smooth_small_gamma}]
Recall that for $Y \sim \Bin(n, \gamma)$ and $0 \leq y \leq n$, we have $\p[Y = y] = \gamma^y (1-\gamma)^{n - y} \binom{n }{ y}$. Thus, we have that, for any $k \geq k' \geq -k$,
\begin{equation}\label{eq:ratio_bin_probs}
    \frac{\p_{Y \sim \Bin(n, \gamma)}[Y = y]}{\p_{Y \sim \Bin(n, \gamma)}[Y = y + k']} =  \frac{(1-\gamma)^{k'}}{\gamma^{k'}} \cdot \frac{(y+k')! (n-y-k')!}{y! (n-y)!}.
\end{equation}
We define the interval $\mathcal{E} := [(1-\alpha)\gamma n + k', (1+\alpha)\gamma n - k']$ where $\alpha$ is any positive constant smaller than $1$. As long as $k' \leq \alpha \gamma n/2$, $\mathcal{E}$ contains the interval $\ME' := [(1-\alpha/2) \gamma n, (1 + \alpha/2) \gamma n]$. By the multiplicative Chernoff Bound, we have that
\begin{equation}\label{eq:tail_prob_bin}
    \p_{y \sim \Bin(n, \gamma)}[y \notin \mathcal{E}] \le e^{-\frac{\alpha^2 \gamma n}{8}} + e^{-\frac{\alpha^2 \gamma n}{8+2\alpha}}.
\end{equation}
Note that for any $y \in \mathcal{E}$, if $k' \geq 0$, it is the case that
\begin{equation}\label{eq:bound_non_tail_region}
    \frac{(1-\gamma)^{k'}}{\gamma^{k'}} \frac{(y+k')! (n-y-k')!}{y!(n-y)!} =
    \frac{(1-\gamma)^{k'}}{\gamma^{k'}} \cdot \frac{(y+1) \cdots (y+k')}{(n-y) \cdots (n-y-k'+1)}
    \le (1+\alpha)^{k'}.
\end{equation}
For $y \in \ME$ and if $k' \leq 0$, it is the case that
\begin{equation}
\label{eq:bound_non_tail_region_neg}
    \frac{(1-\gamma)^{k'}}{\gamma^{k'}} \frac{(y+k')! (n-y-k')!}{y!(n-y)!} =
    \frac{\gamma^{|k'|}}{(1-\gamma)^{|k'|}} \cdot \frac{(n-y+1) \cdots (n-y+|k'|)}{y (y-1) \cdots (y-|k'|)} \leq \left( \frac{(1-\gamma + \gamma \alpha)}{(1-\gamma)(1-\alpha)} \right)^{|k'|} \leq \left(\frac{1+\alpha}{1-\alpha}\right)^{|k'|},
\end{equation}
where the last inequality above uses $\gamma \leq 1/2$. 
We now proceed to show smoothness by conditioning on the event $y \in \mathcal{E}$ as follows:
\begin{align}
& \p_{y \sim \Bin(n, \gamma)} \left[ \frac{\p_{Y \sim \Bin(n, \gamma)}[Y = y]}{\p_{Y \sim Bin(n, \gamma)}[Y = y + k']} \geq e^{|k'|\ep} \right]\nonumber \\ 
& \le \p_{y \sim \Bin(n, \gamma)} \left[ \frac{\p_{Y \sim \Bin(n, \gamma)}[Y = y]}{\p_{Y \sim \Bin(n, \gamma)}[Y = y + k']} \geq e^{|k'|\ep} ~ | ~ y \in \mathcal{E} \right] + \p[y \not \in {\mathcal{E}}]\nonumber \\ 
& \le \p_{y \sim \Bin(n, \gamma)} \left[ \frac{\p_{Y \sim \Bin(n, \gamma)}[Y = y]}{\p_{Y \sim \Bin(n, \gamma)}[Y = y + k']} \geq e^{|k'|\ep} ~ | ~y \in \mathcal{E} \right] + e^{-\frac{\alpha^2 \gamma n}{8}} + e^{-\frac{\alpha^2 \gamma n}{8+2\alpha}}\label{eq:bounding_tail_event}\\
& = e^{-\frac{\alpha^2 \gamma n}{8}} + e^{-\frac{\alpha^2 \gamma n}{8+2\alpha}}\label{eq:zeroing_prob},
\end{align}
where (\ref{eq:bounding_tail_event}) follows from (\ref{eq:tail_prob_bin}) and (\ref{eq:zeroing_prob}) follows from (\ref{eq:ratio_bin_probs}), (\ref{eq:bound_non_tail_region}), and (\ref{eq:bound_non_tail_region_neg}) as well as our choice of $\ep$.
\end{proof}

\section{Heavy Hitters}\label{app:hh_reduction}

Let $\tau$ denote the heavy hitter threshold, and assume that $\tau$ is large enough so that with high probability the maximum frequency estimation error of Theorem~\ref{thm:CM_main} is at most $\tau/2$. We wish to return a set of $O(n/\tau)$ elements that include all heavy hitters (it may also return other elements, so in that sense this is an approximate answer). 
One option is to use Theorem~\ref{thm:CM_main} directly: 
Iterate over all elements in $[B]$, compute an estimate of each count, and output the elements whose estimate is larger than $\tau / 2$. This gives a runtime of $\tilde{O}(B)$.

Algorithm~\ref{alg:hh_CM} can be combined with the prefix tree idea of Bassily et al.~\cite{bassily2017practical} to reduce the server decoding time (for recovering all heavy hitters and their counts up to additive polylogarithmic factors) from $\tilde{O}(B)$ to $\tilde{O}(n/\tau)$.
For completeness we sketch the reduction here.
The combined algorithm would use $\lceil \log_2(B) \rceil$ differentially private frequency estimation data structures obtained from Algorithm~\ref{alg:hh_CM}.
To make the whole data structure differentially private we decrease the privacy parameters, such that each data structure is $(\epsilon/\lceil \log_2(B) \rceil, \delta/\lceil \log_2(B) \rceil)$-differentially private.
(In turn, this increases the error and the bound on how small $\tau$ can be by a polylogarithmic factor in $B$.)

For each element $x\in [B]$ we would consider the prefixes of the binary representation of $x$, inserting the length-$i$ prefix in the $i$th frequency estimation data structure.
The decoding procedure iteratively identifies the prefixes of length 1, 2, 3, \dots with a true count of at least $\tau$.
With high probability this is a subset of the prefixes that have an estimated count of at least $\tau/2$, and there are $O(n/\tau)$ such prefixes at each level.
When a superset of these ``heavy'' prefixes have been determined at level $i$, we only need to estimate the frequencies of the two length-$(i+1)$ extensions of each considered prefix.
This reduces the server decoding time to $\tilde{O}(n/\tau)$ with high probability (while maintaining a polylogarithmic bound on the number of bits of communication per user).

\section{M-Estimation of Median and Quantiles}\label{sec:basic_stats}
We now discuss how to obtain results for M-estimation of the median and quantiles using our result for range counting (from Section~\ref{sec:range-queries}). For simplicity, let $x_1, x_2, \dots, x_n \in [0,1]$ be data points held by $n$ users.

We recall that there is no differentially private algorithm for estimating the value of the true median with error $o(1)$, i.e., computing $\tilde x$ which is within additive error $o(1)$ of the true median. This is because the true median can be highly sensitive to a single data point, which precludes the possibility of outputting a close approximation of the median without revealing much information about a single user. For instance, consider the case in which $n = 2k+1$ and $x_1 = x_2 = \cdots = x_k = 0$ while $x_{k+1} = x_{k+2}  = \cdots = x_n = 1$. It is clear that the median of this set is 0, but changing a single value $x_k$ to 1 would change the median of the set to 1. 

To get around the above limitations, we consider a different notion known as \emph{M-estimation} of the median, defined as follows: Consider the function
\[
 M(y) = \frac{1}{2} \sum_{i=1}^n |x_i - y|.
\]
Note that the median is a value $\tilde x$ that minimizes the quantity $M(\tilde x)$. The problem of M-estimation seeks to compute a value of $y$ which approximates this quantity, and the error is considered to be the additive error between $M(y)$ and $M(\tilde x)$. Our results imply a differentially private multi-message protocol for M-estimation that obtains both error $\poly\log{n}$ and communication per user $\poly\log{n}$ bits.

\begin{theorem}[Multi-message protocol for M-estimating the median]\label{thm:M_est_median}
Suppose $x_1, x_2, \dots, x_n \in [0,1]$. Then there is a differentially private multi-message protocol in the shuffled model that M-estimates the median of $x_1, x_2, \dots, x_n$ with communication $\poly\log{n}$ bits per user and additive error $\poly\log{n}$, i.e., outputs $y \in [0,1]$ such that
\[
M(y) \leq \min_{\tilde x} M(\tilde x) + \poly\log{n}.
\]
\end{theorem}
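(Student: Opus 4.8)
The plan is to reduce $M$-estimation of the median to $1$-dimensional range counting and then invoke Theorem~\ref{thm:had_rq_final} with $d=1$. Set $B=n$ and have each user round its input $x_i\in[0,1]$ to the nearest point $\hat x_i$ of the grid $\{1/B,2/B,\dots,1\}$, identified with $[\B]$. Since $\big|\,|x_i-y|-|\hat x_i-y|\,\big|\le 1/B$ for every $y$, the discretized objective $\hat M(y):=\tfrac12\sum_i|\hat x_i-y|$ obeys $\|\hat M-M\|_\infty\le n/(2B)=1/2$, so it suffices to approximately minimize $\hat M$. The users run the matrix-mechanism local randomizer $R^{\matrix}(n,\B,\M_{\B,1},R^{\Had})$ of Section~\ref{sec:matrix_randomizer}, so that by Theorem~\ref{thm:had_rq_final} the analyzer obtains, with probability $1-\beta$, simultaneous estimates $\hat C_j$ of all prefix counts $C_j:=\#\{i:\hat x_i\le j/B\}$ with $|\hat C_j-C_j|\le\alpha$, where (for constant $\ep$, $\delta=n^{-O(1)}$, $\beta=n^{-O(1)}$) $\alpha=\poly\log n$. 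This is the only $(\ep,\delta)$-DP step, and its per-user communication is $\poly\log n$ bits.

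The key structural fact is that $\hat M$ is convex and piecewise linear with slope exactly $C_{j-1}-n/2$ on the open interval $\big((j-1)/B,\,j/B\big)$ (no $\hat x_i$ lies strictly inside such an interval), so $\hat M$ is minimized at $y_{j^*}:=j^*/B$ with $C_{j^*-1}\le n/2\le C_{j^*}$. For any grid point $y_{\hat j}$, convexity together with $|y_{\hat j}-y_{j^*}|\le 1$ and the fact that the slope of $\hat M$ between $y_{\hat j}$ and $y_{j^*}$ is bounded by the one-sided slopes at $y_{\hat j}$ gives
\[
\hat M(y_{\hat j})-\min_y\hat M(y)\;\le\;\max\big\{\,C_{\hat j-1}-n/2,\ \ n/2-C_{\hat j},\ \ 0\,\big\}.
\]
The analyzer therefore outputs $y_{\hat j}$ with $\hat j:=\arg\min_{j}\max\{\hat C_{j-1}-n/2,\ n/2-\hat C_j\}$. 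Comparing this choice to $j=j^*$ (for which the true quantity $\max\{C_{j^*-1}-n/2,\ n/2-C_{j^*}\}$ is $\le 0$) and using $|\hat C_j-C_j|\le\alpha$ twice yields $\max\{C_{\hat j-1}-n/2,\ n/2-C_{\hat j}\}\le 2\alpha$, hence $\hat M(y_{\hat j})\le\min\hat M+2\alpha$. Combining with $\|\hat M-M\|_\infty\le 1/2$ and $\min_y\hat M(y)\le\hat M(\tilde x)\le M(\tilde x)+1/2$ for a true median $\tilde x$, we get $M(y_{\hat j})\le\min_{\tilde x}M(\tilde x)+2\alpha+1=\min_{\tilde x}M(\tilde x)+\poly\log n$. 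Privacy is immediate: the selection of $\hat j$ and the output are post-processing of the $(\ep,\delta)$-DP protocol, so by Lemma~\ref{lem:post_process} the whole protocol is $(\ep,\delta)$-DP; the communication and query-time bounds are read off from Theorem~\ref{thm:had_rq_final} (or Theorem~\ref{thm:cm_rq_final}) with $d=1$, $\B=n$.

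The main obstacle — essentially the only non-routine point — is making the reduction robust: the naive rule ``output the first grid point whose estimated prefix count exceeds $n/2$'' can incur error $\Theta(n)$ when a single grid cell carries a large mass of points, because then $C_j$ jumps far past $n/2$ and the slope of $\hat M$ just to one side of the chosen point is enormous. Selecting $\hat j$ to minimize the symmetric quantity $\max\{\hat C_{j-1}-n/2,\ n/2-\hat C_j\}$, which is $\le 0$ at the true median regardless of how the mass is distributed, is what forces the error to degrade only by $O(\alpha)$. Everything else (discretization error, substituting $d=1$ and $\B=n$ into the range-counting guarantees, post-processing for privacy) is bookkeeping. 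Finally, the identical argument handles quantiles: for the $q$-quantile the relevant $M$-estimation objective is the check loss $\sum_i\rho_q(x_i-y)$ with $\rho_q(u)=u\,(q-\One[u<0])$, whose slope on $\big((j-1)/B,\,j/B\big)$ equals $C_{j-1}-qn$, so replacing $n/2$ by $qn$ throughout yields a $\poly\log n$-error, $\poly\log n$-communication multi-message protocol for every quantile.
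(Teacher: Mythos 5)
Your proof is correct and follows essentially the same route as the paper: discretize $[0,1]$ to a grid of size $B=n$, run the shuffled-model range-counting protocol (Theorem~\ref{thm:had_rq_final} with $d=1$) to obtain all prefix counts up to additive error $\alpha=\poly\log n$, output a grid point chosen from the noisy counts, and convert count error into $M$-error via convexity of the piecewise-linear objective plus an $O(1)$ discretization term, with privacy following by post-processing (Lemma~\ref{lem:post_process}).

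The one point worth correcting is your claimed ``main obstacle'': the naive rule that the paper actually uses --- output the first $j$ with $\hat C_j \geq n/2$ --- does \emph{not} incur error $\Theta(n)$ when a single cell carries a large mass. At the first estimated crossing one automatically has $C_{\hat j-1} < n/2+\alpha$ (because $\hat C_{\hat j-1}<n/2$) and $C_{\hat j} > n/2-\alpha$ (because $\hat C_{\hat j}\geq n/2$), and by exactly the convexity bound you state, the suboptimality of $\hat j/B$ is at most $\max\bigl\{[C_{\hat j-1}-n/2]_+,\,[n/2-C_{\hat j}]_+\bigr\}\cdot 1\leq \alpha$. A heavy cell only makes the \emph{harmless} one-sided slopes large (left slope very negative or right slope very positive); the slopes that control suboptimality are bounded by $\alpha$ automatically under the first-crossing rule. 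So your symmetric $\arg\min_j\max\{\hat C_{j-1}-n/2,\ n/2-\hat C_j\}$ rule is perfectly valid (losing only a factor $2$ in the constant), but it is not needed for robustness, and the paper's simpler rule is equally sound. The same observation applies to your quantile extension with $n/2$ replaced by $qn$.
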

\begin{proof}
 We reduce the problem of M-estimation to range counting. First, we divide the interval $[0,1]$ into $B = n$ subintervals $I_1, I_2, \dots, I_B$, where $I_j = [(j-1)/B, j/B]$. Each user will associate his element $x_i$ with an index $z_i \in [B]$ corresponding to an interval $I_{z_i}$ which contains $x_j$. Note that if $j$ is the smallest element of $[B]$ such that $\left|[1, j] \cap \{z_1, z_2, \dots, z_n\}\right| \geq n/2$, then $I_j$ contains a minimizer of $M(y)$. Thus, we wish to determine this value of $j$.
 
 Thus, we obtain a protocol as follows: We use our protocol for range counting queries in the shuffled model (see Section~\ref{sec:range-queries}) to compute the queries $[1,j]$ for $j=1,2,\dots, B$ for the dataset $z_1, z_2, \dots, z_B$ and compute the first $j$ for which the query $[1,j]$ yields a count of $\geq n/2$ (or $j=B$ if no query yields such a count). Then the analyzer outputs $j/B$ as the estimate for the median.
 
 We now determine the error of thee aforementioned protocol. Note that by the guarantees of the range counting protocol, the error for the range counts is $\poly \log B$. This results in a corresponding $\poly \log B$ additive error due to range counting queries for the estimation of $M(\tilde x)$. Moreover, note that there is an additional error resulting from the discretization. Since each interval is of length $1/B$, the error resulting from discretization is $n/B$. Hence, the total error is $n/B + \poly\log B = \poly\log n$ for our choice of $B$.
\end{proof}

\begin{remark}
It should be noted that virtually the same argument as above yields a differentially private protocol for M-estimation of \emph{quantiles}. Given a set of points $x_1, x_2, \dots, x_n \in [0,1]$, we say that $y$ is a $k^\text{th}$ $q$-quantile of the dataset if
\[
  \left| [0,y) \cap \{x_1, x_2, \dots, x_n\} \right| \leq \frac{k}{q}
\]
and
\[
  \left| [0,y] \cap \{x_1, x_2, \dots, x_n\} \right| \geq \frac{k}{q}.
\]

In particular, the median is a special case, namely, the (only) $q$-quantile for $q=2$. The above argument applies, except that the function $M$ to be minimized (which is minimized by $k^\text{th}$ $q$-quantiles) is given by
\[
  M(y) = \sum_{i=1}^n \left( \left(1-\frac{k}{q}\right) (y - x_i)_+ + \frac{k}{q} (y-x_i)_- \right), 
\]
and again, the task is to find a $y$ such that
\[
M(y) \leq \min_{\tilde x} M(\tilde x) + \poly\log{n}.
\]
\end{remark}
Moreover, in the reduction to range counting queries, one instead determines the smallest value $j$ such that $\left| [1,j] \cap \{z_1,z_2, \dots, z_n\}\right| \geq kn/q$ and the rest of the analysis follows verbatim.

\bibliographystyle{alpha}
\bibliography{privacy.bib}

\newcommand{\etalchar}[1]{$^{#1}$}
\begin{thebibliography}{BBGN19b}

\bibitem[ACFT19]{acharya_test_2018}
Jayadev Acharya, Cl\'{e}ment Canonne, Cody Freitag, and Himanshu Tyagi.
\newblock Test without trust: Optimal locally private distribution testing.
\newblock In {\em AISTATS}, pages 2067--2076, 2019.

\bibitem[{App}17]{dp2017learning}
{Apple Differential Privacy Team}.
\newblock Learning with privacy at scale.
\newblock {\em Apple Machine Learning Journal}, 2017.

\bibitem[AS19]{acharya2019communication}
Jayadev Acharya and Ziteng Sun.
\newblock Communication complexity in locally private distribution estimation
  and heavy hitters.
\newblock In {\em ICML}, pages 97:51--60, 2019.

\bibitem[ASY{\etalchar{+}}18]{agarwal2018cpsgd}
Naman Agarwal, Ananda~Theertha Suresh, Felix Xinnan~X Yu, Sanjiv Kumar, and
  Brendan McMahan.
\newblock cpsgd: Communication-efficient and differentially-private distributed
  sgd.
\newblock In {\em Advances in Neural Information Processing Systems}, pages
  7564--7575, 2018.

\bibitem[ASZ19]{acharya2019hadamard}
Jayadev Acharya, Ziteng Sun, and Huanyu Zhang.
\newblock Hadamard response: Estimating distributions privately, efficiently,
  and with little communication.
\newblock In {\em AISTATS}, pages 1120--1129, 2019.

\bibitem[BBGN19a]{DBLP:journals/corr/abs-1906-09116}
Borja Balle, James Bell, Adri{\`{a}} Gasc{\'{o}}n, and Kobbi Nissim.
\newblock Differentially private summation with multi-message shuffling.
\newblock {\em CoRR}, abs/1906.09116, 2019.

\bibitem[BBGN19b]{balle_privacy_2019constantIKOS}
Borja Balle, James Bell, Adri{\`{a}} Gasc{\'{o}}n, and Kobbi Nissim.
\newblock Improved summation from shuffling.
\newblock {\em arXiv: 1909.11225}, 2019.

\bibitem[BBGN19c]{balle_privacy_2019}
Borja Balle, James Bell, Adri{\`{a}} Gasc{\'{o}}n, and Kobbi Nissim.
\newblock The privacy blanket of the shuffle model.
\newblock In {\em CRYPTO}, pages 638--667, 2019.

\bibitem[BBGN20]{balle_private_2020}
Borja Balle, James Bell, Adri{\`{a}} Gasc{\'{o}}n, and Kobbi Nissim.
\newblock Private summation in the multi-message shuffle model.
\newblock {\em arXiv:2002.00817}, 2020.

\bibitem[BC19]{balcer2019separating}
Victor Balcer and Albert Cheu.
\newblock Separating local \& shuffled differential privacy via histograms,
  2019.

\bibitem[BDNM05]{blum2005practical}
Avrim Blum, Cynthia Dwork, Kobbi Nissim, and Frank McSherry.
\newblock Practical privacy: the {SuLQ} framework.
\newblock In {\em PODS}, pages 128--138, 2005.

\bibitem[BEM{\etalchar{+}}17]{bittau17}
Andrea Bittau, {\'{U}}lfar Erlingsson, Petros Maniatis, Ilya Mironov, Ananth
  Raghunathan, David Lie, Mitch Rudominer, Ushasree Kode, Julien Tinn{\'{e}}s,
  and Bernhard Seefeld.
\newblock Prochlo: Strong privacy for analytics in the crowd.
\newblock In {\em SOSP}, pages 441--459, 2017.

\bibitem[Ben79]{rangetree}
Jon~Louis Bentley.
\newblock Decomposable searching problems.
\newblock {\em IPL}, 8(5):244--251, 1979.

\bibitem[BLM12]{boucheron2012concentration}
Stephane Boucheron, Gabor Lugosi, and Pascal Massart.
\newblock {\em Concentration Inequalities: a nonasmpytotic theory of
  independence}.
\newblock Clarendon Press, Oxford, 2012.

\bibitem[BLR08]{blum2008learning}
Avrim Blum, Katrina Ligett, and Aaron Roth.
\newblock A learning theory approach to non-interactive database privacy.
\newblock In {\em STOC}, pages 609--618, 2008.

\bibitem[BNS13]{beimelNissimStemmer2013}
Amos Beimel, Kobbi Nissim, and Uri Stemmer.
\newblock Private learning and sanitization: Pure vs. approximate differential
  privacy.
\newblock In {\em {APPROX}-{RANDOM}}, pages 363--378, 2013.

\bibitem[BNS18]{bun2018heavy}
Mark Bun, Jelani Nelson, and Uri Stemmer.
\newblock Heavy hitters and the structure of local privacy.
\newblock In {\em PODS}, pages 435--447, 2018.

\bibitem[BNST17]{bassily2017practical}
Raef Bassily, Kobbi Nissim, Uri Stemmer, and Abhradeep~Guha Thakurta.
\newblock Practical locally private heavy hitters.
\newblock In {\em NIPS}, pages 2288--2296, 2017.

\bibitem[BNSV15]{bun2015differentially}
Mark Bun, Kobbi Nissim, Uri Stemmer, and Salil Vadhan.
\newblock Differentially private release and learning of threshold functions.
\newblock In {\em FOCS}, pages 634--649, 2015.

\bibitem[BS15]{bassily2015local}
Raef Bassily and Adam Smith.
\newblock Local, private, efficient protocols for succinct histograms.
\newblock In {\em STOC}, pages 127--135, 2015.

\bibitem[BST14]{bassily_erm}
Raef Bassily, Adam~D. Smith, and Abhradeep Thakurta.
\newblock Private empirical risk minimization: Efficient algorithms and tight
  error bounds.
\newblock In {\em FOCS}, pages 464--473, 2014.

\bibitem[CCFC02]{charikar2002finding}
Moses Charikar, Kevin Chen, and Martin Farach-Colton.
\newblock Finding frequent items in data streams.
\newblock In {\em ICALP}, pages 693--703, 2002.

\bibitem[CDN15]{cramer2015secure}
Ronald Cramer, Ivan~Bjerre Damg{\aa}rd, and Jesper~Buus Nielsen.
\newblock {\em Secure Multiparty Computation}.
\newblock Cambridge University Press, 2015.

\bibitem[CH08]{cormode2008finding}
Graham Cormode and Marios Hadjieleftheriou.
\newblock Finding frequent items in data streams.
\newblock {\em VLDB}, 1(2):1530--1541, 2008.

\bibitem[CKS18]{cormode2018marginal}
Graham Cormode, Tejas Kulkarni, and Divesh Srivastava.
\newblock Marginal release under local differential privacy.
\newblock In {\em SIGMOD}, pages 131--146, 2018.

\bibitem[CKS19]{cormode2018answering}
Graham Cormode, Tejas Kulkarni, and Divesh Srivastava.
\newblock Answering range queries under local differential privacy.
\newblock In {\em Proceedings of International Conference on Management of Data
  (SIGMOD)}, page 1832–1834, 2019.

\bibitem[CM05a]{cormode2005improved}
Graham Cormode and Shan Muthukrishnan.
\newblock An improved data stream summary: {T}he {C}ount-{M}in sketch and its
  applications.
\newblock {\em Journal of Algorithms}, 55(1):58--75, 2005.

\bibitem[CM05b]{cormode2005s}
Graham Cormode and Shan Muthukrishnan.
\newblock What's hot and what's not: tracking most frequent items dynamically.
\newblock {\em TODS}, 30(1):249--278, 2005.

\bibitem[CM08]{chaudhuri_logistic}
Kamalika Chaudhuri and Claire Monteleoni.
\newblock Privacy-preserving logistic regression.
\newblock In {\em NIPS}, pages 289--296, 2008.

\bibitem[CMS11]{chaudhuri_emprisk}
Kamalika Chaudhuri, Claire Monteleoni, and Anand~D. Sarwate.
\newblock Differentially private empirical risk minimization.
\newblock {\em JMLR}, 12:1069--1109, 2011.

\bibitem[Cor11]{cormode2011sketch}
Graham Cormode.
\newblock Sketch techniques for approximate query processing.
\newblock {\em Foundations and Trends in Databases. NOW publishers}, 2011.

\bibitem[CPS{\etalchar{+}}12]{cormode_spatialdecomp}
Graham Cormode, Cecilia Procopiuc, Divesh Srivastava, Entong Shen, and Ting Yu.
\newblock Differentially private spatial decompositions.
\newblock In {\em ICDE}, pages 20--31, 2012.

\bibitem[CSS11]{ChanSS11}
T.{-}H.~Hubert Chan, Elaine Shi, and Dawn Song.
\newblock Private and continual release of statistics.
\newblock {\em {ACM} Trans. Inf. Syst. Secur.}, 14(3):26:1--26:24, 2011.

\bibitem[CSS12]{chan_optimal_2012}
T-H.~Hubert Chan, Elaine Shi, and Dawn Song.
\newblock Optimal lower bound for differentially private multi-part
  aggregation.
\newblock In {\em European Symposium on Algorithms}, 2012.

\bibitem[CSS13]{chaudhuri_pca}
Kamalika Chaudhuri, Anand~D. Sarwate, and Kaushik Sinha.
\newblock A near-optimal algorithm for differentially-private principal
  components.
\newblock {\em JMLR}, 14(1):2905--2943, 2013.

\bibitem[CSU{\etalchar{+}}19]{cheu_distributed_2018}
Albert Cheu, Adam~D. Smith, Jonathan Ullman, David Zeber, and Maxim Zhilyaev.
\newblock Distributed differential privacy via mixnets.
\newblock In {\em EUROCRYPT}, pages 375--403, 2019.

\bibitem[CT91]{coverthomas}
Thomas~A. Cover and Joy~M. Thomas.
\newblock {\em Elements of Information Theory}.
\newblock Wiley, 1991.

\bibitem[CY20]{CormodeYi20}
Graham Cormode and Ke~Yi.
\newblock {\em Small Summaries for Big Data}.
\newblock Cambridge University Press, 2020.

\bibitem[DJW13]{duchi2013local}
John~C Duchi, Michael~I Jordan, and Martin~J Wainwright.
\newblock Local privacy and statistical minimax rates.
\newblock In {\em FOCS}, pages 429--438, 2013.

\bibitem[DJW18]{duchi_minimax}
John~C. Duchi, Michael~I. Jordan, and Martin~J. Wainwright.
\newblock Minimax optimal procedures for locally private estimation.
\newblock {\em JASA}, 113(521):182--201, 2018.

\bibitem[DKM{\etalchar{+}}06]{dwork2006our}
Cynthia Dwork, Krishnaram Kenthapadi, Frank McSherry, Ilya Mironov, and Moni
  Naor.
\newblock Our data, ourselves: Privacy via distributed noise generation.
\newblock In {\em EUROCRYPT}, pages 486--503, 2006.

\bibitem[DKY17]{ding2017collecting}
Bolin Ding, Janardhan Kulkarni, and Sergey Yekhanin.
\newblock Collecting telemetry data privately.
\newblock In {\em NIPS}, pages 3571--3580, 2017.

\bibitem[DL09]{dwork2009differential}
Cynthia Dwork and Jing Lei.
\newblock Differential privacy and robust statistics.
\newblock In {\em STOC}, pages 371--380, 2009.

\bibitem[DMNS06]{dwork2006calibrating}
Cynthia Dwork, Frank McSherry, Kobbi Nissim, and Adam Smith.
\newblock Calibrating noise to sensitivity in private data analysis.
\newblock In {\em TCC}, pages 265--284, 2006.

\bibitem[DNPR10]{DworkNPR10}
Cynthia Dwork, Moni Naor, Toniann Pitassi, and Guy~N. Rothblum.
\newblock Differential privacy under continual observation.
\newblock In {\em STOC}, pages 715--724, 2010.

\bibitem[DNR{\etalchar{+}}09]{dwork2009complexity}
Cynthia Dwork, Moni Naor, Omer Reingold, Guy~N Rothblum, and Salil Vadhan.
\newblock On the complexity of differentially private data release: {E}fficient
  algorithms and hardness results.
\newblock In {\em STOC}, pages 381--390, 2009.

\bibitem[DNRR15]{dwork2015pure}
Cynthia Dwork, Moni Naor, Omer Reingold, and Guy~N Rothblum.
\newblock Pure differential privacy for rectangle queries via private
  partitions.
\newblock In {\em ASIACRYPT}, pages 735--751, 2015.

\bibitem[DR14a]{DworkRothBook}
Cynthia Dwork and Aaron Roth.
\newblock {\em The Algorithmic Foundations of Differential Privacy}.
\newblock Now Publishers Inc., 2014.

\bibitem[DR{\etalchar{+}}14b]{dwork2014algorithmic}
Cynthia Dwork, Aaron Roth, et~al.
\newblock The algorithmic foundations of differential privacy.
\newblock {\em Foundations and Trends{\textregistered} in Theoretical Computer
  Science}, 9(3--4):211--407, 2014.

\bibitem[Dwo06]{Dwork2006DP}
Cynthia Dwork.
\newblock Differential privacy.
\newblock In {\em ICALP}, pages 1--12, 2006.

\bibitem[EFM{\etalchar{+}}19]{erlingsson2019amplification}
{\'U}lfar Erlingsson, Vitaly Feldman, Ilya Mironov, Ananth Raghunathan, Kunal
  Talwar, and Abhradeep Thakurta.
\newblock Amplification by shuffling: From local to central differential
  privacy via anonymity.
\newblock In {\em SODA}, pages 2468--2479, 2019.

\bibitem[EFM{\etalchar{+}}20]{erlingsson2020encode}
{\'U}lfar Erlingsson, Vitaly Feldman, Ilya Mironov, Ananth Raghunathan, Shuang
  Song, Kunal Talwar, and Abhradeep Thakurta.
\newblock Encode, shuffle, analyze privacy revisited: Formalizations and
  empirical evaluation.
\newblock {\em arXiv preprint arXiv:2001.03618}, 2020.

\bibitem[ENU20]{edmondspower2020}
Alexander Edmonds, Aleksander Nikolov, and Jonathan Ullman.
\newblock The power of factorization methods in local and central differential
  privacy.
\newblock In {\em Symposium on the Theory of Computing}, 2020.

\bibitem[EPK14]{erlingsson2014rappor}
{\'U}lfar Erlingsson, Vasyl Pihur, and Aleksandra Korolova.
\newblock {RAPPOR}: Randomized aggregatable privacy-preserving ordinal
  response.
\newblock In {\em CCS}, pages 1054--1067, 2014.

\bibitem[EV03]{estan2003new}
Cristian Estan and George Varghese.
\newblock New directions in traffic measurement and accounting: Focusing on the
  elephants, ignoring the mice.
\newblock {\em TOCS}, 21(3):270--313, 2003.

\bibitem[GGI{\etalchar{+}}02]{gilbert2002fast}
Anna~C Gilbert, Sudipto Guha, Piotr Indyk, Yannis Kotidis, Sivaramakrishnan
  Muthukrishnan, and Martin~J Strauss.
\newblock Fast, small-space algorithms for approximate histogram maintenance.
\newblock In {\em STOC}, pages 389--398, 2002.

\bibitem[GGK{\etalchar{+}}20]{ghazi_pure_2020}
Badih Ghazi, Noah Golowich, Ravi Kumar, Pasin Manurangsi, Rasmus Pagh, and
  Ameya Velingker.
\newblock Pure differentially private summation from anonymous messages.
\newblock In {\em Information Theoretic Cryptography (ITC)}, 2020.

\bibitem[GK{\etalchar{+}}01]{greenwald2001space}
Michael Greenwald, Sanjeev Khanna, et~al.
\newblock Space-efficient online computation of quantile summaries.
\newblock {\em ACM SIGMOD Record}, 30(2):58--66, 2001.

\bibitem[GMPV19]{ghazi2019private}
Badih Ghazi, Pasin Manurangsi, Rasmus Pagh, and Ameya Velingker.
\newblock Private aggregation from fewer anonymous messages.
\newblock {\em arXiv:1909.11073}, 2019.

\bibitem[GPV19]{ghazi2019scalable}
Badih Ghazi, Rasmus Pagh, and Ameya Velingker.
\newblock Scalable and differentially private distributed aggregation in the
  shuffled model.
\newblock {\em arXiv:1906.08320}, 2019.

\bibitem[Gre16]{greenberg2016apple}
Andy Greenberg.
\newblock {Apple's} ``differential privacy'' is about collecting your data --
  but not your data.
\newblock {\em Wired, June}, 13, 2016.

\bibitem[HKR12]{hsu_hh}
Justin Hsu, Sanjeev Khanna, and Aaron Roth.
\newblock Distributed private heavy hitters.
\newblock In {\em ICALP}, pages 461--472, 2012.

\bibitem[HLM12]{hardt_release}
Moritz Hardt, Katrina Ligett, and Frank McSherry.
\newblock A simple and practical algorithm for differentially private data
  release.
\newblock In {\em NIPS}, pages 2339--2347, 2012.

\bibitem[HR10]{hardt2010multiplicative}
Moritz Hardt and Guy~N. Rothblum.
\newblock A multiplicative weights mechanism for privacy-preserving data
  analysis.
\newblock In {\em FOCS}, pages 61--70, 2010.

\bibitem[HRMS10]{hay_hist}
Michael Hay, Vibhor Rastogi, Gerome Miklau, and Dan Suciu.
\newblock Boosting the accuracy of differentially private histograms through
  consistency.
\newblock {\em VLDB}, 3(1-2):1021--1032, 2010.

\bibitem[IKOS06]{ishai2006cryptography}
Yuval Ishai, Eyal Kushilevitz, Rafail Ostrovsky, and Amit Sahai.
\newblock Cryptography from anonymity.
\newblock In {\em FOCS}, pages 239--248, 2006.

\bibitem[KBR16]{kairouz2016discrete}
Peter Kairouz, Keith Bonawitz, and Daniel Ramage.
\newblock Discrete distribution estimation under local privacy.
\newblock In {\em ICML}, pages 2436--2444, 2016.

\bibitem[KLL16]{karnin2016optimal}
Zohar Karnin, Kevin Lang, and Edo Liberty.
\newblock Optimal quantile approximation in streams.
\newblock In {\em FOCS}, pages 71--78, 2016.

\bibitem[KLN{\etalchar{+}}08]{kasiviswanathan2008what}
Shiva~Prasad Kasiviswanathan, Homin~K. Lee, Kobbi Nissim, Sofya Rashkodnikova,
  and Adam Smith.
\newblock What can we learn privately?
\newblock In {\em FOCS}, pages 531--540, 2008.

\bibitem[KMSZ08]{kilian2008fast}
Joe Kilian, Andr{\'e} Madeira, Martin~J Strauss, and Xuan Zheng.
\newblock Fast private norm estimation and heavy hitters.
\newblock In {\em TCC}, pages 176--193, 2008.

\bibitem[Lei11]{lei2011differentially}
Jing Lei.
\newblock Differentially private $m$-estimators.
\newblock In {\em NIPS}, pages 361--369, 2011.

\bibitem[LHR{\etalchar{+}}10]{li2010optimizing}
Chao Li, Michael Hay, Vibhor Rastogi, Gerome Milau, and Andrew McGregor.
\newblock Optimizing linear counting queries under differential privacy.
\newblock In {\em PODS}, pages 123--134, 2010.

\bibitem[LLV07]{li2007t}
Ninghui Li, Tiancheng Li, and Suresh Venkatasubramanian.
\newblock $t$-closeness: Privacy beyond $k$-anonymity and $l$-diversity.
\newblock In {\em ICDE}, pages 106--115, 2007.

\bibitem[LM12]{li2012adaptive}
Chao Li and Gerome Miklau.
\newblock An adaptive mechanism for accurate query answering under differential
  privacy.
\newblock In {\em VLDB}, volume 5(6), pages 514--525, 2012.

\bibitem[MG82]{misra1982finding}
Jayadev Misra and David Gries.
\newblock Finding repeated elements.
\newblock {\em Science of Computer Programming}, 2(2):143--152, 1982.

\bibitem[MN12]{MuthukrishnanN12}
S.~Muthukrishnan and Aleksandar Nikolov.
\newblock Optimal private halfspace counting via discrepancy.
\newblock In {\em STOC}, pages 1285--1292, 2012.

\bibitem[MP80]{munro1980selection}
J~Ian Munro and Mike~S Paterson.
\newblock Selection and sorting with limited storage.
\newblock {\em TCS}, 12(3):315--323, 1980.

\bibitem[MRL98]{manku1998approximate}
Gurmeet~Singh Manku, Sridhar Rajagopalan, and Bruce~G Lindsay.
\newblock Approximate medians and other quantiles in one pass and with limited
  memory.
\newblock {\em ACM SIGMOD Record}, 27(2):426--435, 1998.

\bibitem[MT07]{mcsherry2007mechanism}
Frank McSherry and Kunal Talwar.
\newblock Mechanism design via differential privacy.
\newblock In {\em FOCS}, pages 94--103, 2007.

\bibitem[NTZ13]{nikolov2013geometry}
Aleksandar Nikolov, Kunal Talwar, and Li~Zhang.
\newblock On the geometry of differential privacy: the sparse and approximate
  cases.
\newblock In {\em STOC}, pages 351--360, 2013.

\bibitem[NXY{\etalchar{+}}16]{nguyen2016collecting}
Thong Nguyen, Xiaokui Xiao, Yin Yang, Sui~Cheung Hui, Hyejin Shin, and Junbum
  Shin.
\newblock Collecting and analyzing data from smart device users with local
  differential privacy.
\newblock In {\em arXiv:1606.05053}, 2016.

\bibitem[O'D14]{odonnell2014analysis}
Ryan O'Donnell.
\newblock {\em Analysis of Boolean functions}.
\newblock Cambridge University Press, 2014.

\bibitem[QYL13]{qardaji}
Wahbeh Qardaji, Weining Yang, and Ninghui Li.
\newblock Understanding hierarchical methods for differentially private
  histograms.
\newblock {\em VLDB}, 6(14):1954--1965, 2013.

\bibitem[Roo06]{roos}
Bero Roos.
\newblock Binomial approximation to the {P}oisson binomial distribution: The
  {K}rawtchouk {E}xpansion.
\newblock {\em Theory of Probability and its Applications}, 45(2):258--272,
  2006.

\bibitem[Sha14]{CNET2014Google}
Stephen Shankland.
\newblock How {Google} tricks itself to protect {Chrome} user privacy.
\newblock {\em CNET, October}, 2014.

\bibitem[Smi11]{smith_statest}
Adam~D. Smith.
\newblock Privacy-preserving statistical estimation with optimal convergence
  rates.
\newblock In {\em STOC}, pages 813--822, 2011.

\bibitem[Ste20]{stemmer_locally_2019}
Uri Stemmer.
\newblock Locally private k-means clustering.
\newblock In {\em Proceedings of the 2020 Symposium on Discrete Algorithms},
  2020.

\bibitem[SU16]{steinke2013between}
Thomas Steinke and Jonathan Ullman.
\newblock Between pure and approximate differential privacy.
\newblock {\em Journal of Privacy and Confidentiality}, 7(2):3--22, 2016.

\bibitem[SU17]{steinke2017tight}
Thomas Steinke and Jonathan Ullman.
\newblock Tight lower bounds for differentially private selection.
\newblock In {\em FOCS}, pages 552--563, 2017.

\bibitem[Ull18]{ullman2018tight}
Jonathan Ullman.
\newblock Tight lower bounds for locally differentially private selection.
\newblock In {\em arXiv:1802.02638}, 2018.

\bibitem[Vad17]{vadhan2017complexity}
Salil Vadhan.
\newblock The complexity of differential privacy.
\newblock In {\em Tutorials on the Foundations of Cryptography}, pages
  347--450. Springer, 2017.

\bibitem[War65]{warner1965randomized}
Stanley~L Warner.
\newblock Randomized response: A survey technique for eliminating evasive
  answer bias.
\newblock {\em JASA}, 60(309):63--69, 1965.

\bibitem[WBLJ17]{wang_freqest}
Tianhao Wang, Jeremiah Blocki, Ninghui Li, and Somesh Jha.
\newblock Locally differentially private protocols for frequency estimation.
\newblock In {\em USENIX Security}, pages 729--745, 2017.

\bibitem[WXD{\etalchar{+}}19]{wang2019practical}
Tianhao Wang, Min Xu, Bolin Ding, Jingren Zhou, Ninghui Li, and Somesh Jha.
\newblock Practical and robust privacy amplification with multi-party
  differential privacy.
\newblock {\em arXiv:1908.11515}, 2019.

\bibitem[WZ10]{wasserman}
Larry Wasserman and Shuheng Zhou.
\newblock A statistical framework for differential privacy.
\newblock {\em JASA}, 105(489):375--389, 2010.

\bibitem[XWG10]{xiao2010differential}
Xiaokui Xiao, Guozhang Wang, and Johannes Gehrke.
\newblock Differential privacy via wavelet transforms.
\newblock {\em TKDE}, 23(8):1200--1214, 2010.

\bibitem[YB17]{ye2017optimal}
Min Ye and Alexander Barg.
\newblock Optimal schemes for discrete distribution estimation under local
  differential privacy.
\newblock In {\em ISIT}, pages 759--763, 2017.

\bibitem[YZ13]{yi2013optimal}
Ke~Yi and Qin Zhang.
\newblock Optimal tracking of distributed heavy hitters and quantiles.
\newblock {\em Algorithmica}, 65(1):206--223, 2013.

\end{thebibliography}

\end{document}